\newif\ifanonym
\newif\ifprocs
\newif\ifarxiv
\newif\ifcomments
\definecolor{ForestGreen}{rgb}{0.1333,0.5451,0.1333}
\definecolor{DarkRed}{rgb}{0.65,0,0}
\definecolor{Red}{rgb}{1,0,0}
\declaretheorem[numberwithin=section]{theorem}
\declaretheorem[numberlike=theorem]{lemma}
\declaretheorem[numberlike=theorem,name=Lemma]{lem}
\declaretheorem[numberlike=theorem]{proposition}
\declaretheorem[numberlike=theorem]{corollary}
\declaretheorem[numberlike=theorem]{claim}
\declaretheorem[numberlike=theorem,style=definition]{definition}
\declaretheorem[numberlike=theorem,name=Problem]{problem}
\newcommand{\ghtree}{{\sc GHtree}\xspace}
\newcommand{\apsp}{{\sc APSP}\xspace}
\newcommand{\apmf}{{\sc APMF}\xspace}
\newcommand{\tO}{\tilde{O}}
\newcommand{\polylog}{\mathrm{polylog}}
\def\eps{\varepsilon}
\DeclareMathOperator{\EX}{\mathbb{E}}
\global\long\def\Htil{\tilde{H}}
\global\long\def\P{{\cal P}}
\global\long\def\eps{\epsilon}
\global\long\def\opt{\mathrm{OPT}}
\global\long\def\Otil{\tilde{O}}
\global\long\def\HH{{\cal H}}
\global\long\def\pset{{\cal P}}
\global\long\def\val{\mathrm{val}}
\global\long\def\optval{\mathrm{pack}}
\global\long\def\sscv{\textsc{SSMC}\xspace}
\def\BAL#1\EAL{\begin{align*}#1\end{align*}}
\def\BALN#1\EALN{\begin{align}#1\end{align}}
\def\BG#1\EG{\begin{gather}#1\end{gather}}
\newcommand{\alert}[1]{\textcolor{red}{#1}}
\def\jason#1{\marginpar{$\leftarrow$\fbox{JL}}\footnote{$\Rightarrow$~{\sf\textcolor{blue}{#1 --Jason}}}}
\def\thatchaphol#1{\marginpar{$\leftarrow$\fbox{TS}}\footnote{$\Rightarrow$~{\sf\textcolor{purple}{#1 --Thatchaphol}}}}
\def\debmalya#1{\marginpar{$\leftarrow$\fbox{DP}}\footnote{$\Rightarrow$~{\sf\textcolor{orange}{#1 --Debmalya}}}}
\def\amir#1{\marginpar{$\leftarrow$\fbox{AA}}\footnote{$\Rightarrow$~{\sf\textcolor{teal}{#1 --Amir}}}}
\def\ohad#1{\marginpar{$\leftarrow$\fbox{OT}}\footnote{$\Rightarrow$~{\sf\textcolor{red}{#1 --Ohad}}}}
\def\robi#1{\marginpar{$\leftarrow$\fbox{RK}}\footnote{$\Rightarrow$~{\sf\textcolor{violet}{#1 --Robi}}}}
\newcommand{\colnote}[3]{\textcolor{#1}{$\ll$\textsf{#2}$\gg$\marginpar{\tiny\bf #3}}}
 \newcommand{\onote}[1]{\colnote{red}{#1--Ohad}{OT}}
 \newcommand{\anote}[1]{\colnote{olive}{#1--Amir}{AA}}
 \newcommand{\rnote}[1]{\colnote{blue}{#1--Robi}{RK}}
\newcommand{\rnote}[1]{}
\newcommand{\anote}[1]{}
\newcommand{\onote}[1]{}
\newcommand{\alert}[1]{}
\newcommand{\ohad}[1]{}
\newcommand{\robi}[1]{}
\newcommand{\thatchaphol}[1]{}
\newcommand{\amir}[1]{}
\newcommand{\debmalya}[1]{}
\newcommand{\jason}[1]{}
\newcommand{\eat}[1]{}
\newcommand{\update}{\textsc{Update}}
\newcounter{algocounter}
\newcommand{\mylabel}[2]% #1=name, #2 = contents
    {\refstepcounter{algocounter}\protected@write\@auxout{}{\string\newlabel{#1}{{\textcolor{black}{\textup{#2}}}{\thepage}%
      {\@currentlabelname}{\@currentHref}{}}}}}%
\newcommand{\lar}{\textup{large}}
\begin{document}

\title{Breaking the Cubic Barrier for All-Pairs Max-Flow:\\ Gomory-Hu Tree in Nearly Quadratic Time}%\debmalya{Should we change the title to ``Gomory-Hu Tree in $\tO(n^2)$ time'' or ``Gomory-Hu Tree in Nearly Quadratic Time''?}
%\ohad{Zhang's paper is already named "Gomory-Hu Trees in Quadratic Time" so maybe we should avoid the latter suggestion}
%\amir{how about: ``Breaking the Cubic Barrier for All-Pairs Max-Flow: Gomory-Hu Tree in Nearly Quadratic Time'' it's good to remind the PC that WE broke the cubic barrier, not Zhang or any of the other GH papers out there.}
%\debmalya{I like Amir's title, have copied it.}
%}%\thanks{The first version of this paper (in November 2021) showed a slower time bound of $\tO(n^{23/8})$ for computing a Gomory-Hu tree. The only technical changes are in Section~\ref{sec:singlesource}.}}

%FOCS is double-blind
\ifanonym
\author{Anonymous Authors}
\else
\author{
Amir Abboud\thanks{Weizmann Institute of Science. Supported by an Alon scholarship and a research grant from the Center for New Scientists at the Weizmann Institute of Science. Email: \texttt{amir.abboud@weizmann.ac.il}}
\and Robert Krauthgamer%
\thanks{Weizmann Institute of Science. 
    Work partially supported by ONR Award N00014-18-1-2364,
    the Israel Science Foundation grant \#1086/18,
    the Weizmann Data Science Research Center,
    and a Minerva Foundation grant.
    Email: \texttt{robert.krauthgamer@weizmann.ac.il}
}
\and Jason Li\thanks{Simons Institute for the Theory of Computing, University of California, Berkeley. Email: \texttt{jmli@cs.cmu.edu}}
\and Debmalya Panigrahi\thanks{Duke University. Supported in part by NSF Awards CCF-1750140 (CAREER) and CCF-1955703, and ARO Award W911NF2110230. Email: \texttt{debmalya@cs.duke.edu}}
\and Thatchaphol Saranurak\thanks{University of Michigan, Ann Arbor. Email: \texttt{thsa@umich.edu}}
\and Ohad Trabelsi\thanks{University of Michigan, Ann Arbor.
	Work partially supported by the NSF Grant CCF-1815316, and by the NWO VICI grant 639.023.812. Email: \texttt{ohadt@umich.edu}}
}
\fi

%\date{\today}

\maketitle

%\href{https://docs.google.com/document/d/1QMmXCn9g4NQkvRZNlIlCc3EIyVrASxOIWq6GX9Vf59E/edit}{Google doc link}

\begin{abstract}

In 1961, Gomory and Hu showed that the All-Pairs Max-Flow problem of computing the max-flow between all ${n\choose 2}$ pairs of vertices in an undirected graph can be solved using only $n-1$ calls to any (single-pair) max-flow algorithm.
Even assuming a linear-time max-flow algorithm, this yields a running time of $O(mn)$, which is $O(n^3)$ when $m = \Theta(n^2)$. While subsequent work has improved this bound for various special graph classes, no subcubic-time algorithm has been obtained in the last 60 years for general graphs. We break this longstanding barrier by giving an $\tilde{O}(n^{2})$-time algorithm on general, weighted graphs.
Combined with a popular complexity assumption, we establish a counter-intuitive separation: all-pairs max-flows are strictly \emph{easier} to compute than all-pairs shortest-paths.

Our algorithm produces a cut-equivalent tree, known as the Gomory-Hu tree, from which the max-flow value for any pair can be retrieved in near-constant time.
For unweighted graphs, we refine our techniques further to produce a Gomory-Hu tree in the time of a poly-logarithmic number of calls to any max-flow algorithm.
This shows an equivalence between the all-pairs and single-pair max-flow problems, and is optimal up to poly-logarithmic factors.
Using the recently announced $m^{1+o(1)}$-time max-flow algorithm (Chen {\em et al.}, March 2022), our Gomory-Hu tree algorithm for unweighted graphs also runs in $m^{1+o(1)}$-time.

%    In 1961, Gomory and Hu
%    showed that the max-flow values of all $n\choose 2$ pairs of vertices in an undirected graph can be
%    computed using only $n-1$ calls to any max-flow algorithm, and succinctly represented these values in a tree (called later the Gomory-Hu tree). Even 
%    assuming a linear-time max-flow algorithm, this yields a running time of $O(mn)$ for this problem, which is $O(n^3)$ when $m = \Theta(n^2)$. 
%While subsequent work has improved this bound for various special graph classes, no subcubic-time algorithm has been obtained in the last 60 years for general graphs.
%We break this longstanding barrier by giving an $\tO(n^{2})$-time algorithm for the Gomory-Hu tree problem on general weighted graphs.

%For unweighted graphs, we refine our techniques further to obtain running time that is dominated by a poly-logarithmic number of calls to any max-flow algorithm. Since a Gomory-Hu tree encodes all-pairs max-flow values, this shows equivalence between the all-pairs and single-pair max-flow problems in unweighted graphs, and is optimal up to poly-logarithmic factors.
%Using the recently announced $m^{1+o(1)}$-time max-flow algorithm (Chen {\em et al.}, March 2022), our Gomory-Hu tree algorithm for unweighted graphs also runs in $m^{1+o(1)}$-time.

\medskip
\paragraph{Historical note:} 
The first version of this paper (\href{https://arxiv.org/abs/2111.04958v1}{arXiv:2111.04958}) 
titled ``Gomory-Hu Tree in Subcubic Time'' (Nov.~9, 2021) broke the cubic barrier 
but only claimed a time bound of $\tilde{O}(n^{2.875})$. 
The second version (Nov.~30, 2021) optimized one of the ingredients 
(Section~\ref{sec:singlesource}) and gave the $\tilde{O}(n^2)$ time bound. 
The latter optimization was discovered independently by Zhang~\cite{Zhang21b}.

\end{abstract}

\pagenumbering{gobble}

\clearpage
\tableofcontents
\pagenumbering{gobble}
\clearpage
\pagenumbering{arabic}
\section{Introduction}

The {\em edge connectivity} of a pair of vertices $s, t$ in an undirected graph is defined as the minimum weight of edges whose removal disconnects $s$ and $t$ in the graph. Such a set of edges is called an $(s,t)$ mincut, and by duality, its value is equal to that of an $(s,t)$ max-flow. Consequently, the edge connectivity of a vertex pair is obtained by running a max-flow algorithm, and by extension, the edge connectivity for {\em all} vertex pairs can be obtained by ${n\choose 2} = \Theta(n^2)$ calls to a max-flow algorithm. 
(Throughout, $n$ and $m$ denote the number of vertices and edges in the input graph $G = (V, E, w)$, where $w: E\rightarrow \mathbb{Z}^0_+$ maps edges to non-negative integer weights. We denote the maximum edge weight by $W$.)

\begin{definition}[The All-Pairs Max-Flow (\apmf) Problem]
Given an undirected edge-weighted graph, return the edge connectivity of all pairs of vertices.
\end{definition}

Remarkably, Gomory and Hu~\cite{GH61} showed in a seminal work in 1961 that one can do a lot better than this na\"ive algorithm. In particular, they introduced the notion of a {\em cut tree} (later called {\em Gomory-Hu tree}, which we abbreviate as \ghtree) 
to show that $n-1$ max-flow calls suffice for finding the edge connectivity of all vertex pairs. 

\begin{theorem}[Gomory-Hu (1961)]
\label{thm:GH}
For any undirected edge-weighted graph $G=(V,E)$, there is a cut tree (or \ghtree), which is defined as a tree $\mathcal{T}$ on the same set of vertices $V$ such that for all pairs of vertices $s,t\in V$, the $(s,t)$ mincut in $\mathcal{T}$ is also an $(s,t)$ mincut in $G$ and has the same cut value. 
Moreover, such a tree can be computed using $n-1$ max-flow calls.%
\footnote{These max-flow calls are on graphs that are contractions of $G$, and thus no larger than $G$.}
\end{theorem}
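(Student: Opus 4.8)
The plan is to establish both assertions of the theorem together via the classical Gomory--Hu iterative construction, whose correctness rests on a single uncrossing lemma coming from submodularity of the cut function. Write $\lambda_G(u,v)$ for the edge connectivity of the pair $u,v$ in $G$; for $X\subseteq V$ let $w(X)$ denote the total weight of edges crossing $(X,V\setminus X)$, and recall that $w$ is submodular, $w(X)+w(Y)\ge w(X\cap Y)+w(X\cup Y)$. The key lemma is: \emph{if $(A,V\setminus A)$ is a minimum $(s,t)$-cut with $s\in A$, and $u,v\in A$, then some minimum $(u,v)$-cut $(U,V\setminus U)$ has one of its two sides contained in $A$.}

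I would prove this lemma by a short case analysis. Fix a minimum $(u,v)$-cut $(U,V\setminus U)$ with $u\in U$, and note $t\notin A$. If $t\notin U$, then $A\cup U$ separates $s$ from $t$, so $w(A\cup U)\ge w(A)$; submodularity applied to $A$ and $U$ yields $w(A\cap U)\le w(A)+w(U)-w(A\cup U)\le w(U)$, and since $A\cap U$ still separates $u$ from $v$ this forces equality, so $A\cap U\subseteq A$ is the desired cut. If instead $t\in U$, run the same argument with $V\setminus U$ in place of $U$: now $A\cup(V\setminus U)$ separates $s$ from $t$, and submodularity forces $A\setminus U=A\cap(V\setminus U)$ to be a minimum $(u,v)$-cut, which lies in $A$. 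Two consequences will be used repeatedly. First (\emph{contraction}): if $(A,V\setminus A)$ is a minimum $(p,q)$-cut and $u,v$ lie in the side containing $p$, then contracting $V\setminus A$ to a single vertex leaves $\lambda_G(u,v)$ unchanged, since the lemma provides a minimum $(u,v)$-cut that survives the contraction while contraction never decreases connectivity. Second (\emph{ultrametric}): $\lambda_G(s,t)\ge\min_i\lambda_G(x_{i-1},x_i)$ along any path $s=x_0,\dots,x_k=t$, because a minimum $(s,t)$-cut must separate some consecutive pair.

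Next I would set up the algorithm and the invariant it maintains. It keeps a tree $\mathcal{T}$ whose nodes (``super-nodes'') partition $V$ and whose edges carry weights, starting from the single super-node $V$. While some super-node $S$ contains at least two vertices: pick $s,t\in S$; form $G_S$ by contracting, for each connected component of $\mathcal{T}-S$, the set of $G$-vertices it contains into one vertex; call a max-flow subroutine to obtain a minimum $(s,t)$-cut $(A,B)$ of $G_S$ of value $\lambda$; replace $S$ by the super-nodes $S\cap A\ni s$ and $S\cap B\ni t$ joined by a fresh edge of weight $\lambda$; and reattach each former branch of $S$ to $S\cap A$ or $S\cap B$ according to the side of $(A,B)$ containing its contracted vertex, keeping old branch-edge weights. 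Each iteration raises the number of super-nodes by one, so there are exactly $n-1$ iterations, each calling max-flow on a contraction of $G$ (hence no larger than $G$), as stated in the theorem. The invariant is: \emph{for every edge $e$ of $\mathcal{T}$ of weight $c$, deleting $e$ induces a bipartition of $V$ that is a minimum $(x,y)$-cut of $G$ of value $c$ for some vertices $x,y$ in the two super-nodes incident to $e$.}

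For the inductive step, the new edge $(S\cap A,S\cap B)$ is handled by pulling $(A,B)$ back to a bipartition $(\hat A,\hat B)$ of $V$ and applying the \emph{contraction} consequence to the branch-edges of $S$ (which by the invariant are minimum cuts whose far sides are precisely the contracted branches); this shows $\lambda_{G_S}(s,t)=\lambda_G(s,t)$, so $(\hat A,\hat B)$ is a genuine minimum $(s,t)$-cut of $G$, witnessed by $x=s$, $y=t$. For an old branch-edge the induced bipartition of $V$ and its weight are unchanged, so it is still a minimum cut of the right value; the only point to verify is that a witness can be chosen inside the super-node now incident to it, which follows from one further application of the lemma, uncrossing the old certificate against $(A,B)$ --- this last bit of bookkeeping is the step I expect to be the main obstacle and the part that must be written with care. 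When every super-node is a singleton, $\mathcal{T}$ is a tree on $V$, and by the invariant each tree edge is a minimum cut between its two endpoints with value equal to its weight. Finally, for an arbitrary pair $s,t$, let $e_{\min}$ be a lightest edge on the $s$--$t$ path in $\mathcal{T}$: removing it exhibits an $(s,t)$-cut of $G$ of value $w(e_{\min})$, so $\lambda_G(s,t)\le w(e_{\min})$; conversely, the \emph{ultrametric} consequence along that path, combined with the invariant (which identifies the weight of each path-edge with the edge connectivity of its endpoints), gives $\lambda_G(s,t)\ge w(e_{\min})$ as well. Hence the $(s,t)$-mincut in $\mathcal{T}$ has the same value as in $G$, and the pull-back of $e_{\min}$ realizes it as an actual cut of $G$ --- exactly the defining property of the \ghtree.
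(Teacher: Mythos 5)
The paper itself offers no proof of this statement: it is the classical 1961 theorem, quoted with a citation to Gomory and Hu, so your proposal can only be measured against the standard argument — which is exactly the route you follow (submodular uncrossing lemma, super-node splitting with contracted branches, the invariant that every tree edge realizes a mincut between a witness pair in its two incident super-nodes, and the lightest-edge/ultrametric endgame). Your key lemma and its two consequences are proved correctly, the count of $n-1$ max-flow calls on contractions of $G$ is right, and the final derivation that the lightest tree edge on the $s$--$t$ path both upper- and lower-bounds $\lambda_G(s,t)$ is fine.

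The one genuine gap is exactly the step you flag and defer: restoring the invariant for an old branch edge whose witness $x$ falls in $S\cap B$ while its branch is reattached to $S\cap A$. Your stated recipe — ``uncrossing the old certificate against $(A,B)$'' — does not close it: crossing the old certificate $C$ (a minimum $(x,y)$-cut with $y$ in the branch, $S\subseteq V\setminus C$) with the new cut only reproves the easy inequality $\lambda_G(s,y)\le w(C)$, which you already have because $C$ separates $s$ from $y$. What is missing is the reverse inequality $\lambda_G(s,y)\ge w(C)$, and for that you must apply your key lemma to the \emph{new} cut, not the old certificate. Concretely: let $(\hat A,\hat B)$ be the pull-back of $(A,B)$, so $s,y\in\hat A$ and $t,x\in\hat B$; first note $\lambda_G(x,y)\le\lambda_G(s,t)$ since $(\hat A,\hat B)$ separates $x$ from $y$. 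Now apply your lemma to the minimum $(s,t)$-cut $(\hat A,\hat B)$ with the pair $s,y\in\hat A$: some minimum $(s,y)$-cut has a side $Z\subseteq\hat A$, hence $Z$ avoids both $x$ and $t$. If $y\in Z$, then $Z$ separates $y$ from $x$, so $\lambda_G(s,y)=w(Z)\ge\lambda_G(x,y)=w(C)$; if $s\in Z$, then $Z$ separates $s$ from $t$, so $\lambda_G(s,y)=w(Z)\ge\lambda_G(s,t)\ge w(C)$. Either way $(s,y)$ is a valid new witness, certified by the same bipartition $C$. With this paragraph inserted (plus the routine remark that the branch contractions used to show $\lambda_{G_S}(s,t)=\lambda_G(s,t)$ can be done one at a time, since each certificate cut survives earlier contractions with unchanged value while connectivities never decrease under contraction), your proof is complete and coincides with the classical one.
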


Since their work, substantial effort has gone into obtaining better \ghtree algorithms, and faster algorithms are now known for many restricted graph classes, including unweighted graphs~\cite{BHKP07, KL15, AKT20}, simple graphs~\cite{AKT21_stoc,AKT21_focs,LPS21,Zhang22,AKT22_soda}, planar graphs~\cite{BSW15}, surface-embedded graphs~\cite{BENW16}, bounded treewidth graphs~\cite{ACZ98, AKT20_b}, and so on (see Table~\ref{table:algs} and the survey~\cite{Panigrahi16}). Indeed, \ghtree algorithms are part of standard textbooks in combinatorial optimization (e.g.,~\cite{AhujaMO93, CookCPS97, Schrijver03}) and have numerous applications in diverse areas such as networks~\cite{Hu74}, image processing~\cite{WL93}, and optimization~\cite{PR82}. They have also inspired entire research directions as the first example of a {\em sparse representation} of graph cuts, the first non-trivial {\em global min-cut} algorithm, the first use of {\em submodular minimization} in graphs, and so forth. 

In spite of this attention, Gomory and Hu's 60-year-old algorithm has remained the state of the art for constructing a \ghtree in general, weighted graphs (or equivalently for \apmf, due to known reductions~\cite{AKT20_b,LPS21} showing that any APMF algorithm must essentially construct a \ghtree). Even if we assume an optimal $O(m)$-time max-flow algorithm, the Gomory-Hu algorithm takes $O(mn)$ time, which is $O(n^3)$ when $m = \Theta(n^2)$. 
%; with the current best max-flow algorithm~\cite{linearflow21},\footnote{There is another state of the art max-flow algorithm~\cite{GaoLP21} whose running time is $\tO(m^{3/2-\eps})$ for some small but fixed $\eps > 0$. This algorithm is slightly faster than \cite{linearflow21} for $m \approx O(n)$. In the rest of the paper, we omit this additional bound for brevity when referring to the current best max-flow running time.} this bound becomes $\tilde O(mn + n^{5/2})$. This bound translates to between $\tO(n^{5/2})$ and $\tO(n^3)$ depending on the value of $m$ as a function of $n$. 
%Even assuming an $O(m)$-time max-flow algorithm, the current running time is $\tO(n^3)$ in the worst case. 
Breaking this cubic barrier for the \ghtree problem  has been one of the outstanding open questions in the graph algorithms literature. 

In this paper, we break this longstanding barrier
by giving a \ghtree algorithm that runs in $\tO(n^2)$-time for general, weighted graphs.

\begin{theorem}
\label{thm:main}
There is a randomized Monte Carlo algorithm for the \ghtree (and \apmf) problems
that runs in $\tilde{O}(n^{2})$ time in general, weighted graphs.
\end{theorem}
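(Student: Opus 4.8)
The plan is to run the classical Gomory--Hu recursion of Theorem~\ref{thm:GH}, but to (i) \emph{batch} it with a uniformly random pivot so that a whole star of minimum cuts is found in one step, and (ii) replace the $n-1$ individual max-flow computations by a \emph{single-source minimum-cut} subroutine that itself uses only polylogarithmically many max-flow calls. Concretely, I would maintain at each node of a recursion tree a minor $G_i$ of $G$ together with a terminal set $U_i$, where $G_i$ is obtained from $G$ by contracting the complement of the current shore to a single non-terminal super-vertex; since the input is the all-pairs problem, every real vertex remains a terminal, so $G_i$ has exactly $|U_i|+1$ vertices and hence $O(|U_i|^2)$ edges. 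The node draws $p\in U_i$ uniformly at random, computes for every $v\in U_i\setminus\{p\}$ a minimum $(p,v)$-cut in $G_i$, lets $S_v$ be the minimal shore of each (the $S_v$ form a laminar family), builds a local star linking the maximal shores to the ``core'' containing $p$, and recurses into each maximal $S_v$ (contracting everything outside it) and into the core. Stitching the recursively returned trees along these stars yields a cut-equivalent (Gomory-Hu) tree; correctness is the standard uncrossing argument --- contracting one shore of a minimum $(p,v)$-cut leaves unchanged every minimum $(s,t)$-cut with $s,t$ on the other shore --- in the batched form of prior work~\cite{BHKP07}. This also solves \apmf, since such a tree answers any edge-connectivity query in near-constant time after standard preprocessing.

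The engine of a recursion node is the single-source minimum-cut problem on $G_i$ with source $p$ (report a minimum $(p,v)$-cut for every $v$), which I would solve by the method developed in Section~\ref{sec:singlesource}. The basic tool is the \emph{isolating-cuts} primitive: for any set $R$ of vertices, with $O(\log|R|)$ max-flow calls plus near-linear overhead one can simultaneously compute, for every $r\in R$, a minimum cut separating $r$ from $R\setminus\{r\}$. Using $p$ together with random samples of the terminals at $O(\log n)$ geometrically decreasing densities (and $O(\log W)$ weight scales if needed), whenever a sampled set meets a minimal $(p,v)$-cut shore in a single vertex the corresponding isolating cut is a genuine minimum $(p,v)$-cut, so taking the best cut found for each $v$ recovers the single-source answer. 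The technical work is to organize the sampling and an auxiliary recursion so that \emph{all} targets are recovered with only $\polylog(n)$ max-flow calls in total (plus $\tilde O(|U_i|^2)$ additional work), rather than the linear-in-$n$ number of calls that a naive scale-by-scale repetition would need; this is precisely the ingredient that was sharpened to obtain the $\tilde O(n^2)$ bound.

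For the running time, a single max-flow call on an $N$-vertex, $O(N^2)$-edge graph costs $\tilde O(N^2)$ using a weighted max-flow algorithm that runs in $\tilde O(m+n^{1.5})$ time, so a recursion node with terminal set $U_i$ costs $\polylog(n)\cdot\tilde O(|U_i|^2)=\tilde O(|U_i|^2)$. It remains to bound $\sum_i|U_i|^2$ over all $O(n)$ recursion nodes. Here I invoke the standard random-pivot analysis (cf.~\cite{BHKP07}): with high probability every vertex of $V$ lies in only $O(\log n)$ of the sets $U_i$ that arise, so $\sum_i|U_i|=\sum_{v\in V}\#\{i:v\in U_i\}\le O(n\log n)$; since also $|U_i|\le n$ for every node, $\sum_i|U_i|^2\le(\max_i|U_i|)\cdot\sum_i|U_i|\le n\cdot\tilde O(n)=\tilde O(n^2)$, and the total running time is $\tilde O(n^2)$. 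The algorithm is Monte Carlo because the isolating-cut sampling and the random pivots each fail with probability $1/\mathrm{poly}(n)$; a union bound over the $O(n)$ nodes keeps the output correct with high probability.

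The step I expect to be the main obstacle is the single-source minimum-cut subroutine: bringing its cost down to $\polylog(n)$ max-flow calls --- rather than the weaker bound behind the earlier $\tilde O(n^{2.875})$ version of this work --- is exactly what makes the overall scheme quadratic, and it requires carefully exploiting the laminar structure of the minimal $(p,v)$-cuts so that random subsampling isolates essentially every target without a per-scale repetition blow-up. A secondary difficulty is the amortized accounting: because a contracted shore can be a dense graph on comparatively few vertices, the analysis must charge work to $\sum_i|U_i|^2$ and therefore leans essentially on the random-pivot guarantee that each vertex participates in only $O(\log n)$ subproblems.
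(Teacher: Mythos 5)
Your outer loop (random pivot, batched Gomory--Hu step, contraction of shores, and the accounting $\sum_i|U_i|^2\le(\max_i|U_i|)\sum_i|U_i|=\tilde O(n^2)$) matches in spirit the reduction the paper performs in \Cref{lem:reduction2} and Appendix~\ref{appendix:reduction}. The genuine gap is in your engine: you assert that the single-source minimum-cut problem at a node can be solved with $\polylog(n)$ max-flow calls by combining the isolating-cuts primitive with geometric subsampling of the terminals, ``whenever a sampled set meets a minimal $(p,v)$-cut shore in a single vertex.'' That mechanism does not give polylogarithmically many calls, and it is not what \Cref{sec:singlesource} does. For a fixed target $v$ whose minimal $(p,v)$-cut shore contains $k$ terminals, the event that the sample at rate $\approx 1/k$ hits that shore exactly at $v$ has probability $\Theta(1/k)$, so recovering \emph{every} target this way requires $\tilde\Omega(\max_v k_v)$ (up to $\Omega(n)$) repetitions, i.e.\ a near-linear number of max-flow calls per node; laminarity of the minimal shores does not repair this, and in weighted (or merely non-simple) graphs there is no degree/volume relation to fall back on --- this is exactly why the simple-graph algorithms ``become incorrect, not just inefficient'' here. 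Moreover, even in the paper's own reduction the sampling-plus-isolating-cuts step is only used to \emph{harvest} cuts whose values are already certified against the true $\lambda(s,v)$ (line~\ref{line:D} of \ref{ghtreestep} tests $\delta S^i_v=\lambda(s,v)$), so the sampling argument presupposes a solver for the single-source terminal mincut values rather than providing one.

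What your proposal is missing are the two new ingredients that actually solve that single-source problem in weighted graphs: (i) the construction of \emph{guide trees} (\Cref{lem:guide}) --- packing $\approx\lambda(U)/4$ Steiner trees via multiplicative-weights on a cut sparsifier, then sampling $O(\log n)$ of them so that for every terminal $t$ some tree $4$-respects an $(s,t)$-mincut; this is the step that costs $\tilde O(n^2)$ and is the true bottleneck; and (ii) the recursive algorithm of \Cref{thm:ssmc} that, \emph{given} such a guide tree, recovers the exact values using centroid decomposition of the tree, isolating cuts on the subtrees, random deletion of subtrees to reduce the respecting parameter $k$, and the cut-threshold lemma to relabel the source --- with max-flow calls of cumulative size $\tilde O(m)$. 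Also two smaller issues: the single-source subproblem is only solved under the promise $\lambda(U)\le\lambda(s,t)\le1.1\lambda(U)$ (\Cref{problem:ssmc}), which is arranged by first running the approximate single-source algorithm of \cite{LP21}, a step absent from your plan; and in the batched ``core'' subproblem the graph $G_\lar$ contains one contracted vertex per discovered shore, so it does not have $|U_i|+1$ vertices --- the paper's size accounting (the ``parent vertex'' argument in Lemma~\ref{lem:run}) is needed there. Without ingredient (i) and (ii), the claimed $\polylog(n)$-call single-source subroutine is precisely the 60-year-old obstruction, not a detail to be sharpened.
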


\paragraph{Remarks:}
1. As noted earlier (and similar to state-of-the-art max-flow algorithms), we assume throughout the paper that edge weights are integers in the range $\{1, 2, \ldots, W\}$. Throughout, the notation $\tO(\cdot)$ hides poly-logarithmic factors in $n$ and $W$.\\ %(The algorithm extends to the case of non-integer edge weights, where $W$ now represents the ratio of the maximum to minimum edge weight, but we omit this for brevity.)\\
2. Our result is unconditional, i.e., it does not need to assume a (near/almost) linear-time max-flow algorithm. We note that concurrent to our work, an almost-linear time max-flow algorithm has been announced~\cite{ChenKLPGS22}. Our improvement of the running time of \ghtree/\apmf is independent of this result: even with this result,
the best \ghtree/\apmf bound was $m^{1+o(1)} n$ which is between $n^{2+o(1)}$ and $n^{3+o(1)}$ depending on the value of $m$, and we improve it to $\tO(n^2)$. Moreover, we stress that we do not need any recent advancement in max-flow algorithms for breaking the cubic barrier: even using the classic Goldberg-Rao max-flow algorithm~\cite{GR98} 
%(that runs in $\tilde{O}(m n^{2/3})$ time) 
in our (combinatorial) algorithm solves \ghtree/\apmf in subcubic time. 
%\amir{added a sentence}\debmalya{Edited Amir's sentence slightly without changing the semantics.}

\medskip

Our techniques also improve the bounds known for the \ghtree problem in {\em unweighted graphs}, and even for {\em simple graphs}.
For unweighted graphs, the best previous results were $\tO(mn)$ obtained by Bhalgat {\em et al.}~\cite{BHKP07} and by Karger and Levine~\cite{KL15}, and an incomparable result that reduces the \ghtree problem to $O(\sqrt{m})$ max-flow calls~\cite{AKT20}.
There has recently been
much interest and progress on \ghtree in simple graphs as well~\cite{AKT21_stoc,AKT21_focs,LPS21,Zhang22,AKT22_soda}, with the current best running time being $(m + n^{1.9})^{1+o(1)}$.

We give a reduction of the \ghtree problem in unweighted graphs to $\polylog(n)$ calls of any max-flow algorithm. Note that this reduction is nearly optimal (i.e., up to the poly-log factor) since the all-pairs max-flow problem is at least as hard as finding a single-pair max-flow. Using the recent $m^{1+o(1)}$-time max-flow algorithm~\cite{ChenKLPGS22}, this yields a running time of $m^{1+o(1)}$ for the \ghtree problem in unweighted graphs. %the following theorem using the state-of-the-art max-flow algorithm~\cite{linearflow21}, which improves the best bounds for both simple and unweighted graphs.
\begin{theorem}
\label{thm:unweighted}
There is a randomized Monte Carlo algorithm for the \ghtree problem that runs in $m^{1+o(1)}$ time in unweighted graphs.
%There is a randomized Monte Carlo algorithm for the \ghtree (and \apmf)\amir{we should either remove APMF from the theorem or explain what it means to solve it in subquadratic time.} problems in unweighted graphs that uses poly-logarithmically many calls to any max-flow algorithm and $m^{1+o(1)}$ time outside these calls. Using the state of the art max-flow algorithm, this yields a running time of $m^{1+o(1)}$ in unweighted graphs.
\end{theorem}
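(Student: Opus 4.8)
The plan is to reuse the recursive Gomory--Hu tree framework behind Theorem~\ref{thm:main}, but to implement each internal subroutine so that it issues only $\polylog(n)$ calls to a max-flow algorithm, rather than the $\tO(n^2)$-time weighted implementation. Recall the skeleton of that framework. Using the isolating-cuts technique, a single batch of $O(\log n)$ max-flow calls produces a ``Gomory--Hu Steiner tree'' that is already correct for every pair whose minimum cut is \emph{balanced} with respect to a random sample of terminal pairs; the remaining, \emph{unbalanced} pairs share a common ``heavy'' side, and they are resolved by one single-source mincut computation from a representative vertex $s$ of that side, after which the algorithm recurses on the vertex-contracted pieces. Since each step either shrinks the largest active piece by a constant factor (balanced case) or peels off the neighborhood of a single source (unbalanced case), the recursion depth is $\polylog(n)$; and since each edge of $G$ survives in only $\polylog(n)$ of the contracted subproblems across a level, the aggregate input size per level is $\tO(m)$. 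Thus the whole computation reduces to $\polylog(n)$ invocations of (i) the isolating-cuts procedure and (ii) a single-source mincut procedure.

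Only the single-source mincut procedure fails to run in $\polylog(n)$ max-flow calls on general weighted inputs, and this is exactly where unweightedness is used (Section~\ref{sec:singlesource}). Fix a source $s$; we want the value and a witnessing cut of the minimum $(s,v)$-cut for every $v$. Guess the target value $\lambda$ up to a factor of two; since connectivity values in an unweighted graph lie in $\{1,\dots,n-1\}$ there are only $O(\log n)$ dyadic scales to try. For a fixed $\lambda$, every $(s,v)$-cut of value at most $\lambda$ is preserved by a sparse $\lambda$-connectivity certificate with $O(\lambda n)$ edges, and a Karger-style random sparsification/contraction then reduces the instance further while (w.h.p.) preserving every fixed cut of value at most $\lambda$; on the resulting small instance the many $(s,v)$-min-cut computations are again batched, via isolating cuts, into $\polylog(n)$ max-flow calls on graphs whose total size over all $O(\log n)$ scales is $\tO(m)$. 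Boosting the sampling steps by $O(\log n)$ independent repetitions drives the overall failure probability to $1/\mathrm{poly}(n)$, so the algorithm is Monte Carlo as claimed.

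Assembling the bounds: the algorithm makes $\polylog(n)$ max-flow calls in all, on graphs of total size $m\cdot\polylog(n)$, plus $\tO(m)$ additional work for sparsification, contraction, and bookkeeping; all of these graphs stay unweighted up to edge multiplicities bounded by $\mathrm{poly}(n)$, so each max-flow call costs $m^{1+o(1)}$ via the algorithm of Chen \emph{et al.}~\cite{ChenKLPGS22}, giving $m^{1+o(1)}$ total. The step I expect to be the crux is the single-source mincut subroutine: pushing it down to $\polylog(n)$ max-flow calls needs the unweighted structure above, and one must simultaneously ensure that the graphs created by the repeated contractions and by the isolating-cuts recursion have total size only $m\cdot n^{o(1)}$---otherwise the reduction would be to polynomially many small instances rather than to $\polylog(n)$ genuine max-flow calls. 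Controlling this total size is a charging argument resting on the uncrossing of minimum cuts, which keeps the recursive subproblems essentially disjoint. A more routine point to verify is that contraction never blows up edge multiplicities or the maximum flow value beyond $\mathrm{poly}(n)$, so that the Chen \emph{et al.} running-time bound applies at every call.
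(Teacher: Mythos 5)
You have misidentified where the unweighted assumption enters the paper's argument, and the replacement step you propose in its place does not hold up. In the paper, the single-source subroutine \emph{given a guide tree} (\Cref{thm:ssmc}, Section~\ref{sec:singlesource}) already runs in $m^{1+o(1)}$ on general weighted graphs; the only bottleneck in the weighted case is the \emph{construction of guide trees} by Steiner-subgraph packing (\Cref{lem:guide}, Section~\ref{sec:packing}), because the MWU packing makes $\tO(m)$ calls to an approximate minimum Steiner tree routine. The unweighted speedup of \Cref{thm:unweighted} consists precisely of replacing these static Steiner-tree computations by a decremental $(2+\eps)$-approximate minimum Steiner subgraph data structure (Section~\ref{sec:unweighted}), obtained by dynamizing Mehlhorn's reduction on top of decremental $(1+\eps)$-SSSP and fully dynamic MST. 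Your proposal never mentions guide trees or Steiner packing at all, so it is not a variant of the paper's proof but a different algorithm whose central step is unsubstantiated.

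That central step --- solving single-source mincuts by dyadic guessing of $\lambda$, a sparse $\lambda$-connectivity certificate, Karger-style sampling, and then batching ``the many $(s,v)$-min-cut computations'' via isolating cuts into $\polylog(n)$ max-flow calls --- does not work as stated. The Isolating Cuts Lemma (\Cref{lem:iso cut}) returns, for each terminal $v$, a minimum cut separating $v$ from \emph{all} other terminals; this value can be strictly larger than $\lambda(s,v)$ whenever the $(s,v)$-mincut has other terminals on $v$'s side, and converting isolating cuts into exact single-source mincuts is exactly the hard part, which in the paper requires the guide-tree recursion with centroid decomposition, subtree sampling, and the Cut-Threshold step (Cases 1--3 in the proof of \Cref{lem:correctness}). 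Moreover, Karger-style random sparsification preserves cut values only up to a $(1\pm\eps)$ factor, so it cannot certify the \emph{exact} values $\lambda(s,v)$ that the reduction of \Cref{lem:reduction2} demands (the promise there only bounds the spread of the values; it does not permit approximate answers). Sparse certificates do preserve cuts of value at most $\lambda$, but they do not address this structural obstruction, and the certificate-plus-isolating-cuts ideas from the earlier simple-graph algorithms are among those the paper explicitly notes become incorrect, not merely slow, on unweighted multigraphs. Hence the proposal has a genuine gap at its crux; the routine outer recursion and the application of the Chen et al.\ max-flow algorithm are fine, but they do not carry the theorem without the guide-tree (or an equivalently powerful) mechanism.
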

%Note that our running time improves to $m^{1+o(1)}$ if we assume a (near) linear-time max-flow algorithm. Even without the last assumption, our result is optimal up to lower order (sub-polynomial) terms since the \ghtree problem generalizes the $s-t$ min-cut (and therefore the $s-t$ max-flow) problem.

\paragraph{APMF vs APSP.}%\amir{i rewrote this. i think it's stronger now.} 
Our results deliver a surprising message to a primordial question in graph algorithms:  \emph{What is easier to compute, {shortest paths} or {max-flows}?} 
Ignoring $n^{o(1)}$ factors, the single-pair versions are both solvable in linear-time and therefore equally easy; albeit, the shortest path algorithm \cite{dijkstra1959note} is classical, elementary, and fits on a single page, whereas the max-flow algorithm \cite{ChenKLPGS22} is very recent, highly non-elementary, and requires more than a hundred pages to describe and analyze.
This and nearly all other evidence had supported the consensus that max-flows are at least as hard as (if not strictly harder than) shortest paths, and perhaps this can be established by looking at the more general \emph{all-pairs} versions: \apmf and \apsp (All-Pairs Shortest-Paths).
Much effort had gone into proving this belief (\apmf $\geq$ \apsp) using the tools of fine-grained complexity~\cite{AVY15,KT18,A+18,AKT20,AKT20_b}, with limited success: it was shown that (under popular assumptions) \apmf is strictly harder than \apsp in \emph{directed} graphs, but the more natural undirected setting remained open.
The first doubts against the consensus were raised in the aforementioned $n^{2+o(1)}$ algorithms for \apmf in simple (\emph{unweighted}) graphs that go below the $n^{\omega}$ bound of \apsp \cite{seidel1995all} (where $2\leq \omega < 2.37286$ \cite{AlmanW20} denotes the fast matrix multiplication exponent).
But if, as many experts believe, $\omega=2+o(1)$ then the only conclusion is that \apmf and \apsp are equally easy in simple graphs.
In general (\emph{weighted}) graphs, however, one of the central conjectures of fine-grained complexity states that the cubic bound for \apsp cannot be broken (even if $\omega=2$).
Under this ``\apsp Conjecture'', Theorem~\ref{thm:main} proves that \apmf is \emph{strictly easier} than \apsp!
Alternatively, if one still believes that \apmf $\geq$ \apsp, then our paper provides strong evidence against the \apsp Conjecture and against the validity of the dozens of lower bounds that are based upon it (e.g., \cite{RZ04,VW18,AbboudW14,AbboudGW15,Saha15,AbboudD16,bringmann2020tree}) or upon stronger forms of it (e.g., \cite{backurs2017improving,BackursDT16,cygan2019problems,AbboudCK20,GawrychowskiMW21}).

%We note that prior to our work, there were many attempts~\cite{AVY15,KT18,A+18,AKT20,AKT20_b} at proving a conditional {\em lower bound} of $n^{3-o(1)}$ for the \apmf problem. %\alert{(Note that $O(mn) = O(n^3)$, i.e., obtaining a subcubic running time was the bottleneck prior to our work, even assuming a linear-time max-flow algorithm.)} 
%This is primarily because of the many similarities with the All-Pairs Shortest-Paths (\apsp) problem, which is widely conjectured to require $n^{3-o(1)}$ time. Both problems have a long history of attacks that have failed to break the cubic barrier, and the state of the art was quite similar in that there were no subcubic algorithms in weighted graphs, but $O(n^{3-\eps})$-time algorithms existed in simple graphs~\cite{AKT21_stoc,AKT22_soda,LPS21,Zhang21} or if we allow a $(1+\eps)$-approximation~\cite{LP21}. Our result explains why these reduction attempts have only resulted in lower bounds in more general settings (such as in directed graphs), and gives evidence that all-pairs max-flow might actually be {\em easier} than all-pairs shortest-path. (Contrast this to the single-pair versions of these problems: we have elementary near-linear time algorithms for shortest path such as Dijkstra's algorithm, whereas a similar result for max-flow has appeared only in the last few months and is highly non-elementary.)

\subsection{Related Work}
\label{sec:related}

\paragraph*{Algorithms.}
Before this work, the time complexity of constructing a Gomory-Hu tree in general graphs has improved over the years only due to improvements in max-flow algorithms.
An alternative algorithm for the problem was discovered by Gusfield~\cite{Gusfield90}, where the $n-1$ max-flow queries are made on the original graph (instead of on contracted graphs).
This algorithm has the same worst-case time complexity as Gomory-Hu's, but may perform better in practice \cite{GT01}.
Many faster algorithms are known for special graph classes or when allowing a $(1+\eps)$-approximation,
see Table~\ref{table:algs} for a summary.
Moreover, a few heuristic ideas for getting a subcubic complexity in social networks and web graphs have been investigated~\cite{akiba2016cut}. 

\begin{table}[t]
\centering

\begin{footnotesize}
\begin{tabular}{l c l}
\hline
\hline
Restriction & Running time & Reference  \\ [0.5ex]
\hline
\textbf{General} & $(n-1)\cdot T(n,m)$ & \textbf{Gomory and Hu}~\cite{GH61}\\[1ex]
Bounded Treewidth* & $\tO(n)$ &Arikati, Chaudhuri, and Zaroliagis~\cite{ACZ98} \\
Unweighted & $\tO(mn)$ & Karger and Levine~\cite{KL15} \\
Unweighted & $\tO(mn)$ & {Bhalgat, Hariharan, Kavitha, and Panigrahi~}\cite{BHKP07} \\
Planar & $\tO(n)$ & Borradaile, Sankowski, and Wulff-Nilsen~\cite{BSW15} \\
Bounded Genus & $ \tO(n)$ & {Borradaile, Eppstein, Nayyeri, and Wulff-Nilsen}~\cite{BENW16} \\ 
Unweighted & $\tilde{O}(\sqrt{m}) \cdot T(n,m)$ 
& Abboud, Krauthgamer, and Trabelsi~\cite{AKT20} \\
$(1+\eps)$-Approx* & $\tO(n^{2})$ & Abboud, Krauthgamer, and Trabelsi~\cite{AKT20_b} \\
Bounded Treewidth & $\tO(n)$ & Abboud, Krauthgamer, and Trabelsi~\cite{AKT20_b} \\
Simple & $\tO(n^{2.5})$ & Abboud, Krauthgamer, and Trabelsi~\cite{AKT21_stoc} \\
$(1+\eps)$-Approx & $\polylog(n) \cdot T(n,m)$ & Li and Panigrahi~\cite{LP21} \\
Simple & $\tO(n^2)$ & $\dagger$ Abboud, Krauthgamer, and Trabelsi~\cite{AKT21_focs} \\
Simple & $\tO(n^2)$ & $\dagger$ Li, Panigrahi, and Saranurak~\cite{LPS21} \\
Simple & $\tO(n^{2 \frac{1}{8}})$ & $\dagger$ Zhang~\cite{Zhang22} \\
Simple & $\tO(m+n^{1.9})$ & Abboud, Krauthgamer, and Trabelsi~\cite{AKT22_soda} \\[1ex]
\textbf{General} & $\tilde{O}(n^{2})$ & \textbf{\Cref{thm:main}} \\[1ex]
\textbf{Unweighted} & $m^{1+o(1)} + \polylog(n)\cdot T(n, m)$ & \textbf{\Cref{thm:unweighted}} \\[1ex]
\hline
\end{tabular}
\caption{Algorithms for constructing a data-structure that answers $s,t$-min-cut queries in $\tilde{O}(1)$ time (listed chronologically). 
Except for those marked with *, they all also produce a Gomory-Hu tree.
$T(n,m)$ denotes the time to compute $s,t$-max-flow in an undirected graph, which using Chen~{\em et al.}~\cite{ChenKLPGS22} is $m^{1+o(1)}$.
The three results marked with $\dagger$ were obtained concurrently and independently of each other.
}
\label{table:algs}

\end{footnotesize}
\end{table}

\paragraph*{Hardness Results.}
The attempts at proving conditional lower bounds for All-Pairs Max-Flow have only succeeded in the harder settings of directed
graphs~\cite{KT18, A+18} or undirected graphs with vertex weights~\cite{AKT20}, where Gomory–Hu trees cannot even exist~\cite{May62,Jel63,HL07}. In particular, SETH gives an $n^{3-o(1)}$ lower bound for weighted sparse directed graphs~\cite{KT18} and the $4$-Clique conjecture gives an $n^{\omega+1-o(1)}$ lower bound for unweighted dense directed graphs~\cite{A+18}.

\paragraph*{Applications.}
Gomory-Hu trees have appeared in many application domains. We mention a few examples:
in mathematical optimization for the $b$-matching problem~\cite{PR82} (and that have been used in a breakthrough NC algorithm for perfect matching in planar graphs \cite{AnariV20});
in computer vision~\cite{WL93}, leading to the \emph{graph cuts} paradigm;
in telecommunications~\cite{Hu74} where there is interest in characterizing which graphs have a Gomory-Hu tree that is a \emph{subgraph} \cite{korte2012combinatorial,Naves18}.
The question of how the Gomory-Hu tree changes with the graph has arisen in applications such as energy and finance and has also been investigated, e.g. \cite{elmaghraby1964sensitivity,picard1980structure,barth2006revisiting,hartmann2013dynamic,BGK20esa}.

\begin{comment}

related notions and generalizations, some options:
- hypergraphs
- APMF in directed, vertex cuts, vertex-capacitated
- compact representations of other cut-related objects

Ideas to mention to show that the GH paper was very influential:
- was it THE FIRST graph sparsification/simplification result? 
Can we say that spanners, cut sparsifiers, HST's, etc. were inspired by this result?
- the surprising usage in the breakthrough Anari-Vazirani algorithm for matching.
- the graph cuts paradigm in computer vision.
- it was the first nontrivial algorithm for global min-cut.
- other graph compression methods: cactus, SPQR, etc. See Seth Pettie's recent work.
- impossibility of compression for directed or vertex-capacitated.
- Robi's work on compression for min $k$-cuts.
- Generalizations to Hyperhraphs? Other submodular functions?

\end{comment}

%\amir{note that i put related work before overview; felt slightly more natural}
\subsection{Overview of Techniques}
\label{sec:overview}

We now introduce the main technical ingredients used in our algorithm, and explain how to put them together to prove \Cref{thm:main} and \Cref{thm:unweighted}.

\paragraph{Notation.}
In this paper, a \emph{graph} $G$ is an undirected graph $G=(V,E, w)$ with edge weights 
$w(e) \in \{1, 2, \ldots, W\}$ for all $e\in E$. 
If $w(e)=1$ for all $e\in E$, we say that $G$ is unweighted. 
The total weight of an edge set $E'\subseteq E$ is defined as $w(E')=\sum_{e\in E'}w(e)$.
For a cut $(S,V\setminus S)$, we also refer to a side $S$ of this cut as a cut. 
The \emph{value of cut $S$} is denoted $\delta(S)=w(E(S,V\setminus S))$. 
For any two vertices $a,b$, we say that $S$ is an $(a,b)$-cut 
if $|S\cap \{a, b\}| = 1$.
An \emph{$(a,b)$-mincut} is an $(a,b)$-cut of minimum value,
and we denote its value by $\lambda(a,b)$.
%\amir{i'm changing to $(s,t)$-mincut everywhere.}

%An \emph{$a-b$ min-cut} is an $a-b$ cut of minimum value,
%and we denote its value by $\lambda(a,b)$.
%\ohad{should we go for $s-t$ min-cut, $s-t$ max-flow, or $(s,t)$-mincut, $(s,t)$-maxflow?}
%\amir{i think it's $(s,t)$-mincut everywhere.}
%\amir{also: min-cuts vs mincut?}

\paragraph{Reduction to Single-Source Minimum Cuts.}
The classic Gomory-Hu approach to solving APMF is to recursively solve
$(s,t)$ mincut problems on graphs obtained by contracting portions of the
input graph. This leads to $n-1$ max-flow calls on graphs that
cumulatively have $O(mn)$ edges. Recent work~\cite{AKT20_b} has
shown that replacing $(s,t)$ mincuts by a more powerful gadget of 
single-source mincuts reduces the cumulative size of the contracted
graphs to only $\tO(m)$. But, how do we solve the single-source mincuts
problem? Prior to our work, a subcubic algorithm was only known for 
simple graphs~\cite{AKT21_stoc,AKT21_focs,LPS21,Zhang22,AKT22_soda}. 
Unfortunately, if applied to non-simple graphs,
these algorithms become incorrect, and 
not just inefficient. 
%\debmalya{Verify that this last sentence is correct.} \amir{It's correct for our algorithms. But it's easy to tweak them so that they're inefficient rather than incorrect.}

Conceptually, our main 
contribution is to give an $\tO(n^2)$-time algorithm for the single 
source mincuts problem in general weighted graphs.
For technical reasons, however, we will further restrict this 
problem in two ways: 
(1) the algorithm (for the single-source problem) only needs to 
return the values $\lambda(s,t)$ for some terminals $t\in U\setminus \{s\}$, 
and (2) the mincut values $\lambda(s,t)$ for the terminals $t\in U\setminus \{s\}$ are guaranteed
to be within a $1.1$-factor of each other.\footnote{The value $1.1$ is arbitrary and can be replaced by any suitably small constant greater than $1$.}
%The first restriction only serves to simplify our new algorithm, but the second is more important for our new ideas to work. 

We now state a reduction from \ghtree to this restricted single-source problem.
%We first define the problem formally.
Let $U \subseteq V$ be a set of terminal vertices. The \emph{$U$-Steiner connectivity/mincut} is $\lambda(U)=\min_{a,b\in U}\lambda(a,b)$. 
The restricted single-source problem is defined below.

\begin{problem}[Single-Source Terminal Mincuts]\label{problem:ssmc}
The input is a graph $G=(V,E,w)$, a terminal set $U \subseteq V$ and a source terminal $s\in U$ 
with the promise that 
for all $t\in U\setminus\{s\}$, we have $\lambda(U) \le \lambda(s,t) \le 1.1\lambda(U)$. 
The goal is to determine the value of $\lambda(s,t)$ for each terminal $t\in U\setminus\{s\}$.
\end{problem}

The reduction has two high-level steps. First, we reduce the single-source terminal mincuts problem \emph{without the promise that $\lambda(s,t) \in [\lambda(U), 1.1 \lambda(U)]$} (we define this as \Cref{problem:ssmc-no-promise} in \Cref{appendix:reduction}) to the corresponding problem with the promise (i.e., \Cref{problem:ssmc}) by calling an approximate single-source mincuts algorithm of Li and Panigrahi~\cite{LP21}. 
Then, we use a reduction from Gomory-Hu tree to the single-source terminal mincuts without the promise (i.e., \Cref{problem:ssmc-no-promise}) that was presented by Li~\cite{Li21thesis}.\footnote{The actual reduction is slightly stronger in the sense that it only requires a ``verification'' version of single-source terminal mincuts, but we omit that detail for simplicity.} We present both steps of the reduction in Appendix~\ref{appendix:reduction}.

%The proof of this reduction can be found in \cite[Lemma A.5]{LPS21}.\footnote{While \cite{LPS21} does not state the dependency on $W$ (and rather just assumes that $W=\text{poly}(n)$), it is clear in the proof of Lemma~A.5 that the only part that depends on $W$ is the iterative refinement by cut size, where we use the range $[(1.1)^{i-1},(1.1)^{i}]$ instead of their $[2^{i-1},2^{i}]$.}

\begin{restatable}[Reduction to Single-Source Terminal Mincuts]{lemma}{Reduction}\label{lem:reduction2}
There is a randomized algorithm that computes a \ghtree of an input graph by making calls to max-flow and single-source terminal mincuts (with the promise, i.e., \Cref{problem:ssmc}) on graphs with a total of $\tilde{O}(n)$ vertices and $\tilde{O}(m)$ edges, and runs for $\tilde{O}(m)$ time outside of these calls.
\end{restatable}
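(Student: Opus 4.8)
The plan is to compose two reductions, exactly as sketched above: from the \ghtree problem to single-source terminal mincuts \emph{without} the promise (\Cref{problem:ssmc-no-promise}), and from that no-promise problem to the promised version (\Cref{problem:ssmc}). Both are carried out in full in Appendix~\ref{appendix:reduction}; here I only describe the structure and flag the delicate points.

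For the first reduction I would invoke the recursive \ghtree construction of Li~\cite{Li21thesis}, which refines the Gomory--Hu scheme~\cite{GH61} along the lines of~\cite{AKT20_b}: rather than resolving one $(s,t)$-mincut at a time, it maintains a partition of $V$ into active clusters, each carrying a partial cut-equivalent tree, and in each of its $O(\log n)$ recursion levels it picks, independently within every cluster, a uniformly random pivot $s$ and runs single-source terminal mincuts from $s$ on that cluster. Submodularity makes the returned mincuts a laminar family splitting the cluster, the partition is refined accordingly, and a random pivot fixes a constant fraction of the tree in expectation, so $O(\log n)$ levels suffice; a poly-logarithmic number of max-flow calls per cluster then extracts the cut sides needed for the recursion (the ``verification'' flavor of the subroutine referred to in the footnote). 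The accounting point I would verify carefully is that, within one level, the graphs handed to the single-source subroutine are contractions of $G$ onto the current clusters (with one extra contracted ``outside'' vertex), and since the clusters partition $V$ these graphs have $O(n)$ vertices and $O(m)$ edges in total; summing over the $O(\log n)$ levels keeps this at $\tilde{O}(n)$ vertices and $\tilde{O}(m)$ edges, with only $\tilde{O}(m)$ bookkeeping work outside the subroutine calls.

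For the second reduction, given an instance $(G,U,s)$ of \Cref{problem:ssmc-no-promise}, I would first run the approximate single-source mincut algorithm of Li and Panigrahi~\cite{LP21}, obtaining in $\polylog(n)$ max-flow calls (on graphs of total size $\tilde{O}(m)$) estimates $\hat\lambda(t)$ with $\lambda(s,t)\le\hat\lambda(t)\le(1+\eps)\lambda(s,t)$ for a small constant $\eps$. I would then bucket $U\setminus\{s\}$ geometrically by $\hat\lambda(t)$ into $O(\log(nW))$ classes, each of which together with $s$ forms a terminal set $U_i$ on which all true values $\lambda(s,t)$ lie within a factor close to $1$; for each $i$ I would build a contracted graph $G_i$ that preserves $\lambda_G(s,t)$ for every $t\in U_i\setminus\{s\}$ while forcing the Steiner-mincut promise $\lambda(s,t)\le 1.1\,\lambda(U_i)$, call \Cref{problem:ssmc} on $(G_i,U_i,s)$, and collect the returned values over all buckets to answer the no-promise instance. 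Since the $G_i$ are contractions of $G$ and there are only $O(\log(nW))$ of them per input, these promised-version calls total $\tilde{O}(n)$ vertices and $\tilde{O}(m)$ edges, which---after composing with the first reduction and charging the approximate-mincut runs to the max-flow budget---stays within the bounds claimed by the lemma.

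The step I expect to be the main obstacle is the construction of $G_i$ in the second reduction: bucketing alone controls only the values $\lambda(s,\cdot)$, whereas the promise additionally forbids a small mincut between two terminals of the same bucket, so the contraction must be chosen carefully---guided by the approximation, e.g.\ by contracting the high-connectivity ``blobs'' of an approximate cut-equivalent tree at the relevant scale---so that no such small separation survives while every relevant $(s,t)$-mincut is left intact. Getting one contraction to be simultaneously promise-enforcing and value-preserving, and verifying the size bounds it induces, is the crux of this direction; once the single-source subroutine is available, the recursion of the first reduction is by comparison essentially bookkeeping.
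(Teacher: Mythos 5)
Your overall two-step architecture (first reduce \ghtree to the no-promise problem via Li's recursive random-pivot scheme, then remove the promise using the approximate single-source mincut algorithm of \cite{LP21} and geometric bucketing) matches the paper's proof. However, there is a genuine gap exactly at the point you flag as ``the main obstacle'': you assert that bucketing alone cannot enforce the promise because two terminals in the same bucket might have a small mincut between them, and you therefore propose building, for each bucket, a specially contracted graph $G_i$ that is simultaneously promise-enforcing and value-preserving---a construction you do not carry out. This obstacle does not exist, and the proposed detour is unnecessary. If $U_i=\{s\}\cup V_i$ where every $t\in V_i$ has $\lambda(s,t)$ in a window $[(1+\eps)^{i-1},(1+\eps)^{i+3}]$ (which the $(1+\eps)$-approximate estimates guarantee after bucketing by powers of $1+\eps$ with $\eps=0.01$), then the Steiner mincut of $U_i$ in the \emph{original} graph $G$ already satisfies $\lambda(U_i)\ge(1+\eps)^{i-1}$: the minimizing cut separates $s$ from some $v\in V_i$, hence its value is at least $\lambda(s,v)$. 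Equivalently, for any $t,t'\in V_i$ one has $\lambda(t,t')\ge\min\{\lambda(t,s),\lambda(s,t')\}$, so no small intra-bucket mincut can occur. Consequently $\lambda(s,t)\le(1+\eps)^3\lambda(U_i)\le 1.1\,\lambda(U_i)$ for all $t\in V_i$, and the promised problem \Cref{problem:ssmc} can simply be called on $(G,U_i,s)$ for each of the $O(\log(nW))$ buckets, on the unmodified graph $G$; the size accounting ($O(n\log(nW))$ vertices, $O(m\log(nW))$ edges across these calls) is then immediate, with the \cite{LP21} runs charged to the max-flow budget as you intended.

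Because your plan leaves this step unresolved (and the resolution you sketch---contracting ``high-connectivity blobs'' of an approximate cut-equivalent tree---is neither constructed nor needed), the proposal as written does not constitute a proof of the lemma. The first half of your plan is fine in outline and agrees with the paper's Appendix~\ref{appendix:reduction} (random pivots, isolating cuts, contraction-based recursion with polylogarithmic depth---the paper proves depth $O(\log^3 n)$ rather than $O(\log n)$, which is immaterial for the $\tilde{O}(\cdot)$ bounds), so repairing the proposal only requires replacing the contraction-based bucket gadget with the observation above.
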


\paragraph{Guide Trees.} The main challenge, thus, is to solve single-source terminal mincuts (\Cref{problem:ssmc}) faster than $n-1$ max-flow calls.
%The natural idea is to use some structural analysis of the graph in order to \emph{guide} us: if we are wise about the pairs we query, we can get away with fewer calls, or with cuts that let us contract large portions of the graph. When the input graph is simple, expander decompositions are a useful guide \cite{AKT21_focs,LPS21,Zhang21,AKT22_soda}; alas, it is unclear how to utilize them in general graphs. Instead, our guide is based on an approximately maximal packing of Steiner trees in the graph.
%
Let us step back and think of a simpler problem: the {\em global mincut} problem. In a beautiful paper, Karger~\cite{Karger00} gave a two-step recipe for solving this problem by using the duality between cuts and tree packings.
First, by packing a maximum set of edge-disjoint \emph{spanning} trees in a graph and sampling one of them uniformly at random, the algorithm obtains a spanning tree that, with high probability, \emph{$2$-respects} the global mincut, meaning that only two edges from the tree cross the cut.
Second, a simple linear-time dynamic program computes the minimum value cut that $2$-respects the tree.
Can we use this approach? 

%The major difference, however, is that in our setting spanning trees no longer make sense. Karger's result is based on the fact that $\lambda_{\min}/2$ spanning trees can be packed in a graph with global mincut $\lambda_{\min}$, and therefore an average tree is only cut twice by the global mincut.\footnote{Hence a random tree is $2$-respecting.} But in the single-source problem, we are interested in cuts of value $\lambda(U)$ that could be far larger than $\lambda_{\min}$,\footnote{Indeed, the non-terminals typically have much smaller connectivity to $s$ than the terminals.} and the spanning trees in a maximal packing might be cut an unbounded number of times.

Clearly, we cannot hope to pack $\lambda(U)$ disjoint spanning trees since the global mincut value could be much less than $\lambda(U)$. 
%But, we can pack (approximately) $\lambda(U)$ Steiner trees with terminals $U$ by the duality between Steiner mincut and Steiner tree packings. 
%Nevertheless, we follow a similar path with the key difference that we consider \emph{Steiner} trees rather than \emph{spanning} trees.
But what about Steiner trees? A tree $T$ is called a \emph{$U$-Steiner tree} if it spans $U$, i.e., $U \subseteq V(T)$.
When $U$ is clear from the context, we write Steiner instead of $U$-Steiner.
%Interestingly, for technical reasons that will be clear later, we have to allow the Steiner trees to use edges outside $G$, i.e., we do not need that $T$ be a subgraph of $G$. 
%\debmalya{I tried explaining the last sentence, but it's still confusing. Should we remove it here and state it later where relevant?}

First, we define the $k$-respecting property for Steiner trees.

\begin{definition}[$k$-respecting]\label{defn:k-respect}
Let $A\subseteq V$ be a cut in $G = (V, E, w)$. Let $T$ be a tree on (some subset of) vertices in $V$. We say that the tree $T$ \emph{$k$-respects} the cut $A$ (and vice versa) if $T$ contains at most $k$ edges with exactly one endpoint in $A$.
\end{definition}

Using this notion of $k$-respecting Steiner trees, we can now define a collection of guide trees that is analogous to a packing of spanning trees.

\begin{definition}[Guide Trees]
For a graph $G$ and set of terminals $U \subseteq V$ with a source $s \in U$,
a collection of $U$-Steiner trees $T_1,\dots,T_h$ is called a \emph{$k$-respecting set of guide trees}, or in short \emph{guide trees},
if for every $t \in U \setminus \{s\}$,
at least one tree $T_i$ $k$-respects some $(s,t)$-mincut in $G$. 
\end{definition}

%Guide Trees are the natural analogues of spanning trees when applying the above approach to our setting. 
Two questions immediately arise:
\begin{enumerate}
\item Can we actually obtain such $k$-respecting guide trees, for a small $k$ (and $h$)?
\item Can guide trees be used to speed up the single-source mincuts algorithm?
\end{enumerate}

The first question can be solved in a way that is conceptually (but not technically) similar to Karger's algorithm for global mincut.
We first prove, using classical tools in graph theory (namely, Mader's splitting-off theorem~\cite{Mader78}, and Nash-Williams~\cite{NW61} and Tutte's~\cite{Tut61} tree packing) that there exists a packing with $\lambda(U)/2$ edge-disjoint Steiner trees.
Then, we use the width-independent Multiplicative Weights Update (MWU) framework  \cite{GargK07,Fleischer00,AroraHK12} to pack a near-optimal number of Steiner trees using $\tilde{O}(m)$ calls to an (approximation) algorithm for the \emph{minimum Steiner tree} problem.
For the latter, we use Mehlhorn's $2$-approximation algorithm \cite{Mehlhorn88} that runs in $\tilde{O}(m)$ time, giving a packing of $\lambda(U)/4$ Steiner trees in $\tilde{O}(m^2)$ time. To speed this up,
we compute the packing in a $(1+\eps)$-cut-sparsifier of $G$~(e.g., \cite{BeK15}), which effectively reduces $m$ to $\tO(n)$ for this step.
Overall, this gives an $\tilde{O}(n^2)$-time algorithm for constructing $4$-respecting guide trees.

We note that our improved running time for unweighted graphs comes from replacing this algorithm for constructing guide trees by a more complicated algorithm.
%\amir{i added some details here} 
Specifically, we show that \emph{all} of the $\tilde{O}(m)$ calls to (approximate) minimum Steiner tree during the MWU algorithm can be handled in a total of $m^{1+o(1)}$ time using a novel dynamic data structure that relies on (1) a non-trivial adaptation of Mehlhorn's reduction from minimum Steiner tree to Single-Source Shortest Paths and (2) a recent dynamic algorithm for the latter problem \cite{BGS21_arxiv}. 
This achieves running time $m^{1+o(1)}$ compared with $\tO(n^2)$ for unweighted graphs.%\footnote{This result is given in Section~\ref{sec:unweighted}. Notably, without this result (of independent interest) the paper is rather short.}
%\amir{the following sentences can be moved to section 3 or 4.}\debmalya{I prefer moving this out of here for reasons I mentioned in the earlier comment where we said that $m^{1+o(1)}$-time for weighted graphs might be plausible using our techniques. It is 100+ pages of work, but I can imagine many reviewers saying that's not a good enough excuse for accepting a paper that doesn't feel complete.} Notably, this is the only part that fails for weighted graphs; the issue being that the MWU framework runs into a well-known ``flow-decomposition barrier''. The latter was recently overcome (for the easier task of path-packing rather than Steiner-tree-packing) with the so-called \emph{randomized} MWU framework \cite{BGS21_arxiv}. It is thus plausible that almost-linear time can be achieved in weighted graphs as well, but substantial technical roadblocks must be overcome in order to extend the latter (100+ page) work to our more difficult setting.

We summarize the construction of guide trees in the next theorem, which we prove in \Cref{sec:packing}. (The new dynamic data structure that is used in the improvement for unweighted graphs is given in \Cref{sec:unweighted}.)

\begin{theorem}[Constructing Guide Trees]\label{lem:guide}
There is a randomized algorithm that, given a graph $G=(V,E,w)$, a terminal set $U \subseteq V$ and a source terminal $s\in U$, with the guarantee that for all $t\in U\setminus\{s\}$, $\lambda(U) \le \lambda(s,t) \le 1.1\lambda(U)$, computes a $4$-respecting set of $O(\log{n})$ guide trees.
The algorithm takes $\Otil(n^2)$ time on weighted graphs (i.e., when $w(e)\in \{1, 2, \ldots, W\}$ for all $e\in E$) and $m^{1+o(1)}$ time on unweighted graphs (i.e., when $w(e)=1$ for all $e\in E$).
\end{theorem}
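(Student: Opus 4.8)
The plan is to build the guide trees by first computing a near-optimal \emph{fractional packing of $U$-Steiner trees} and then randomly sampling $O(\log n)$ trees from it (the case $\lambda(U)=0$ is trivial, so assume $\lambda(U)\ge 1$). Concretely, suppose we have $U$-Steiner trees $T_1,\dots,T_h$ with weights $y_1,\dots,y_h\ge 0$ such that $\sum_{i:\,e\in T_i}y_i\le w(e)$ for every edge $e$ and $Y:=\sum_i y_i\ge \lambda(U)/4$. We draw $O(\log n)$ trees independently, each equal to $T_i$ with probability $y_i/Y$. To argue that this is a $4$-respecting set of guide trees with high probability, fix $t\in U\setminus\{s\}$ and any $(s,t)$-mincut $C$. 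Each $T_i$ is connected and, since $U\subseteq V(T_i)$, contains $s$ and $t$ on opposite sides of $C$, hence has at least one edge crossing $C$; writing $x_i$ for the number of such edges, the capacity constraints give $\sum_i y_i x_i=\sum_{e\text{ crossing }C}\big(\sum_{i:\,e\in T_i}y_i\big)\le\delta(C)=\lambda(s,t)\le 1.1\,\lambda(U)$. Since each $x_i\ge 1$ and $Y\ge\lambda(U)/4$, the total weight of trees with $x_i\ge 5$ is at most $\tfrac14\big(1.1\,\lambda(U)-Y\big)\le\tfrac14(4.4-1)Y=0.85\,Y$ --- here $1.1\cdot 4=4.4<5$ is exactly why $k=4$ works --- so a single weighted sample $4$-respects $C$ with probability $\ge 0.15$. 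Thus with high probability some tree among the $O(\log n)$ samples $4$-respects $C$, and a union bound over $t\in U\setminus\{s\}$ finishes the correctness argument, giving $h=O(\log n)$.

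\emph{Existence of a large packing.} We claim that in any graph $K$ the maximum fractional $U$-Steiner-tree packing has value $\ge\lambda_K(U)/2$; the theorem needs this for the (sparsified) input. Double every edge weight (for the parity hypothesis of splitting-off) and repeatedly apply Mader's splitting-off theorem~\cite{Mader78} at the non-terminal vertices until none remain, obtaining a graph $K'$ on vertex set $U$ with $\lambda_{K'}(a,b)=2\lambda_K(a,b)$ for all $a,b\in U$, so the global min-cut of $K'$ equals $2\lambda_K(U)$. By the Nash-Williams--Tutte theorem~\cite{NW61,Tut61} in its fractional form, $K'$ has a fractional spanning-tree packing of value $\min_{\mathcal{P}}\frac{e_{K'}(\mathcal{P})}{|\mathcal{P}|-1}$, and for a partition $\mathcal{P}$ into $r\ge 2$ parts every part is a nonempty proper subset of $U$, so $e_{K'}(\mathcal{P})=\tfrac12\sum_{P}\delta_{K'}(P)\ge r\lambda_K(U)\ge(r-1)\lambda_K(U)$; hence this value is $\ge\lambda_K(U)$. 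Reversing the splitting-off operations maps each edge of $K'$ back to an edge or path of the doubled $K$ (tracking multiplicities), lifting the packing to one of connected subgraphs of the doubled $K$ spanning $U$; extracting a $U$-Steiner tree from each and dividing by the doubling factor $2$ yields a fractional $U$-Steiner-tree packing of value $\ge\lambda_K(U)/2$ in $K$.

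\emph{Computing the packing in $\tO(n^2)$ time.} Replace $G$ by a $(1+\eps)$-cut sparsifier $H$~\cite{BeK15}, a reweighted subgraph with $\tO(n)$ edges computed in $\tO(m)$ time; since $H$ preserves every cut within $(1\pm\eps)$ and its edges are edges of $G$, any packing in $H$ consists of $U$-Steiner trees of $G$, the packing optimum of $H$ is $\ge(1-\eps)\lambda_G(U)/2$ by the previous paragraph, and the capacity constraints of $H$ imply those of $G$ up to a $(1+\eps)$ factor. Run the width-independent MWU framework~\cite{GargK07,Fleischer00,AroraHK12} for the fractional Steiner-tree packing LP on $H$: it maintains exponential lengths on the $\tO(n)$ edges and, in each of $\tO(n)$ iterations, calls a minimum-Steiner-tree oracle, implemented by Mehlhorn's $2$-approximation~\cite{Mehlhorn88} in $\tO(n)$ time; MWU then returns a packing of value $\ge\tfrac{1-\eps}{2}$ times the optimum, i.e.\ $\ge(1-\eps)^2\lambda_G(U)/4$. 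Feeding this into the sampling argument --- with $\eps$ a small enough constant that $\tfrac{1.1\cdot 4\,(1+\eps)}{(1-\eps)^2}<5$, which also absorbs the $(1+\eps)$ slack in the cut bound --- still yields a constant fraction of good trees, hence the $O(\log n)$ guide trees. The total time is $\tO(m)$ for sparsification, $\tO(n)\cdot\tO(n)=\tO(n^2)$ for MWU, and $\tO(n)$ for sampling and storing the trees.

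\emph{The unweighted case, and the main obstacle.} For unweighted $G$ the only change is that we do not recompute Mehlhorn's Steiner tree from scratch at each (now $\tO(m)$) MWU iteration; instead the whole sequence of minimum-Steiner-tree queries is served by the dynamic data structure of \Cref{sec:unweighted}, which adapts Mehlhorn's reduction to single-source shortest paths and plugs in a dynamic SSSP algorithm~\cite{BGS21_arxiv}, handling all oracle calls over the entire MWU run in $m^{1+o(1)}$ total time; the existence lemma, the sampling, and the union bound are unchanged, giving the $m^{1+o(1)}$ bound. I expect the main obstacle to be keeping the packing value within a \emph{constant factor} of $\lambda(U)$, which is precisely what makes the $4.4<5$ counting go through: this forces us to prove the existence lemma with the sharp $\lambda(U)/2$ bound and then carefully account for the $(1\pm\eps)$ losses from sparsification and from Mehlhorn's $2$-approximation. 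Within the existence proof itself, the error-prone points are the parity-fixing (edge doubling) needed to apply Mader's splitting-off and the bookkeeping required to lift an edge-disjoint/fractional packing back through a long chain of splitting-off operations.
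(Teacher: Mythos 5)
Your proposal is correct and follows essentially the same route as the paper: sparsify, use width-independent MWU with Mehlhorn's $2$-approximate Steiner tree oracle to pack value $\ge \lambda(U)/(4+\eps)$ (with existence guaranteed via Mader splitting-off and Nash-Williams--Tutte), sample $O(\log n)$ trees proportionally to their packing weight, and use the promise $\lambda(s,t)\le 1.1\lambda(U)$ together with a Markov-type counting (the $1.1\cdot 4<5$ slack) plus a union bound, with the unweighted speedup coming from the dynamic Steiner-tree oracle of \Cref{sec:unweighted}. The only cosmetic deviations are that you pack trees rather than Steiner subgraphs and invoke the fractional form of Nash-Williams--Tutte instead of the paper's scaling-plus-integral-packing argument; both are equivalent here.
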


%The answer to the second question is where this work deviates conceptually from Karger's algorithm, and probably where the more surprising technical contribution lies.
But, how do guide trees help? 
In the case of global mincuts, the tree is spanning, 
hence every $k$ tree edges define a partition of $V$, and also a cut in $G$.
%an edge-set of size $k$ corresponds to a single cut in the graph because the tree is spanning.
Therefore, once the $k$-respecting property has been achieved, finding the best $k$-respecting cut is a search over at most $n^k$ cuts for any given tree, and can be done using dynamic programming for small $k$~\cite{Karger00}. 
%Finding an optimal cut seems much easier if the $k$-respecting tree $T$ is \emph{spanning} rather than \emph{Steiner}.
In contrast, specifying the $k$ tree-edges that are cut 
%completely determines a cut in $G$ when $T$ is spanning, but 
leaves an exponential number of possibilities when $T$ is a Steiner tree
based on which side of the cut the vertices not in $T$ appear on.
%, because there could be $\Omega(n)$ non-tree vertices and they are free to go to any side of the cut.
%Moreover, it is clear that a linear-time dynamic program should not suffice:
In fact, in the extreme case where the Steiner tree is a single edge between two terminals $s$ and $t$, 
computing the $1$-respecting mincut is as hard as computing $(s,t)$-mincut. 
%We believe that this is the main reason that this approach has not previously been pursued for \ghtree.

We devise a recursive strategy to solve the problem of obtaining $k$-respecting $(s,t)$-mincuts. First, we root the tree $T$ at a centroid, and recurse in each subtree (containing at most half as many vertices). We show that this preserves the $k$-respecting property for $(s,t)$-mincuts. However, in general, this is too expensive since the entire graph $G$ is being used in each recursive call, and there can be many subtrees (and a correspondingly large number of recursive calls). Nevertheless, we show that this strategy can be made efficient {\em when all the cut edges are in the same subtree} by an application of the Isolating Cuts Lemma from~\cite{LP20,AKT21_stoc}.

This leaves us with the case that the cut edges are spread across multiple subtrees. Here, we use a different recursive strategy. We use random sampling of the subtrees to reduce the number of cut edges, and then make recursive calls with smaller values of $k$. Note that this effectively turns our challenge in working with Steiner trees vis-\`a-vis spanning trees into an advantage; if we were working on spanning trees, sampling and removing subtrees would have violated the spanning property. This strategy works directly when there exists at least one cut edge in a subtree other than those containing $s$ and $t$; then, with constant probability, we remove this subtree but not the ones containing $s, t$ to reduce $k$ by at least $1$. The more tricky situation is if the cut edges are only in the subtrees of $s$ and $t$; this requires a more intricate procedure involving a careful relabeling of the source vertex $s$ using a Cut Threshold Lemma from \cite{LP21}.

The algorithm is presented in detail in \Cref{sec:singlesource}, and we state here its guarantees.

%In this paper, we devise a recursive strategy that combines sampling, a liberal manipulation of the guide tree, and the recent algorithmic tools  of \emph{isolating cuts} \cite{LP20,AKT21_stoc} and \emph{cut threshold}~\cite{LP21} to solve this problem of obtaining the $k$-respecting cut of minimum value. This result is presented in \Cref{sec:singlesource}.

%\debmalya{Do we need to say a little more here? We are implying that this is the most novel part of the paper, so perhaps we should have a little more detail?}\amir{The reason for not saying more was that section 2 will be inside the 10 pages. But I agree that it looks bad if we spend relatively little on the main contribution in the overview. Perhaps saying nothing here is better: we can just say that this result will be presented in the coming pages.}

\begin{restatable}[Single-Source Mincuts given a Guide Tree]{theorem}{ssmc}\label{thm:ssmc}
%Let $G=(V,E)$ be a graph with terminals $U\subseteq V$ containing a given source $s\in U$.
Let $G=(V,E,w)$ be a weighted graph, let $T$ be a tree defined on (some subset of) vertices in $V$, and let $s$ be a vertex in $T$. 
%\amir{the statement is missing the input graph $G$}
For any fixed integer $k\ge 2$, there is a Monte-Carlo algorithm that finds, for each vertex $t \not= s$ in $T$, a value $\tilde\lambda(t) \ge \lambda(s,t)$ such that $\tilde\lambda(t)=\lambda(s,t)$ if $T$ is $k$-respecting an $(s,t)$-mincut. 
The algorithm takes $m^{1+o(1)}$ time.
\end{restatable}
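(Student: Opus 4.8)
The plan is to build a recursive algorithm parameterized by the crossing bound $k$ and the size of the tree $T$, with one-sided correctness built in: every value $\tilde\lambda(t)$ the algorithm ever emits will be the value of an explicitly constructed cut of $G$ that separates $s$ from $t$, and we return for each $t$ the minimum of all candidates generated for it. This forces $\tilde\lambda(t)\ge\lambda(s,t)$ unconditionally, so the whole content of the proof is to show that if $T$ does $k$-respect some $(s,t)$-mincut $A$, then at least one branch of the recursion exhibits a cut of value $\lambda(s,t)$ for $t$. Fix such a pair $(s,t)$ and such a cut $A$ for the analysis.

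First I would root $T$ at a centroid $c$, so that deleting $c$ splits $T$ into subtrees $T_1,\dots,T_d$, each on at most $|V(T)|/2$ vertices (re-attaching $c$ to a subtree keeps it small). The $\le k$ tree edges crossing $A$ are either internal to a single $T_i$ or incident to $c$. \emph{Case 1: all crossing edges lie in one subtree $T_i$.} Then $T\setminus V(T_i)$ is connected with no internal crossing edges, so every vertex outside $T_i$ — in particular $c$ and the attachment vertex $r_i$ of $T_i$ — lies on one fixed side of $A$; hence $T_i$ contains at least one of $s,t$, and restricting $A$ to $T_i$ separates the $\{s,t\}$-endpoint(s) in $T_i$ from $r_i$ (with $k$ crossing edges still). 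The key point is that this is exactly a ``boundary-to-$t$'' min-cut question on each subtree, and the Isolating Cuts Lemma of \cite{LP20,AKT21_stoc} processes all $d$ subtrees \emph{simultaneously}: with $O(\log n)$ max-flow calls it produces for each $T_i$ a reduced graph $G_i$ with $\sum_i|E(G_i)|=\tO(m)$, on which one recurses with the subtree $T_i$ (same $k$, source relabeled to $r_i$ when needed, the contractions arranged so the old source stays on $r_i$'s side). Trees halve, so there are only $O(\log n)$ Case-1 levels.

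\emph{Case 2: the crossing edges span at least two subtrees (or one subtree plus an edge at $c$).} Here I would randomly sub-sample the subtrees, contracting every non-sampled $T_j$ into $c$, keeping each independently with probability $1/2$, and recurse with parameter $k-1$ (repeating $O(\log n)$ times). If a sample keeps the subtree of $s$, keeps the subtree of $t$, and drops at least one other subtree carrying a crossing edge of $A$, then in the contracted instance $s$ and $t$ are still separated by the image of $A$, which now crosses at most $k-1$ tree edges, so the $(k-1)$-call succeeds on $t$; this good event has probability $\Omega(1)$, amplified by the repetitions. The sub-case that does \emph{not} yield to this is when every crossing edge of $A$ lies in the subtree of $s$ or of $t$ (or at $c$): dropping any other subtree leaves $k$ unchanged. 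I would handle it by \emph{relabeling the source}, replacing $s$ with a carefully chosen vertex on the $c$-side so that, after contraction, $A$ becomes $(k-1)$-respected while still separating the new source from $t$; which vertex is safe is dictated by the Cut Threshold Lemma of \cite{LP21}, which identifies the vertices whose mincut value to $s$ stays below the relevant threshold.

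The step I expect to be the main obstacle is precisely this relabeling sub-case: one must simultaneously ensure that the rechosen source is still separated from $t$ by the cut being chased, that the crossing count genuinely drops, and that only $\polylog(n)$ relabeling candidates need be tried (a naive approach would branch on all $\Theta(n)$ vertices) — the Cut Threshold Lemma is what keeps the branching polylogarithmic. Granting this, the accounting is routine: each node of the recursion tree spends $\polylog(n)$ max-flow calls plus $\tO(m)$ extra work on graphs whose total size per level is $\tO(m)$; the recursion has $O(\log n)$ Case-1 levels and $O(k)=O(1)$ Case-2 levels, so there are $(\log n)^{O(k)}=\polylog(n)$ nodes, hence $\polylog(n)$ max-flow calls overall. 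Plugging in the $m^{1+o(1)}$-time max-flow algorithm of \cite{ChenKLPGS22} yields the claimed $m^{1+o(1)}$ running time, and a union bound over the $\polylog(n)$ random trials gives the Monte-Carlo guarantee.
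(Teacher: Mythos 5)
Your overall architecture is the same as the paper's (centroid root, isolating cuts to handle the case where all crossing edges sit in one subtree, random subtree deletion to drop $k$, and a source relabeling via the Cut-Threshold Lemma for the remaining case), but two of the three cases are not actually established. First, in your Case 2 you \emph{contract} every non-sampled subtree $T_j$ into $c$ before recursing. The whole point of the good event is that the dropped subtree carries a crossing edge of the target mincut $A$, i.e.\ $T_j$ has vertices on both sides of $A$; contracting $V(T_j)$ into $c$ then forces those vertices onto one side, so the $(s,t)$-mincut value in the contracted graph can be strictly larger than $\lambda(s,t)$ (e.g.\ if one endpoint of the crossing edge is attached to $s$ and the other to $t$ by huge-weight edges), and no recursive call on that instance can ever return $\lambda(s,t)$. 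The correct move, and what the paper does, is to delete the non-sampled subtrees from the \emph{guide tree only} and recurse on the unchanged graph $G$ with the smaller tree and parameter $k-1$; since $G$ is untouched, $\lambda(s,t)$ is preserved and the crossing count genuinely drops.

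Second, the sub-case you yourself flag as the main obstacle --- all crossing edges incident only to the subtrees of $s$ and $t$ --- is exactly where the proof needs a real argument, and ``granting this'' leaves the theorem unproven. The paper resolves it without any branching over candidates: using the Cut-Threshold Lemma and a binary search it computes $\lambda_{\max}=\max\{\lambda(s,t):t\in R(T)\setminus R(T_s)\}$ together with all maximizers; every maximizer $t$ gets $\tilde\lambda(t)=\lambda_{\max}$ directly, and a single maximizer $s'$ is chosen as the new source. For any remaining $t$ with $\lambda(s,t)<\lambda_{\max}$ one argues that $s'$ must lie on the $s$-side of the $(s,t)$-mincut $C$ (otherwise $C$ would be an $(s,s')$-cut of value below $\lambda_{\max}$), hence $\lambda(s',t)=\lambda(s,t)$ and, after swapping $s$ with $s'$ and removing $T_s$ from the tree, $C$ $(k-1)$-respects the smaller tree, so the recursive call with source $s'$ recovers the right value; symmetry of the same argument also gives the matching upper-bound direction $\lambda(s',t)\ge\lambda(s,t)$ needed so that relabeling never produces an underestimate. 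Finally, in your Case 1 you assert that the relevant mincut survives in the contracted graph $G_i$; this needs the submodular uncrossing step (the paper's Lemma~\ref{lem:uncross}) showing an $(s,t)$-mincut can be chosen inside the isolating cut $S_i$, plus the bookkeeping with fake vertices to see that it still $k$-respects the modified subtree $T_i'$ --- worth spelling out, though it is routine compared to the two issues above. Your running-time accounting should also be phrased as max-flow calls on graphs of total size $\tilde O(m)$ per the disjointness of the isolating cuts (the recursion tree has far more than $\polylog(n)$ nodes), which is what yields $m^{1+o(1)}$ with an almost-linear max-flow routine.
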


\paragraph{Remarks:} The algorithm in \Cref{thm:ssmc} calls max-flow on instances of maximum number of $m$ edges and $n$ vertices and total number of $\tilde{O}(m)$ edges and $\tilde{O}(n)$ vertices, and spends $\tilde{O}(m)$ time outside these calls. The number of logarithmic factors hidden in the $\tilde O(\cdot)$ depends on $k$. Note that the running time of the algorithm is $m^{1+o(1)}$ even when $G$ is a weighted graph.

\paragraph{Putting it all together: Proof of \Cref{thm:main} and \Cref{thm:unweighted}}
The three ingredients above suffice to prove our main theorems.
By Lemma~\ref{lem:reduction2}, it suffices to solve the single-source mincut problem (\Cref{problem:ssmc}).
Given an instance of \Cref{problem:ssmc} on a graph $G$ with terminal set $U$, we use \Cref{lem:guide} to obtain a $4$-respecting set of $O(\log{n})$ guide trees. 
We call the algorithm in \Cref{thm:ssmc} for each of the $O(\log{n})$ trees separately and keep, for each $t \in U \setminus \{s\}$, 
the minimum $\tilde\lambda(s,t)$ found over all the $O(\log n)$ trees. 
%Finally, for each $t \in U \setminus \{s\}$, we output yes if the minimal $\tilde\lambda(s,t)=\lambda'_t$ and no otherwise.
%If the trees are indeed a set of guide trees, which happens with high probability, then all answers are correct.

The running time of the final algorithm equals that of max-flow calls on graphs with at most $O(m)$ edges and $O(n)$ vertices each, and total number of $\tilde{O}(m)$ edges and $\tilde{O}(n)$ vertices. In addition, the algorithm takes $\tilde{O}(n^2)$ time outside of these calls (in \Cref{lem:guide}); in unweighted graphs, the additional time is only $m^{1+o(1)}$.
%Even when $m = n^2$, by employing the $\tilde{O}(m+n^{3/2})$-time max flow algorithm of van den Brand {\em et al.}~\cite{linearflow21}, we obtain the $\tilde{O}(n^2)$ running time bound for weighted graphs, and $\tO(m+n^{3/2})$ for unweighted graphs.

\eat{

%Finally, let us state the bounds that we get for general edge-density parameter $m$. 
We summarize this in the theorem below.
\begin{theorem}
There is a Monte Carlo algorithm for constructing a Gomory-Hu tree of an undirected, weighted graph $G$ with $m$ edges and $n$ vertices
that runs in $\tilde{O}(n^2)$ time 
plus a set of max-flow calls whose cumulative running time is at most that of $\polylog(n)$ max-flow calls on graphs with $n$ vertices and $m$ edges each.
If the input $G$ is unweighted, the algorithm's running time outside these calls is only $m^{1+o(1)}$.
\end{theorem}
\debmalya{Do we need this last summarization theorem? I feel we can remove this.}
\amir{yes, something needs to be removed. this page repeats the ``max-flow calls on graphs with at most $O(m)$ edges and $O(n)$ vertices each'' point three times.}

}%end eat

%%% Local Variables:
%%% mode: latex
%%% TeX-master: "main"
%%% End:

%%% Local Variables:
%%% mode: latex
%%% TeX-master: "main"
%%% End:

%\input{prelim}
%\input{overview}

\section{Single-Source Mincuts Given a Guide Tree}
\label{sec:singlesource}

In this section, we present our single-source mincut algorithm (\sscv) given a guide tree, which proves \Cref{thm:ssmc}. 
%\amir{removed the restatement to fit more in the 10-pages.}

%In this section, we present our single-source mincut algorithm (\sscv) given a guide tree, which proves \Cref{thm:ssmc}, restated below. 
%\ssmc*

%\debmalya{We need to clear up which problem we are solving. The previous sentence says single-source mincut algorithm, but also mentions \sscv which is a different problem. In fact, I think it would help keep tings modular if we can (re)state the exact problem we are going to be solving in this section here. And, what we are actually solving, i.e., \Cref{thm:ssmc} isn't SSTCV but a problem we might want to call the $k$-respecting mincut problem.}

Before describing the algorithm, we state two tools we will need. 
The first is the \emph{Isolating-Cuts} procedure introduced by Li and Panigrahi \cite{LP20} and independently by Abboud, Krauthgamer, and Trabelsi \cite{AKT21_stoc}. (Within a short time span, this has found several interesting applications \cite{LP21,CQ21,MN21,LNPSS21,AKT21_focs,LPS21,Zhang22,AKT22_soda,CenLP22}.)

Recall that for a vertex set $S\subseteq V$, $\delta(S)$ denotes the total weight of edges with exactly one endpoint in $S$
(i.e., the value of the cut $(S, V\setminus S)$). For any two disjoint vertex sets $A,B\subseteq V$, we say that $S$ is an \emph{$(A,B)$-cut}
if $A\subseteq S$ and $B\cap S=\emptyset$ or $B\subseteq S$ and
$A\cap S=\emptyset$. In other words, the cut $S$ ``separates'' the vertex sets $A$ and $B$. 
We say that $S$ is an \emph{$(A,B)$-mincut} if it is
an $(A,B)$-cut of  minimum value, and let $\lambda(A,B)$ denote the
value of an $(A,B)$-mincut. As described earlier, 
if $A$ and $B$ are singleton sets, say $A = \{a\}$ and $B = \{b\}$,
then we use the shortcut $(a,b)$-mincut to denote an $(A,B)$-mincut, and use 
$\lambda(a,b)$ to denote the value of an $(a, b)$-mincut. 

%\debmalya{Should we reserve ``size'' for the cardinality of vertex sets and use ``value'' instead of the weight/cardinality of edge sets?}

We now state the isolating cuts lemma from \cite{LP20,AKT21_stoc}:
\begin{lem}[Isolating Cuts Lemma: Theorem 2.2 in \cite{LP20}, also follows from Lemma 3.4 in \cite{AKT21_stoc}]\label{lem:iso cut}
There is an algorithm that, given a graph $G=(V,E,w)$ and a collection $\cal{U}$ of disjoint terminal sets $U_1,\dots,U_h \subseteq V$, computes a $(U_i,\cup_{j\neq i} U_j)$-mincut for every $U_i \in\cal{U}$. The algorithm calls max-flow on graphs that cumulatively contain $O(m \log h)$ edges and $O(n \log h)$ vertices, and spends $\tO(m)$ time outside these calls. 
\end{lem}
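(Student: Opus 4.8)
The plan is to follow the standard ``isolating cuts'' strategy. First reduce to the case where each $U_i$ is a single terminal $t_i$: contract each $U_i$ to a vertex, which preserves every $(U_i,\cup_{j\neq i}U_j)$-cut together with its value and only shrinks the graph, and at the very end lift the computed cuts back through these contractions. Then the goal is, for each $i$, to confine the search for a minimum $(t_i,\{t_j\}_{j\neq i})$-cut to a small region $K_i\subseteq V$ with the $K_i$ pairwise disjoint, so that the $h$ resulting max-flow computations have total size $O(m)$, $O(n)$, plus an $O(\log h)$ overhead for setting this up.

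Concretely, I would separate the terminals with a binary-encoding trick: assign each $t_i$ a distinct label in $\{0,1\}^{\lceil\log_2 h\rceil}$; for each bit position $b$ let $A_b$ (resp.\ $B_b$) be the terminals whose $b$-th bit is $0$ (resp.\ $1$), and compute an $(A_b,B_b)$-mincut $C_b$ with one max-flow call on a graph of size $O(m),O(n)$ (contracting $A_b$ and $B_b$ each to a vertex). Let $F$ be the union of the $\lceil\log_2 h\rceil$ cut-edge sets $E(C_b,V\setminus C_b)$; since any two distinct terminals differ in some bit, every terminal lies in its own connected component of $G\setminus F$. Let $K_i$ be the component of $G\setminus F$ containing $t_i$; these are pairwise disjoint. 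For each $i$, contract $V\setminus K_i$ into a single sink $\bar t_i$ (so all other terminals are absorbed into $\bar t_i$), compute a $(t_i,\bar t_i)$-mincut, and return it lifted back to $G$.

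Correctness hinges on the structural claim that \emph{there is a minimum $(t_i,\{t_j\}_{j\neq i})$-cut contained in $K_i$}; granting it, the $(t_i,\bar t_i)$-mincut of the contracted graph has value exactly $\lambda(t_i,\{t_j\}_{j\neq i})$, because the $(t_i,\bar t_i)$-cuts there are precisely the isolating cuts $T$ with $t_i\in T\subseteq K_i$, all of the right value. To prove the claim, start from any minimum isolating cut $S$ for $t_i$ and uncross it against each $C_b$ (oriented so $t_i\in C_b$): since $S$ excludes all terminals but $t_i$, the set $S\cup C_b$ is still an $(A_b,B_b)$-cut, so $\delta(S\cup C_b)\ge\delta(C_b)$, and submodularity of $\delta$ gives $\delta(S\cap C_b)\le\delta(S)$; as $S\cap C_b$ is again an isolating cut for $t_i$, it is again minimum. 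Iterating over all $b$ produces a minimum isolating cut $S^{*}\subseteq X_i:=\bigcap_b C_b$. The decisive observation is that no edge of $F$ lies inside $X_i$ (an edge crossing some $C_b$ cannot have both endpoints in $X_i\subseteq C_b$), so $X_i$ is a disjoint union of components of $G\setminus F$ — namely $K_i$ together with some terminal-free components — with no $G$-edge between them; hence $\delta$ is additive over this decomposition, $\delta(S^{*}\cap K_i)\le\delta(S^{*})$, and $S^{*}\cap K_i$ is a minimum isolating cut contained in $K_i$.

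For efficiency: the $\lceil\log_2 h\rceil$ separation calls cumulatively touch $O(m\log h)$ edges and $O(n\log h)$ vertices; the isolating-cut calls run on the contracted graphs $G_i$, with $\sum_i|V(G_i)|\le n+h=O(n)$ and $\sum_i|E(G_i)|=O(m)$ in total (internal edges of the disjoint $K_i$ contribute at most $m$; each remaining edge is charged to at most two of the $K_i$); and building $F$, the components, and the graphs $G_i$ and lifting the cuts back is elementary $\tilde O(m)$ bookkeeping. I expect the structural claim — in particular the uncrossing step combined with the observation that $F$ avoids the interior of $X_i$, which is what lets one peel $S^{*}$ down to $K_i$ — to be the only real obstacle; the rest is routine once that is in hand.
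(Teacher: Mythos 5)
Your proposal is correct, and it is essentially the argument behind the result as the paper uses it: the paper itself does not reprove this lemma but cites \cite{LP20,AKT21_stoc} and notes (exactly as you do) that the version for disjoint sets follows by contracting each $U_i$, while your self-contained proof of the singleton case — binary labeling with $\lceil\log_2 h\rceil$ max-flow calls, uncrossing via submodularity to confine a minimum isolating cut to $X_i=\bigcap_b C_b$, and then one max-flow per contracted region $K_i$ with total size $O(m+n)$ — is the standard proof from those references. One cosmetic remark: your justification that $X_i$ is a union of components of $G\setminus F$ should note that every edge \emph{leaving} $X_i$ crosses some $C_b$ and hence lies in $F$ (the fact you state, that no $F$-edge lies inside $X_i$, together with maximality of $K_i$, already suffices for the inequality $\delta(S^*\cap K_i)\le\delta(S^*)$, so this does not affect correctness).
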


\paragraph{Remark:} The isolating cuts lemma stated above slightly generalizes the corresponding statement from \cite{LP20,AKT21_stoc}. In the previous versions, each of the sets $U_1, U_2, \ldots, U_h$ is a distinct singleton vertex in $V$. The generalization to disjoint sets of vertices is trivial because we can contract each set $U_i$ for $i\in [h]$ and then apply the original isolating cuts lemma to this contracted graph to obtain \Cref{lem:iso cut}.

We call each $(U_i,\cup_{j\neq i} U_j)$-mincut $S_i$ a  \emph{minimum isolating cut} because it ``isolates'' $U_i$ from the rest of the terminal sets, using a cut of minimum size.
The advantage of this lemma is that it essentially only costs $O(\log{h})$ max-flow calls, which is an exponential improvement over the na\"ive strategy of running $h$ max-flow calls, one for each $U_i$.%\footnote{The isolating cuts lemma actually makes $h + O(\log{h})$ max-flow calls but on much smaller graphs.} 

The next tool is the \emph{Cut-Threshold} procedure of Li and Panigrahi, which has been used earlier in the approximate Gomory-Hu tree problem~\cite{LP21} and in edge connectivity augmentation and splitting off algorithms~\cite{CenLP22}.

\begin{lemma}[Cut-Threshold Lemma: Theorem 1.6 in \cite{LP21}]\label{lem:ct}
There is a randomized, Monte-Carlo algorithm that, given a graph $G=(V,E,w)$, a vertex $s\in V$, and a threshold $\overline\lambda$, computes all vertices $v\in V$ with $\lambda(s,v)\le\overline\lambda$ (recall that $\lambda(s,v)$ is the size of an $(s, v)$-mincut). The algorithm calls max-flow on graphs that cumulatively contain $\tilde{O}(m)$ edges and $\tilde{O}(n)$ vertices, and spends $\tilde{O}(m)$ time outside these calls.
\end{lemma}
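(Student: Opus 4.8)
The plan is to recall the argument of \cite{LP21}, which builds the desired algorithm on top of the Isolating Cuts Lemma (\Cref{lem:iso cut}) combined with random sampling. Write $T=\{v\in V:\lambda(s,v)\le\overline\lambda\}$ for the target set; it is a useful sanity check that $V\setminus T$ is exactly the $(\overline\lambda+1)$-edge-connectivity class of $s$, since every pair inside it has mincut value $>\overline\lambda$ by the submodular inequality $\lambda(u,w)\ge\min\{\lambda(u,v),\lambda(v,w)\}$. The basic step is: sample a terminal set $R\subseteq V\setminus\{s\}$ by including each vertex independently with some probability $p$, and apply \Cref{lem:iso cut} to the disjoint terminal sets $\{s\}$ and the singletons $\{r\}$, $r\in R$. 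For each $r$ this returns a minimal isolating cut $S_r$ (with $r\in S_r$, $s\notin S_r$, and the $S_r$ pairwise disjoint); whenever $\delta(S_r)\le\overline\lambda$, the set $S_r$ is an $(s,u)$-cut of value at most $\overline\lambda$ for \emph{every} $u\in S_r$, so we may safely declare $S_r\subseteq T$. Similarly, if the isolating cut of the special terminal $\{s\}$ has value at most $\overline\lambda$, its complement is added to $T$.

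A single step only discovers part of $T$, so the algorithm recurses: it contracts each already-confirmed piece into the source, it recurses inside each isolating cut $S_r$ with the exterior $V\setminus S_r$ contracted to a new super-source, and it recurses on the ``$s$-side'' piece with the isolating cuts contracted away. Using $O(\log n)$ geometric sampling rates $p\in\{2^{-1},2^{-2},\dots\}$ together with $O(\log n)$ independent repetitions per rate, the recursion has depth $O(\log n)$, and --- crucially --- at every level the graphs handed to \Cref{lem:iso cut} are contractions of \emph{vertex-disjoint} pieces of $G$, so their edge counts sum to $O(m)$. Hence by \Cref{lem:iso cut} the total cost is max-flow calls on graphs with cumulatively $m\cdot\polylog(n)$ edges and $n\cdot\polylog(n)$ vertices, plus $\tilde O(m)$ overhead, as claimed.

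The heart of the proof is the correctness analysis: that every $v\in T$ is output with high probability. Fix such a $v$ and let $C_v$ be its (unique, by submodularity) minimal $(s,v)$-mincut, with $k:=|C_v|$. The key point is that running the basic step at a rate $p\approx 1/k$ makes the event $|C_v\cap R|\le 1$ occur with constant probability (a Poisson-type tail estimate); a submodular uncrossing argument then shows that, in this event, the recursive sub-instance into which $v$ falls neither splits $C_v$ nor inflates $\lambda(s,v)$, so $v$ remains a valid candidate, and it is confirmed as soon as its current piece attains value at most $\overline\lambda$. A union bound over the at most $n$ choices of $v$ and over the logarithmically many rates and repetitions then yields the high-probability guarantee. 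The main obstacle is making this uncrossing/charging argument precise --- in particular, handling the ``$s$-side'' isolating cut and ensuring that $v$ is confirmed before the cascade of contractions can push $\lambda(s,v)$ above $\overline\lambda$ --- which is the point at which \cite{LP21} additionally employ a careful relabeling of the source vertex.
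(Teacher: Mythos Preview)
The paper does not prove this lemma at all: it is quoted verbatim as Theorem~1.6 of \cite{LP21} and used as a black box. So there is no ``paper's own proof'' to compare your attempt against, and a submission here should simply cite \cite{LP21} rather than re-derive the result.

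That said, your sketch of what \cite{LP21} does is only partially accurate. The high-level ingredients --- the Isolating Cuts Lemma together with geometric sampling rates $p\in\{2^{-1},2^{-2},\dots\}$ and $O(\log n)$ independent repetitions --- are indeed the core of the argument. The uncrossing step you describe (if $R\cap C_v=\{r\}$ then the minimal isolating cut $S_r$ satisfies $S_r\subseteq C_v$ and $\delta(S_r)\le\delta(C_v)\le\overline\lambda$) is also the right computation. However, two points in your write-up are off. First, the recursion you propose (``recurse inside each $S_r$ with the exterior contracted to a new super-source'') is not how \cite{LP21} structures the algorithm; confirmed pieces $S_r$ with $\delta(S_r)\le\overline\lambda$ are already entirely in $T$, so there is nothing to do inside them, and the actual recursion/repetition in \cite{LP21} is organized differently. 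Second, and more concretely, your closing remark that \cite{LP21} ``additionally employ a careful relabeling of the source vertex'' appears to confuse the dependency direction: the source-relabeling trick is a contribution of \emph{this} paper (Step~(\ref{step:6}) of the \sscv algorithm in \Cref{sec:singlesource}), and it \emph{uses} the Cut-Threshold Lemma as a subroutine --- it is not part of the proof of the Cut-Threshold Lemma in \cite{LP21}.
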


We use the Cut-Threshold lemma to obtain the following lemma, which is an important component of our final algorithm.
\begin{lemma}\label{lem:cut-thr-step}
For any subset $U\subseteq V$ of vertices and a vertex $s\notin U$, there is a randomized, Monte-Carlo  algorithm that computes $\lambda_{\max}=\max\{ \lambda(s,t):t\in U\}$ as well as all vertices $t\in U$ attaining this maximum, i.e., the vertex set $\arg\max_{t\in U}\{\lambda(s, t)\}$. The algorithm calls max-flow on graphs that cumulatively contain $\tilde{O}(m)$ edges and $\tilde{O}(n)$ vertices, and spends $\tilde{O}(m)$ time outside these calls.
\end{lemma}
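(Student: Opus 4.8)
The plan is to solve this by binary search over the cut-threshold, invoking the Cut-Threshold procedure (\Cref{lem:ct}) as a black box. For an integer $\tau \ge 0$, let $R_\tau := \{v \in V : \lambda(s,v) \le \tau\}$; by \Cref{lem:ct} we can compute $R_\tau$ with a single invocation whose max-flow calls cumulatively involve $\tilde{O}(m)$ edges and $\tilde{O}(n)$ vertices, plus $\tilde{O}(m)$ time outside those calls. Two elementary facts drive the argument: (i) $R_\tau$ is monotone in $\tau$, i.e.\ $R_0 \subseteq R_1 \subseteq \cdots$; and (ii) since all edge weights are integers, every value $\lambda(s,t)$ is a nonnegative integer with $\lambda(s,t) \le \delta(\{t\}) \le w(E) \le mW$, so $\lambda_{\max} := \max_{t\in U}\lambda(s,t)$ lies in $\{0,1,\dots,mW\}$. (If some terminal of $U$ is disconnected from $s$ then $\lambda_{\max}=0$ and every such terminal is a maximizer; this trivial case is disposed of at the start via one connectivity check, so below we may assume $\lambda_{\max}\ge 1$.)

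First I would observe the key equivalence $U \subseteq R_\tau \iff \lambda_{\max} \le \tau$, which is immediate from the definition of $R_\tau$. Hence $\lambda_{\max}$ is the smallest integer $\tau$ with $U \subseteq R_\tau$, and I would locate it by binary search on $\tau \in \{1,\dots,mW\}$: in each of the $O(\log(mW)) = O(\log n + \log W)$ steps, run \Cref{lem:ct} once to obtain $R_\tau$ and test whether $U \subseteq R_\tau$. Monotonicity validates the binary search, and integrality ensures it pins down $\lambda_{\max}$ \emph{exactly} rather than approximately. To recover the maximizers, note that (again by integrality) a terminal $t \in U$ satisfies $\lambda(s,t) = \lambda_{\max}$ iff $\lambda(s,t) > \lambda_{\max}-1$, i.e.\ iff $t \notin R_{\lambda_{\max}-1}$; so I would output $\arg\max_{t\in U}\{\lambda(s,t)\} = U \setminus R_{\lambda_{\max}-1}$, where $R_{\lambda_{\max}-1}$ is the set already computed at the appropriate step of the binary search (or obtained with one extra call to \Cref{lem:ct}).

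For the resource bounds, the whole procedure makes $O(\log n + \log W)$ invocations of \Cref{lem:ct}, so the max-flow instances it spawns still cumulatively contain $\tilde{O}(m)$ edges and $\tilde{O}(n)$ vertices, and the work outside these calls is $\tilde{O}(m)$, the extra logarithmic factor being absorbed into the $\tilde{O}(\cdot)$ notation. Since \Cref{lem:ct} is Monte Carlo, I would amplify each invocation to failure probability at most $n^{-c}$ for a large constant $c$ (by $O(\log n)$-fold repetition with a majority vote on each vertex's membership in $R_\tau$, costing one more logarithmic factor), and then take a union bound over the $O(\log n + \log W)$ invocations, so the overall algorithm is Monte Carlo and correct with high probability. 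I do not expect a genuine obstacle here: the only points needing care are the monotonicity-and-integrality argument that makes the binary search exact, the choice of the correct integer search range $\{1,\dots,mW\}$, and the standard probability bookkeeping across the logarithmically many Cut-Threshold calls.
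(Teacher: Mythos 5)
Your proposal is correct and follows essentially the same route as the paper: binary search on the threshold using the Cut-Threshold Lemma, exploiting integrality of edge weights to pin down $\lambda_{\max}$ exactly, and then one call at threshold $\lambda_{\max}-1$ to extract the maximizers as the complement within $U$. The only differences are minor bookkeeping (your explicit search range and error amplification), which the paper leaves implicit.
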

\begin{proof}
We binary search for the value of $\lambda_{\max}$. For a given estimate $\overline\lambda$, we call the Cut-Threshold Lemma (\Cref{lem:ct}) with this value of $\overline\lambda$; if the procedure returns a set containing all vertices in $U$, then we know $\lambda_{\max} \le \overline\lambda$; otherwise, we have $\lambda_{\max}>\overline\lambda$. A simple binary search recovers the exact value of $\lambda_{\max}$ in $O(\log nW)$ iterations since edge weights are integers in $\{1,2,\ldots,W\}$. Finally, we call the Cut-Threshold Lemma with $\overline\lambda=\lambda_{\max}-1$; we remove the vertices returned by this procedure from $U$ to obtain all vertices $t\in U$ satisfying $\lambda(s, t) = \lambda_{\max}$. For the running time bound, note that by \Cref{lem:ct}, each iteration of the binary search calls max-flow on graphs that cumulatively contain $\tO(n)$ vertices and $\tO(m)$ edges, and uses $\tO(m)$ time outside these calls.
\end{proof}

\paragraph{The \sscv Algorithm.}
Having introduced the main tools, we are now ready to present our \sscv algorithm (see Figure~\ref{Figs:Proof_single_alg}). 
The input to the algorithm is a graph $G=(V,E,w)$ containing a specified vertex $s$, a (guide) tree $T$ containing $s$, and a positive integer $k$. 
The algorithm is a recursive algorithm, and although the guide tree initially only contains vertices in $V$, there will be additional vertices (not in $V$) that are introduced into the guide tree in subsequent recursive calls.
%The tree $T$ can contain vertices in $V$ as well as additional vertices that do not correspond to any vertex in graph $G$.
%\ohad{does "steiner node" refer to non terminals in $G$ or contracted vertices?}\jason{Non-terminals. Added a sentence for clarification.}
To distinguish between these two types of vertices, we define $R(T)$ as the subset of vertices of $T$ that are in $V$, and call these {\em real} vertices. We call the vertices of $T$ that are not in $V$ {\em fake} vertices. 

We extend the definition of $k$-respecting (i.e., \Cref{defn:k-respect}) to fake vertices as follows: 
\begin{definition}[Generalized $k$-respecting]\label{defn-k-respect-fake}
Let $A\subseteq V$ be a cut in $G = (V, E, w)$. Let $T$ be a tree on (some subset of) vertices in $V$ as well as additional vertices not in $V$. We say that $T$ \emph{$k$-respects} cut $A$ (and vice versa) if there exists a set $F_A$ of fake vertices such that $T$ contains at most $k$ edges with exactly one endpoint in $A\cup F_A$; we say that such edges are \emph{cut} by $A\cup F_A$. 
\end{definition}
We also note that even if all the vertices in $T$ are real vertices, $T$ may not be a subgraph of $G$.

\bigskip

Recall that our goal is to obtain a value $\tilde\lambda( t) \ge \lambda(s, t)$ for every terminal $t\in U\setminus \{s\}$ such that if an $(s, t)$-mincut $k$-respects $T$, then $\tilde\lambda(t) = \lambda(s, t)$. We will actually compute $\tilde\lambda(t)$ for every real vertex $t\in R(T)\setminus \{s\}$; clearly, this suffices since the input Steiner tree (i.e., at the top level of the recursion) spans all the vertices in $U$.

The algorithm maintains estimates $\tilde\lambda(t)$ of the mincut values $\lambda(s,t)$ for all $t\in R(T) \setminus \{s\}$. The values $\tilde\lambda(t)$ are initialized to $\infty$, and whenever we compute an $(s,t)$-cut in the graph, we ``update'' $\tilde\lambda(t)$ by replacing $\tilde\lambda(t)$ with the value of the $(s,t)$-cut if it is lower. Formally, we define $\update(t, x): \tilde\lambda( t) \gets \min(\tilde\lambda( t), x)$.

We describe the algorithm below. The reader should use the illustration in \Cref{Figs:Proof_single_alg} as a visual description of each step of the algorithm.

\begin{enumerate}

\item
First, we describe a base case. If $|R(T)|$ is less than some fixed constant, then we simply compute the $(s,t)$-mincut in $G$ separately for each $t\in R(T)\setminus \{ s \}$ using $|R(T)|-1 = O(1)$ max-flow calls, and run $\update(t,\lambda(s,t))$. %This takes $O(1)$ max-flows.
\label{step:1}

\item[]
From now on, assume that $|R(T)|$ is larger than some (large enough) constant.\footnote{For example, the constant $10$ is more than enough.} %\amir{The letter $C$ is used for the mincut throughout the proofs. Can we replace it with a fixed constant? I'm also not sure if this assumption is used anywhere -- i guess it's important for bounding the running time? we should say that somewhere.} \jason{I changed it to lower-case $c$.}

%We now describe the steps of the algorithm used to compute new $(s, t)$ cuts for $t\in R(T)$ that are subsequently used to update the values of the estimates $\tilde\lambda( t)$.

%\debmalya{We need to give some guidance to the reader as to what each step below is trying to achieve. The guidance can be added in a box, or something like that to make the distinction between the description of the algorithm and the intuitive explanation clear. Also, a high level description that we are going to have two types of recursion, one based on reducing the number of terminals and another on the value of $k$ would be useful before we describe any of the steps. Finally, a figure depicting the recursive calls in the two types of recursion would be useful.}

%\item If $|R(T)|\le C$ for some constant $C>0$, then simply compute the $(s,t)$-mincut in $G$ separately for each $t\in R(T)\setminus \{ s \}$ and update $\tilde\lambda(t)$. This takes $O(1)$ max-flows.
%\item[]From now on, assume that $|R(T)|>C$.
\item Let $c$ be a \emph{centroid} of the tree $T$, defined in the following manner: $c$ is a (possibly fake) vertex in $T$ such that if we root $T$ at $c$, then each subtree rooted at a child of $c$ has at most $|R(T)|/2$ %\ohad{why not $|R(T)|/2$}\jason{Not sure why I used $2/3$. Changed to $1/2$.} 
real vertices.\footnote{A centroid always exists by the following simple argument: take the (real or fake) vertex of $T$ of maximum depth whose subtree rooted at $T$ has at least $|R(T)|/2$ real vertices. By construction, this vertex is a centroid of $T$, and it can be found in time linear in the number of vertices in the tree using a simple dynamic program.} \label{step:2}
\item[]If $c\in R(T)$ and $s\not= c$, then compute an $(s,c$)-mincut in $G$ (whose value is denoted $\lambda$) using a max-flow call and run $\update(c, \lambda(s,c))$.

\item Root $T$ at $c$ and let $u_1,\ldots,u_\ell$ be the children of $c$. For each $i\in[\ell]$, let $T_i$ be the subtree rooted at $u_i$. Recall that $R(T_i)$ denotes the set of real vertices in the respective subtrees $T_i$ for $i\in [\ell]$. (For technical reasons, we ignore subtrees $T_i$ that do not contain any real vertex.)  Use \Cref{lem:iso cut} to compute minimum isolating cuts in $G$ with the following terminal sets: (1) $U_i = R(T_i)$ for $i\in[\ell]$. (2) If $c\in R(T)$, then we add an additional set $U_{\ell+1} = \{c\}$. Note that $\cup_i U_i = R(T)$ irrespective of whether $c\in R(T)$ or not. 
\item[] Let $S_i\subseteq V$ be the $(U_i,R(T)\setminus U_i)$-mincut in $G$ obtained from \Cref{lem:iso cut}. We ignore $S_{\ell+1}$ (if it exists) and proceed with the remaining sets $S_i$ for $i\in [\ell]$ in the next step.\label{step:3}
\item For each $i\in [\ell]$, define $G_i$ as the graph $G$ with $V\setminus S_i$ contracted to a single vertex. Now, there are two cases. In the first case, we have $s\in  V\setminus S_i$. Then, the contracted vertex for $V\setminus S_i$ is labeled the new $s$ in graph $G_i$. Correspondingly, define $T'_i$ as the tree $T_i$ with an added edge $(s,u_i)$ (recall that $u_i$ is the root of $T_i$). In the second case, we have $s\in S_i$. Then, assign a new label $c_i$ to the contracted vertex for $V\setminus S_i$ in $G_i$. In this case, define $T'_i$ as the tree $T_i$ with an added edge $(c_i,u_i)$, and keep the identity of vertex $s$ unchanged since it is in $T_i$. (Note that if $s=c$, the only difference is that the second case does not happen for any $i\in [\ell]$.)
\item[]In both cases above, make recursive calls $(G_i,T'_i,k)$ for all $i\in[\ell]$. Call $\update(t,\lambda'(s,t))$ for all $t\in R(T_i)\setminus \{ s \}$ where the recursive call returns the value $\lambda'(s, t)$ for the variable $\tilde\lambda(t)$. Furthermore, if $s\in S_i$, call $\update(t,\lambda'(s,c_i))$ for all $t\in R(T)\setminus R(T_i)$ where the recursive call returns the value $\lambda'(s,c_i)$ for the variable $\tilde\lambda(c_i)$. %Note that this last step takes $O(n)$ time.
\item[] If $k=1$, then we terminate the algorithm at this point, so from now on, assume that $k>1$.\label{step:4}
\item Sample each subtree $T_i$ independently with probability $1/2$ except the subtree containing $s$ (if it exists), which is sampled with probability $1$. (If $c=s$, then there is no subtree containing $s$, and all subtrees are sampled with probability $1/2$.) Let $T^{(\ref{step:5})}$ be the tree $T$ with all (vertices of) non-sampled subtrees deleted. Recursively call $(G,T^{(\ref{step:5})},k-1)$ and update $\tilde\lambda(t)$ for all $t\in R(T^{(\ref{step:5})})$. (Note that $R(T^{(\ref{step:5})})$ denotes the set of real vertices in tree $T^{(\ref{step:5})}$. Moreover, by the sampling procedure, $s$ is always in $R(T^{(\ref{step:5})})$ and hence, the recursion is valid.) Repeat this step for $O(\log n)$ independent sampling trials.\label{step:5}
\item Execute this step only if $s\ne c$, and let $T_s$ be the subtree from step (\ref{step:3}) containing $s$. Using \Cref{lem:cut-thr-step}, compute the value $\lambda_{\max}=\max\{\lambda(s,t):t\in R(T)\setminus R(T_s)\}$, as well as all vertices $t\in R(T)\setminus R(T_s)$ attaining this maximum. Update $\tilde\lambda(t)=\lambda_{\max}$ for all such $t$, and arbitrarily select one such $t$ to be labeled $s'$. Let $T^{(\ref{step:6})}$ be the tree $T$ with (the vertices of) subtree $T_s$ removed. Recursively call $(G,T^{(\ref{step:6})},k-1)$ where $s'$ is treated as the new $s$, and update $\tilde\lambda(t)$ for all $t\in R(T^{(\ref{step:6})})$.\label{step:6}
%\debmalya{This last step requires some explanation since the identity of $s$ has changed. I think it would help to emphasize that we are still updating the values $\lambda(s, t)$ in the same table as earlier, although $s$ has been deleted and replaced by a different vertex $s'$. In particular, replace the variable $\tilde\lambda(s, t)$ by $\tilde\lambda(t)$.}

% Fix some $\epsilon=\epsilon(k)$ depending on $k$. Sample a random subset of vertices in $R(T)\setminus R(T_s)$ of size $\big\lfloor |R(T)\setminus R(T_s)|^{\epsilon}\big\rfloor$, and let $S^{(\ref{step:6})}$ be the sampled vertices. Compute the $(s,S^{(\ref{step:6})})$-mincut in $G$ and update $\tilde\lambda(t)$ for all $t\in S^{(\ref{step:6})}$.
%Repeat this step for $O(|R(T)\setminus R(T_s)|^{1-\epsilon}\log n)$ independent trials.\label{step:6}
%\item Execute this step only if $s\ne c$, and let $T_s$ be the subtree containing $s$. Sample $u$ as a uniformly random vertex in $R(T)\setminus R(T_s)$. Construct $T^{(\ref{step:7})}$ by starting with $T$, deleting (the vertices of) subtree $T_s$, and replacing vertex $u$ by $s$. Recursively call $(G,T^{(\ref{step:7})},k-1)$ and update $\tilde\lambda(t)$ for all $t\in R(T^{(\ref{step:7})})$. Repeat this step for $O(|R(T)\setminus R(T_s)|^\epsilon\log n)$ independent trials.\label{step:7}
\end{enumerate}

\begin{figure}[h]
  \begin{center}
    \ifarxiv
    \includegraphics[width=6.6in]{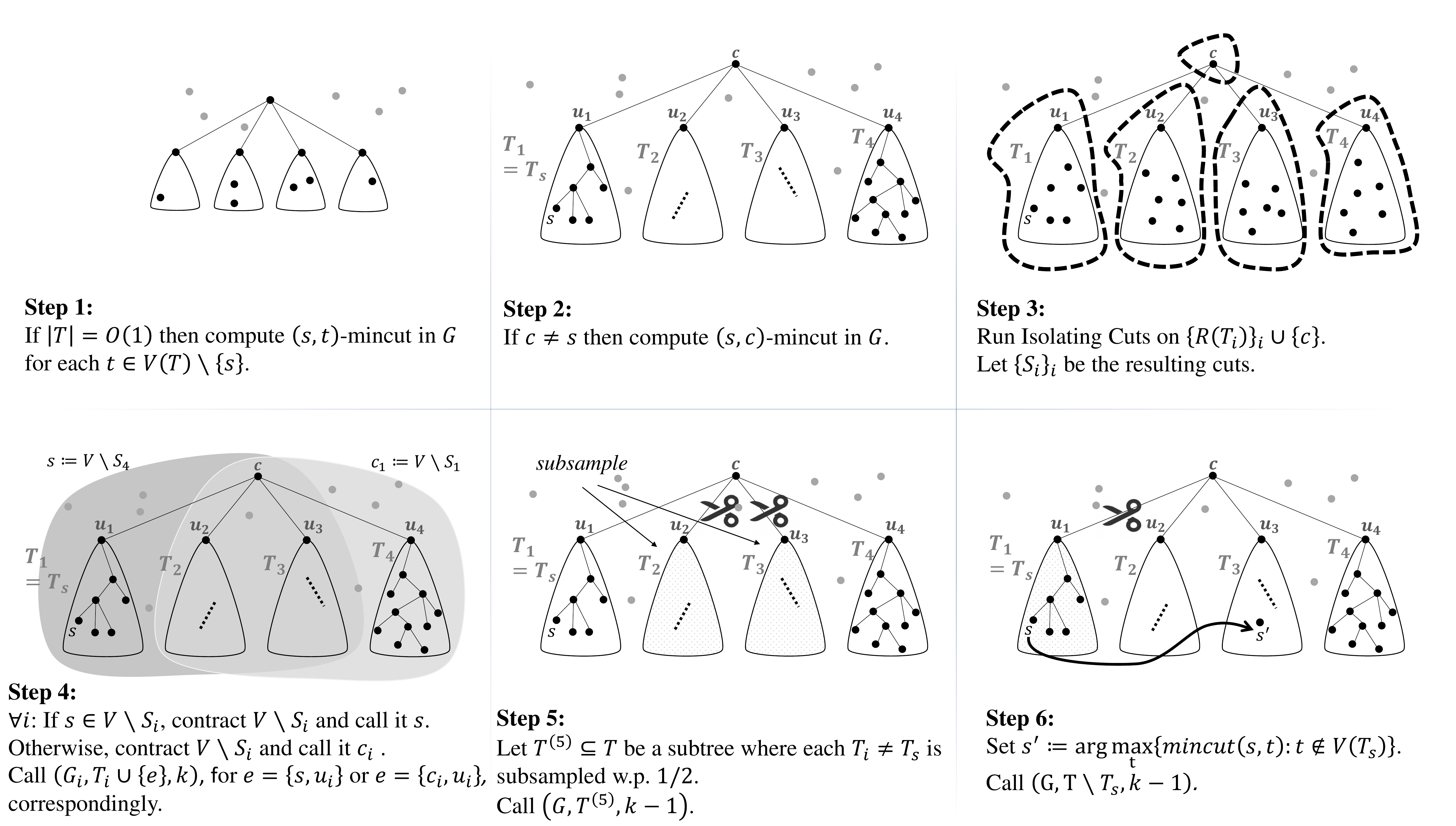}
    \else
    \includegraphics[width=6.6in]{Proof_single_alg}
    \fi
    \end{center}
    \caption{An illustration of the steps inside a recursive iteration of the \sscv algorithm.
    We assume that the centroid $c$ has four children in $T$ and that all tree vertices are real;  in particular $c \in R(T)$, which simplifies some of the steps. Graph vertices that are not spanned by $T$ are represented by gray dots. The gray areas in Step 4 refer to contracted subsets, and the scissors symbol in Steps 5 and 6 means we remove the subtree.}
    \label{Figs:Proof_single_alg}
\end{figure}

\subsection{Correctness}

First, we show a standard (uncrossing) property of mincuts.
\begin{lemma}\label{lem:uncross}
Let $G=(V,E,w)$ be a weighted, undirected graph with vertex subset $U\subseteq V$. For any subsets $\emptyset\subsetneq X\subseteq X'\subsetneq U$ and an $(X',U\setminus X')$-mincut $A'\subseteq V$ of $G$, there is an $(X,U\setminus X)$-mincut $A\subseteq V$ of $G$ satisfying $A\subseteq A'$.
\end{lemma}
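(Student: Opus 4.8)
The plan is to prove this by a standard submodularity (uncrossing) argument on the cut function $\delta$. Let $A'$ be the given $(X', U \setminus X')$-mincut, so $X' \subseteq A'$ and $(U \setminus X') \cap A' = \emptyset$. Since $X \subseteq X'$ and $U \setminus X' \subseteq U \setminus X$, the set $A'$ is \emph{some} $(X, U\setminus X)$-cut, but not necessarily a minimum one. Now take an arbitrary $(X, U\setminus X)$-mincut $B$; I want to modify it so that it sits inside $A'$. The natural candidate is $A := A' \cap B$. First I would check feasibility: $X \subseteq A'$ and $X \subseteq B$, so $X \subseteq A$; and $U \setminus X \supseteq U \setminus X'$ is disjoint from $A'$, hence disjoint from $A$, while also $(U\setminus X) \cap B = \emptyset$. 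Actually the cleaner statement: $A$ is an $(X, U\setminus X)$-cut because it contains $X$ and is disjoint from $U \setminus X$.

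Next I would invoke submodularity of the cut function: $\delta(A' \cap B) + \delta(A' \cup B) \le \delta(A') + \delta(B)$. The set $A' \cup B$ contains $X'$ (since $A' \supseteq X'$) and is disjoint from $U \setminus X'$ (since both $A'$ and $B$ are disjoint from $U\setminus X' \subseteq U \setminus X$ — note $B$ is disjoint from $U\setminus X$ which contains $U \setminus X'$). So $A' \cup B$ is a valid $(X', U\setminus X')$-cut, and hence $\delta(A' \cup B) \ge \delta(A')$ since $A'$ is an $(X',U\setminus X')$-mincut. Combining, $\delta(A' \cap B) \le \delta(B)$. But $B$ is an $(X, U\setminus X)$-mincut and $A' \cap B$ is an $(X, U\setminus X)$-cut, so $\delta(A' \cap B) \ge \delta(B)$, forcing $\delta(A) = \delta(A' \cap B) = \delta(B)$. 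Thus $A = A' \cap B$ is an $(X, U\setminus X)$-mincut with $A \subseteq A'$, as desired.

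I don't anticipate a serious obstacle here — this is a textbook uncrossing argument. The only points requiring a little care are: (i) correctly tracking the containments/disjointness so that both $A' \cap B$ and $A' \cup B$ are genuinely feasible cuts for the respective mincut problems (the hypothesis $\emptyset \subsetneq X \subseteq X' \subsetneq U$ ensures all the relevant sets are nonempty and proper, so the cuts are well-defined and the mincut values are finite); and (ii) being explicit that the cut function $\delta(S) = w(E(S, V\setminus S))$ is submodular on $2^V$, which is classical. One subtlety worth noting: $A' \cap B$ could in principle equal $\emptyset$ or $V$ if we are careless, but since it contains the nonempty set $X$ and excludes the nonempty set $U \setminus X$, it is a proper nonempty subset, so $\delta$ of it is a legitimate cut value. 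I would write the three-line inequality chain explicitly and conclude.

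\begin{proof}
Let $A'$ be the given $(X', U\setminus X')$-mincut, so $X' \subseteq A'$ and $A' \cap (U\setminus X') = \emptyset$. Let $B \subseteq V$ be an arbitrary $(X, U\setminus X)$-mincut, so $X \subseteq B$ and $B \cap (U\setminus X) = \emptyset$. Set $A := A' \cap B$.

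We first check that $A$ is an $(X, U\setminus X)$-cut: since $X \subseteq X' \subseteq A'$ and $X \subseteq B$, we have $X \subseteq A$; and since $B \cap (U\setminus X) = \emptyset$, we have $A \cap (U \setminus X) = \emptyset$. In particular $\emptyset \subsetneq X \subseteq A$ and $\emptyset \neq U\setminus X \subseteq V \setminus A$, so $A$ is a valid cut. Similarly, $A' \cup B$ is an $(X', U\setminus X')$-cut: $X' \subseteq A' \subseteq A' \cup B$, and since $U \setminus X' \subseteq U \setminus X$ is disjoint from both $A'$ and $B$, it is disjoint from $A' \cup B$.

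Now we use the submodularity of the cut function $\delta$, namely $\delta(A' \cap B) + \delta(A' \cup B) \le \delta(A') + \delta(B)$. Since $A'$ is an $(X', U\setminus X')$-mincut and $A' \cup B$ is an $(X', U\setminus X')$-cut, $\delta(A' \cup B) \ge \delta(A')$. Subtracting, $\delta(A' \cap B) \le \delta(B)$. On the other hand, $B$ is an $(X, U\setminus X)$-mincut and $A = A' \cap B$ is an $(X, U\setminus X)$-cut, so $\delta(A) \ge \delta(B)$. Therefore $\delta(A) = \delta(B) = \lambda(X, U\setminus X)$, i.e., $A$ is an $(X, U\setminus X)$-mincut, and by construction $A = A' \cap B \subseteq A'$.
\end{proof}
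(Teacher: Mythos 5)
Your proof is correct and follows essentially the same route as the paper: take an arbitrary $(X,U\setminus X)$-mincut, intersect it with $A'$, and use submodularity of $\delta$ together with the fact that the union is an $(X',U\setminus X')$-cut to conclude the intersection is an $(X,U\setminus X)$-mincut contained in $A'$. Your write-up is in fact slightly more careful than the paper's on the feasibility checks (verifying disjointness from $U\setminus X$ and $U\setminus X'$ explicitly), so nothing to change.
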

\begin{proof}
%It is well-known that there is a {\em unique} $X$-minimal $(X,U\setminus X)$-mincut $A$, defined as the $(X,U\setminus X)$-mincut that minimizes $|A|$ where $X\subseteq A$. \debmalya{Do we need a standard reference or a proof in the appendix for the existence of a unique minimal mincut?}
%We claim that this $(X,U\setminus X)$-mincut $A$ satisfies $A\subseteq A'$. 
Consider any $(X,U\setminus X)$-mincut $A\subsetneq A'$. We claim that $A\cap A'$ is also an $(X,U\setminus X)$-mincut of $G$. First, note that $(A\cup A')\cap U=X'$, so $A\cup A'$ is an $(X',U\setminus X')$-cut. Since $A'$ is an $(X', U\setminus X')$-mincut, we have
$$\delta(A\cup A')\ge\delta(A').$$ %\debmalya{Did we define the notation $\delta(\cdot)$ earlier?} But, by the submodularity of cuts, we know
$$\delta(A)+\delta(A')\ge\delta(A\cup A')+\delta(A\cap A').$$ Combining the two inequalities gives $\delta(A\cap A')\le\delta(A)$.
Now, since $X \subseteq X'\subseteq A'$, we have $X\cap (A\setminus A') = \emptyset$. Since $X\subseteq A$, it must be  that $X\subseteq A\cap A'$. So, $A\cap A'$ is an $(X,U\setminus X)$-cut. Since $\delta(A\cap A')\le\delta(A)$, 
it follows that $A\cap A'$ is an $(X,U\setminus X)$-mincut, which completes the proof.
\end{proof}

Now, we proceed to establish correctness of the \sscv algorithm. Note that $\tilde\lambda(t)$ starts with the value $\infty$, and every time we run $\update(t,x)$, we have that $x$ is the value of some $(s,t)$-cut in $G$. Naturally, this would suggest that our estimate $\tilde\lambda(t)$ is always an upper bound on the true value $\lambda(s,t)$. %\debmalya{Since the identity of $s$ changes during the course of the algorithm, some cuts being used to update the variable $\tilde\lambda( t)$ could have $t$ and the original $s$ on the same side of the cut, in principle. We need to argue that does not happen since $\lambda(s', s) \ge \lambda(s, t)$ whenever we do a relabeling of $s$.} 
However, this is not immediately clear because the vertex $s$ may be relabeled in a recursive call from step~(\ref{step:6}). The lemma below shows that this relabeling is not an issue.
\begin{lemma}[Upper bound]
For any instance $(G=(V,E,w),T,k)$ and a vertex $t\in R(T)$, the output value $\tilde\lambda(t)$ is at least $\lambda(s,t)$.
\end{lemma}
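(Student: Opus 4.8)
The plan is to induct on the recursion structure of the \sscv algorithm, tracking how the ``current source'' evolves. Let me set up the invariant carefully. In a recursive call $(G,T,k)$ — where $G$ might be a contracted graph and $T$ might contain fake vertices — the algorithm has a designated vertex $s$. I claim that for every call, and every $t \in R(T)\setminus\{s\}$, any value passed to $\update(t,\cdot)$ (either directly in this call or returned from a child and relayed upward) is the value of some genuine $(s,t)$-cut \emph{in the current graph $G$}, and moreover is the value of an $(s,t)$-cut in the \emph{original} input graph. The first part is the easy bookkeeping; the delicate part is the second.

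First I would handle the direct updates. In Step~\ref{step:1} and Step~\ref{step:2} the value comes from an actual $(s,t)$-mincut computation in $G$, so it is literally $\lambda(s,t)$ in $G$, hence trivially an upper bound. The recursive updates in Step~\ref{step:4} with $t \in R(T_i)$ when $s \in V\setminus S_i$: here $G_i$ is $G$ with $V\setminus S_i$ contracted and the contracted blob relabeled $s$, and $T_i'$ adds edge $(s,u_i)$. Since contraction only identifies vertices on the $s$-side of the original $(s,t)$ separation, any $(s,t)$-cut in $G_i$ lifts to an $(s,t)$-cut of the same value in $G$ — this is the standard fact that a cut in a contraction of $G$ is a cut in $G$. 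The case $s \in S_i$ is analogous but the relabeled blob is called $c_i$, and the recursive call treats $c_i$ as a target, not the source; the updates relayed are $\update(t, \lambda'(s,c_i))$ for $t \in R(T)\setminus R(T_i)$, so I need: an $(s,c_i)$-cut in $G_i$, lifted to $G$, separates $s$ from every vertex of $V\setminus S_i$, in particular from every real vertex outside $R(T_i)$ (these lie in $V\setminus S_i$ since $S_i$ is the isolating cut for $R(T_i)$). So it is an $(s,t)$-cut in $G$ of value $\lambda'(s,c_i)$. Step~\ref{step:5} is immediate since the graph $G$ and source $s$ are unchanged; only $T$ shrinks.

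The main obstacle — and the only place the statement is nontrivial — is Step~\ref{step:6}, where the source is \emph{relabeled} from $s$ to $s'$, a vertex of $R(T)\setminus R(T_s)$ achieving $\lambda_{\max} = \max\{\lambda(s,t): t\in R(T)\setminus R(T_s)\}$. The recursive call on $(G,T^{(\ref{step:6})},k-1)$ returns, for each $t \in R(T^{(\ref{step:6})})$, an upper bound on $\lambda(s',t)$ in $G$ — but we relay it to $\update(t,\cdot)$, which wants an upper bound on $\lambda(s,t)$. The key point is that for such $t$ we have $\lambda(s',t) \le \lambda(s,t)$ is \emph{not} what we want; rather, we only relay these values after observing that $t \in R(T)\setminus R(T_s)$ too, and for those vertices an $(s',t)$-cut in $G$ — call its value $x \ge \lambda(s',t)$ — need not separate $s$ from $t$. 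So the honest argument is: the recursive call returns a value that is the value of an actual $(s',t)$-cut $A$ in $G$ (by the inductive hypothesis applied to the child, whose source is $s'$), and I must show $A$ is \emph{also} an $(s,t)$-cut. Here I would invoke the defining property of $s'$: $\lambda(s,s') = \lambda_{\max} \ge \lambda(s,t)$ for every target $t$ considered, and use submodularity/uncrossing (\Cref{lem:uncross}) on the pair of cuts $A$ (an $(s',t)$-cut) and an $(s,s')$-mincut to extract an $(s,t)$-cut of value $\le x$; alternatively, one shows directly that the recursively computed $\tilde\lambda(t)$ at the child, which by induction equals the value of a concrete $(s',t)$-cut, when combined with the already-recorded $\tilde\lambda(s') \le \lambda_{\max}$, yields via a triangle-inequality-type argument for mincut values ($\lambda(s,t) \le \max(\lambda(s,s'),\lambda(s',t))$) that the relayed value upper-bounds $\lambda(s,t)$. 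I would formalize this using the standard fact that the function $(a,b)\mapsto \lambda(a,b)$ is an ultrametric-like quantity on any graph: $\lambda(s,t) \le \max\{\lambda(s,s'), \lambda(s',t)\}$, so any cut separating $s'$ from $t$ of value below $\lambda(s,s')$ must also separate $s$ from $t$. Combining all cases by induction on the recursion depth (which is finite since $|R(T)|$ halves along Step~\ref{step:2}'s centroid decomposition and $k$ strictly decreases along Steps~\ref{step:5}--\ref{step:6}) completes the proof.
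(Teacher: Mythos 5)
Your proposal is correct and follows essentially the same route as the paper's proof: the updates in Steps~(\ref{step:1}), (\ref{step:2}), (\ref{step:4}), (\ref{step:5}) are handled by the observation that cuts in the contracted graphs $G_i$ lift to $(s,t)$-cuts of the same value in $G$, and the only interesting case, Step~(\ref{step:6}), is resolved exactly as in the paper by combining $\lambda(s,s')=\lambda_{\max}\ge\lambda(s,t)$ with the fact that an $(s',t)$-cut must separate either $s$ from $t$ or $s$ from $s'$, so its value is at least $\min\{\lambda(s,s'),\lambda(s,t)\}=\lambda(s,t)$. One caveat: the parenthetical ``ultrametric'' inequality $\lambda(s,t)\le\max\{\lambda(s,s'),\lambda(s',t)\}$ is false in general (take a heavy $s$--$t$ edge and a light pendant edge to $s'$) and should be deleted, as should the unnecessary uncrossing detour; the correct and sufficient statement is the one you already give, namely that any $(s',t)$-cut of value below $\lambda(s,s')$ also separates $s$ from $t$, while any other value is at least $\lambda_{\max}\ge\lambda(s,t)$, so the value-based inductive hypothesis (the lemma statement itself at the child, with source $s'$) suffices and you need not maintain the stronger ``value of a genuine $(s,t)$-cut'' invariant, which indeed fails through Step~(\ref{step:6}) relays.
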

\begin{proof}
If $\tilde\lambda(t)$ is updated on Step~(\ref{step:1}) or Step~(\ref{step:2}), then we have $\tilde\lambda(t)\ge\lambda(s,t)$ because the updated value corresponds to a valid $(s,t)$-cut. Suppose now that $\tilde\lambda(t)$ is updated on Step~(\ref{step:4}), and let $T_i$ be the subtree containing $t$. By construction of $G_i$, we either contract a set containing $s$ (namely, $V\setminus S_i$) into a vertex labeled the new $s$, or we contract a set not containing $s$ (namely, $V\setminus S_i$ again) into a vertex (not labeled the new $s$). In both cases, any $(s,t)$-cut of graph $G_i$, with the contraction ``reversed'', is a valid $(s,t)$-cut in the original graph $G$. It follows that the $(s,t)$-mincut value $\lambda_{G_i}(s,t)$\footnote{$\lambda_{G_i}(s, t)$ is the value of an $(s,t)$-mincut in graph $G_i$.} in $G_i$ is at least the value $\lambda_G(s,t)$ in $G$. By induction, the output of the recursive call $(G_i,T_i',k)$ is at least $\lambda_{G_i}(s,t) \ge \lambda_G(s,t)$, as promised.

If $\tilde\lambda(s,t)$ is updated on Step~(\ref{step:5}), then since the graph $G$ remains unchanged, the value $\lambda(s,t)$ is also unchanged, and we have $\tilde\lambda(t)\ge\lambda(s,t)$ by induction. The most interesting case is when $\tilde\lambda(s,t)$ is updated on Step~(\ref{step:6}). Here, by the choice of $s'$, we have $\lambda(s,s') \ge \lambda(s,t)$. Next, observe that $\lambda(s',t) \ge \min\{\lambda(s,s'), \lambda(s,t)\}$ holds because the $(s',t)$-mincut is either an $(s,t)$-cut or an $(s',s)$-cut depending on whether $s$ is on the side of $s'$ or the side of $t$. Combining the two previous inequalities gives $\lambda(s',t) \ge \lambda(s,t)$, and by induction, the output of the recursive call $(G,T^{(\ref{step:6})},k-1)$ is at least $\lambda(s',t) \ge \lambda(s,t)$, as promised.
\end{proof}

The lemma above establishes the condition $\tilde\lambda(t)\ge\lambda(s,t)$ of \Cref{thm:ssmc}. It remains to show equality when $T$ is $k$-respecting an $(s,t)$-mincut, which we prove below.

%This ensures that $\tilde\lambda(t)\ge \lambda(s,t)$ for all vertices $t\not= s$. So, we are left to show the following lemma in order to establish \Cref{thm:ssmc}.

\begin{lemma}[Equality]\label{lem:correctness}
Consider an instance $(G=(V,E,w),T,k)$ and a vertex $t\in R(T)$ such that there is an $(s,t)$-mincut in $G$ that $k$-respects $T$. Then, the value $\tilde\lambda(t)$ computed by the algorithm equals $\lambda(s,t)$ w.h.p. 
\end{lemma}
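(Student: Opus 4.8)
The plan is to prove \Cref{lem:correctness} by induction on the recursion, following the case analysis of which recursive branch ``captures'' the promised $k$-respecting $(s,t)$-mincut. Fix an $(s,t)$-mincut $A$ in $G$ that $k$-respects $T$, witnessed by a set $F_A$ of fake vertices so that at most $k$ edges of $T$ cross $A \cup F_A$; without loss of generality assume $s \notin A$ (so $t \in A$). The base cases (Step~\ref{step:1} when $|R(T)|$ is small, and the direct max-flow call on the centroid in Step~\ref{step:2} when $t=c$) are immediate since we compute $\lambda(s,t)$ exactly there. So assume $|R(T)|$ is large and $t \ne c$.

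The main dichotomy is whether \emph{all} tree edges cut by $A \cup F_A$ lie inside a single subtree $T_i$ of the centroid, or whether they are spread across at least two subtrees. \textbf{Case 1: all cut edges lie in one subtree $T_i$.} Then, on the side of the centroid, $A$ does not separate $R(T) \setminus R(T_i)$ (nor $c$) from $s$, so $A$ is in fact an $(R(T_i), R(T)\setminus R(T_i))$-cut or its complement relative to the isolating-cuts instance of Step~\ref{step:3}; I would use \Cref{lem:uncross} to argue that intersecting $A$ with the minimum isolating cut $S_i$ (for the terminal set $U_i=R(T_i)$) yields an $(s,t)$-mincut contained in $S_i$, hence one that survives the contraction of $V \setminus S_i$ in $G_i$ and still $k$-respects $T'_i$ (the extra edge $(s,u_i)$ or $(c_i,u_i)$ is not cut, by the Case~1 hypothesis). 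Then induction on the recursive call $(G_i,T'_i,k)$ in Step~\ref{step:4} gives $\tilde\lambda(t) = \lambda(s,t)$ — with the caveat that if $s \in S_i$ we need the ``$s'' = c_i$'' bookkeeping and the fact that $\lambda_{G_i}(s,t)=\lambda_G(s,t)$, which the Upper bound lemma's argument already supplies in one direction and uncrossing in the other.

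\textbf{Case 2: the cut edges span at least two subtrees.} Here I split further. If some subtree $T_j$ with $j \ne$ (index of $s$'s subtree) contains at least one cut edge, then in a given sampling trial of Step~\ref{step:5} we keep $s$'s subtree, delete $T_j$, and keep the subtrees of $t$ and of any other cut edges with constant probability (there are at most $k \le$ constant subtrees involved, and the number of cut edges strictly drops); conditioned on this event, $T^{(\ref{step:5})}$ is $(k-1)$-respecting an $(s,t)$-mincut (the restriction of $A$), and since we repeat $O(\log n)$ times the event occurs w.h.p., so the recursive call $(G,T^{(\ref{step:5})},k-1)$ with induction finishes this subcase. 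The remaining subcase is when the cut edges lie only in the subtrees $T_s$ (containing $s$) and $T_t$ (containing $t$). This is the delicate part: I would invoke Step~\ref{step:6}, use \Cref{lem:cut-thr-step} to get $\lambda_{\max} = \max\{\lambda(s,t'):t'\in R(T)\setminus R(T_s)\}$ and a maximizer $s'$; the key claim is that $s'$ lies on the $s$-side of $A$ (because $\lambda(s,s')\ge\lambda(s,t)=\delta(A)$, and if $s'$ were on the $t$-side then $A$ would be an $(s,s')$-cut of value $<\lambda(s,s')$, contradiction — here one must handle the boundary case $\lambda(s,s')=\lambda(s,t)$ by noting $A$ still separates them unless $s'$ is on $s$'s side, or that $t$ itself then attains the max and is handled). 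Given that $s'$ is on $s$'s side of $A$ and lies outside $T_s$, deleting $T_s$ to form $T^{(\ref{step:6})}$ removes all cut edges inside $T_s$, so $A$ now $(k-1)$-respects $T^{(\ref{step:6})}$ with $s'$ as source, and $\lambda(s',t)=\lambda(s,t)$ (the $\ge$ direction is in the Upper bound lemma; the $\le$ direction holds because $A$ is an $(s',t)$-cut of value $\delta(A)=\lambda(s,t)\le\lambda(s',t)$). Induction on $(G,T^{(\ref{step:6})},k-1)$ then yields $\tilde\lambda(t)=\lambda(s,t)$.

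The main obstacle I anticipate is \textbf{making Case~2, and especially the $T_s$-only subcase, fully rigorous}: one has to track carefully that (i) relabeling the source to $s'$ does not change the mincut value to $t$, (ii) the restriction of $A$ to the smaller instance is genuinely a mincut there and not merely a cut (which again needs an uncrossing / ``mincuts survive contraction'' argument together with the promise $\lambda(U)\le\lambda(s,t)\le 1.1\lambda(U)$, used to ensure $s'$ is a ``good'' relabeling — this is presumably where the $1.1$ promise and the Cut-Threshold lemma interact), and (iii) the $O(\log n)$ repetitions in Step~\ref{step:5} and the w.h.p. guarantees of \Cref{lem:ct} compose over the $O(\log n)$-depth recursion without the failure probability blowing up, which calls for a union bound over the $n^{O(1)}$ recursive nodes. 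I would also need to double-check that the recursion depth is indeed $O(\log n)$: the centroid decomposition halves $|R(T)|$ each time $k$ is unchanged, and $k$ only decreases, so the total depth is $O(k \log n)=O(\log n)$ for constant $k$, keeping all parameters and probabilities under control.
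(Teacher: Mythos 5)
Your proposal is correct and follows essentially the same route as the paper's proof: the same three-case split at the centroid (all cut edges incident to one subtree handled via the isolating cut $S_i$, \Cref{lem:uncross}, and recursion at the same $k$; a cut edge in a third subtree handled by the sampling in Step~\ref{step:5}; cut edges only in $T_s$ and $T_t$ handled via $\lambda_{\max}$, the relabeled source $s'$, and the two-sided argument that $\lambda(s',t)=\lambda(s,t)$), with your boundary case $\lambda(s,s')=\lambda(s,t)$ resolved exactly as in the paper by the direct update $\tilde\lambda(t)=\lambda_{\max}$ in Step~\ref{step:6}. The only slip is the Case~1 claim that the added edge $(s,u_i)$ or $(c_i,u_i)$ of $T'_i$ is never cut: it can be cut, but it is cut if and only if the original edge $(c,u_i)$ was, so the number of cut edges of $T'_i$ is still at most $k$, which is precisely how the paper closes that step.
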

\begin{proof}
Consider an $(s,t)$-mincut $C$ in $G$ that $k$-respects $T$. First, if the centroid $c$ is the vertex $t$, then the mincut computation in Step~(\ref{step:2}) correctly recovers $\lambda(s,t)$. Otherwise, let $T_t$ be the subtree containing $t$. We have a few cases based on the locations of the edges in $T$ that cross the cut $C$, which we call the \emph{cut} edges.
Note that there is at least one cut edge along the $(s,t)$ path in $T$, and it is incident to (the vertices of) either $T_t$ or the subtree $T_s$ containing $s$. (If $c=s$ and there is no subtree $T_s$ containing $s$, then at least one cut edge must be incident on some vertex in $T_t$.) %\debmalya{We need some explanation of how the cut edges are being defined if there are fake vertices in $T$. Just use the partition of fake vertices in the (extension of the) definition of $k$-respecting?}

\begin{figure}[htbp]
  \begin{center}
    \ifarxiv
    \includegraphics[width=6.8in]{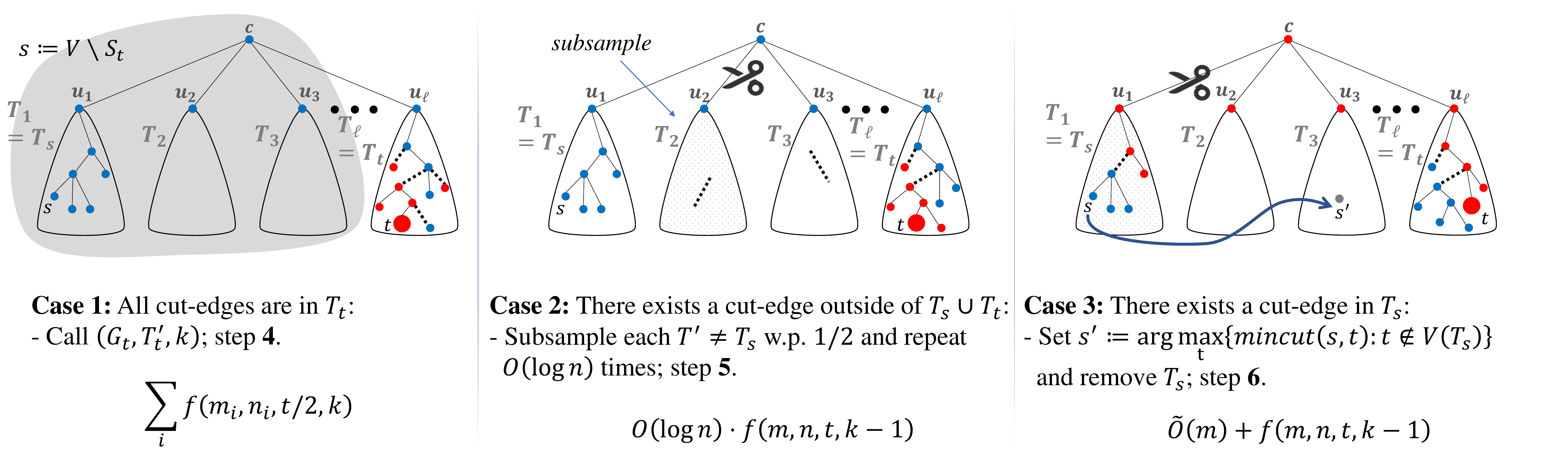}
    \else
    \includegraphics[width=6.8in]{Proof_single}
    \fi
    \end{center}
    \caption{An illustration of the different cases, which part of our algorithm deals with them, and the corresponding running time. Here, vertices in the side of $s$ are depicted by blue dots, vertices in the side of $t$ by red dots, and cut edges by dashed lines. The gray area refers to a contracted subset, and the scissors symbol means we remove the subtree. Observe that whenever the latter happens, we get rid of at least one cut edge.} 
%\textcolor{blue}{Jason: Why is Case 2 necessary? Doesn't Case 3 handle that as well? Also, I used $\ell$ instead of $l$ as the number of trees $T_i$ (I personally prefer $\ell$ since it's harder to confuse with $1$ or $i$). Also, I think it would be good to elaborate on what the scissors means on cases 3 and 4.}
    \label{Figs:Proof_single}
\end{figure}

The first case (Case 1 in Figure~\ref{Figs:Proof_single}) is that all the cut edges are incident to the vertices of a single subtree $T_j$, which must be either $T_t$ or $T_s$ (if the latter exists). Then, there is a side $A\in \{C, V\setminus C\}$ of the $(s,t)$-mincut $C$ whose vertices in $R(T)$ are all in $R(T_j)$; in other words, $A\cap R(T)=A\cap R(T_j)$. Note that $A$ is an $(A\cap R(T_j),R(T)\setminus(A\cap R(T_j)))$-mincut since if there were a smaller such cut, then that would also be a smaller $(s,t)$-cut, which contradicts that $A$ is an $(s,t)$-mincut. Also, by construction, $S_j$ is a $(R(T_j),R(T)\setminus R(T_j))$-mincut. We now apply \Cref{lem:uncross} on parameters $U=R(T)$, $A=A$, $X=A\cap R(T_j)$, $A'=S_j$, and $X'=R(T_j)$. The lemma implies that there is an $(A\cap R(T_j),R(T)\setminus(A\cap R(T_j)))$-mincut $\tilde A\subseteq S_j$, and this cut survives in the contracted graph $G_j$. Since $\tilde A$ is an $(s,t)$-cut of the same value as $A$, we conclude that $\tilde A$ is also an $(s,t)$-mincut. Finally, we argue that $\tilde A$ also $k$-respects the tree $T'_j$ in the recursive instance. By definition, since $A$ $k$-respects $T$, there exists a set $F_A$ of fake vertices such that $T$ contains at most $k$ edges cut by $A\cup F_A$. Since $A$ and $\tilde A$ agree on vertices in $R(T_j)$, tree $T$ also contains at most $k$ edges cut by $\tilde A\cup F_A$ (it is the exact same set of edges). Define $F_{\tilde A}=F_A\cap V(T_j)$, and from $\tilde A\cap R(T)\subseteq R(T_j)$, we observe that tree $T_j$ contains at most $k$ edges cut by $\tilde A\cap F_{\tilde A}$ (it is all edges from before, restricted to tree $T_j$). Furthermore, the new edge $(s,u_j)$ or $(c_j,u_j)$ added to $T'_j$ is cut by $\tilde A\cup F_{\tilde A}$ if and only if the edge $(c,u_j)$ of $T$ is cut by $\tilde A\cup F_A$. It follows that at most $k$ edges of $T'_j$ are cut by $\tilde A\cup F_{\tilde A}$. Thus, the lemma statement is satisfied on recursive call $(G_j,T'_j,k)$ of Step~(\ref{step:4}), and the algorithm recovers $\lambda(s,t)$ w.h.p.

In the rest of the proof, we assume that the edges of $T$ cut by $A\cup F_A$ are incident to (the vertices of) at least two subtrees. Suppose first (Case 2 in Figure~\ref{Figs:Proof_single}) that a cut edge is incident to some subtree $T_j$ that is not $T_t$ or $T_s$ (or only $T_t$, if $s=c$ and $T_s$ does not exist). In each independent trial of Step~(\ref{step:5}), we sample $T_t$ but not $T_j$ with constant probability. In this case, since $T_j$ is discarded in the construction of $T^{(\ref{step:5})}$, the $(s,t)$-mincut $C$ $(k-1)$-respects the resulting tree $T^{(\ref{step:5})}$. Over $O(\log n)$ independent trials, this happens w.h.p., and the algorithm correctly recovers $\lambda(s,t)$ w.h.p.

We are left with the case (Case 3 in \Cref{Figs:Proof_single}) that all edges of $T$ cut by $A\cup F_A$ are incident to subtrees $T_t$ and $T_s$. Note that $T_s$ must exist since if $s=c$ and Case 2 does not happen, we would be in Case 1. Furthermore, $T_s\ne T_t$, because otherwise, we would either be in Case 1 (if all cut edges are incident on $T_t = T_s$) or in Case 2 (if there is at least one cut edge incident on some $T_j \not= T_t = T_s$). 

Since $T_s\ne T_t$, we have $t\notin R(T_s)$, i.e., $t\in R(T)\setminus R(T_s)$. If $\lambda(s,t)=\lambda_{\max}$ (where $\lambda_{\max}$ is as defined in Step~(\ref{step:6})), then Step~(\ref{step:6}) sets $\lambda(s,t)=\lambda_{\max}$ correctly. Otherwise, we must have $\lambda(s,t)<\lambda_{\max}$. In this case, we claim that the vertex $s'$ (that has the property $\lambda(s,s')=\lambda_{\max}$ in Step~(\ref{step:6}) of the algorithm) satisfies $\lambda(s',t)=\lambda(s,t)$. To prove this claim, we first observe that $s'$ must appear on the $s$-side of the $(s,t)$-mincut $C$. Otherwise, if $s'$ is on the $t$-side, then $C$ is an $(s,s')$-cut of value $\lambda(s,t)<\lambda_{\max}$, contradicting the guarantee $\lambda(s,s')=\lambda_{\max}$. It follows that $\lambda(s',t)\le\lambda(s,t)$. Next, observe that $s$ must appear on the $s'$-side of the $(s',t)$-mincut $C'$. Otherwise, if $s$ is on the $t$-side, then $C'$ is an $(s,s')$-cut of value $\lambda(s',t)\le\lambda(s,t)<\lambda_{\max}$, contradicting the guarantee $\lambda(s,s')=\lambda_{\max}$. It follows that $\lambda(s,t)\le\lambda(s',t)$, which proves the claim $\lambda(s,t)=\lambda(s',t)$.

Consider again the $(s,t)$-mincut $C$. Since $s'$ is on the $s$-side of the $(s,t)$-mincut $C$, if we swap the locations of $s$ and $s'$ in $T$, then $C$ still $k$-respects the modified tree, and the edges of the tree that cross the cut are the same (except that $s$ and $s'$ swap places on the edges). In particular, the subtree $T_s$ with $s$ replaced by $s'$ has at least one cut edge. By removing this modified subtree $T_s$, we arrive at the tree $T^{(\ref{step:6})}$ in Step~(\ref{step:6}), and the $(s,t)$-mincut $C$ must $(k-1$)-respect $T^{(\ref{step:6})}$. So, the recursive call $(G,T^{(\ref{step:6})},k-1)$ recovers $\lambda(s',t)$ w.h.p., which equals $\lambda(s,t)$ by the claim above.

%Let $A\subseteq V$ be the vertices on $s$'s side of the $(s,t)$-mincut $C$. We now case on the size of $A\cap(R(T)\setminus R(T_s))$.
% \begin{enumerate}
% \item If $|A\cap(R(T)\setminus R(T_s))|\le |R(T)\setminus R(T_s)|^{1-\epsilon}$, then on each independent trial of step~(\ref{step:6}), we sample no vertices in $A$ with constant probability. Moreover, we sample $t$ with probability roughly $1/|R(T)\setminus R(T_s)|^{1-\epsilon}$. If both events happen, then $C$ is a valid $(s,S^{(\ref{step:6})})$-cut, and in particular, it is an $(s,S^{(\ref{step:6})})$-mincut since $t\in S^{(\ref{step:6})}$.
%Over $O(|R(T)\setminus R(T_s)|^{1-\epsilon}\log n)$ independent trials, this happens w.h.p., and the algorithm correctly recovers $\lambda(s,t)$.
% \item If $|A\cap(R(T)\setminus R(T_s))| > |R(T)\setminus R(T_s)|^{1-\epsilon}$, then with probability at least $1/|R(T)\setminus R(T_s)|^\epsilon$, step~(\ref{step:7}) samples a vertex in $A$ on a given trial. In that case, the new vertex $s$ still lies on the correct side $A$ of the $(s,t)$-mincut $C$. The cut edge(s) in $T_s$ disappear in the resulting tree $T^{(\ref{step:7})}$, so $C$ $(k-1)$-respects $T^{(\ref{step:7})}$. Over $O(|R(T)\setminus R(T_s)|^\epsilon\log n)$ independent trials, this happens w.h.p., and the algorithm correctly recovers $\lambda(s,t)$.
% \end{enumerate}
This concludes all cases, and hence the proof of \Cref{lem:correctness}.
\end{proof}

\subsection{Running Time}
\begin{lemma}[Running time]\label{lem:runtime}
For any fixed integer $k\ge1$, the algorithm calls max-flow on instances of at most $n$ vertices and $m$ edges each, and a total of $\tilde{O}(n)$ vertices and $\tilde{O}(m)$ edges. Moreover, these max-flow calls dominate the running time.
\end{lemma}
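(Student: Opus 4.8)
The plan is to set up a recursion tree for the algorithm and bound the total work by amortizing against the size of the real vertex set $R(T)$ at each node of the recursion. The key structural fact is the centroid decomposition in Step~(\ref{step:2}): rooting $T$ at the centroid $c$ guarantees that every subtree $T_i$ has $|R(T_i)| \le |R(T)|/2$. I would first handle the ``horizontal'' branching in Steps~(\ref{step:3})--(\ref{step:4}): the recursive calls $(G_i, T_i', k)$ are made on graphs $G_i$ obtained by contracting $V\setminus S_i$, where the $S_i$ are the minimum isolating cuts. Because the $S_i$ are disjoint (they are the isolating cuts of disjoint terminal sets $R(T_i)$), the graphs $G_i$ have edge sets that are each subsets of $E(G)$ after contraction, and crucially $\sum_i |R(T_i)| = |R(T)|$ (or $|R(T)| - 1$ if $c$ is real). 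So one level of the Step~(\ref{step:4}) recursion partitions the real vertices, and the sizes strictly decrease by at least half thanks to the centroid choice; this contributes only $O(\log n)$ depth to that branch, and at each level the total number of edges/vertices across all active instances is $\tilde O(m)$ and $\tilde O(n)$ by the Isolating Cuts Lemma (\Cref{lem:iso cut}).

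Next I would bound the ``$k$-decrementing'' branches, Steps~(\ref{step:5}) and~(\ref{step:6}). Each such call either keeps $G$ unchanged but deletes subtrees (Step~(\ref{step:5})), or deletes the subtree $T_s$ and relabels $s$ (Step~(\ref{step:6})); in both cases the parameter $k$ drops by exactly $1$, and Step~(\ref{step:5}) is repeated $O(\log n)$ times. Since $k$ is a fixed constant, the depth of recursion in the $k$-parameter is at most $k = O(1)$, so the branching factor along this axis contributes only a $(\log n)^{O(k)} = \polylog(n)$ multiplicative overhead — this is exactly where the ``number of logarithmic factors depends on $k$'' remark comes from. The subtlety to check here is that in Step~(\ref{step:5}) the graph $G$ is passed unchanged, so naively the edge count does not shrink down the recursion; the fix is that the recursion in $k$ has bounded depth $O(k)$, so we only pay $\polylog(n)$ copies of $G$ in total across all Step~(\ref{step:5})/Step~(\ref{step:6}) descendants, each copy of size $\le (n,m)$, giving cumulative $\tilde O(n)$ vertices and $\tilde O(m)$ edges fed to max-flow at each of the $O(k)$ levels.

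Combining the two axes: the recursion tree has the centroid/isolating-cut axis (depth $O(\log n)$, with real-vertex-set sizes forming a laminar family that sums to $\tilde O(n)$ at every level by \Cref{lem:iso cut}) crossed with the $k$-axis (depth $O(k)$, fan-out $O(\log n)$ at Step~(\ref{step:5})). At every node, the non-max-flow work is: finding a centroid (linear in tree size), invoking \Cref{lem:iso cut} and \Cref{lem:cut-thr-step} (which themselves spend $\tilde O(m)$ outside their own max-flow calls), performing the contractions, and the $\update$ bookkeeping — all $\tilde O(m)$ per node, and $\tilde O(m)$ cumulatively per level since the instances at a level have total size $\tilde O(m)$. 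Multiplying the two depths and the $O(\log n)$ fan-out gives $(\log n)^{O(k)}$ total copies of an $\tilde O(m)$-size max-flow workload, i.e. $m^{1+o(1)}$ overall once we note $k = O(1)$ and fold the $\polylog$ into $m^{o(1)}$, and that these max-flow calls dominate everything else. The main obstacle I anticipate is making the amortization across the two recursion axes fully rigorous — in particular verifying that the $S_i$ from \Cref{lem:iso cut} really are vertex-disjoint so that the $G_i$ have total size $\tilde O(m)$ (this needs the isolating-cut sets to be non-crossing, which follows from submodularity / the uncrossing in \Cref{lem:uncross}), and confirming that the tree passed in each recursive call, including the fake vertices introduced by contraction, never blows up the vertex count beyond $\tilde O(n)$ per level.
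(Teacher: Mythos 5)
Your overall strategy --- a recursion-tree analysis driven by the two facts that the isolating-cut sets $S_i$ are disjoint (so the Step~(\ref{step:4}) subinstances partition the real terminals and the interior of the graph) and that the centroid choice gives $|R(T_i)|\le|R(T)|/2$ (so that branch has depth $O(\log n)$), combined with induction on $k$ to absorb the $O(\log n)$ fan-out of Steps~(\ref{step:5})--(\ref{step:6}) into $(\log n)^{O(k)}$ factors --- is exactly the paper's argument, which packages it as recurrences $f(n,r,k)$ and $g(m',n,r,k)$.

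However, one accounting step in your plan is wrong as stated and needs the paper's extra bookkeeping. You claim that ``the graphs $G_i$ have edge sets that are each subsets of $E(G)$ after contraction'' and conclude from disjointness of the $S_i$ that the instances at a level have total size $\tilde O(m)$. For edges this does not follow: an edge of $G$ crossing $(S_i,V\setminus S_i)$ survives in $G_i$ as an edge incident to the contracted vertex, an edge with endpoints in $S_i$ and $S_j$ appears in both $G_i$ and $G_j$, and contracted vertices accumulate down the recursion, so their incident edges cannot be charged once against $m$. The paper resolves this by running the edge recurrence only over $m'$, the count of edges \emph{not} incident to any contracted vertex (these really are disjoint across the $G_i$, lying inside the disjoint sets $S_i$, giving $\sum_i m'_i\le m'$), and separately bounding edges incident to contracted vertices: each instance contains only $O(\log|R(T)|)$ contracted vertices because every Step~(\ref{step:4}) contraction halves $|R(T)|$, so those edges total at most $O(\log n)$ times the vertex bound $f(n,r,k)=\tilde O(n)$. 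Without this (or an equivalent) argument your ``total size $\tilde O(m)$ per level'' claim does not follow from $S_i$-disjointness alone. Two smaller points: disjointness of the $S_i$ is a direct guarantee of \Cref{lem:iso cut}, no uncrossing via \Cref{lem:uncross} is needed; and your phrase ``polylog copies of $G$'' for the Step~(\ref{step:5}) axis is loose, since those calls are issued at every node of the Step~(\ref{step:4}) recursion on that node's (smaller) graph --- the correct accounting is the induction on $k$ you invoke at the end, which is what the paper's term $O(\log n)\cdot f(n,r,k-1)$ makes precise.
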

We first bound the total number of vertices across all recursive instances, then use the same technique to also bound the total number of edges.

We use the following notation for any recursive call: $r=|R(T)|$ and $n$ represents the number of vertices in $G$ including contracted vertices, i.e., vertices resulting from the contraction on Step~(\ref{step:4}) of any previous instance. (Since the original instance has no contracted vertex, the initial value of $n$ is just the number of vertices in the input graph.) The function $f(n,r,k)$ represents an upper bound on the total number of vertices among all max-flow calls that occur, starting from a single input instance with parameters $n,r,k$ (and including max-flows called inside recursive instances).

Fix an instance with parameters $n,r,k$. For each $i$, let $n_i$ represent the number of vertices in $G_i$, and let $r_i=|R(T_i)|$. Now observe that 
 \begin{enumerate}
 \item $\sum_{i=1}^\ell (n_i-1)=n-1$ since the sets $S_i$ are disjoint by the guarantee of \Cref{lem:iso cut}, and
% \item $\sum_{i=1}^\ell r_i\le r$, since the sets $R(T_i)$ are disjoint by construction, and
 \item $r_i\le r/2$ for each $i\in[\ell]$ by the fact that $c$ is a centroid.
 \end{enumerate}
We now consider the individual steps of the recursive \sscv algorithm.
 \begin{enumerate}
 \item The algorithm calls a single max-flow in step~(\ref{step:2}), and then in step~(\ref{step:3}),  it calls \Cref{lem:iso cut}, which in turn calls max-flows on a total of $O(n\log \ell)$ vertices. In total, this is $O(n \log \ell)$ vertices among the max-flow calls. 
 %\debmalya{I changed $t$ to $\ell$ in these bounds. Please check.}
 \item In step~(\ref{step:4}), the algorithm makes recursive calls on trees $T'_i$ containing $r_i+1$ real vertices each, and graphs $G_i$ containing $n_i$ vertices each, so the total number of vertices in the max-flow calls in the recursion is at most $\sum_{i\in[\ell]}f(n_i,r_i+1,k)$.
 \item In step~(\ref{step:5}), the algorithm makes $O(\log n)$ independent calls to an instance where $k$ decreases by $1$. So, this step contributes at most $O(\log n)\cdot f(n,r,k-1)$.
 \item In step~(\ref{step:6}), the algorithm calls \Cref{lem:cut-thr-step}, which in turn calls max-flows on a total of $\tilde O(n)$ vertices, followed by a recursive call on an instance where $k$ decreases by $1$. In total, this step contributes at most $\tilde{O}(n)+f(n,r,k-1)$.
 \end{enumerate}
We may assume that $f(n,r,k)$ is monotone non-decreasing in all three parameters, which gives us the recursive formula
\BAL
f(n,r,k) \le \underbrace{O(n\log \ell)}_{\text{steps~(\ref{step:2}),(\ref{step:3})}}&+ \underbrace{\sum_{i\in[\ell]}f(n_i,r_i+1,k)}_{\text{step (\ref{step:4})}} + \underbrace{O(\log n)\cdot f(n,r,k-1)}_{\text{step (\ref{step:5}), only for }k>1} + \underbrace{\tilde{O}(n)+f(n,r,k-1)}_{\text{step (\ref{step:6}), only for }k>1} .
\EAL
%\debmalya{I changed $t$ to $\ell$ in the first term of the inequality. Please check.}
%\debmalya{we need explanation for the bounds above. Steps 1 and 2: find centroid, maxflow, isolating cuts. Step 3 has the recursive calls and also the last step which takes $O(n)$ time. Step 4 is recursion for $O(\log n)$ independent samples. Step 5 has a recursion plus the call to \Cref{lem:cut-thr-step}.}
We now claim that $f(n,r,k)$ solves to $\tilde{O}(n)$ for any constant $k$, where the number of polylog terms depends on $k$.
For $k=1$, the recursive formula $f(n,t,1)$ solves to $O(n\log^2t)$. This is because $r_i+1 \le r/2+1\leq \frac23 r$ for all $i\in [\ell]$ limits the recursive depth to $O(\log t)$.\footnote{Here, we have used the assumption that $r$ is larger than some constant, e.g.\ 10.} And, since $\sum_{i=1}^\ell (n_i-1)=n-1$, the sum of $f(\cdot)$ in any recursive level is $O(n\log t)$. For larger $k$, note that if we assume that $f(n,r,k-1)\le\tilde{O}(n)$, then we also obtain $f(n,r,k)\le\tilde{O}(n)$, where the $\tilde{O}(\cdot)$ hides more logarithmic factors. The claim then follows by induction on $k$. (Note that the polylogarithmic dependency on $k$ is $f(n,r,k)=n(\log n)^{O(k)}$.)% \debmalya{This should be made more formal, giving the exponent of $\log n$ and showing that the induction works.}

We now bound the total number of edges. We use the following notation in any recursive call: as earlier, $r=|R(T)|$ and $n$ represents the number of vertices in $G$ including contracted vertices. In addition, $m'$ represents $n$ plus the number of edges in $G$ \emph{not incident to a contracted vertex}. (Since the original instance has no contracted vertex, the initial value of $m'$ is just the number of vertices plus the number of edges in the input graph.) The function $g(m',n,r,k)$ represents an upper bound on $f(n,r,k)$ plus the total number of edges not incident to contracted vertices among all max-flow calls that occur, starting from a single input instance with parameters $m',n,r,k$ (and including max-flows called inside recursive instances). This then implies a bound on the total number of edges over all max-flow calls, including those incident to contracted vertices, by the following argument. Each recursive instance has at most $O(\log|R(T)|)$ contracted vertices, since each contraction in Step~(\ref{step:4}) decreases $|R(T)|$ by a constant factor. So the total number of edges incident to contracted vertices is at most the total number of vertices times $O(\log|R(T)|)$, which is at most $f(n,r,k)\cdot O(\log n) \le g(m',n,r,k)\cdot O(\log n)$. So from now on, we only focus on edges not incident to contracted vertices.

Fix an instance with parameters $m',n,r,k$. For each $i$, let $n_i$ represent the number of vertices in $G_i$, let $m'_i$ represent the number of edges in $G_i$ not incident to a contracted vertex, and let $r_i=|R(T_i)|$. Once again, observe that $\sum_{i=1}^\ell (n_i-1)=n-1$ and $r_i\le r/2$ for each $i\in[\ell]$. This time, we also have $\sum_{i=1}^\ell m'_i\le m'$ by the following explanation: \Cref{lem:iso cut} guarantees that the vertex sets $S_i$ are disjoint, and the edges of each $G_i$ not incident to a contracted vertex have both endpoints in $S_i$, and are therefore disjoint over all $i$. We may assume that $g(m,n,r,k)$ is monotone non-decreasing in all four parameters, which gives us the recursive formula
\BAL
g(m',n,r,k) \le \underbrace{O((m+n)\log \ell)}_{\text{steps~(\ref{step:2}),(\ref{step:3})}} + \underbrace{\sum_{i\in[\ell]}g(m_i,n_i,r_i,k)}_{\text{step (\ref{step:4})}} &+ \underbrace{O(\log n)\cdot g(m,n,r,k-1)}_{\text{step (\ref{step:5}), only for }k>1} \\&+ \underbrace{\tilde{O}(m)+g(m,n,r,k-1)}_{\text{step (\ref{step:6}), only for }k>1}  .
\EAL

%\debmalya{we need explanation for the bounds above. Steps 1 and 2: find centroid, maxflow, isolating cuts. Step 3 has the recursive calls and also the last step which takes $O(n)$ time. Step 4 is recursion for $O(\log n)$ independent samples. Step 5 has a recursion plus the call to \Cref{lem:cut-thr-step}.}

Similar to the solution for $f(n,r,k)$, we now have that $g(m,n,r,k)$ solves to $\tilde{O}(m+n)$ for any constant $k$ by the same inductive argument. (Once again, the polylogarithmic dependency on $k$ is $f(n,r,k)=(m+n)(\log n)^{O(k)}$.)

%\debmalya{I think it's fine to omit the inductive argument in detail here, but it would still be nice to be more precise about the dependence on $k$. This is also useful because there might be some independent value to \Cref{thm:ssmc}, in which case the dependence on $k$ might be important,}

Since the graph never increases in size throughout the recursion, each max-flow call is on a graph with at most as many vertices and edges as the original input graph. Finally, we claim that the max-flow calls dominate the running time of the algorithm. In particular, finding the centroid of $T$ on step~(\ref{step:2}) can be done in time in the size of the tree (see the footnote at step~(\ref{step:2})), which is dominated by the single max-flow call on the same step. This finishes the proof of \Cref{lem:runtime}.

\section{Constructing Guide Trees}
\label{sec:packing}

In this section, we show how to obtain guide trees that prove \Cref{lem:guide}. 
Our algorithm is based on the notion of a \emph{Steiner subgraph packing},
as described next.

\begin{definition}
Let $G = (V, E,w)$ be an undirected edge-weighted graph
with a set of terminals $U\subseteq V$.
A subgraph $H$ of $G$ is said to be
a \emph{$U$-Steiner subgraph} (or simply a Steiner subgraph
if the terminal set $U$ is unambiguous from the context) 
if all the terminals are connected in $H$. In this case,
we also call $H$ a {\em terminal-spanning} subgraph of $G$. 
\end{definition}

\begin{definition}
A \emph{$U$-Steiner-subgraph packing} $\pset$ is a collection
of $U$-Steiner subgraphs $H_{1},\ldots,H_{k}$,
where each subgraph $H_{i}$ is assigned a value $\val(H_{i})>0$.
If all $\val(H_{i})$ are integral,
we say that $\pset$ is an \emph{integral packing}.
Throughout, a packing is assumed to be \emph{fractional}
(which means that it does not have to be integral), unless specified otherwise. 
The \emph{value} of
the packing $\pset$ is the total value of all its Steiner subgraphs,
denoted $\val(\pset)=\sum_{H\in\pset}\val(H)$.
We say that $\pset$ is \emph{feasible} if
\[
  \forall e\in E, \quad
  \sum_{H\in\pset : e\in H} \val(H)\le w(e).
\]
\end{definition}

To understand this definition, think of $w(e)$ as the ``capacity'' of $e$; 
then, this condition means that the total value of all Steiner subgraphs
$H\in{\pset}$ that ``use'' edge $e$ does not exceed its capacity $w(e)$. 
A \emph{Steiner-tree packing} is a packing $\pset$
where each subgraph $H\in\pset$ is a tree.

Denote by $\optval(U)$ the maximum value of a feasible $U$-Steiner-subgraph
packing in $G$.
The next two lemmas show a close connection between
Steiner-subgraph packing $\optval(U)$ and $U$-Steiner mincut $\lambda(U)$,
and that the former problem admits a $(2+\eps)$-approximation algorithm. 

\begin{lem}
\label{lem:packing vs mincut}
For every graph $G$ with terminal set $U$,
we have $\lambda(U)/2\le\optval(U)\le\lambda(U)$.
\end{lem}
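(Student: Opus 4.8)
I would prove the two inequalities $\lambda(U)/2\le\optval(U)$ and $\optval(U)\le\lambda(U)$ separately; the upper bound is a short counting argument, while the lower bound is the substantive part and goes through a splitting-off reduction to the classical Nash--Williams/Tutte spanning-tree packing theorem.

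\emph{Upper bound $\optval(U)\le\lambda(U)$.} Fix a $U$-Steiner mincut, i.e.\ a set $S$ with $S\cap U\ne\emptyset\ne U\setminus S$ and $\delta(S)=\lambda(U)$, and let $\partial S=E(S,V\setminus S)$. For any feasible $U$-Steiner-subgraph packing $\pset$, every $H\in\pset$ connects a terminal inside $S$ to a terminal outside $S$, so $H$ contains at least one edge of $\partial S$. Hence
\[
  \val(\pset)=\sum_{H\in\pset}\val(H)\le\sum_{H\in\pset}\ \sum_{e\in\partial S\cap E(H)}\val(H)=\sum_{e\in\partial S}\ \sum_{H\in\pset:\,e\in H}\val(H)\le\sum_{e\in\partial S}w(e)=\lambda(U),
\]
where the last step is feasibility. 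Taking the maximum over $\pset$ yields $\optval(U)\le\lambda(U)$.

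\emph{Lower bound $\optval(U)\ge\lambda(U)/2$.} Here I would build an explicit feasible packing of value $\lambda(U)/2$. Since edge weights are integral, view $G$ as a multigraph and let $2G$ be $G$ with every edge doubled; then every vertex of $2G$ has even degree, and $\lambda_{2G}(a,b)=2\lambda_G(a,b)$ for all $a,b$. Now repeatedly apply a complete splitting-off step (Lovász's splitting-off theorem applies because all degrees are even, and Mader's version is also available) at each non-terminal vertex, removing it; each such step preserves all local edge-connectivities among the remaining vertices and leaves every remaining vertex's degree unchanged, so the even-degree property persists and all non-terminals can be eliminated. The result is a multigraph $G'$ with vertex set exactly $U$ and $\lambda_{G'}(a,b)=2\lambda_G(a,b)\ge 2\lambda_G(U)$ for all $a,b\in U$. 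Since $V(G')=U$, every cut of $G'$ separates some terminal pair, so $G'$ is $2\lambda_G(U)$-edge-connected; by the Nash--Williams/Tutte tree-packing theorem it contains $\lambda_G(U)$ edge-disjoint spanning trees $T'_1,\dots,T'_{\lambda_G(U)}$ (note $\lambda_G(U)$ is an integer, being a cut value of an integer-weighted graph). To finish, lift these trees to $G$: in a complete splitting-off the new edges consume the old ones injectively, so by induction on the sequence of splittings each edge $f$ of $G'$ expands to a walk in $2G$ between its endpoints, and these expansions are pairwise edge-disjoint in $2G$ over all $f\in E(G')$. Thus each $T'_j$ yields a connected subgraph $H_j\subseteq 2G$ spanning $U$ (a $U$-Steiner subgraph), and the $H_j$ are pairwise edge-disjoint in $2G$. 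Assigning $\val(H_j)=\tfrac12$ gives a packing $\pset$ in $G$ of value $\lambda_G(U)/2$; it is feasible because, for any edge $e$ of $G$, its $2w(e)$ parallel copies in $2G$ are met by at most $2w(e)$ of the disjoint $H_j$'s, so $\sum_{H_j\ni e}\val(H_j)\le 2w(e)\cdot\tfrac12=w(e)$. Hence $\optval(U)\ge\lambda_G(U)/2$.

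\emph{Main obstacle.} The delicate point is this back-translation: guaranteeing that edge-disjointness of the spanning trees in the split-off graph $G'$ actually yields \emph{feasibility} of the resulting Steiner-subgraph packing in $G$. The two devices doing the work are (i) the ``injective consumption'' property of complete splitting-off, which I would verify by a straightforward induction over the splitting steps showing the expansions of distinct $G'$-edges are edge-disjoint in $2G$, and (ii) the initial doubling, which both supplies the even-degree hypothesis needed to eliminate \emph{all} non-terminals via splitting-off and provides exactly the factor-$2$ slack that turns ``edge-disjoint in $2G$'' into a feasible fractional packing in $G$ with value $\lambda(U)/2$. (If one prefers Steiner \emph{trees}, replace each $H_j$ by a spanning tree of $H_j$; for this lemma, subgraphs suffice.)
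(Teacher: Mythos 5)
Your proposal is correct and follows essentially the same route as the paper: the upper bound is the same feasibility/counting argument over a Steiner mincut, and the lower bound uses the same scheme of passing to an even-degree multigraph (the paper's "appropriate scaling" is your doubling), eliminating non-terminals by Mader's splitting-off, packing $\ge\lambda(U)$-connectivity-many/2 spanning trees on the terminals via Nash--Williams/Tutte, and lifting the trees back to edge-disjoint $U$-Steiner subgraphs, with the factor-$2$ bookkeeping absorbed into fractional values. Your write-up just makes the scaling and the edge-disjoint lifting slightly more explicit than the paper does.
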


\begin{lem}
\label{lem:packing MWU}
There is a deterministic algorithm that, given $\eps\in (0,1/2)$, a graph $G=(V,E,w)$, and a terminal set $U\subseteq V$,
returns a $U$-Steiner-subgraph packing $\pset$
of value $\val(\pset)\ge\optval(U)/(2+\eps)$ 
in $\Otil(m^{2}/\epsilon^2)$ time,
and in the case of unweighted $G$ in $m^{1+o(1)}/\epsilon^2$ time.
\end{lem}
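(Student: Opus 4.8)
The plan is to recognize $\optval(U)$ as the optimum of a fractional packing LP and to solve it within a factor $2+\eps$ using the width-independent Multiplicative Weights Update (MWU) / Garg--K\"onemann framework, where the column-generation (separation) oracle is an \emph{approximate} minimum $U$-Steiner tree computation. Concretely, the packing problem is $\max\sum_H \val(H)$ over $U$-Steiner subgraphs $H$, subject to $\sum_{H\ni e}\val(H)\le w(e)$ for every $e$ and $\val(H)\ge 0$; its value is $\optval(U)$. Its LP dual assigns a nonnegative length $\ell(e)$ to each edge and asks for $\min\sum_e w(e)\ell(e)$ subject to $\sum_{e\in H}\ell(e)\ge 1$ for every $U$-Steiner subgraph $H$. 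Since a minimal Steiner subgraph is a Steiner tree (and a tree is a subgraph), for $\ell\ge 0$ this family of constraints is equivalent to the single requirement that the minimum-length $U$-Steiner tree under $\ell$ have length at least $1$; so the dual separation oracle is exactly a minimum $U$-Steiner tree routine. That problem is $\mathsf{NP}$-hard, so we invoke Mehlhorn's $2$-approximation~\cite{Mehlhorn88}, which runs in $\Otil(m)$ time (it grows one shortest-path forest simultaneously from all terminals, takes a minimum spanning tree of the induced terminal-distance graph, and expands it into a Steiner tree); a $2$-approximate minimum Steiner tree is also a $2$-approximate minimum Steiner \emph{subgraph} because the two optima coincide.

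The algorithm initializes $\ell(e)=\delta/w(e)$ for all $e$, with $\delta=\delta(\eps,m)$ a suitably small inverse-polynomial quantity, and starts from the empty packing. While $\sum_e w(e)\ell(e)<1$, it calls Mehlhorn's routine under the current lengths $\ell$ to obtain a Steiner tree $T$, sets $c=\min_{e\in T}w(e)$, increases $\val(T)$ by $c$, and multiplies $\ell(e)$ by $1+\eps c/w(e)$ for every $e\in T$. When the loop stops, it rescales all values by $\log_{1+\eps}\tfrac{1+\eps}{\delta}$ so that the packing $\pset$ becomes feasible. The algorithm is deterministic, since both the MWU framework and Mehlhorn's routine are.

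For correctness and running time on weighted $G$: the standard Garg--K\"onemann analysis shows that the rescaled value is at least $\optval(U)$ divided by $\rho\cdot(1+O(\eps))$, where $\rho=2$ is the oracle's approximation ratio; replacing $\eps$ by a small constant multiple of itself therefore yields $\val(\pset)\ge \optval(U)/(2+\eps)$. For the iteration count, note that the bottleneck edge of $T$ has its $\ell$-value multiplied by exactly $1+\eps$ in each iteration, and an edge can be the bottleneck at most $O(\eps^{-1}\log(1/\delta))=O(\eps^{-2}\log m)$ times before $\sum_e w(e)\ell(e)$ reaches $1$; hence there are $O(m\eps^{-2}\log m)$ iterations, each costing $\Otil(m)$ via Mehlhorn, for a total of $\Otil(m^2/\eps^2)$.

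For unweighted $G$ the target is to serve all $\Otil(m/\eps^2)$ oracle calls in $m^{1+o(1)}/\eps^2$ total time, so $\Otil(m)$ per call from scratch is too slow. The key observation is that the MWU only ever \emph{increases} edge lengths, and Mehlhorn's output is determined by a single-source shortest-path computation (from a virtual source adjacent to all terminals) together with a minimum spanning tree of the terminal-distance graph. We therefore maintain this structure dynamically under weight increases, using a lazily-batched variant of the dynamic shortest-path data structure of~\cite{BGS21_arxiv} at $m^{o(1)}$ amortized cost, combined with dynamic maintenance of the induced minimum spanning tree; this is precisely the construction developed in \Cref{sec:unweighted}, and summing over the run of the MWU gives $m^{1+o(1)}/\eps^2$. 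The main obstacle is exactly this unweighted bound: the weighted case is essentially bookkeeping (feed an approximate oracle into width-independent MWU and check the factor is $2+\eps$, not $2(1+\eps)$), whereas making Mehlhorn's $2$-approximation fully dynamic under the MWU's weight-increase pattern \emph{without losing the approximation guarantee} and answering each Steiner-tree query in $m^{o(1)}$ amortized time is the delicate part, handled in \Cref{sec:unweighted}.
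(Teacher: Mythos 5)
Your proposal matches the paper's own proof essentially step for step: a width-independent MWU/Garg--K\"onemann packing scheme whose oracle is Mehlhorn's $2$-approximate minimum Steiner tree (giving $\Otil(m^2/\eps^2)$ in the weighted case, with the $(2+O(\eps))$-factor and the $\Otil(m/\eps^2)$ augmentation bound argued exactly as in \Cref{lem:approx bound,lem:aug bound}), and, for unweighted graphs, replacing the static oracle by the decremental $(2+\eps)$-approximate Steiner-subgraph data structure of \Cref{sec:unweighted} built on the dynamic SSSP of \cite{BGS21_arxiv} plus dynamic MST. The only omissions are minor implementation details the paper spells out (e.g., rounding the MWU lengths to integers for the data structure), so the argument is correct and follows the same route.
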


We prove Lemmas~\ref{lem:packing vs mincut} and \ref{lem:packing MWU} in \Cref{sec:packing_mincut,sec:mincost}, respectively.
Assuming these lemmas, we immediately obtain the following.

\begin{corollary}\label{cor:packing close to mincut}
There is a deterministic algorithm that, given $\eps\in (0,1/2)$
and a graph $G=(V,E,w)$ with $m$ edges and terminal set $U\subseteq V$,
returns a $U$-Steiner subgraph packing $\pset$
of value $\val(\pset)\ge\lambda(U)/(4+\eps)$
in $\Otil(m^{2}/\epsilon^2)$ time, 
and in the case of unweighted $G$ in $m^{1+o(1)}/\epsilon^2$ time.
\end{corollary}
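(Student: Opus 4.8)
The statement to prove is \Cref{cor:packing close to mincut}: a $(4+\eps)$-approximation algorithm for the $U$-Steiner subgraph packing $\optval(U)$ relative to the Steiner mincut $\lambda(U)$, with the stated running times.

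The plan is to simply chain together the two lemmas that precede the corollary. First I would invoke \Cref{lem:packing MWU} with a suitably rescaled accuracy parameter, say $\eps' = \eps/2$ (or any constant multiple that absorbs into the final $(4+\eps)$), to obtain in $\Otil(m^2/\eps^2)$ time (respectively $m^{1+o(1)}/\eps^2$ time for unweighted $G$) a feasible $U$-Steiner-subgraph packing $\pset$ with $\val(\pset) \ge \optval(U)/(2+\eps')$. Then I would apply the lower bound half of \Cref{lem:packing vs mincut}, namely $\optval(U) \ge \lambda(U)/2$, to conclude
\[
  \val(\pset) \;\ge\; \frac{\optval(U)}{2+\eps'} \;\ge\; \frac{\lambda(U)}{2(2+\eps')} \;=\; \frac{\lambda(U)}{4+2\eps'} \;=\; \frac{\lambda(U)}{4+\eps}.
\]
The running time is inherited verbatim from \Cref{lem:packing MWU} since that is the only algorithmic step; the application of \Cref{lem:packing vs mincut} is purely an existential inequality and costs nothing. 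Determinism is likewise inherited from \Cref{lem:packing MWU}.

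There is essentially no obstacle here — the corollary is a one-line composition of the two lemmas, and the only mild care needed is the bookkeeping of constants: one must check that the factor-$2$ loss from \Cref{lem:packing vs mincut} composes with the $(2+\eps')$ loss from the MWU packing to give exactly a $(4+\eps)$ factor after renaming $\eps' \mapsto \eps/2$, and that $\eps' = \eps/2 \in (0,1/2)$ whenever $\eps \in (0,1/2)$ so that \Cref{lem:packing MWU} is applicable. The substantive content lives entirely in Lemmas~\ref{lem:packing vs mincut} and~\ref{lem:packing MWU}, which are proved separately in \Cref{sec:packing_mincut} and \Cref{sec:mincost}; the corollary is merely the convenient packaging of those two facts into the single statement ``we can efficiently find a Steiner subgraph packing of value within a constant factor of $\lambda(U)$,'' which is what the guide-tree construction in the rest of \Cref{sec:packing} will actually consume.
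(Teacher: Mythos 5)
Your proposal is correct and is exactly the argument the paper intends: the paper itself states the corollary with only the remark that it follows ``immediately'' from \Cref{lem:packing vs mincut} and \Cref{lem:packing MWU}, i.e., by chaining $\val(\pset)\ge\optval(U)/(2+\eps')$ with $\optval(U)\ge\lambda(U)/2$ and rescaling the accuracy parameter, just as you do. Your bookkeeping of constants, the range check $\eps'=\eps/2\in(0,1/2)$, and the observation that running time and determinism are inherited from \Cref{lem:packing MWU} are all accurate, so there is nothing to add.
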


\paragraph{Algorithm for constructing guide trees.}
Given \Cref{cor:packing close to mincut}, we can now prove \Cref{lem:guide}. 

\begin{proof}[Proof of \Cref{lem:guide}]
Fix $\eps_0=1/60$ (or another sufficiently small $\eps_0>0$). 
The construction of guide trees is described in Algorithm~\ref{algorithm:guide_trees}.
To analyze this algorithm,
let $\pset$ be the packing computed in line~\ref{step:cor}.
Consider $t\in U\setminus\{s\}$,
and let $(S_t,V\setminus S_t)$ be a minimum $s,t$-cut in $G$.
Denote by $w(S_t,V\setminus S_t)$ the total edge-weight of this cut in $G$,
and by $w'(S_t,V\setminus S_t)$ the total edge-weight of the cut in $G'$ between these same vertices.

\IncMargin{1em}
\begin{algorithm}
  \caption{An algorithm for constructing guide trees}\label{algorithm:guide_trees}
  \SetKwFor{RepTimes}{repeat}{times}{end}
  \SetKwInOut{Input}{input}
  \SetKwInOut{Output}{output}
  \SetKwFunction{ExtractMax}{ExtractMax}
  \SetKw{KwCompute}{compute}
  \SetKw{KwSample}{sample}
  \SetKw{KwLet}{let}
  \Indm
  \Input{Undirected graph $G=(V,E,w)$ (weighted or unweighted) and terminal set $U\subseteq V$}
  \Output{A collection of guide trees for $U$}
  \Indp
  \BlankLine
  \If{$G$ is weighted then}{
      \KwCompute for it a $(1\pm\eps_0)$-cut-sparsifier $G'$ using~\cite{BeK15} and $\eps_0=1/60$,
    thus $G'$ has $m=\tO(n/\eps_0^2)$ edges \label{step:spars}
    }
  \Else{
\KwLet $G' \gets G$
  }
  \KwCompute a packing $\pset$ for $G'$
  by applying Corollary~\ref{cor:packing close to mincut} \label{step:cor}\\
  \KwSample $300\ln n$ subgraphs from $\pset$,
  each drawn independently from the distribution $\{\val(H)/\val(\pset)\}_{H\in\pset}$\label{step:sample}\\
  \KwCompute any Steiner tree of each sampled subgraph, and report these trees\label{step:Steiner} \\
\end{algorithm}
\DecMargin{1em}

Consider first an unweighted input $G$.
%hence it is not replaced with a $(1+\eps_0)$-cut-sparsifier $G'$ of it in line~\ref{step:spars}.
Then, the computation in line \ref{step:cor} is applied to $G'=G$.
By combining \Cref{cor:packing close to mincut}
%we know that $\val(\pset)\geq \lambda(U)/(4+\eps_0)$.
and the promise in the single source terminal mincuts problem (\Cref{problem:ssmc}) that $\lambda_G(s,t) \le 1.1\lambda(U)$,
\footnote{For a graph $G'$, $\lambda_{G'}(s, t)$ denotes the value of an $(s,t)$-mincut in $G'$, and recall that $\lambda(U)$ is the value of a $U$-Steiner mincut in $G$.}
%\debmalya{I don't think we have used the notation $\lambda_G(\cdot,\cdot)$ earlier. I have added a footnote here explaining this notation.} 
%\ohad{I think $\lambda_{G}$ was used first in p11, and $\lambda(U)$ is defined in p4.}
we get that
\begin{align}\label{eq:valp1}
  \val(\pset)
  \geq \frac{\lambda(U)}{4+\eps_0}
  \geq \frac{\lambda_{G}(s,t)}{1.1 (4+\eps_0)}
%  = \frac{w(S_t,V\setminus S_t)}{1.1 (4+\eps_0)}
  = \frac{w'(S_t,V\setminus S_t)}{1.1 (4+\eps_0)}\ .
\end{align}

If the input graph $G$ is weighted,
% then line~\ref{step:spars} computes a $(1\pm\eps_0)$-cut-sparsifier $G'$ of $G$,
then the bound in~\eqref{eq:valp1} applies to the cut-sparsifier $G'$ of $G$,
% and $(1+\eps_0)w(S_t,\setminus S_t)\geq w'(S_t,V\setminus S_t)$.
and we get that 
\begin{align} \label{eq:valp2}
  \val(\pset)
  \geq \frac{\lambda_{G'}(s,t)}{1.1 (4+\eps_0)}
  \geq \frac{(1-\eps_0)\cdot w(S_t,V\setminus S_t)}{1.1 (4+\eps_0)} 
  \geq \frac{(1-\eps_0)\cdot w'(S_t,V\setminus S_t)}{1.1 (4+\eps_0)(1+\eps_0)} 
  \geq \frac{w'(S_t,V\setminus S_t)}{1.1 (4+30\eps_0)}\ .
\end{align}
We remark that now the packing $\pset$ contains subgraphs of the sparsifier $G'$ and not of $G$,
% (and the sparsifier is a reweighted subgraph of $G$),
but it will not pose any issue.

In both cases we have the weaker inequality
\begin{align}\label{eq:main}
  \val(\pset)\geq \frac{w'(S_t,V\setminus S_t)}{1.1 (4+30\eps_0)}\ .
\end{align}
Let $E'_t$ be the set of edges in the cut $(S_t,V\setminus S_t)$ in $G'$
(depending on the case, $G'$ is either $G$ or the sparsifier).
Let $\pset_{\leq 4}\subseteq \pset$ be the subset of all Steiner subgraphs
$H\in\pset$ whose intersection with $E'_t$ is at most $4$ edges,
and let $F_t$ be the event that no subgraph from $\pset_{\leq 4}$ is sampled
in line~\ref{step:sample}.
Then
\begin{align}
  \Pr[F_t]
  &=\left(1-\val(\pset_{\leq 4})/\val(\pset)\right)^{300\ln n} \nonumber\\
  &\leq n^{-300\cdot \val(\pset_{\leq 4})/\val(\pset)} .  \label{eq:weighted}
\end{align}

Similarly to Karger's paper~\cite{Karger00},
define $x_H$ to be one less than the number of
edges of $H$ that crosses the cut $E'_t$,
and observe that $x_H$ is always a non-negative integer
(because $U$ is connected in $H$). 
Since $\pset$ is a packing,
every edge of $E'_t$ appears in at most one subgraph of $\pset$,
and consequently,
\begin{align*}
  \sum_{H\in \pset} \val(H)(x_H+1) &\leq \sum_{e\in E'_t} w(e) = w'(S_t,V\setminus S_t)
  \\
  \Longrightarrow 
  \sum_{H\in \pset} \val(H)x_H &\leq w'(S_t,V\setminus S_t)-\sum_{H\in \pset} \val(H)= w'(S_t,V\setminus S_t)-\val(\pset).
\end{align*}
Observe that for a random $\bar{H}\in \pset$ drawn as in line~\ref{step:sample},
\[
  \EX_{\bar{H}}[x_{\bar{H}}]
  = \sum_{H\in \pset} x_H\cdot \frac{\val(H)}{\val(\pset)}
  \leq \frac{w'(S_t,V\setminus S_t)}{\val(\pset)} - 1
  \leq 1.1(4+30\eps_0)-1,
\]
where the last inequality is by~\eqref{eq:main}.
By Markov's inequality, 
$$
  \Pr_{\bar{H}} [x_{\bar{H}}\ge 4]
  \leq \frac{1.1(4+30\eps_0)-1}{4}
  \leq 0.99.
$$
Observe that $\val(\pset_{\leq 4})/\val(\pset) = \Pr_{\bar{H}} [x_{\bar{H}}< 4] \geq 0.01$,
and so by plugging into $(\ref{eq:weighted})$ we get that $\Pr[F_t]\leq 1/n^3$.
Finally, by a union bound we have that with high probability,
for every $t\in U\setminus\{s\}$,
at least one of the subgraphs that are sampled in line~\ref{step:sample}
of the algorithm $4$-respects the cut $E'_t$,%
\footnote{Strictly speaking, \Cref{defn:k-respect} defines $4$-respecting
  only relative to a tree $T$,
  but the same wording extends immediately to any graph $T$
  (not necessarily a tree).
}
and thus at least one of the trees reported by the algorithm $4$-respects $E'_t$.
Furthermore, since the cut $E'_t$ in $G'$
has the exact same bipartition of $V$ as the $(s,t)$-mincut in $G$,
the reported tree mentioned above $4$-respects also the $(s,t)$-mincut in $G$
(recall that Definition~\ref{defn:k-respect} refers to a cut as a bipartition of $V$). 

Finally, computing a Steiner tree of a Steiner subgraph in Line~\ref{step:Steiner} only takes linear time, and so the running time is dominated by line~\ref{step:cor} and thus it is bounded by $\tO(n^2/\epsilon^2)$ for weighted graphs and by $m^{1+o(1)}/\epsilon^2$ for unweighted graphs,
and by fixing a small $\eps>0$, we can write these as $\tO(n^2)$ and $m^{1+o(1)}$, respectively. 
This concludes the proof of~\Cref{lem:guide}.
\end{proof}

%%% Local Variables:
%%% mode: latex
%%% TeX-master: "main"
%%% End:

\subsection{Steiner-Subgraph Packing vs Steiner Mincut}
\label{sec:packing_mincut}

In this section, we prove \Cref{lem:packing vs mincut}, 
i.e., that $\lambda(U)/2\le\optval(U)\le\lambda(U)$.

We start with the second inequality $\optval(U)\le\lambda(U)$,
which is easier.
%(This proof works even for weighted graphs, and thus we use $w(e)$.) 
%that the total value of $U$-Steiner subgraphs that can be packed into $G$ is $\le \lambda(U)$.
Let $\pset$ be a $U$-Steiner-subgraph packing,
and let $S\subset V$ be a Steiner min-cut of $U$.
Then for every Steiner subgraph $H\in\pset$,
by definition $|E_H(S,V\setminus S)| \ge 1$.\footnote{For any graph $G'$, the set of edges in the cut $(S, V\setminus S)$ is denoted $E_{G'}(S,V\setminus S)$.}
By the feasibility of $\pset$,
for every $e\in E$ we have $\sum_{H\in {\pset}: e\in H}  \val(H) \le w(e)$. 
Putting these together, we conclude that
\begin{align*}
\val({\cal P})=\sum_{H\in{\cal P}}\val(H) & \le\sum_{H\in{\cal P}}\val(H)\cdot|E_{H}(S,V\setminus S)|\\
 & =\sum_{e\in E_{G}(S,V\setminus S)}\sum_{H\in{\cal P}: e\in H}\val(H)\\
 & \le \sum_{e\in E_{G}(S,V\setminus S)}w(e) 
 = \lambda(U) .
\end{align*}

To establish the first inequality $\lambda(U)/2\le\optval(U)$,
we need to show that one can always
pack into $G$ Steiner subgraphs of total value (at least) $\lambda(U)/2$.
%We remark that this part of the proof is in fact stronger than necessary,
%and shows an integral packing even though a fractional one suffices.
This packing bound actually follows from more general theorems of Bang-Jensen~{\em et al.}~\cite{BFJ95} and Bhalgat~{\em et al.}~\cite{BHKP07,BCHKP08},  but we give a simple self-contained proof here. 

First note that we can assume, only for sake of analysis, 
that the graph is an unweighted multi-graph
(by replacing each edge by parallel edges of unit weight)
and that every vertex has an even degree (by appropriate scaling). (In what follows, we will obtain an integral packing but undoing the scaling step possibly converts it into a fractional packing.)
%This is precisely where our proof relies on the packing being fractional (all other arguments work also for an integral packing), even if $G$ is unweighted.

We use the following classic result due to Mader~\cite{Mader78}.

\begin{theorem}
\label{thm:localsplitting}
Let $G = (V\cup\{s\}, E)$ be an undirected unweighted multi-graph where every vertex has an even degree.%
\footnote{In the original theorem of Mader, the only restriction is that the degree of $s$ cannot be equal to 3. But, for our purposes, it suffices to assume that every vertex has even degree.}
Then, there exists a pair of edges $(u, s)$ and $(v, s)$ incident on $s$ that can be {\em split off}, i.e., replaced by their {\em shortcut} edge $(u, v)$, while preserving the $x-y$ edge connectivity for all pairs of vertices $x, y\in V$.
\end{theorem}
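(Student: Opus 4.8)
The plan is to prove this by the classical submodular‑uncrossing argument; here the all‑even‑degree hypothesis is a convenient special case, because it automatically rules out Mader's two exceptional configurations. First observe that no edge incident to $s$ can be a bridge: deleting a bridge $sv$ would leave a component containing $s$ in which $s$ has degree $\deg(s)-1$, which is odd, while every other vertex keeps its even degree — contradicting that the sum of degrees in a graph is even. Also $\deg(s)$ is even, hence $\neq 3$, and we may assume $\deg(s)\ge 2$ (otherwise there is nothing to prove). So it suffices to exhibit an \emph{admissible} pair of $s$‑edges, i.e.\ one whose splitting‑off preserves $\lambda_G(x,y)$ for all $x,y\in V$.

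Introduce the cut‑degree function $d(X):=|E_G(X,(V\cup\{s\})\setminus X)|$, which is submodular ($d(X)+d(Y)\ge d(X\cup Y)+d(X\cap Y)$) and posimodular ($d(X)+d(Y)\ge d(X\setminus Y)+d(Y\setminus X)$), and recall Menger: $\lambda_G(x,y)=\min\{d(X):x\in X,\ y\notin X\}$. Splitting off $su,sv$ into an edge $uv$ decreases $d(X)$ by exactly $2$ when $s\notin X$ and $\{u,v\}\subseteq X$, and leaves $d(X)$ unchanged for every other $X$ with $s\notin X$. This suggests calling a set $X$ with $s\notin X$ \emph{dangerous} if $d(X)\le 1+R(X)$, where $R(X):=\max\{\lambda_G(x,y):x\in X,\ y\in V\setminus(X\cup\{s\})\}$, with $R(X):=0$ when that set of pairs is empty. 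Unwinding Menger in $G$ and in the split graph $G'$ yields the key characterization: the pair $\{su,sv\}$ is admissible if and only if no dangerous set contains both $u$ and $v$.

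Hence it suffices to show that \emph{not} every pair of neighbors of $s$ lies in a common dangerous set. The engine is an uncrossing lemma for dangerous sets: if $X,Y$ are dangerous with $X\cap Y\neq\emptyset$ (the degenerate cases $X\cap Y$ a union of components, or $X\cup Y=V\setminus\{s\}$, being handled separately and easily), then either $X\cup Y$ is dangerous, or both $X\setminus Y$ and $Y\setminus X$ are dangerous. Its proof combines (i) the monotonicity $R(X\cup Y),R(X\cap Y),R(X\setminus Y),R(Y\setminus X)\le\max(R(X),R(Y))$, since any pair separated by one of the four combined sets is already separated by $X$ or by $Y$; (ii) feeding $d(X)\le R(X)+1$ and $d(Y)\le R(Y)+1$ into the submodular inequality for the union branch and into the posimodular inequality for the difference branch; and (iii) an integrality/parity remark on $d$ in the unweighted multigraph to absorb the ``$+1$'' slack. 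Now take a maximal dangerous set $M$ (the family is nonempty, else every pair is admissible). If some neighbor $v$ of $s$ satisfies $v\notin M$, uncross $M$ against a dangerous set witnessing the pair $\{u,v\}$ for a neighbor $u\in M$: the union branch produces a dangerous set strictly larger than $M$, contradicting maximality, and the difference branch must be driven to a contradiction by a careful choice of which dangerous sets to uncross so that $M$ keeps ``absorbing'' all of $N(s)$ — this bookkeeping is where the argument needs the most care. We conclude $N(s)\subseteq M$ for some dangerous $M$. But then every edge at $s$ enters $M$, so $M\cup\{s\}$ separates each $x\in M$ from each $y\notin M\cup\{s\}$, giving $R(M)\le d(M\cup\{s\})=d(M)-\deg(s)$, and dangerousness of $M$ forces $d(M)\le d(M)-\deg(s)+1$, i.e.\ $\deg(s)\le 1$ — contradicting $\deg(s)\ge 2$. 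Therefore an admissible pair exists.

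The main obstacle is the uncrossing lemma together with the surrounding case analysis — in particular, verifying that a well‑chosen maximal dangerous set really does swallow all of $N(s)$, and dispatching the degenerate sub‑cases. Since this is precisely the content of Mader's 1978 theorem, the alternative (and the route most papers take) is to invoke \cite{Mader78} as a black box.
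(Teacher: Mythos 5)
The paper does not prove this statement at all: it is quoted as a classical black box, with the proof being a citation to Mader~\cite{Mader78}, and the footnote merely observes that the all-even-degree hypothesis implies the hypotheses of Mader's original theorem (no degree-$3$ condition to worry about, and --- as you correctly note via the parity argument --- no cut edge incident to $s$). Your fallback option of invoking \cite{Mader78} is therefore exactly what the paper does, and your preliminary reductions (no bridge at $s$, $\deg(s)$ even hence $\neq 3$, reduction to exhibiting an admissible pair via dangerous sets) are sound and consistent with the standard Frank-style proof of Mader's theorem.

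However, read as a self-contained proof, your sketch has a genuine gap at its pivotal step: the claim that if no admissible pair exists then all of $N(s)$ lies in a \emph{single} dangerous set $M$. That claim is false in general --- the exceptional configurations behind Mader's $\deg(s)\neq 3$ condition are precisely ones where every pair of the three neighbors of $s$ lies in some dangerous set but no one dangerous set contains all three. The actual proof shows only that $N(s)$ is covered by at most three maximal dangerous sets, and the contradiction then comes from a counting argument in which the hypotheses ($\deg(s)\neq 3$, no cut edge at $s$) are used; your sketch defers exactly this to ``careful bookkeeping'' and never indicates where the even-degree (or $\neq 3$) hypothesis re-enters after the opening paragraph, so as written the argument would also ``prove'' the statement in the exceptional cases where it fails. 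The uncrossing lemma itself (submodular/posimodular case split, monotonicity of $R$) is stated correctly, but the part you label as the main obstacle is genuinely the content of Mader's theorem, not a routine finish. Given that the paper's route is simply to cite Mader, the honest options are either to do the full dangerous-set covering argument (several pages) or to cite \cite{Mader78} as the paper does.
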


By applying Theorem~\ref{thm:localsplitting} repeatedly on edges incident on $s$, we can isolate the vertex $s$. Iterating further, we can isolate any subset of vertices while preserving the pairwise edge connectivities of all the remaining vertices.
We thus isolate all the non-terminal vertices,
and then delete these isolated vertices
to obtain a graph $G(U)$ only on the terminal vertices $U$.
Observe that edges in $G(U)$ represent edge-disjoint paths in $G$.
We also claim that the global edge connectivity of $G(U)$ is at least $\lambda(U)$.
Indeed, the global edge connectivity of $G(U)$
equals the minimum of all its pairwise edge connectivities,
which are all preserved from $G$,
and it is clear that in $G$ all pairwise edge connectivities are at least $\lambda(U)$. 

Next, we shall use the following classic theorem of Nash-Williams~\cite{NW61} and Tutte~\cite{Tut61} to pack edge-disjoint spanning trees in $G(U)$.

\begin{theorem}
\label{thm:spanning-pack}
Let $G = (V, E)$ be an undirected unweighted multi-graph with global edge connectivity $\lambda_0$.
Then, $G$ contains at least $\lambda_0/2$ edge-disjoint spanning trees.
\end{theorem}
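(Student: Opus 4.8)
The plan is to derive this from the classical Nash-Williams--Tutte \emph{partition criterion} for tree packing, and then feed in the global edge-connectivity hypothesis by a one-line counting argument. Recall the criterion: a multigraph $G=(V,E)$ contains $k$ edge-disjoint spanning trees if and only if for every partition $\mathcal{P}=\{V_1,\dots,V_p\}$ of $V$ into nonempty parts, the number $e_G(\mathcal{P})$ of edges of $G$ with endpoints in two different parts satisfies $e_G(\mathcal{P})\ge k(p-1)$. I would invoke this as the key black box (it is essentially the content of \cite{NW61,Tut61}); its proof --- via matroid union, or an exchange/augmentation argument --- is the genuinely nontrivial ingredient, and it is exactly the part I would cite rather than reprove.

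Given the criterion, set $k=\lfloor\lambda_0/2\rfloor$ and fix any partition $\mathcal{P}=\{V_1,\dots,V_p\}$ with $p\ge 2$ (the case $p=1$ imposes no constraint, and if $\lambda_0=0$ the statement is vacuous with $k=0$). For each $i$, the part $V_i$ is a nonempty proper subset of $V$, so by the definition of global edge-connectivity the edge boundary $\partial V_i:=E_G(V_i,V\setminus V_i)$ has $|\partial V_i|\ge\lambda_0$. Summing over $i$ and noting that every edge crossing the partition is counted exactly twice (once from each of the two parts it joins), we get
\[
  2\,e_G(\mathcal{P})=\sum_{i=1}^p|\partial V_i|\ \ge\ p\lambda_0\ \ge\ (p-1)\lambda_0\ \ge\ 2k(p-1),
\]
hence $e_G(\mathcal{P})\ge k(p-1)$. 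Since $\mathcal{P}$ was arbitrary, the criterion yields $\lfloor\lambda_0/2\rfloor$ edge-disjoint spanning trees in $G$, which is exactly the claim (reading ``at least $\lambda_0/2$'' as $\lfloor\lambda_0/2\rfloor$, since edge-disjoint spanning trees are inherently integral).

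I expect the delicate points to be bookkeeping rather than conceptual: the degenerate cases $\lambda_0=0$ and $G$ disconnected, and the mild loss of a half-tree when $\lambda_0$ is odd (which is unavoidable and matches the statement). If one wanted a fully self-contained treatment in the spirit of the rest of this section, the real work would be proving the partition criterion itself, for which I would use the standard matroid-union route: by Nash-Williams' matroid union formula the maximum size of a union of independent sets, one from each of $k$ copies of the graphic matroid $M(G)$, equals $\min_{F\subseteq E}\left(|E\setminus F|+k\cdot r_{M(G)}(F)\right)$; writing $r_{M(G)}(F)=|V|-c(V,F)$ and taking $\mathcal{P}$ to be the partition of $V$ into connected components of $(V,F)$ turns the bound $\min_F(\cdots)\ge k(|V|-1)$ into precisely the partition inequality $e_G(\mathcal{P})\ge k(p-1)$ verified above. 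For the purposes of this paper, however, citing \cite{NW61,Tut61} for the criterion and supplying only the edge-connectivity counting step is the cleanest path.
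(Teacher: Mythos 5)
Your proof is correct, but note that the paper does not prove \Cref{thm:spanning-pack} at all: it is invoked as the classic packing theorem of Nash-Williams~\cite{NW61} and Tutte~\cite{Tut61} and used as a black box, so there is no in-paper argument to compare against. Your route --- cite the Nash-Williams--Tutte partition criterion ($k$ edge-disjoint spanning trees exist iff every partition into $p$ parts is crossed by at least $k(p-1)$ edges) and then verify the criterion by summing boundaries, $2e_G(\mathcal{P})=\sum_i|\partial V_i|\ge p\lambda_0\ge 2k(p-1)$ for $k=\lfloor\lambda_0/2\rfloor$ --- is the standard textbook derivation, and the double-counting step and degenerate cases ($\lambda_0=0$, disconnected $G$) are handled correctly; the matroid-union sketch of the criterion itself is also sound, though for this paper citing it suffices, exactly as you say. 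One small interpretive caveat: since the number of trees is an integer, a literal reading of ``at least $\lambda_0/2$'' for odd $\lambda_0$ would demand $\lceil\lambda_0/2\rceil$ trees, which is false in general (e.g.\ $K_{3,3}$ is $3$-edge-connected but has too few edges for two spanning trees), so the intended reading is indeed $\lfloor\lambda_0/2\rfloor$ --- your justification ``because trees are integral'' argues in the wrong direction, but the point is moot here: in the paper's application the graph $G(U)$ arises after scaling so that all degrees are even, hence every cut value (and in particular $\lambda_0$) is even and the floor is immaterial.
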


Now we apply this theorem to our graph $G(U)$, 
to pack in it (at least) $\lambda(U)/2$ edge-disjoint spanning trees,
and then replace each edge in these spanning trees of $G(U)$ by its corresponding path in $G$.
These paths are edge-disjoint as well,
hence the corresponding subgraphs are edge-disjoint in $G$.
This yields a set of (at least) $\lambda(U)/2$
edge-disjoint $U$-spanning subgraphs in $G$,
which completes the proof of \Cref{lem:packing vs mincut}.

%%% Local Variables:
%%% mode: latex
%%% TeX-master: "main"
%%% End:

\subsection{$(2+\protect\eps)$-approximate Steiner-Subgraph Packing}
\label{sec:mincost}

In this section, we provide a $(2+\eps)$-approximation algorithm for
fractionally packing Steiner subgraphs, proving \Cref{lem:packing MWU}.
The technique is a standard application of the width-independent multiplicative
weight update (MWU) framework \cite{GargK07,Fleischer00,AroraHK12}. We provide
the proofs for completeness, following closely the presentation from \cite{ChuzhoyS20}.

We start by describing, in \Cref{sec:mincost_MWU},
an algorithm based on the multiplicative weight update framework,
and we bound its number of iterations, also called ``augmentations''. 
Then, in \Cref{sec:mincost_dynamic},
we show how to implement these augmentations efficiently by using
either a static $(2+\eps)$-approximation algorithm by Mehlhorn \cite{Mehlhorn88},
or in unweighted graphs, by using a decremental
$(2+\eps)$-approximation algorithm from \Cref{thm:dec_MinSteiner}.

\subsubsection{Algorithm Based on Multiplicative Weight Update}
\label{sec:mincost_MWU}

Let the input be an edge-weighted graph $G=(V,E,w)$ with
$n$ vertices, $m$ edges, and terminal set $U\subseteq V$.
Denote by $\HH$ the set of all $U$-Steiner subgraphs. 
The algorithm maintains $\ell\in\mathbb{R}_{> 0}^{E}$,
which we refer to as \emph{edge lengths}.%
\footnote{Multiplicative weight update algorithms usually maintain ``weights'' on edges of the input graph. We use here instead the terminology of \emph{edge lengths}, because $G$ already has edge weights (that can be viewed as capacities). 
}
The \emph{total length} of a subgraph $H$ is defined as
$\ell(H)=\sum_{e\in E(H)}\ell(e)$.
A $U$-Steiner subgraph $H$ is said to have
\emph{$\gamma$-approximate minimum $\ell$-length}
if its length satisfies $\ell(H) \leq \gamma\cdot\min_{H'\in\HH}\ell(H')$.
Our algorithm below for packing Steiner subgraphs assumes access to a procedure
that computes a $\gamma$-approximate minimum $\ell$-length Steiner subgraph. 

\IncMargin{1em}
\begin{algorithm}
  \caption{A $(\gamma+O(\protect\eps))$-approximation algorithm for Steiner subgraph
packing}\label{algorithm:MWU}
  \SetKwFor{RepTimes}{repeat}{times}{end}
  \SetKwInOut{Input}{input}
  \SetKwInOut{Output}{output}
  \SetKwFor{While}{while}{do}{end}
  \SetKwFunction{ExtractMax}{ExtractMax}
  \SetKw{KwLet}{let}
  \Indm  
  \Input{Undirected edge-weighted graph $G=(V,E,w)$, terminal set $U\subseteq V$, and
accuracy parameter $0<\epsilon<1$}
  \Output{Feasible $U$-Steiner subgraph packing ${\cal P}$}
  \Indp
  \BlankLine
   \KwLet $\pset \gets \emptyset$ and $\delta\gets (2m)^{-1/\epsilon}$ \\
   \KwLet $\ell(e) \gets \delta/w(e)$ for all $e\in E$ \\
   \While{$\sum_{e\in E}w(e)\ell(e)<1$}{
     \KwLet $H\gets$ a $\gamma$-approximate minimum $\ell$-length $U$-Steiner
subgraph\label{enu:oracle} \\
     	\KwLet $v\gets\min\{w(e):\ e\in E(H)\}$ \\
     	add $H$ with $\val(H)\gets v$ into the packing $\pset$ \label{enu:pack} \\        
    \KwLet $\ell(e)\gets\ell(e)(1+\frac{\epsilon v}{w(e)})$ for all $e\in E(H)$ \label{enu:addlength}
    }
  \textbf{return} a scaled-down packing $\pset$, where $\val(H)\gets\val(H)/\log_{1+\epsilon}(\frac{1+\epsilon}{\delta})$ for all $H\in\P$\label{enu:scaling} 
\end{algorithm}
\DecMargin{1em}

\begin{lem}
The scaled-down packing ${\cal P}$ computed by Algorithm~\ref{algorithm:MWU}
is feasible.\label{lem:feasible} 
\end{lem}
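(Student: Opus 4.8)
The plan is to run the textbook width-independent MWU feasibility analysis, tracking the length $\ell(e)$ of a fixed edge $e\in E$ from both sides and then undoing the scaling of line~\ref{enu:scaling}. The one elementary inequality I would record first is that $1+\epsilon y \ge (1+\epsilon)^{y}$ for all $y\in[0,1]$ (the left-hand side is linear, the right-hand side convex, and they agree at $y=0$ and $y=1$). The point of the bottleneck choice $v=\min\{w(e'):e'\in E(H)\}$ in line~\ref{enu:pack} is precisely that $v/w(e)\in[0,1]$ for every $e\in E(H)$, so each application of line~\ref{enu:addlength} multiplies $\ell(e)$ by $1+\tfrac{\epsilon v}{w(e)}\ge (1+\epsilon)^{v/w(e)}$, and simultaneously by at most $1+\epsilon$.

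Next, for a fixed edge $e$ let $V_e=\sum_{H\in\pset:\, e\in H}\val(H)$ be the total pre-scaling value routed through $e$. Multiplying the per-iteration lower bounds over all iterations that added a subgraph containing $e$, and using the initialization $\ell_{\mathrm{init}}(e)=\delta/w(e)$, gives $\ell_{\mathrm{final}}(e)\ge \tfrac{\delta}{w(e)}\,(1+\epsilon)^{V_e/w(e)}$. For the matching upper bound I would isolate the last iteration in which a subgraph containing $e$ is added: the while-loop was entered at the top of that iteration, so $\sum_{e'}w(e')\ell(e')<1$ at that moment, and since all summands are non-negative this forces $w(e)\ell(e)<1$, i.e.\ $\ell(e)<1/w(e)$, just before that update; the update multiplies $\ell(e)$ by at most $1+\epsilon$ (since $v\le w(e)$), and $\ell(e)$ is never changed afterwards, so $\ell_{\mathrm{final}}(e)\le (1+\epsilon)/w(e)$.

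Combining the two bounds cancels the common factor $1/w(e)$ and yields $(1+\epsilon)^{V_e/w(e)}\le (1+\epsilon)/\delta$, hence $V_e\le w(e)\cdot \log_{1+\epsilon}\!\big(\tfrac{1+\epsilon}{\delta}\big)$. Dividing through by the scaling factor $\log_{1+\epsilon}\!\big(\tfrac{1+\epsilon}{\delta}\big)$ applied in line~\ref{enu:scaling} (which is positive because $\delta=(2m)^{-1/\epsilon}<1$) gives exactly $\sum_{H\in\pset:\, e\in H}\val(H)\le w(e)$ for the scaled-down packing. Since $e$ was an arbitrary edge, this is the feasibility condition.

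I do not expect a genuine obstacle here; the analysis is entirely standard and routine once the right quantities are set up. The only points that require a little care are (i) invoking the bottleneck definition of $v$ at two places — both to justify the step $1+\epsilon y\ge (1+\epsilon)^{y}$ and to bound the final multiplicative jump by $1+\epsilon$ — and (ii) reading the loop guard as tested at the top of the loop, so that at the start of the final relevant iteration every term $w(e')\ell(e')$ is strictly below $1$. Since edge weights are positive integers, $\ell(e)=\delta/w(e)$ and all divisions are well-defined, so no degeneracies arise.
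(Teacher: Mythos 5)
Your proposal is correct and follows essentially the same route as the paper's proof: track the multiplicative growth of $\ell(e)$ from $\delta/w(e)$ to at most $(1+\epsilon)/w(e)$, use $1+\epsilon a\ge(1+\epsilon)^{a}$ for $a=v/w(e)\in[0,1]$ to convert the accumulated load into a bound $\le w(e)\log_{1+\epsilon}\bigl(\tfrac{1+\epsilon}{\delta}\bigr)$, and then divide by the scaling factor of line~\ref{enu:scaling}. Your treatment of the final upper bound (isolating the last iteration that touches $e$ and using the loop guard plus the $1+\epsilon$ cap) is just a slightly more explicit version of the paper's "prior to the last iteration $w(e)\ell(e)<1$" step.
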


\begin{proof}
Given a packing $\P$, if we define the ``flow'' (or load) on an edge $e$
as $f(e)=\sum_{H\in{\cal P}, e\in H}\val(H)$,
then we need to show that $f(e)\le w(e)$ for all $e\in E$.
(This terminology highlights the analogy with packing flow paths
and thinking of $w(e)$ as the capacity of $e$.)
Consider the computation of $\P$
\emph{before} scaling it down in line~\ref{enu:scaling}.
Whenever an iteration adds some $H$ into the packing $\P$ (in line~\ref{enu:pack}),
it effectively increases the flow $f(e)$ additively
by $v=a\cdot w(e)$ for some $0\le a\le1$,
and the corresponding length $\ell(e)$ is increased multiplicatively
by $1+a\epsilon\ge(1+\epsilon)^{a}$ (in line~\ref{enu:addlength}).
Observe that initially $\ell(e)=\delta/w(e)$ and
at the end $\ell(e)<(1+\epsilon)/w(e)$ 
(because prior to the last iteration $w(e)\ell(e)<1$),
thus over the course of the execution, $\ell(e)$ grows multiplicatively by
at most $(1+\eps)/\delta=(1+\eps)^{\log_{1+\eps}((1+\eps)/\delta)}$,
implying that at the end of the execution, 
$f(e)\le w(e)\cdot\log_{1+\epsilon}(\frac{1+\epsilon}{\delta})$.
Hence, scaling down $\P$ by factor $\log_{1+\epsilon}(\frac{1+\epsilon}{\delta})$
in line~\ref{enu:scaling} makes this packing feasible. 
\end{proof}

We call each iteration in the while loop of Algorithm~\ref{algorithm:MWU} an \emph{augmentation}
and say that an edge $e\in E$ \emph{participates} in the augmentation
if $e$ is contained in the Steiner subgraph $H$ of that iteration (in line~\ref{enu:oracle}). 
The next lemma is used to bound the total running time. 

\begin{lem}
\label{lem:aug bound} \ 
\begin{enumerate}
\renewcommand{\theenumi}{(\alph{enumi})}
\item \label{it:AugBound1} 
If $G$ is unweighted, then
each edge $e\in E$ participates in at most $\Otil(1/\eps^{2})$
augmentations.
\item \label{it:AugBound2}
There are at most $\tilde{O}(m/\epsilon^{2})$ augmentations. 
\end{enumerate}
\end{lem}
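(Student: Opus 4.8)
The plan is to run the standard width-independent MWU accounting: for each edge $e$ we track the multiplicative growth of its length $\ell(e)$ throughout Algorithm~\ref{algorithm:MWU}, and use the while-loop guard to cap this growth, which in turn limits how often $e$ can be "charged" for an augmentation. Two invariants do all the work. First, $\ell(e)$ is initialized to $\delta/w(e)$. Second, at the start of every iteration the guard $\sum_{e'} w(e')\ell(e') < 1$ holds, so in particular $w(e)\ell(e) < 1$, i.e.\ $\ell(e) < 1/w(e)$, for every $e\in E$. Hence at the beginning of any iteration, $\ell(e)$ has grown from its initial value by a factor strictly less than $(1/w(e))/(\delta/w(e)) = 1/\delta = (2m)^{1/\eps}$. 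Moreover line~\ref{enu:addlength} only multiplies lengths by factors $1+\eps v/w(e)\in(1,1+\eps]$, so $\ell(e)$ is non-decreasing and this growth-factor bound is meaningful.

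Next I would isolate, in each augmentation, one edge whose length jumps by the \emph{full} factor $1+\eps$. In the iteration with Steiner subgraph $H$, let $v=\min\{w(e):e\in E(H)\}$ (positive, since $H$ contains at least one edge whenever $|U|\ge 2$), and call any $e^\star\in E(H)$ attaining $w(e^\star)=v$ a \emph{bottleneck edge} of that iteration; its update multiplies $\ell(e^\star)$ by exactly $1+\eps v/w(e^\star)=1+\eps$. Combining with the previous paragraph: if a fixed edge $e$ is a bottleneck edge in $k$ iterations, then just before the $k$-th such iteration $\ell(e)$ has already been multiplied by $(1+\eps)^{k-1}$ (and possibly more, by non-bottleneck touches), yet its total growth factor is below $1/\delta$; hence $(1+\eps)^{k-1} < 1/\delta$, so $k \le 1 + \log_{1+\eps}(1/\delta) = 1 + \frac{\ln(2m)}{\eps\ln(1+\eps)} = \Otil(1/\eps^2)$, using $\ln(1+\eps)\ge \eps/2$ for $\eps\in(0,1)$.

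Part~\ref{it:AugBound1} is then immediate: when $G$ is unweighted every $w(e)=1$, so in every iteration $v=1$ and \emph{every} participating edge is a bottleneck edge, whence each edge participates in at most $\Otil(1/\eps^2)$ augmentations. For part~\ref{it:AugBound2}, each augmentation has at least one bottleneck edge, and each of the $m$ edges serves as a bottleneck at most $\Otil(1/\eps^2)$ times, so a double-counting argument bounds the total number of augmentations by $m\cdot\Otil(1/\eps^2)=\Otil(m/\eps^2)$.

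I expect no serious obstacle; the argument is routine bookkeeping. The one point needing care is the weighted case of the charging scheme: a participating edge of large capacity may see $\ell(e)$ multiplied by only a tiny factor $1+\eps v/w(e)$, so an augmentation cannot be charged to an arbitrary participating edge — it must be charged to a minimum-capacity (bottleneck) edge, the one guaranteed to absorb a full $(1+\eps)$ factor. The other thing to keep straight is that it is the while-loop guard $\sum_e w(e)\ell(e)<1$, and not the approximation quality $\gamma$ of the Steiner-subgraph oracle, that caps each edge's total length growth at $1/\delta$.
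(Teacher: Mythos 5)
Your proof is correct and follows essentially the same route as the paper: bound each edge's multiplicative length growth between $\delta/w(e)$ and roughly $1/w(e)$ using the loop guard, note that a minimum-weight ("bottleneck") edge of $H$ gains a full $(1+\eps)$ factor each augmentation, and charge augmentations to such edges — which in the unweighted case is every participating edge. The only cosmetic difference is that you invoke the guard at the start of each iteration (cap $1/\delta$) while the paper uses the end-of-run bound $\ell(e)<(1+\eps)/w(e)$; both give $\log_{1+\eps}(O(1/\delta))=\Otil(1/\eps^2)$.
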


\begin{proof}
\ref{it:AugBound1} % (1)
Fix $e\in E$. In every augmentation, $v\ge 1$ because $G$ is unweighted.
So whenever $e$ participates in an augmentation,
$\ell(e)$ is increased by factor $1+\eps$.
Since $\ell(e)=\delta/w(e)$ initially
and $\ell(e)<(1+\epsilon)/w(e)$ at the end,
$e$ can participate in at most
$\log_{1+\epsilon}(\frac{1+\epsilon}{\delta})
  = O(\frac{1}{\epsilon} \log\frac{1}{\delta}) 
  =\tilde{O}(\frac{1}{\epsilon^{2}})$
augmentations.

\ref{it:AugBound2} %(2)
By the choice of $v$ in each augmentation,
at least one edge $e$ has its length $\ell(e)$ increased by factor $1+\epsilon$.
For every edge $e\in E$,
initially $\ell(e)=\delta/w(e)$ and at the end $\ell(e)<(1+\epsilon)/w(e)$,
and thus the total number of augmentations is bounded by
$m\log_{1+\epsilon}(\frac{1+\epsilon}{\delta})=\tilde{O}(m/\epsilon^{2})$. 
\end{proof}

\begin{lem}
\label{lem:approx bound}The scaled-down packing ${\cal P}$ is a
$(\gamma+O(\epsilon))$-approximate $U$-Steiner subgraph packing. 
\end{lem}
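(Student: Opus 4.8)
The plan is to show that the scaled-down packing $\pset$ returned by Algorithm~\ref{algorithm:MWU} has value at least $\optval(U)/(\gamma+O(\eps))$ by the standard LP-duality / Lagrangian argument underlying width-independent MWU. The dual object is the edge-length vector $\ell$: at any moment, $\min_{H'\in\HH}\ell(H')$ is (a lower bound on) the ``cost'' of the cheapest Steiner subgraph, and the linear program whose optimum is $\optval(U)$ has a dual asking to choose nonnegative edge lengths $\ell$ minimizing $\sum_e w(e)\ell(e)$ subject to $\ell(H)\ge 1$ for every $H\in\HH$. So the key quantity to track is the ratio $D(\ell)/\beta(\ell)$, where $D(\ell)=\sum_{e\in E}w(e)\ell(e)$ and $\beta(\ell)=\min_{H'\in\HH}\ell(H')$; by LP weak duality, for any $\ell>0$ we have $\optval(U)\le D(\ell)/\beta(\ell)$.

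First I would set up notation: let $\ell_0$ be the initial length vector and $\ell_j$ the length vector after the $j$-th augmentation, with $D_j=D(\ell_j)$ and $\beta_j=\beta(\ell_j)$. Initially $D_0=\sum_e w(e)\cdot\delta/w(e)=m\delta$, and the loop runs while $D_j<1$, so the final vector $\ell_T$ is the first with $D_T\ge 1$. When augmentation $j$ picks subgraph $H$ with $\val(H)=v$, line~\ref{enu:addlength} increases $D$ by exactly $\eps v\sum_{e\in E(H)}\ell(e) = \eps v\,\ell_{j-1}(H) \le \eps v\,\gamma\,\beta_{j-1}$ (using that $H$ is $\gamma$-approximate). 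Meanwhile $\beta$ can only increase over the run (lengths never decrease), but more usefully I would lower-bound the amount of ``value'' accumulated: the total value packed before scaling is $\sum_j v_j$, and I want to relate $\sum_j v_j$ to $T$ steps of multiplicative growth. The standard trick is to note $D_j - D_{j-1}\le \eps\,\gamma\,\beta_{j-1}\,v_j$ and separately that $\beta$ grows slowly, so $D$ grows essentially like $\prod(1+\eps\gamma v_j/(\text{stuff}))$; taking logs and using $D_0=m\delta$, $D_T\ge 1$ gives a lower bound on $\sum_j v_j$ of the form $\tfrac{1}{\eps\gamma}\ln\tfrac{1}{m\delta}$ times a factor, which with $\delta=(2m)^{-1/\eps}$ becomes $\Omega\big(\tfrac{1}{\eps^2\gamma}\ln m\big)$-ish; dividing by the scaling factor $\log_{1+\eps}\tfrac{1+\eps}{\delta}$ from line~\ref{enu:scaling} yields $\val(\pset)\ge \optval(U)/(\gamma+O(\eps))$, after absorbing the lower-order terms.

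More carefully, I would follow the Garg--K\"onemann / Fleischer template (and the Chuzhoy--Saranurak writeup the paper cites): use the weak-duality bound $\optval(U)\le D(\ell)/\beta(\ell)$ with $\ell$ chosen to be $\ell_T$ (or an averaged version). The cleanest route is: (i) show $\beta_T\ge 1$-ish normalization fails, so instead track $\alpha(\ell)=D(\ell)/\beta(\ell)$ directly; (ii) bound how $\beta$ grows per augmentation — each augmentation on $H$ with $\val(H)=v$ multiplies $\ell(e)$ for $e\in E(H)$ by $1+\eps v/w(e)\ge 1$, so $\beta$ is nondecreasing, and one shows $\ln\beta_T - \ln\beta_0 \le$ (number of times any fixed cheapest subgraph's edges get hit); (iii) combine with the $D$-growth inequality to get $\ln\tfrac{D_T}{D_0}\le \eps\gamma\cdot(\text{weighted count of augmentations, measured in units of }v/\optval(U))\cdot(1+o(1))$. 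Rearranging and substituting $\delta$ gives the total value bound, and the division by the scaling factor finishes it. I would then convert ``value of the fractional packing $\ge \optval(U)/(\gamma+O(\eps))$'' and plug $\gamma=2$ (Mehlhorn's approximation) to conclude the $(2+O(\eps))$-factor claimed in \Cref{lem:packing MWU}; rescaling $\eps$ by a constant turns $(\gamma+O(\eps))$ into $(\gamma+\eps)$, i.e.\ $2+\eps$.

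The main obstacle is the bookkeeping in step (iii): handling the interplay between the slowly-growing $\beta(\ell)$ and the geometric growth of $D(\ell)$, and in particular getting the constant in front of $\eps$ right so the factor is genuinely $\gamma+O(\eps)$ and not, say, $\gamma(1+O(\eps))$ which would be worse when $\gamma=2$. The careful choice $\delta=(2m)^{-1/\eps}$ is exactly what makes the $\ln(1/(m\delta))=\tfrac{1}{\eps}\ln(2m)-\ln m = \Theta(\tfrac{1}{\eps}\ln m)$ term dominate the additive $-\ln\beta_0$ and $\ln D_0$ corrections; verifying that these lower-order terms are truly negligible (i.e.\ absorbed into $O(\eps)$ after the scaling) is the delicate part, but it is entirely routine given the cited presentation, so I do not expect any genuinely new difficulty — the lemma is the textbook MWU guarantee instantiated with a $\gamma$-approximate oracle, combined with Lemmas~\ref{lem:feasible} and~\ref{lem:aug bound}.
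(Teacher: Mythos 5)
Your proposal is correct and follows essentially the same route as the paper's proof: bound the per-augmentation growth of $D(\ell)=\sum_e w(e)\ell(e)$ using the $\gamma$-approximate oracle together with the duality bound $\alpha(\ell)\le D(\ell)/\opt$, combine $D(0)\le 2\delta m$, $D(t)\ge 1$, and $\delta=(2m)^{-1/\eps}$ to lower-bound the unscaled value, then divide by the scaling factor $\log_{1+\eps}\frac{1+\eps}{\delta}$ and invoke \Cref{lem:feasible} with weak duality. The only (cosmetic) difference is your step~(ii) about tracking the growth of $\beta(\ell)=\min_{H}\ell(H)$, which is unnecessary: the pointwise bound $\beta(\ell_{j-1})\le D(\ell_{j-1})/\opt$ already closes the argument exactly as in the paper.
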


\begin{proof}
We first write an LP for the $U$-Steiner subgraph packing problem and its dual LP, using our notation $\ell(H)=\sum_{e\in E(H)}\ell(e)$. 
\begin{center}
\begin{tabular}{|c|c|}
\hline 
$\begin{array}{crccc}
(\boldsymbol{P}_{LP})\\
\max & \sum_{H\in\HH}\val(H)\\
\text{s.t.} & \sum_{H\in\HH : e\in H}\val(H) & \le & w(e) & \forall e\in E\\
 & \val(H) & \ge & 0 & \forall  H\in\HH
\end{array}$  & $\begin{array}{crccc}
(\boldsymbol{D}_{LP})\\
\min & \sum_{e\in E}w(e)\ell(e)\\
\text{s.t.} & \ell(H) & \ge & 1 & \forall H\in\HH\\
 & \ell(e) & \ge & 0 &  \forall e\in E 
\end{array}$\tabularnewline
\hline 
\end{tabular}
\par\end{center}

Denote $D(\ell)=\sum_{e\in E}w(e)\ell(e)$, and let $\alpha(\ell)=\min_{H\in\HH}\ell(H)$
be the length of the minimum $\ell$-length $U$-Steiner subgraph.
Let $\ell_{i}$ be the edge-length function $\ell$ after $i$ executions
of the while loop
(and by convention $i=0$ refers to just before entering the loop).
For brevity, denote $D(i)=D(\ell_{i})$ and $\alpha(i)=\alpha(\ell_{i})$.
We also denote by $H_{i}$ the Steiner subgraph found
in the $i$-th iteration and by $v_{i}$ the value of $H_{i}\in\P$
(before scaling down). Observe that
\begin{align*}
D(i) & =\sum_{e\in E}w(e)\ell_{i-1}(e)+\sum_{e\in E(H_{i})}w(e) \Big(\frac{\epsilon v_{i}}{w(e)}\cdot\ell_{i-1}(e)\Big)\\
 & =D(i-1)+\epsilon v_{i}\cdot\ell_{i-1}(H_{i}).
\end{align*}
Since $H_{i}$ is a $\gamma$-approximate minimum $\ell_{i-1}$-length $U$-Steiner subgraph, 
\begin{align*}
D(i) & \le D(i-1)+\epsilon v_{i}\gamma\cdot\alpha(i-1).
\end{align*}
Observe that the optimal value of the dual LP $\boldsymbol{D}_{LP}$
can be written as $\opt=\min_{\ell}D(\ell)/\alpha(\ell)$,
and thus
\begin{align*}
D(i) & \le D(i-1)+\epsilon\gamma v_{i}D(i-1)/\opt\\
 & \le D(i-1)\cdot e^{\eps\gamma v_{i}/\opt}.
\end{align*}
Let $t$ be the index of the last iteration, then $D(t)\ge1$.
Since $D(0)\le2\delta m$, we get
\[
1\le D(t)\le2\delta m\cdot e^{\epsilon\gamma\sum_{i=1}^{t}v_{i}/\opt}.
\]
Taking a logarithm from both sides, and denoting $\val=\sum_{i=1}^{t}v_{i}$,
we get
\begin{equation}
%  0\le\ln(2\delta m)+\epsilon\gamma\sum_{i=1}^{t}v_{i}/\opt.
  \frac{\val\cdot\eps\gamma}{\opt}\geq\ln(1/(2\delta m)).
  \label{eq: calc}
\end{equation}

Notice that $\val$ is exactly the value of $\P$ before scaling down in line~\ref{enu:scaling}.
By \Cref{lem:feasible},
the scaled-down packing is a feasible solution for the primal $\boldsymbol{P}_{LP}$,
and thus by weak duality
$\val/\log_{1+\epsilon}(\frac{1+\epsilon}{\delta})\le\opt$.
Thus, to prove that it achieves $(\gamma+O(\epsilon))$-approximation, 
it suffices to show that $\val/\log_{1+\epsilon}(\frac{1+\epsilon}{\delta}) \ge \frac{1-O(\epsilon)}{\gamma}\cdot \opt$.
And indeed, using~\eqref{eq: calc}
and the fact that $\delta=(2m)^{-1/\epsilon}$, i.e., $2m=(1/\delta)^{\eps}$,
we have 
\begin{align*}
\frac{\val/\log_{1+\epsilon}(\frac{1+\epsilon}{\delta})}{\opt} & \ge\frac{\ln(1/(2\delta m))}{\epsilon\gamma}\cdot\frac{1}{\log_{1+\epsilon}(\frac{1+\epsilon}{\delta})}\\
 & =\frac{\ln(1/\delta)-\ln(2m)}{\epsilon\gamma}\cdot\frac{\ln(1+\epsilon)}{\ln(\frac{1+\epsilon}{\delta})}\\
 & \ge\frac{(1-\epsilon)\ln(1/\delta)}{\epsilon\gamma}\cdot\frac{\ln(1+\epsilon)}{\ln(\frac{1+\epsilon}{\delta})}\\
 & \geq\frac{1-O(\epsilon)}{\gamma},
\end{align*}
where the last inequality holds because
$\ln(1+\epsilon)\ge\epsilon-\epsilon^{2}/2$
and $\ln(\frac{1+\epsilon}{\delta})\le(1+\eps)\ln(1/\delta)$.
\end{proof}

%%% Local Variables:
%%% mode: latex
%%% TeX-master: "main"
%%% End:

\subsubsection{Efficient Implementation\label{sec:mincost_dynamic}}

In this section, we complete the proofs of \Cref{lem:packing MWU}
by providing an efficient implementation of Algorithm~\ref{algorithm:MWU} from
\Cref{sec:mincost_MWU}.

\paragraph{$\protect\Otil(m^{2}/\epsilon^2)$-time Algorithms for General Graphs.}

In line~\ref{enu:oracle} of Algorithm~\ref{algorithm:MWU},
we invoke the algorithm of Mehlhorn~\cite{Mehlhorn88}
for $2$-approximation of the minimum-length Steiner tree
(it clearly approximates also the minimum-length Steiner subgraph,
because an optimal Steiner subgraph is always a tree).
Each invocation runs in time $O(m+n\log n)$,
which subsumes the time complexity of other steps in a single iteration of the while loop.
There are $\Otil(m/\eps^{2})$ iterations by \Cref{lem:aug bound}\ref{it:AugBound1},
hence the total running time is $\Otil(m^{2}/\eps^{2})$.
The packing $\P$ produced as output is a $(2+O(\eps))$-approximate solution
by \Cref{lem:approx bound}.

\paragraph{$\protect m^{1+o(1)}/\epsilon^2$-time Algorithms for Unweighted Graphs.}

In unweighted graphs, we obtain an almost-linear time
by invoking the data structure ${\cal D}$ from \Cref{thm:dec_MinSteiner},
which is a dynamic algorithm that maintains a graph $G$ under edge-weight increases
and can report a $(2+\eps)$-approximate minimum-length Steiner subgraph. 
Then in line~\ref{enu:oracle} of Algorithm~\ref{algorithm:MWU},
we query ${\cal D}$ to obtain a graph $H$,
and in line~\ref{enu:addlength} we instruct ${\cal D}$ to update the edge lengths.
The packing $\P$ produced as output is again
an $(2+O(\eps))$-approximate solution by \Cref{lem:approx bound},
and it only remains to analyze the total running time. 

A small technical issue is that ${\cal D}$ maintains a graph
whose edge lengths are integers in the range $\{1,\ldots,W\}$.
Our edge lengths $\ell(e)$ lie in the range $[(2m)^{-1/\epsilon},(1+\eps)]$,
which can be scaled to the range $[\frac{1}{\eps}, \frac{1}{\eps}(1+\eps)(2m)^{1/\epsilon}]$.
Each update to ${\cal D}$ is further rounded upwards to the next integer,
and thus we can use $W=\lceil \frac{1}{\eps}(1+\eps)(2m)^{1/\epsilon}\rceil$. 
The rounding increases edge lengths multiplicatively by at most $1+\eps$,
and thus each reported subgraph achieves $(2+O(\eps))$-approximation
with respect to the unrounded lengths.
We can prevent the rounding error from accumulating,
by maintaining a table of the exact lengths of each edge (separately from $\cal D$),
and using this non-rounded value when computing the edge's new length
in line~\ref{enu:addlength}.

The overall running time is dominated by the time spent by updates and queries
to the data structure ${\cal D}$.
By \Cref{lem:aug bound}\ref{it:AugBound1},
each edge participates in at most $\Otil(1/\eps^{2})$ augmentations
and thus appears in at most $\Otil(1/\eps^{2})$ different Steiner subgraphs $H$.
As each iteration updates only the lengths of edges in its subgraph $H$,
the total number of edge-length updates, over the entire execution,
is at most $\sum_{H\in\P}|E(H)| = \Otil(m/\eps^{2})$.
%,
%and assuming $\eps>0$ is fixed, we can write this as $\Otil(m)$. 
By \Cref{thm:dec_MinSteiner}, the total update time of ${\cal D}$
is $mn^{o(1)}\log^{O(1)}W  + O(\sum_{H\in\P}|E(H)|) =mn^{o(1)}/\epsilon^2$.
Lastly, ${\cal D}$
returns each Steiner subgraph $H$ in time $|E(H)|\cdot n^{o(1)}\log^{O(1)}W$,
and therefore the total time of all queries is
$\sum_{H\in\P}|E(H)|\cdot n^{o(1)}\log^{O(1)}W = mn^{o(1)}/\epsilon^2$.
Thus, the overall running time is $mn^{o(1)}/\epsilon^2$.

%%% Local Variables:
%%% mode: latex
%%% TeX-master: "main"
%%% End:

\section{Decremental $(2+\eps)$-approximate Minimum Steiner Subgraphs}
\label{sec:unweighted}

%\thatchaphol{Should we say that everything is deterministic in this section?} -- done
%\amir{change $w_G$ to $\ell_G$}
%\ohad{when $w$, $W$, $l$ are numbers, I left they as they were.}
%\amir{change $T_{hopset}$ to $T_{MES}$?}

The goal of this section is to establish an algorithm for maintaining a $(2+\eps)$-approximate minimum Steiner subgraph with $n^{o(1)}$ amortized time per weight-increase update, and $n^{o(1)}$ output-sensitive query time.
In this paper, the main application of this algorithm is in proving Theorem~\ref{lem:guide}: fractionally packing $\geq\lambda(U)/(4+\eps)$ $U$-Steiner-subgraphs in an unweighted graph with $U$-Steiner mincut $\lambda(U)$ in \emph{almost-linear time}.
More specifically, this result gets used inside the MWU framework (Lemma~\ref{lem:packing MWU}).
However, it can also be viewed as a result of independent interest: obtaining an almost-optimal algorithm for a classic problem in the decremental (weight-increase) setting.
To be consistent with Section~\ref{sec:mincost_MWU} we refer to the edge-weights as \emph{lengths}.
A key feature of this new algorithm is that it is deterministic, and therefore works in an adaptive-adversary setting (such as MWU).

\begin{theorem}[Decremental Approximate Minimum Steiner Subgraph]
\label{thm:dec_MinSteiner}
For any constant $\eps>0$, there is a deterministic data-structure that maintains a graph $G=(V,E,\ell:E \to [L])$ with terminals $U \subseteq V$ under a sequence of updates (edge-length-increase only) in total $m\cdot n^{o(1)}\cdot \log^{O(1)}{L}+O(y)$ time, and can, at any time, produce a $(2+\eps)$-approximate minimum $\ell$-length $U$-Steiner subgraph $H$ of $G$ in time $m'\cdot n^{o(1)}\cdot \log^{O(1)}{L}$ where $m'$ is the number of edges in the subgraph it outputs.
\end{theorem}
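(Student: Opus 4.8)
The plan is to build the decremental data structure on top of the classical reduction of Mehlhorn, which turns approximate minimum Steiner tree into a single-source shortest-path computation, and then to maintain that shortest-path structure dynamically using the decremental SSSP algorithm of \cite{BGS21_arxiv}. Recall Mehlhorn's reduction: given a graph $G=(V,E,\ell)$ with terminals $U$, one attaches an auxiliary super-source $s^*$ connected to every terminal by a zero-length edge, runs Dijkstra from $s^*$, and thereby computes for each vertex $v$ its nearest terminal $\mathrm{near}(v)$ and the distance $d(v)=\dist(v,\mathrm{near}(v))$. This induces a partition of $V$ into Voronoi regions around terminals. One then forms an auxiliary graph $G^*$ on vertex set $U$ in which, for each edge $(u,v)\in E$ with $\mathrm{near}(u)\neq\mathrm{near}(v)$, there is an edge between $\mathrm{near}(u)$ and $\mathrm{near}(v)$ of length $d(u)+\ell(u,v)+d(v)$; a minimum spanning tree of $G^*$, with each auxiliary edge expanded back into its underlying path in $G$, yields a $2$-approximate minimum Steiner tree, hence also a $2$-approximate minimum Steiner \emph{subgraph}.

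First I would observe that length-increase updates to $G$ translate into a bounded number of changes in this pipeline, and that each stage tolerates approximation. Maintaining $d(\cdot)$ and $\mathrm{near}(\cdot)$ is exactly a single-source shortest-path problem from $s^*$ under edge-length increases, so I would invoke the decremental $(1+\eps)$-approximate SSSP data structure of \cite{BGS21_arxiv} with total update time $mn^{o(1)}\log^{O(1)}L$; this is where the deterministic, adaptive-adversary guarantee comes from. The subtlety is that I do not get exact Voronoi regions but $(1+\eps)$-approximate distances, so the region boundaries are only approximately correct — this only costs another $(1+\eps)$ factor in the final approximation ratio, which is absorbed into the ``$2+\eps$'' by rescaling $\eps$. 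Next, the auxiliary graph $G^*$ must be maintained under these changes. Rather than maintaining $G^*$ exactly, I would maintain, for each vertex $v$, the quantity $d(v)$ together with a pointer to $\mathrm{near}(v)$, and re-derive the relevant auxiliary edges lazily at query time by scanning the edges incident to the reported Steiner subgraph; alternatively one maintains a $(1+\eps)$-approximate minimum spanning tree of $G^*$ dynamically. Either way, the key accounting point is the one already used in Section~\ref{sec:mincost_dynamic}: because the overall MWU caller performs only $\Otil(m/\eps^2)$ edge-length increases and the query is output-sensitive, each edge participates in few updates, so the amortized cost stays $n^{o(1)}\log^{O(1)}L$ per update plus $O(y)$ in total, and each query costs $m'\cdot n^{o(1)}\log^{O(1)}L$ where $m'=|E(H)|$.

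The step I expect to be the main obstacle is the correctness of the \emph{approximate} Voronoi/Mehlhorn reduction: Mehlhorn's $2$-approximation proof crucially uses that each non-terminal lies in an exact nearest-terminal region, and that the minimum-length inter-region edge gives a good surrogate for the true terminal-to-terminal distance. With only $(1+\eps)$-approximate distances, a non-terminal could be assigned to a terminal that is a $(1+\eps)$-factor farther than its true nearest terminal, and the induced auxiliary-edge lengths are likewise off by $(1+\eps)$; I need to argue carefully that the MST of this perturbed $G^*$ still expands to a Steiner subgraph of length at most $(2+O(\eps))\cdot\opt$. The clean way is to show $\opt_{G^*}\le (1+O(\eps))\cdot 2\,\opt$ via the standard Eulerian-tour/doubling argument applied to the optimal Steiner tree (each tree edge is charged, and the approximate distances inflate each charge by at most $(1+\eps)$), and conversely that any spanning tree of $G^*$ expands to a connected terminal-spanning subgraph of no larger length. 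A secondary technical point is handling the rounding of lengths to integers in $[L]$ and ensuring the decremental SSSP structure's $\log^{O(1)}L$ dependence is compatible with the choice $L=\lceil\frac1\eps(1+\eps)(2m)^{1/\eps}\rceil$ made by the caller, so that $\log^{O(1)}L = n^{o(1)}$ for constant $\eps$; this is routine but must be stated.
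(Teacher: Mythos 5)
There is a genuine gap, and it is precisely the difficulty that Section~\ref{sec:unweighted} is built to overcome. Your plan maintains, for every vertex $v$, a pointer to an approximate nearest terminal $\mathrm{near}(v)$ and then keeps (or lazily rebuilds) Mehlhorn's terminal graph $G^*$. The problem is the \emph{recourse} of that representation: the identity of the ($(1+\eps)$-)closest terminal of a vertex can change very frequently even when its distance estimate barely moves, and a single change of an edge high up in the approximate SSSP tree reassigns an entire subtree of vertices to a different terminal, forcing up to $\deg_G(v)$ auxiliary-edge modifications for each reassigned vertex $v$. Nothing in the guarantees of \cite{BGS21_arxiv} bounds the number of such reassignments by $m\cdot n^{o(1)}$, so the total number of changes to $G^*$ (and hence the dynamic MST cost) is not controlled. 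Your fallback of ``re-deriving the relevant auxiliary edges lazily at query time'' does not rescue this: to know which edges of $G$ cross between Voronoi cells you must scan essentially all of $G$, giving $\Omega(m)$ per query rather than the output-sensitive $m'\cdot n^{o(1)}$ bound, which is fatal inside the MWU loop that issues $\Otil(m)$ queries. The paper avoids tracking terminal identities altogether: its helper graph (Lemma~\ref{lem:helperproperties}, maintained in Lemma~\ref{lem:helpermaintenance}) lives on all of $V$ (in fact on $O(\log_\gamma nL)$ copies per vertex), keeps a ``green'' edge per original edge $(x,y)$ whose length uses only the distance \emph{estimates} (which increase monotonically, so each changes $O(\log nL)$ times), and encodes the Voronoi cells implicitly by zero-length ``red'' edges taken from the hopset trees $T^i_{hopset}$, whose total number of changes is bounded by the SSSP data structure's own update time; the dynamic MST then does the work of selecting the relevant inter-cell edges. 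Relatedly, the part you flag as the main obstacle --- correctness of Mehlhorn with $(1+\eps)$-approximate nearest terminals --- is the easy step (a short exchange argument, Lemma~\ref{lem:apxMehlhorn}).

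Two further ingredients your proposal leaves unaddressed but that are needed for the stated bounds. First, producing the subgraph itself in time $m'\cdot n^{o(1)}\log^{O(1)}L$: the MST edges correspond to paths in $G$ realized through hopset edges, and a naive expansion can report the same edge of $G$ many times, so the query time would be proportional to the number of reported edge occurrences rather than to the number of distinct output edges; the paper needs the ``first/last $k$ edges of $\pi_e$'' queries together with a doubling and early-stopping scheme to make the expansion output-sensitive. Second, the data structure of \cite{BGS21_arxiv} does not maintain a single $(1+\eps)$-approximate tree but $O(\log_\gamma nL)$ trees, each accurate only on its distance scale, with auxiliary core vertices in the emulator; this is why the full construction (Lemma~\ref{lem:DStoHelper}) creates per-scale copies of each vertex and handles two-hop paths through cores, rather than plugging into the idealized interface your argument implicitly assumes (which the paper isolates as Proposition~\ref{prop:simpler} and explicitly labels as stronger than what is currently achievable).
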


At a high-level, there are two distinct steps for establishing this result:

\begin{enumerate}
\item \textbf{A reduction to decremental Single-Source Shortest Paths (SSSP).} It is well-known that a $2$-approximation for Steiner Tree can be obtained by computing the Minimum Spanning Tree (MST) of a helper graph $H$. This helper graph is simply the complete distance graph of the terminals, and (naively) requires $\Omega(mn)$ time to compute.
The MST of $H$ then gets expanded into a Steiner subgraph of $G$.
Mehlhorn's algorithm achieves a $2$-approximation in $\tilde{O}(m)$ time by further reducing the problem to an SSSP computation, by seeking the MST of a slightly different helper graph $H$.
At a high-level, our plan is to design a dynamic version of Mehlhorn's algorithm. 

\item \textbf{An almost-optimal decremental algorithm for $(1+\eps)$-SSSP.}
The challenge with the above reduction is in the difficult task of maintaining the SSSP information that is required to construct the helper graph $H$ while the input graph $G$ is changing.
Roughly speaking, we succeed in doing that, while paying an additional $(1+\eps)$ factor, due to a recent decremental $(1+\eps)$-SSSP \emph{deterministic} algorithm by Bernstein, Gutenberg, and Saranurak \cite{BGS21_arxiv} with $n^{o(1)}\cdot \log^{O(1)}{L}$ time per update.
%A critical property of this algorithm is that it is deterministic and therefore works in the adaptive-adversary model; otherwise, it cannot be used inside the MWU framework.
\end{enumerate}

Thus, the plan is as follows.
Let $G^{(i)}$ be the graph $G$ after the $i^{th}$ update.
Using a decremental $(1+\eps)$-SSSP algorithm we maintain a helper graph $H^{(i)}$. 
Then, using a fully-dynamic MST algorithm we maintain the MST of $H^{(i)}$.
Whenever there is a query, we output a Steiner subgraph $S^{(i)}$ of $G^{(i)}$ by querying for the MST of $H^{(i)}$ and expanding it into $S^{(i)}$.
Notably, whereas $G^{(i)}$ is obtained from $G^{(i-1)}$ in a decremental fashion, i.e. only via increasing edge-lengths, the updates to the helper graph can be both decremental and incremental. As a result, we need to maintain the MST in a fully-dynamic setting, which can be done in $\polylog(n)$ update time with a deterministic algorithm \cite{holm2001poly}.
This is schematically described in Figure~\ref{Figs:Diag_UW}.

\begin{figure}[ht]
  \begin{center}
    \ifarxiv
    \includegraphics[width=5.2in]{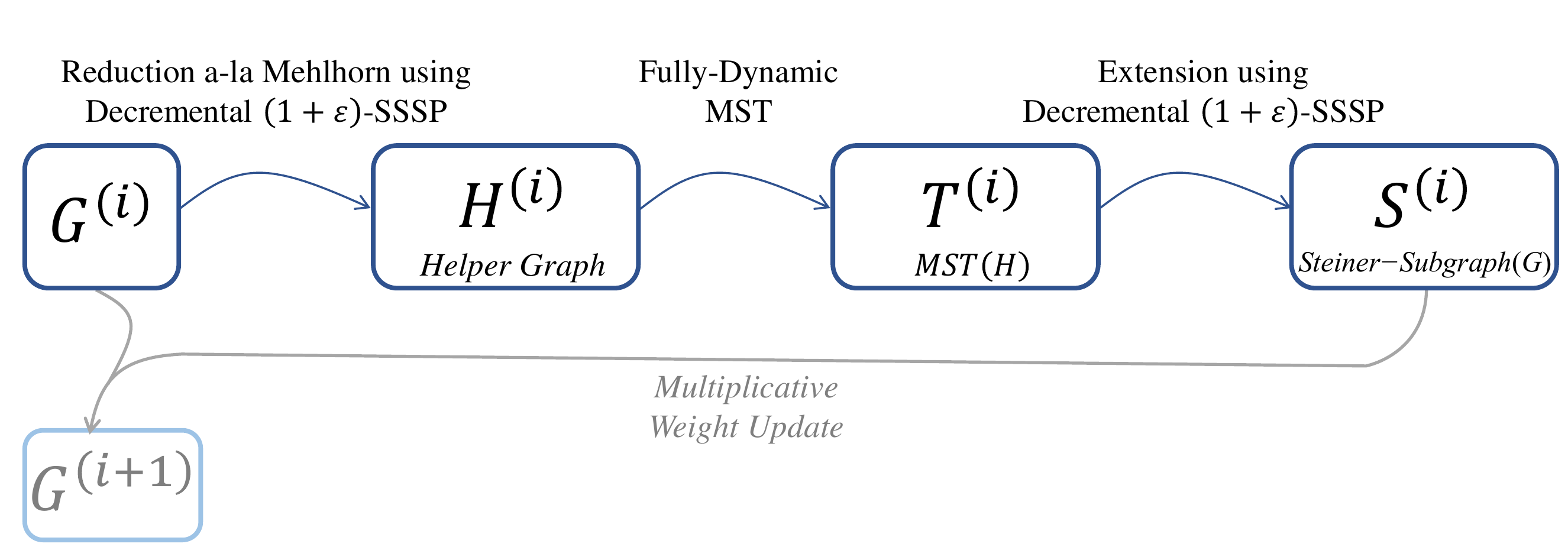}
    \else
    \includegraphics[width=5.2in]{Diag_UW}
    \fi
    \end{center}
    \caption{A diagram representing how an approximate minimum Steiner subrgaph is produced by our algorithm (inside an iteration of the MWU framework for packing Steiner subgraphs).}
    \label{Figs:Diag_UW}
\end{figure}

Realizing this plan, however, involves several lower-level challenges.
First of all, the construction of the helper graph from Mehlhorn's algorithm needs to be modified in order to prevent the number of changes to $H$ from being much larger than the number of changes to $G$. (Otherwise, each edge update in $G$ could result in $\Omega(n)$ edge changes in $H$.)
This is presented in Section~\ref{sec:construction}.
Second, the new construction of a helper graph has slightly different properties than Mehlhorn's which we state and analyze next in Section~\ref{sec:properties}.
And finally, the reduction requires additional features from the  $(1+\eps)$-SSSP data-structure that were not explicitly stated in previous work. 
We explain how they can be achieved (with minor modifications) in the discussion below  
\Cref{lem:DStoHelper}.

\subsection{The Helper Graph}
\label{sec:properties}

We begin by describing what properties the helper graph $H$ should have, and by proving that they suffice for the MST of $H$ to yield a $(2+\eps)$-approximate $U$-Steiner tree for $G$.
The algorithm for dynamically maintaining a helper graph with these properties is discussed later on. 
Note that we will have a helper graph $H^{(i)}$ of $G^{(i)}$ in each step $i$, but the properties we describe in this section are indifferent to the step and to the fact that we are in a dynamic environment; therefore we will drop the $(i)$ superscript in this subsection.

The starting point of both our helper graph and Mehlhorn's original construction is the well-known fact that an MST of the complete distance graph of the terminals is a $2$-approximation to the Steiner tree.

\begin{lemma}[\cite{HR92,Tak80,KMB81,Ple81}]
\label{lem:2apx}
Let $G=(V,E,\ell:E \to [L])$ be a graph with terminals $U \subseteq V$, and let $H^*=(U,E_H,{\ell}_H:E_H \to [2nL])$ be the distance graph of the terminals, i.e. $\forall t_1,t_2 \in U: \ {\ell}_{H^*}(t_1,t_2) = d_G (t_1,t_2)$.
Then, the MST of $H$ has length at most $2$ times the minimum $U$-Steiner tree of $G$.
\end{lemma}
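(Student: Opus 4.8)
The plan is the classical doubling-and-shortcutting argument (the analysis behind the Kou--Markowsky--Berman and Takahashi--Matsuyama $2$-approximations). Fix a minimum $U$-Steiner tree $T^*$ of $G$, so that its total $\ell$-length $\ell(T^*)$ is exactly the quantity we want to compare against. I would first replace every edge of $T^*$ by two parallel copies; the resulting multigraph is connected and every vertex has even degree, so it admits a closed Euler walk $W$ of total length exactly $2\ell(T^*)$. Since $T^*$ spans $U$, every terminal of $U$ appears on $W$ at least once.

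Next I would shortcut $W$ down to a cycle through the terminals. Traverse $W$ once and list the terminals in order of first appearance, say $t_1,t_2,\dots,t_r$ with $r=|U|$, and close the list cyclically by appending $t_1$. For each consecutive pair $t_j,t_{j+1}$ on this cyclic list, the portion of $W$ between the (first) occurrence of $t_j$ and the next listed terminal is a walk in $G$ from $t_j$ to $t_{j+1}$, hence has length at least $d_G(t_j,t_{j+1})=\ell_{H^*}(t_j,t_{j+1})$. Summing over $j$, the total $\ell_{H^*}$-length of the Hamiltonian cycle $t_1 - t_2 - \cdots - t_r - t_1$ in the (complete) distance graph $H^*$ is at most the length of $W$, i.e.\ at most $2\ell(T^*)$.

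Finally, I would delete a heaviest edge of this Hamiltonian cycle to obtain a spanning tree $\widehat T$ of $H^*$ with $\ell_{H^*}(\widehat T)\le 2\ell(T^*)$. Since the minimum spanning tree of $H^*$ has length no larger than that of any spanning tree of $H^*$, we conclude $\ell_{H^*}(\mathrm{MST}(H^*))\le \ell_{H^*}(\widehat T)\le 2\ell(T^*)$, which is exactly the claimed bound (in fact the argument yields the slightly stronger $2(1-1/r)\,\ell(T^*)$).

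There is no real obstacle here, as this is a textbook fact; the only points needing a little care are in the shortcutting step. One must check that every terminal actually appears on $W$ (which follows from $U\subseteq V(T^*)$), and that replacing a sub-walk of $W$ between two consecutive listed terminals by the single edge of $H^*$ does not increase length (which is precisely the definition $\ell_{H^*}(t_1,t_2)=d_G(t_1,t_2)$ combined with the fact that any walk between two vertices of $G$ is at least as long as their distance). The stated weight range $\ell_{H^*}\colon E_H\to[2nL]$ is immediate, since a shortest path in $G$ uses at most $n-1$ edges, each of length at most $L$.
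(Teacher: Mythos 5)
Your argument is correct and is exactly the classical doubling--Euler-tour--shortcutting proof that the cited works ([HR92, Tak80, KMB81, Ple81]) use; the paper itself states Lemma~\ref{lem:2apx} as a known fact with those citations and gives no proof of its own, so there is nothing to diverge from. All the delicate points are handled: every terminal lies on the Euler walk because $U\subseteq V(T^*)$, the consecutive segments of the closed walk partition it so the shortcut Hamiltonian cycle in $H^*$ has length at most $2\ell(T^*)$, and dropping the heaviest cycle edge yields a spanning tree of $H^*$ dominating the MST, even giving the sharper $2(1-1/|U|)$ factor.
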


The issue with this helper graph $H^*$ is in the complexity of computing (and maintaining) it.
Mehlhorn's fast algorithm is based on the observation that a different helper graph, that can be viewed as a proxy towards $H^*$, suffices to get a $2$-approximation. 
The exact property of a helper graph that Mehlhorn's algorithm is based on is stated in the next lemma. (The analysis of our helper graph will not use this Lemma; we only include it for context.) 

\begin{lemma}[\cite{Mehlhorn88}]
Let $G=(V,E,\ell:E \to [L])$ be a graph with terminals $U \subseteq V$, and let $H=(U,E_H,{\ell}_H:E_H \to [2nL])$ be a weighted graph such that for each edge $(x,y) \in E$ there exists an edge $(t_x,t_y) \in E_H$, where $t_x$ and $t_y$ are the closest terminals to $x$ and $y$ (respectively), of length $d_G(t_x,x)+{\ell}_G(x,y)+d_G(y,t_y)$.
Then, the MST of $H$ has length at most $2$ times the minimum $U$-Steiner tree of $G$.
\end{lemma}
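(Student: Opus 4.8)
The plan is to compare $H$ with the complete distance graph $H^{\ast}$ on the terminals --- the graph appearing in \Cref{lem:2apx} --- and to show $\mathrm{MST}(H)\le\mathrm{MST}(H^{\ast})$; combined with \Cref{lem:2apx}, which already gives $\mathrm{MST}(H^{\ast})\le 2\cdot\mathrm{OPT}$ (writing $\mathrm{OPT}$ for the minimum $U$-Steiner tree weight), this yields the lemma. We may assume $|U|\ge 2$ and $G$ connected, so that all these trees are well defined and all the graphs are connected. For a vertex $x$ let $t_{x}$ be its closest terminal and write $d(x)=d_{G}(x,t_{x})$; by hypothesis $H$ contains, for every edge $(x,y)\in E$ with $t_{x}\ne t_{y}$, an edge $(t_{x},t_{y})$ of length $d(x)+\ell(x,y)+d(y)$.

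To establish $\mathrm{MST}(H)\le\mathrm{MST}(H^{\ast})$ I would use the standard expression of the MST weight through component counts: for a connected weighted graph $K$ on a fixed vertex set, if $c_{K}(\tau)$ denotes the number of connected components of the subgraph of $K$ keeping only edges of weight $\le\tau$, then $\mathrm{MST}(K)=\int_{0}^{\infty}\bigl(c_{K}(\tau)-1\bigr)\,d\tau$. It therefore suffices to prove $c_{H}(\tau)\le c_{H^{\ast}}(\tau)$ for every $\tau\ge 0$, i.e.\ that the partition of $U$ into components of the $\le\!\tau$ subgraph of $H$ is coarser than the corresponding partition for $H^{\ast}$. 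Since the latter is exactly the transitive closure of the relation ``$d_{G}(t,t')\le\tau$'', the whole task reduces to the following key claim: whenever $t,t'\in U$ satisfy $d_{G}(t,t')\le\tau$, the terminals $t$ and $t'$ are joined in $H$ by a path all of whose edges have length at most $\tau$.

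For the key claim I would fix a shortest $t$--$t'$ path $t=v_{0},v_{1},\dots,v_{\ell}=t'$ in $G$ and examine the sequence of closest terminals $t_{v_{0}},\dots,t_{v_{\ell}}$, which starts at $t$ and ends at $t'$. Collapsing maximal runs of equal terminals, each change of terminal is witnessed by a single path edge $(v_{a},v_{a+1})$ with $t_{v_{a}}\ne t_{v_{a+1}}$, and for it $H$ contains the edge $(t_{v_{a}},t_{v_{a+1}})$ of length $d(v_{a})+\ell(v_{a},v_{a+1})+d(v_{a+1})$. Because $t_{v_{a}}$ is a closest terminal, $d(v_{a})\le d_{G}(v_{a},t)$, which --- as a subpath of a shortest path is itself shortest --- equals the length of the prefix $v_{0}\cdots v_{a}$; symmetrically $d(v_{a+1})\le d_{G}(v_{a+1},t')$ equals the length of the suffix $v_{a+1}\cdots v_{\ell}$. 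Adding the middle edge, this $H$-edge has length at most (prefix)~$+\,\ell(v_{a},v_{a+1})\,+$~(suffix)~$=d_{G}(t,t')\le\tau$. Hence every edge of the resulting $t$--$t'$ walk in $H$ (which visits only terminals) has length $\le\tau$; by transitivity, any two terminals lying in a common component of the $\le\!\tau$ subgraph of $H^{\ast}$ also lie in a common component of the $\le\!\tau$ subgraph of $H$, so $c_{H}(\tau)\le c_{H^{\ast}}(\tau)$, and we conclude $\mathrm{MST}(H)\le\mathrm{MST}(H^{\ast})\le 2\cdot\mathrm{OPT}$.

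The one step that needs genuine care is the per-edge length bound in the key claim: it rests on combining the ``closest terminal'' property --- so that $d(v_{a})$ is dominated by the distance from $v_{a}$ to whichever of $t,t'$ is convenient --- with the prefix/suffix split of a shortest path. Everything else is classical: the double-tree argument behind \Cref{lem:2apx}, and the component-count formula for MST weight. The only remaining bookkeeping is to confirm that $H$ contains each needed ``Voronoi-boundary'' edge, which is immediate from the hypothesis applied to the witnessing edge $(v_{a},v_{a+1})$; note that extra edges in $H$ (beyond the required ones) can only decrease $\mathrm{MST}(H)$, so they do no harm.
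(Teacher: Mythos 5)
Your proof is correct. Note, though, that the paper never proves this exact statement --- it is cited to Mehlhorn and explicitly included ``for context only''; the closest in-paper argument is the proof of the approximate analogue, \Cref{lem:apxMehlhorn}. That proof takes a different route from yours: it starts from an MST $T^*$ of the complete distance graph $H^*$ and performs an edge-exchange, replacing each edge $(t_1,t_2)\in T^*$ by a single $H$-edge $(t_{v_i},t_{v_{i+1}})$ chosen at the point along a shortest $t_1$--$t_2$ path where the closest-terminal labels switch from the $t_1$-side component of $T^*-(t_1,t_2)$ to the $t_2$-side, bounding each replacement edge by the path length via the same closest-terminal plus prefix/suffix split you use. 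You instead compare $\mathrm{MST}(H)$ and $\mathrm{MST}(H^*)$ globally through the Kruskal-style identity $\mathrm{MST}(K)=\int_0^\infty (c_K(\tau)-1)\,d\tau$, reducing everything to the claim that $d_G(t,t')\le\tau$ forces a $\le\tau$ walk in $H$ through all the Voronoi-boundary edges along a shortest path. Both arguments hinge on the identical per-edge estimate, so the mathematical core is shared; the difference is in how the local estimate is aggregated. The paper's exchange argument produces an explicit terminal-spanning subtree of $H$ and degrades gracefully edge-by-edge, which is exactly what is needed in the $(1+\eps)$-approximate setting of \Cref{lem:apxMehlhorn} (and your threshold argument would also adapt, giving $c_H((1+\eps)^2\tau)\le c_{H^*}(\tau)$); your component-counting route avoids the exchange bookkeeping (no need to track the two components of $T^*-e$ or argue the process terminates) at the cost of invoking the integral formula and a connectivity check for $H$, which you correctly supply via the key claim. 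One small point worth making explicit if you write this up: closest terminals of the endpoints $t,t'$ are $t,t'$ themselves (edge lengths are positive), so the terminal sequence along the shortest path indeed starts and ends where you need it to.
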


Notice that a pair of vertices in $H$ may have many parallel edges between them with different lengths. (See Figure~\ref{Figs:Hs} for an illustration.) %\amir{need to change $T$ to $U$ in this figure}

\begin{figure}[ht]
  \begin{center}
    \ifarxiv
    \includegraphics[width=6.6in]{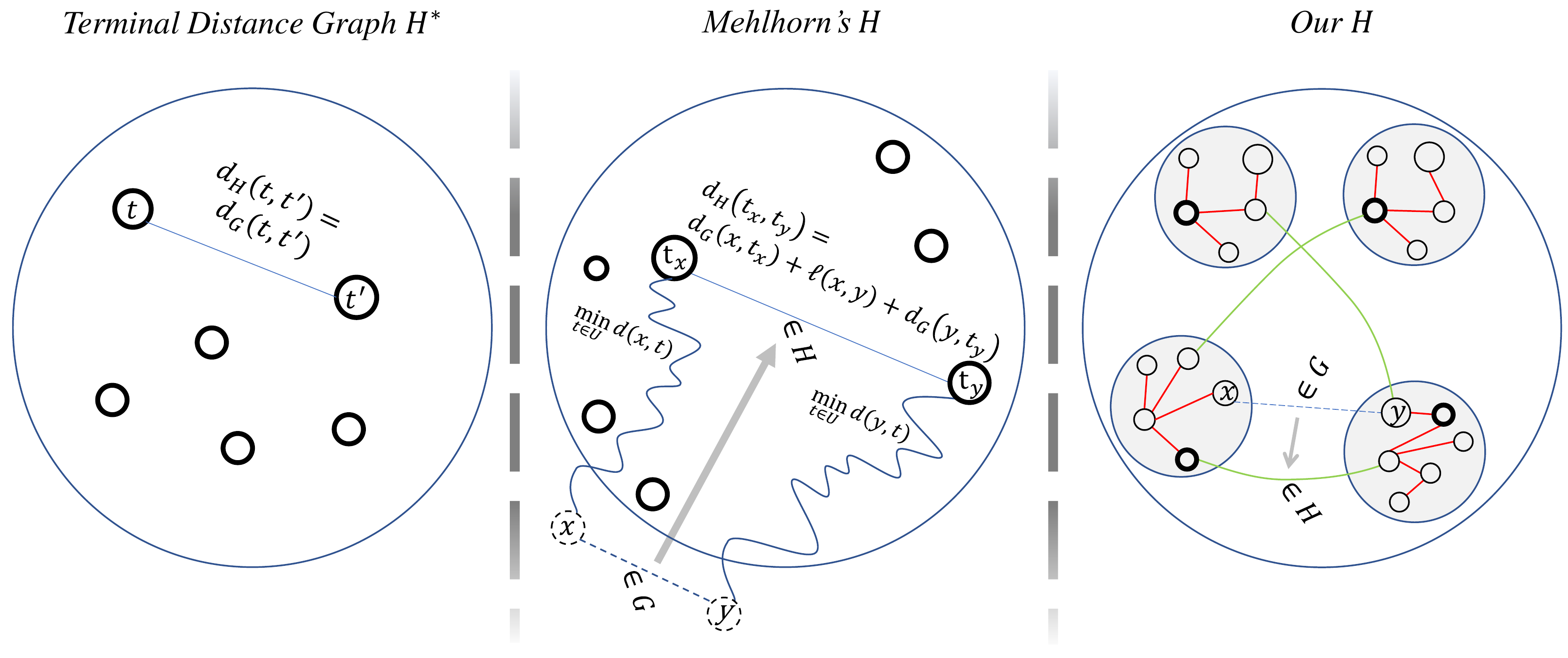}
    \else
    \includegraphics[width=6.6in]{Hs}
    \fi
    \end{center}
    \caption{An illustration of the different kinds of helper graphs discussed in this paper. Dashed lines represent edges in $G$, and curves represent paths in $G$.}
    \label{Figs:Hs}
\end{figure}

As an intermediate step towards defining the properties of our actual helper graph, we consider the following approximate version of Mehlhorn's definition.
We say that $t_x$ is a \emph{$(1+\eps)$-closest terminal to $x$} iff $d_G(x,t_x) \leq (1+\eps)\cdot  d_G(x,t)$ for all $t \in U$.

\begin{lemma}
\label{lem:apxMehlhorn}
Let $G=(V,E,\ell:E\to [L])$ be a graph with terminals $U \subseteq V$, and let $H=(U,E_H,{\ell}_H:E_H \to [2nL])$ be a graph such that: (1) for each vertex $x \in V$ there is a unique terminal $t_x\in U$ assigned to $x$, where $t_x$ is a $(1+\eps)$-closest terminal to $x$, and (2) for each edge $(x,y) \in E$ there exists an edge $(t_x,t_y) \in E_H$ of length $\leq (1+\eps)\cdot(d_G(t_x,x)+{\ell}_G(x,y)+d_G(y,t_y))$.
Then, the MST of $H$ has length at most $2(1+\eps)^2$ times the minimum $U$-Steiner tree of $G$.
\end{lemma}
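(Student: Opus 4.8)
The plan is to follow the classical analysis of Mehlhorn's algorithm, inserting the two $(1+\eps)$ slacks precisely where the approximations enter, and to reduce to \Cref{lem:2apx} rather than reprove it. Write $\mathrm{opt}$ for the length of a minimum $U$-Steiner tree of $G$, and let $H^*=(U,E_{H^*},\ell_{H^*})$ be the complete distance graph on $U$, i.e.\ $\ell_{H^*}(p,q)=d_G(p,q)$; by \Cref{lem:2apx}, $\mathrm{MST}(H^*)\le 2\,\mathrm{opt}$. Hence it suffices to prove that $\mathrm{MST}(H)\le(1+\eps)^2\cdot\mathrm{MST}(H^*)$.

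I would prove this inequality via the standard threshold-refinement comparison of minimum spanning trees. For a connected weighted graph $A$ on vertex set $U$, let $\kappa_A(\theta)$ denote the number of connected components of the subgraph of $A$ that keeps only edges of weight at most $\theta$; then $\mathrm{MST}(A)=\int_0^\infty(\kappa_A(\theta)-1)\,d\theta$. Thus, if for every $\theta>0$ the components of $H$ restricted to weight $\le(1+\eps)^2\theta$ refine the components of $H^*$ restricted to weight $\le\theta$ (equivalently $\kappa_H((1+\eps)^2\theta)\le\kappa_{H^*}(\theta)$), then the substitution $\theta\mapsto(1+\eps)^2\theta$ in the integral yields $\mathrm{MST}(H)\le(1+\eps)^2\,\mathrm{MST}(H^*)$. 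Since each edge of $H^*$ joining terminals $p,q$ has weight $d_G(p,q)$, this refinement reduces to the following claim: \emph{for any two terminals $p,q$ with $D:=d_G(p,q)<\infty$, the vertices $p$ and $q$ are connected in $H$ using only edges of weight at most $(1+\eps)^2D$.}

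To prove the claim, fix a shortest $p$--$q$ path $P=(p=v_0,v_1,\dots,v_r=q)$ in $G$; every subpath of $P$ is shortest, so writing $a_i:=d_G(p,v_i)$ and $b_i:=d_G(v_i,q)$ we have $a_i+b_i=D$ and $a_i+\ell_G(v_i,v_{i+1})=a_{i+1}$ for all $i$. Consider the walk $t_{v_0},t_{v_1},\dots,t_{v_r}$ in $H$, in which consecutive equal vertices are identified and each pair of consecutive distinct vertices $t_{v_i}\ne t_{v_{i+1}}$ is joined by the $H$-edge supplied by hypothesis~(2) for the $G$-edge $(v_i,v_{i+1})$. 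Because $p$ and $q$ are terminals, hypothesis~(1) forces $d_G(p,t_p)\le(1+\eps)\,d_G(p,p)=0$, hence $t_p=p$ since edge lengths are positive, and likewise $t_q=q$; so $t_{v_0}=p$ and $t_{v_r}=q$ and the walk really does connect $p$ to $q$ in $H$. For each edge actually used, hypothesis~(2) together with hypothesis~(1) applied at $v_i$ (to get $d_G(v_i,t_{v_i})\le(1+\eps)\,d_G(v_i,U)\le(1+\eps)a_i$) and at $v_{i+1}$ (to get $d_G(v_{i+1},t_{v_{i+1}})\le(1+\eps)b_{i+1}$) gives
\[
\ell_H(t_{v_i},t_{v_{i+1}})\le(1+\eps)\bigl((1+\eps)a_i+\ell_G(v_i,v_{i+1})+(1+\eps)b_{i+1}\bigr)\le(1+\eps)^2\bigl(a_i+\ell_G(v_i,v_{i+1})+b_{i+1}\bigr)=(1+\eps)^2D,
\]
where the last equality uses $a_i+\ell_G(v_i,v_{i+1})=a_{i+1}$ and $a_{i+1}+b_{i+1}=D$. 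This proves the claim, and hence the lemma.

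The one point that requires care --- and is really the crux --- is the asymmetric choice of distance bounds at each step: bounding $d_G(v_i,t_{v_i})$ by a multiple of the \emph{prefix} length $a_i$ but $d_G(v_{i+1},t_{v_{i+1}})$ by a multiple of the \emph{suffix} length $b_{i+1}$, so that $a_i+\ell_G(v_i,v_{i+1})$ telescopes to $a_{i+1}$ and the bound collapses to $(1+\eps)^2D$ for \emph{every} edge of the walk simultaneously. Using the cruder bounds $d_G(v_i,U),\,\ell_G(v_i,v_{i+1})\le D$ directly would give $\Theta(1)\cdot(1+\eps)^2D$ per edge and lose a constant factor, which is why the argument is phrased as connectivity \emph{at threshold $(1+\eps)^2D$} rather than as a single cheap $p$--$q$ path. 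Everything else is used as a black box: the integral characterization of $\mathrm{MST}$, and \Cref{lem:2apx}. I also note that $H$ may contain edges beyond those guaranteed by hypothesis~(2); these can only merge components sooner, so they preserve the refinement $\kappa_H((1+\eps)^2\theta)\le\kappa_{H^*}(\theta)$ and cause no difficulty.
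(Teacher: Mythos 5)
Your proof is correct, and it reaches the same intermediate target as the paper --- $\mathrm{MST}(H)\le(1+\eps)^2\cdot\mathrm{MST}(H^*)$ followed by an appeal to \Cref{lem:2apx} --- but it gets there by a genuinely different mechanism. The paper performs an edge-exchange on an MST $T^*$ of $H^*$: for each tree edge $(t_1,t_2)$ it walks along a shortest $t_1$--$t_2$ path in $G$, uses the two components of $T^*-(t_1,t_2)$ to locate a single index $i$ where the assigned terminals switch sides, and replaces the tree edge by the one $H$-edge $(t_{v_i},t_{v_{i+1}})$, bounded by exactly the same asymmetric prefix/suffix estimate you use; iterating keeps a terminal-spanning tree inside $H$ of length at most $(1+\eps)^2\ell_{H^*}(T^*)$. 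You instead simulate each $H^*$-edge by the \emph{entire} walk $t_{v_0},\dots,t_{v_r}$ in $H$, observe that every edge of that walk has length at most $(1+\eps)^2 d_G(p,q)$ (your telescoping $a_i+\ell_G(v_i,v_{i+1})=a_{i+1}$, $a_{i+1}+b_{i+1}=D$ is the same core inequality as the paper's), and then convert connectivity-at-threshold into the MST bound via the Kruskal-style identity $\mathrm{MST}(A)=\int_0^\infty(\kappa_A(\theta)-1)\,d\theta$. The trade-off: the paper's exchange argument is more hands-on (it needs the cut argument that a crossing index exists, but directly exhibits a cheap spanning tree of $H$), whereas your threshold-refinement route avoids singling out a crossing edge and handles extra edges of $H$ for free, at the cost of invoking the integral characterization of the MST; your extra observation that $t_p=p$ and $t_q=q$ for terminals (valid here since edge lengths are positive) is needed in your framing but not in the paper's. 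Both arguments are sound and yield the same $2(1+\eps)^2$ bound.
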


\begin{proof}
Let $T^*$ be an MST of the ideal helper graph $H^*$ (the complete distance graph of the terminals).
We will prove that there exists an MST $T$ of our helper graph $H$ with length ${\ell}_H(T) \leq (1+\eps)^2\cdot {\ell}_{H^*}(T^*)$.
Then by Lemma~\ref{lem:2apx} we get that the MST of $H$ has length at most $2(1+\eps)^2$ times the minimum $U$-Steiner tree of $G$.

We describe a process that modifies $T^*$ by replacing its edges with edges from $H$ while keeping it a terminal spanning tree and while only increasing its length by $(1+\eps)^2$ (see Figure~\ref{Figs:Proof_54}).
\begin{figure}[!htbp]
  \begin{center}
    \ifarxiv
    \includegraphics[width=3.0in]{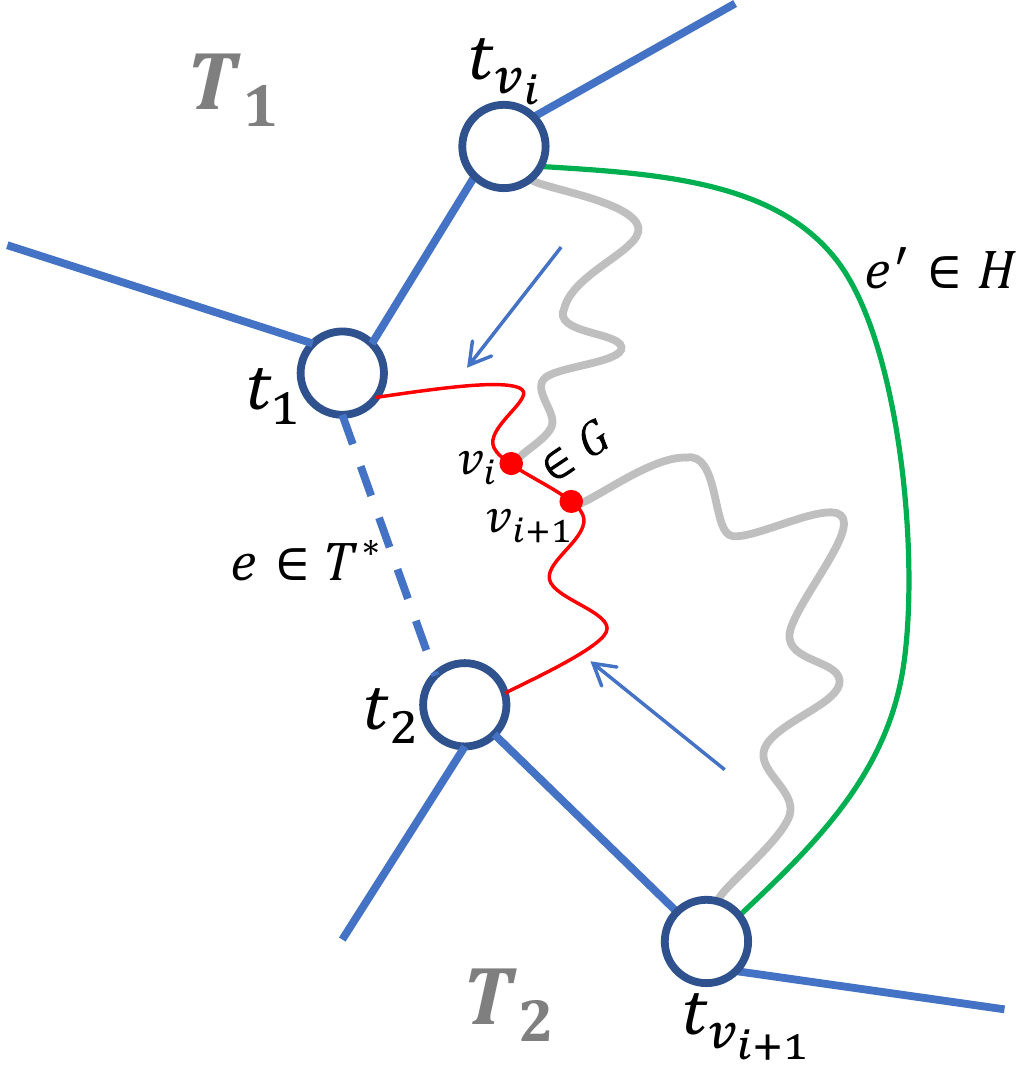}
    \else
    \includegraphics[width=3.0in]{Proof_54}
    \fi
    \end{center}
    \caption{An illustration representing the process of replacing edges from $T^*$ by edges from $H$. Here, $e\in T^*$ is replaced with $e'\in H$. Gray curves represent paths, the red curve represents the shortest $t_1t_2$-path in $G$, and blue arrows represent the length charging.}
    \label{Figs:Proof_54}
\end{figure}
Specifically, each edge $e \in T^*$ gets replaced by an edge $e' \in H$ of length ${\ell}_H(e') \leq (1+\eps)^2\cdot {\ell}_{H^*}(e)$.
We will do this replacement as long as there is an edge in $T^*$ that is not in $H$, or is in $H$ but has a different length than it does in $H^*$.

Let $e=(t_1,t_2) \in T^*$ be such an edge, and let $v_1,\ldots,v_{p}$ be a shortest path from $t_1$ to $t_2$ in $G$ with total length $d_G(t_1,t_2)={\ell}_{H^*}(t_1,t_2)$.
The removal of $e$ from $T^*$ disconnects the tree into two components, call them $t_1\in T_1$ and $t_2 \in T_2$. 
Let $i \in [p]$ be such that $t_{v_i} \in T_1$ but $t_{v_{i+1}} \in T_2$.
Such an $i$ must exist because $t_{v_1} = t_1 \in T_1$ and $t_{v_{p}} = t_2 \in T_2$.

Consider the edge $e' = (t_{v_i}, t_{v_{i+1}}) \in H$ that follows from the edge $(v_i,v_{i+1})$ by the assumptions on the helper graph $H$.
Replacing $e$ with $e'$ keeps $T^*$ a terminal spanning tree.
It only remains to bound the difference in lengths:
$$
{\ell}_H(e') \leq (1+\eps)\cdot(d_G(t_{v_i},v_i)+{\ell}_G(v_i,v_{i+1})+d_G(v_{i+1},t_{v_{i+1}})).
$$
Since $t_{v_i}$ is a $(1+\eps)$-closest terminal to $v_i$ we know that $d_G(t_{v_i},v_i) \leq (1+\eps) \cdot d_G(t_1,v_i)$ and similarly $d_G(v_{i+1},t_{v_{i+1}}) \leq (1+\eps)\cdot d_G(v_{i+1},t_2)$.
Therefore,
$$
{\ell}_H(e') \leq (1+\eps)^2\cdot(d_G(t_1,v_i)+{\ell}_G(v_i,v_{i+1})+d_G(v_{i+1},t_2))
$$
$$
= (1+\eps)^2\cdot d_G(t_1,t_2) = (1+\eps)^2\cdot {\ell}_{H^*}(e).
$$

\end{proof}

Our actual helper graph, described next, differs from Mehlhorn's not only in the fact that it is approximate, but more importantly, it is a graph on the entire vertex set $V$ rather than just the terminals.
This is done for efficiency; the (admittedly vague) intuition is as follows. Updating each edge of $H$ after each update to $G$ is too costly because the closest (or even the $(1+\eps)$-closest) terminal to a vertex might change very frequently. 
But one can observe that most of these frequent changes are not important for the MST; there is usually a parallel edge that was not affected and whose length is within $(1+\eps)$ of the new edges that we would want to add. 
Figuring out which changes in $H$ are important and which are not does not seem tractable.
Instead, our new idea is to make the update only in a helper graph that is, on the one hand, much closer to $G$ than to Mehlhorn's helper graph (and therefore doesn't change too frequently), and on the other hand, has the same MST up to $(1+\eps)$ as Mehlhorn's. 
In other words, the idea is to let the MST algorithm do the work of figuring out which edges are relevant.

The properties that our actual helper graph will satisfy are described in the following lemma.
In words, each terminal $t$ has a disjoint component $C_t$ in $H$ of vertices that can be reached from $t$ with distance $0$ (due to edges of length zero in $H$).
The components morally represent the approximate Voronoi cells of the terminals.
And instead of demanding that each edge $(x,y)$ of the original graph be represented by an edge $(t_x,t_y)$ between the two corresponding terminals in the helper graph, we only demand that there be \emph{some} edge $(x',y')$ between the two corresponding components (where $x',y'$ need not be terminals).

\begin{lemma}
\label{lem:helperproperties}
Let $G=(V,E,\ell:E\to [L])$ be a graph with terminals $U \subseteq V$, and let $H=(V_H,E_H, \ell_H:E_H \to [2nL] \cup \{0\})$ be a weighted graph satisfying the following properties:
\begin{itemize}
    \item for each vertex $x \in V_H$ there is a unique terminal $t_x \in U$ such that $d_H(x,t_x)=0$,
    \item for each edge $(x,y) \in E$ there exists an edge $(x',y') \in E_H$ with $d_H(x',t_x)=0$ and $d_H(y',t_y)=0$, where $t_x$ and $t_y$ are $(1+\eps)$-closest terminals to $x$ and $y$ in $G$ (respectively), of length $\ell_H(x',y')\leq (1+\eps)\cdot(d_G(t_x,x)+\ell_G(x,y)+d_G(y,t_y))$.
\end{itemize}
Then, the MST of $H$ has length at most $2(1+\eps)^2$ times the minimum $U$-Steiner tree of $G$.
\end{lemma}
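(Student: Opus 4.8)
The plan is to reduce to \Cref{lem:apxMehlhorn} by contracting the zero-distance ``cells'' of $H$. For each terminal $t\in U$ set $C_t:=\{x\in V_H:\ d_H(x,t)=0\}$. By the first property of $H$, every vertex of $V_H$ lies in exactly one such set, so $\{C_t\}_{t\in U}$ partitions $V_H$, and applying that property to the vertex $t$ itself shows that distinct terminals lie in distinct cells. Moreover, every length-$0$ edge of $H$ has both endpoints in a common cell, while each $C_t$ is connected using only length-$0$ edges of $H$ (a length-$0$ path from $x\in C_t$ to $t$ stays inside $C_t$, every vertex on it being at $H$-distance $0$ from $t$); hence the connected components of the subgraph of $H$ formed by its length-$0$ edges are exactly the cells $C_t$.

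The first step is to form the graph $H'$ obtained from $H$ by contracting each cell $C_t$ into a single vertex named $t$ (deleting the resulting self-loops, keeping parallel edges); then $H'$ has vertex set $U$ and edge lengths in $[2nL]$, and I claim $\ell_{H'}(\mathrm{MST}(H'))=\ell_H(\mathrm{MST}(H))$. This is the standard ``contract the cheapest edges'' identity: running Kruskal's algorithm on $H$, the length-$0$ edges are processed first and build a spanning tree of each $C_t$ at zero cost, after which the partition maintained by the algorithm is exactly the vertex set of $H'$ and the remaining execution coincides with Kruskal's algorithm on $H'$, so the two spanning trees have equal total length. (One can also argue directly: lifting a spanning tree of $H'$ to $H$ and adding length-$0$ edges within the cells yields a spanning tree of $H$ of the same length, giving ``$\le$''; and contracting any MST of $H$, after replacing its intra-cell edges by length-$0$ ones wherever possible, yields a spanning tree of $H'$ of no greater length, giving ``$\ge$''.)

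The second step is to verify that $H'$ satisfies the two hypotheses of \Cref{lem:apxMehlhorn}. Hypothesis (1) holds since each $x\in V$ lies in a unique cell $C_{t_x}$ and, by assumption, $t_x$ is a $(1+\eps)$-closest terminal to $x$ in $G$. Hypothesis (2) holds since, given an edge $(x,y)\in E$, the second property of $H$ supplies an edge $(x',y')\in E_H$ with $x'\in C_{t_x}$, $y'\in C_{t_y}$ and $\ell_H(x',y')\le (1+\eps)\bigl(d_G(t_x,x)+\ell_G(x,y)+d_G(y,t_y)\bigr)$, which after contraction is exactly an edge $(t_x,t_y)\in E_{H'}$ of the required length (a harmless self-loop if $t_x=t_y$). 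Applying \Cref{lem:apxMehlhorn} to $H'$ bounds $\ell_{H'}(\mathrm{MST}(H'))$ by $2(1+\eps)^2$ times the length of the minimum $U$-Steiner tree of $G$, and combining with the equality of the first step proves the lemma. The only point that needs a little care is the MST-contraction identity — in particular the observation that the length-$0$ subgraph of $H$ has exactly the cells $C_t$ as its components — and everything else is routine bookkeeping.
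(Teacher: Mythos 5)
Your proposal is correct and follows essentially the same route as the paper's proof: contract the zero-distance cells $C_t$ of $H$, observe that the contracted graph satisfies the hypotheses of \Cref{lem:apxMehlhorn}, and note that the MST length is unchanged by the contraction (the paper expands an MST of the contracted graph by zero-length spanning trees of the cells, which is your Kruskal argument in reverse). You merely spell out details the paper leaves implicit, such as why the length-zero subgraph's components are exactly the cells.
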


\begin{proof}
For each terminal $t\in U$ let $C_t = \{ x \in V_H: d_H(t,x)=0 \}$ be the component of vertices at distance zero to $t$.
By the assumption on $H$, the sets $\{C_t\}_{t \in U}$ form a partition of $V_H$.
Let $\bar{H}$ be the graph obtained from $H$ by contracting each $C_t$ into a single vertex.
It immediately follows that $\bar{H}$ is a helper graph that satisfies requirements of Lemma~\ref{lem:apxMehlhorn} above, and therefore the MST of $\bar{H}$ has length at most $2(1+\eps)^2$ times the minimum $U$-Steiner tree of $G$.
But any MST of $\bar{H}$ can be expanded into an MST of $H$ with the same length, by taking a zero-length spanning tree for each component $C_t$; the shortest path tree rooted at $t$ is such a tree.
\end{proof}

\subsection{Maintaining the Helper Graph Dynamically: A Reduction to SSSP}
\label{sec:construction}

The main technical result towards proving Theorem~\ref{thm:dec_MinSteiner} is an algorithm for maintaining a helper graph with the properties required by Lemma~\ref{lem:helperproperties}.

\begin{lemma}
\label{lem:helpermaintenance}
Given a graph $G=(V,E,\ell:E \to [L])$ with terminal set $U \subseteq V$ and a sequence of length-increase updates, 
it is possible to maintain a helper graph $H=(V_H,E_H, \ell_H:E_H \to [2nL] \cup \{0\})$ satisfying the properties in Lemma~\ref{lem:helperproperties} with a deterministic algorithm such that:
\begin{itemize}
    \item The total time is $m\cdot n^{o(1)}\cdot \log^{O(1)}{L}$, and
    \item The recourse, i.e. number of edge-length changes to $H$ throughout the sequence, is $m\cdot n^{o(1)}\cdot \log^{O(1)}{L}$.
    \item Given an MST of $H$ of total length $mst(H)$ it is possible to produce a $U$-Steiner subgraph of $G$ of total length $\leq (1+\eps) \cdot mst(H)$, in time $m'\cdot n^{o(1)}\cdot \log^{O(1)}{L}$ where $m'$ is the number of edges in the subgraph. 
\end{itemize}
\end{lemma}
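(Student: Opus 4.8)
The plan is to construct a dynamic analogue of Mehlhorn's algorithm on top of the decremental $(1+\eps)$-SSSP data structure of~\cite{BGS21_arxiv}. First I would augment $G$ by a super-source $s^*$ joined to every terminal of $U$ by an edge of length $0$; the augmented graph still undergoes only length-increase updates, since the new edges are never touched, so I can run the deterministic decremental $(1+\eps)$-SSSP structure of~\cite{BGS21_arxiv} on it rooted at $s^*$. For each vertex $x$ this maintains a $(1+\eps)$-approximate distance $\hat d(x)$ to its nearest terminal together with a shortest-path forest; following the tree path from $x$ up to the vertex just before $s^*$ exposes an explicit terminal $t_x$, and since that tree path has $G$-length at most $\hat d(x)\le(1+\eps)\min_{t\in U}d_G(x,t)$, both $t_x$ is a $(1+\eps)$-closest terminal to $x$ and $d_G(x,t_x)\le\hat d(x)$.

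Next I would define the helper graph $H$ on vertex set $V_H=V$. Its zero-length edges are exactly the edges of that shortest-path forest (the SSSP tree with $s^*$ and its incident edges deleted); since every $G$-edge has length at least $1$, these are the only zero-length edges of $H$, so each $x$ lies at $H$-distance $0$ from $t_x$ and the components $C_t=\{x:d_H(t,x)=0\}$ partition $V$ into the approximate Voronoi cells, which is the first property of \Cref{lem:helperproperties}. The other edges of $H$ come from $G$: for each $(x,y)\in E$ with $t_x\ne t_y$ I put the edge $(x,y)$ into $E_H$ with length equal to $\hat d(x)+\ell_G(x,y)+\hat d(y)$ rounded up to the next integer power of $1+\eps$. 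Using $d_G(z,t_z)\le\hat d(z)\le(1+\eps)\min_t d_G(z,t)$, this length is at most $(1+\eps)^2(d_G(t_x,x)+\ell_G(x,y)+d_G(y,t_y))$, so after replacing $\eps$ by $\Theta(\eps)$ the second property of \Cref{lem:helperproperties} holds and the MST of $H$ is a $(2+\eps)$-approximate minimum $U$-Steiner tree.

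For the recourse and running-time bounds, the key observation is that all edge lengths, hence all $\hat d(x)$, are non-decreasing and lie in $\{0\}\cup\{1,\dots,O(nL)\}$, so the rounded value of $\hat d(x)$ changes at most $O(\eps^{-1}\log(nL))$ times over the whole update sequence. Each such change at $x$ triggers a refresh of the $H$-lengths of the $O(\deg_G(x))$ helper edges incident to $x$, totalling $O(m\eps^{-1}\log(nL))$ length changes, while the changes to the zero-length part of $H$ are charged directly to the recourse of the shortest-path forest. Thus, provided the SSSP structure can be made to expose the forest and the terminal assignments with total recourse $m\cdot n^{o(1)}\cdot\log^{O(1)}L$ --- the extra feature discussed around \Cref{lem:DStoHelper} --- the recourse of $H$, and the time to propagate it, is $m\cdot n^{o(1)}\cdot\log^{O(1)}L$, and the rest of the running time is the $m\cdot n^{o(1)}\cdot\log^{O(1)}L$ update cost of~\cite{BGS21_arxiv}. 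Finally, given an MST $T$ of $H$, I would output the $G$-subgraph consisting of the $G$-edge $(x,y)$ for every positive-length tree edge $(x,y)\in T$ together with, for each such edge, the two forest paths $x\leadsto t_x$ and $y\leadsto t_y$ read off the SSSP structure; this is connected and spans $U$ because $T$ is, and since the $G$-length of the detour substituted for $(x,y)$ is $\hat d(x)+\ell_G(x,y)+\hat d(y)\le\ell_H(x,y)$ (shared forest segments are counted once and only help), the total is at most $mst(H)$, and at most $(1+\eps)\cdot mst(H)$ once the $(1+\eps)$ slack in the paths the structure actually reports is accounted for; this extraction costs $m'\cdot n^{o(1)}\cdot\log^{O(1)}L$ time, where $m'$ is the output size.

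The combinatorial skeleton above is routine; the real difficulty sits in the step flagged above: a generic decremental $(1+\eps)$-SSSP need not maintain a shortest-path forest with small recourse, nor offer an efficient per-vertex query returning a $(1+\eps)$-nearest terminal and the tree path to it. Showing that the algorithm of~\cite{BGS21_arxiv} can be adapted to supply exactly these interfaces while keeping the total number of forest changes $m\cdot n^{o(1)}\cdot\log^{O(1)}L$ is the heart of the matter, and is what \Cref{lem:DStoHelper} together with the surrounding discussion will establish.
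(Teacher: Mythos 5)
Your sketch reproduces the paper's \emph{basic} construction (the one in Proposition~\ref{prop:simpler}): super-source $s^*$ over the terminals, zero-length edges along the maintained approximate SSSP tree, ``green'' edges of length $\tilde d(s^*,x)+\ell_G(x,y)+\tilde d(s^*,y)$ updated lazily at $(1+\eps)$ granularity, recourse charged to $O(\eps^{-1}\log(nL))$ estimate jumps per vertex. That much matches the paper. The gap is that you explicitly defer the part that is actually the heart of the proof: the data structure of~\cite{BGS21_arxiv} does \emph{not} maintain a single shortest-path forest inside $G$ with per-vertex nearest-terminal queries. It maintains $O(\log_\gamma nL)$ trees, one per distance scale, each accurate only for distances in its scale, and these trees live in emulators $\Htil^i$ whose edges are hopset edges and whose vertex sets contain auxiliary ``core'' vertices; hopset edges correspond to non-simple $G$-paths in which an edge can recur up to $n^{o(1)}$ times. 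Bridging this --- per-scale copies $x^1,\dots,x^a$ of every non-terminal vertex, $a^2$ copies of each green edge (one per pair of scales, with the scale-appropriate estimates), zero-length edges taken from the emulator trees rather than from $G$, tracing edges incident to the contracted source back to actual terminals, and re-proving the uniqueness of the zero-distance terminal and the $(1+\eps)$-closeness claim in this setting --- is precisely the content of the paper's full construction (Lemma~\ref{lem:DStoHelper}); writing ``this is what \Cref{lem:DStoHelper} will establish'' assumes the substance of the result rather than proving it.

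A second concrete gap is the output-sensitive extraction bound. Even granting a genuine forest in $G$, ``read off the two forest paths $x\leadsto t_x$ and $y\leadsto t_y$'' is bounded only by the total traversed length: many MST edges can hang off the same long root path, and through hopset edges each $G$-edge can be reported up to $n^{o(1)}$ times, so the naive read-off time is governed by $m''$ (edges reported with multiplicity), which can far exceed $m'$ (distinct edges output). The paper gets $m'\cdot n^{o(1)}\cdot\log^{O(1)}L$ only by combining the ``first/last $k$ edges of $\pi_e$'' query with a doubling-and-early-stop scan that halts as soon as an already-collected edge is met, and by using the guarantee that an edge appears at most $n^{o(1)}$ times on any root-to-leaf path to charge the wasted reports to newly added edges. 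Your proposal asserts the time bound but supplies neither the stopping rule nor the charging argument, so as written the third bullet of the lemma is unproven.
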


This lemma is proved in the next sections.
But first, let us explain how Theorem~\ref{thm:dec_MinSteiner} follows from it.

\begin{proof}[Proof of Theorem~\ref{thm:dec_MinSteiner}]
Let $G$ be the input graph that is undergoing a sequence of length-increase updates; let $G^{(i)}$ be the graph after the $i^{th}$ update. 
We use the algorithm in Lemma~\ref{lem:helpermaintenance} to maintain a helper graph $H$ as $G$ changes; let $H^{(i)}$ be the helper graph after the $i^{th}$ update. 
We use the fully-dynamic MST algorithm by Holm et al.~\cite{holm2001poly}, with $H$ as the input, to (explicitly) maintain an MST of $H$ as $G$ changes.
If after the $i^{th}$ update, there is a query asking for a $U$-Steiner subgraph for $G^{(i)}$, we first inspect the MST data-structure to obtain an MST $T^{(i)}$ of $H^{(i)}$.
We use the algorithm in Lemma~\ref{lem:helpermaintenance} again to expand the MST $T^{(i)}$ into a $U$-Steiner subgraph $S^{(i)}$ of $G^{(i)}$.
If the total length of the MST is $x$ then the total length of the Steiner subgraph is $\leq (1+\eps)\cdot x$, and therefore, since the helper graph satisfies the properties of Lemma~\ref{lem:helperproperties}, we get that $S^{(i)}$ is a $2(1+\eps)^3$-minimum $U$-Steiner subgraph for $G^{(i)}$.

The running time for maintaining the helper graph $H$ is $m\cdot n^{o(1)}\cdot \log^{O(1)}{L}$.
To bound the running time of the MST algorithm we first need to bound the number of updates we make to $H$ (not $G$) because $H$ is the input graph to the MST algorithm.
This number of updates is exactly the recourse and it is bounded by $m\cdot n^{o(1)}\cdot \log^{O(1)}{L}$.
The MST algorithm \cite{holm2001poly} has $\polylog(n)$ amortized update time (and is deterministic) and therefore it can support this number of updates in a total of $m\cdot n^{o(1)}\cdot \log^{O(1)}{L}$ time as well.
Finally, the time to produce the $U$-Steiner subgraph is $m'\cdot n^{o(1)}\cdot \log^{O(1)}{L}$ where $m'$ is the number of edges in the subgraph, as required by Theorem~\ref{thm:dec_MinSteiner}.
\end{proof}

\subsubsection{The Basic Construction}

In order to understand our final construction of the helper graph $H$, it is helpful to first see a simplified version that works assuming an SSSP data-structure with (slightly) stronger properties than currently achievable.

Let us first briefly overview how a decremental SSSP data-structure works, to clarify the context of some of the notation below. 
Naturally, the idea is to maintain a shortest-path tree rooted at the source $s$. 
For efficiency, it is desirable that the tree's depth be small.
For this reason, the data-structures (e.g.~\cite{henzinger2014decremental,BGS21_arxiv}) typically use a \emph{hopset}: by adding a set $E_{hopset}$ of $m \cdot n^{o(1)}$ edges to the graph, it is guaranteed that there exist a path from $s$ to any $v$ that is $(1+\eps)$-shortest-path and that only uses $n^{o(1)}$ edges; namely, it has only $n^{o(1)}$ hops.
A \emph{hopset-edge} $e \in E_{hopset}$ typically has a length that corresponds to the shortest path distance between its endpoints, and therefore the distance in the new graph is never shorter than the original distance.
The data-structure then maintains an approximate shortest-path tree $T_{hopset}$ on the graph with the hopset, of depth $n^{o(1)}$, which can be done efficiently.
The distances that it reports are exactly the distances in this tree.

\begin{proposition}
\label{prop:simpler}
Suppose that there is a deterministic data-structure that given a graph $G=(V,E,\ell:E \to [L])$ and a single-source $s \in V$, that undergoes a sequence of edge-length-increase updates, and supports the following operations in $m \cdot n^{o(1)}\cdot \log^{O(1)}{L}$ total time:
\begin{itemize}
    \item Explicitly maintains an estimate $\tilde{d}(s,v)$ for each $v\in V$, such that $d(s,v) \leq \tilde{d}(s,v) \leq (1+\eps) \cdot d(s,v)$. In particular, can report after each update, all vertices $v \in V$ such that $\tilde{d}(s,v)$ increased by a $\geq (1+\eps)$ factor (since the last time $v$ was reported).
    \item Explicitly maintains a tree $T_{hopset}$ rooted at $s$, such that $\tilde{d}(s,v)=d_{T_{hopset}}(s,v)$ for all $v \in V$. An edge of the tree is either from $E$ or from a set of \emph{hopset-edges} $E_{hopset} \subseteq V \times V$.
    In particular, can report after each update, all edges that are added or removed from $T_{hopset}$.
\end{itemize}
Moreover, it holds that each hopset-edge $e=(u,v) \in E_{hopset}$ of length $w$ is associated with a $u,v$-path $\pi_e$ in $G$ of length $w$. And that the number of times an edge $e \in E$ may appear on a single root-to-leaf path in $T_{hopset}$, either as an edge $e \in E$ or as an edge $e \in \pi_{e'}$ on the path of a hopset-edge $e' \in E_{hopset}$, is $n^{o(1)}$.  
Additionally, the data-structure supports the following query:
\begin{itemize}
    \item Given a hopset-edge $e$ and an integer $k\geq 1$, can return the first (or last) $k$ edges of $\pi_e$ in time $k \cdot n^{o(1)}\cdot \log^{O(1)}{L}$.
\end{itemize}
Then Lemma~\ref{lem:helpermaintenance} holds.
\end{proposition}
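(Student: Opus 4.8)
The plan is to run a \emph{single} instance of the assumed data structure and read the helper graph of Lemma~\ref{lem:helpermaintenance} directly off the tree it maintains. First, form $G^{+}$ from $G$ by adding an auxiliary vertex $s^{*}$ joined to every terminal of $U$ by an edge of length~$0$; since $U$ never changes, each length-increase update to $G$ is a length-increase update to $G^{+}$ (the star edges stay at length~$0$), so the update sequence can be fed to the assumed structure with source $s^{*}$, costing $m\cdot n^{o(1)}\cdot\log^{O(1)}L$ in total and maintaining $\tilde d(x):=\tilde d(s^{*},x)$ together with the low-hop tree $T_{hopset}$. Two facts will be used throughout: the only length-$0$ edges of $G^{+}$ are the star edges, so $\tilde d(x)=0$ iff $x\in U$ and $d_{G^{+}}(s^{*},x)=\min_{t\in U}d_G(x,t)$; and the first edge of the $G$-path $\pi_e$ underlying any hopset edge $e$ incident to $s^{*}$ is a star edge, since those are the only $G^{+}$-edges at $s^{*}$.

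\smallskip
\noindent\emph{The helper graph.} The helper graph has vertex set $V_H=V$. For every non-terminal $x$ we add one length-$0$ \emph{cell edge}: to $\mathrm{parent}_{T_{hopset}}(x)$ when that parent is not $s^{*}$, and otherwise to the terminal $c_x$ at which $\pi_{(s^{*},x)}$ begins (found with a single first-edge query). These cell edges form a rooted spanning forest of $V$ in which every non-terminal points ``towards'' a terminal, so each tree contains exactly one terminal; equivalently, $C_t:=\{v:d_H(v,t)=0\}$ over $t\in U$ partitions $V$, and every $x$ has a \emph{unique} terminal $t_x$ with $d_H(x,t_x)=0$. Finally, for each $(x,y)\in E$ we keep $(x,y)$ as an edge of $H$ but reweight it to $(1+\eps)\hat d(x)+\ell_G(x,y)+(1+\eps)\hat d(y)$, where $\hat d(x)$ is a lazily refreshed copy of $\tilde d(x)$ (see below).

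\smallskip
\noindent\emph{Correctness.} Tracing the tree path from $t_x$ down to $x$ (in the $c_x$ case, prefixing the $c_x$-to-$u$ portion of $\pi_{(s^{*},u)}$) produces a $G$-path of length at most $\tilde d(x)$, so $d_G(t_x,x)\le\tilde d(x)\le(1+\eps)d_{G^{+}}(s^{*},x)=(1+\eps)\min_{t}d_G(x,t)\le(1+\eps)d_G(t_x,x)$; hence $t_x$ is a $(1+\eps)$-closest terminal to $x$, and (using $\hat d(x)\le\tilde d(x)<(1+\eps)\hat d(x)$) the reweighted edge $(x,y)$ has both its endpoints at $H$-distance $0$ from their terminals and length in the window $[d_G(t_x,x)+\ell_G(x,y)+d_G(y,t_y),\;(1+O(\eps))\cdot(d_G(t_x,x)+\ell_G(x,y)+d_G(y,t_y))]$. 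Thus $H$ meets the hypotheses of Lemma~\ref{lem:helperproperties}, so $mst(H)$ is at most $2(1+O(\eps))$ times the minimum $U$-Steiner tree of $G$. For the expansion: $mst(H)$ equals the $\ell_H$-sum of the reweighted edges in any MST $T$ of $H$ (the remaining $T$-edges are length-$0$ cell edges), and those reweighted edges connect all cells; we expand each such $(x,y)$ into the $G$-walk $t_x\rightsquigarrow x\to y\rightsquigarrow t_y$ by reading the two tree paths out of $T_{hopset}$ and un-hopping hopset edges via the first/last-$k$-edges queries. These walks avoid $s^{*}$, so they stay in $G$, each tree-path part has length at most the current $\tilde d(\cdot)$, and so their union is a connected $U$-Steiner subgraph of $G$ of length at most $\sum_{e\in T}\ell_H(e)=mst(H)$, produced in time $m'\cdot n^{o(1)}\cdot\log^{O(1)}L$ with $m'$ its number of edges.

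\smallskip
\noindent\emph{Recourse and running time --- the main obstacle.} A cell edge changes only when $T_{hopset}$ changes, and each reported tree-edge swap flips only $O(1)$ parent pointers and hence $O(1)$ cell edges; the total number of reported tree changes fits in the structure's $m\cdot n^{o(1)}\cdot\log^{O(1)}L$ budget. We refresh $\hat d(x)$ (and the lengths of the $O(\deg(x))$ incident reweighted edges) only when $\tilde d(x)$ has grown by a $(1+\eps)$ factor; by monotonicity of $\tilde d$ and the range $[1,nL]$, each vertex triggers $O(\eps^{-1}\log nL)$ refreshes, so there are $\tilde{O}(m/\eps)$ length updates in all, located via the assumed ``report every $v$ whose $\tilde d(v)$ jumped'' feature. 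Feeding these $m\cdot n^{o(1)}\cdot\log^{O(1)}L$ updates to the deterministic fully-dynamic MST of Holm et al.~\cite{holm2001poly} maintains $mst(H)$ within the same budget. The main obstacle is exactly that this scheme needs the SSSP structure to expose, cheaply, (i) an explicit tree with reportable edge changes, (ii) reportable $(1+\eps)$-distance jumps, (iii) output-sensitive access to the $G$-paths underlying hopset edges, and (iv) the guarantee that a $G$-edge appears only $n^{o(1)}$ times on any root-to-leaf path of $T_{hopset}$ (so that un-hopping a tree path at query time costs $n^{o(1)}$ per output edge) --- precisely the strengthened hypotheses of the proposition. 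Removing (i)--(iv) and realizing them from an actual decremental $(1+\eps)$-SSSP data structure is the remaining work carried out in the subsequent sections.
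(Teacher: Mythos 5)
Your construction is essentially the paper's: add a super-source $s^*$ joined by length-$0$ edges to the terminals, run the assumed structure from $s^*$, build a helper graph whose length-$0$ (``cell''/red) edges mirror $T_{hopset}$ and whose reweighted (green) edges are the original edges with lazily refreshed $\tilde d$-terms, feed it to the fully-dynamic MST of Holm et al., and expand the MST's green edges into $G$-walks via the tree and the path queries. The correctness and recourse parts are fine (your explicit handling of a hopset edge $(s^*,u)$ with a first-edge query, and the explicit $(1+\eps)$ slack in the green-edge weights so that the expansion length is provably at most $mst(H)$, are legitimate minor variants of what the paper does).

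The genuine gap is the last claim of your expansion step: that the output is produced in time $m'\cdot n^{o(1)}\cdot\log^{O(1)}L$ where $m'$ is the number of \emph{distinct} edges in the output subgraph. Naively expanding the two tree paths of \emph{every} green MST edge does not give this bound: the same $G$-edges can be re-reported in the expansions of many different green MST edges, because many vertices of one cell share a long common tree-path prefix back to their terminal (e.g., one terminal with a long path of length $\Theta(n)$ ending at a hub whose neighbors each have a green MST edge into a different cell; each of those $k$ expansions re-lists the long path, so the number of reported edge occurrences is $\Theta(kn)$ while $m'=\Theta(n+k)$). The hypothesis you invoke — each $G$-edge appears at most $n^{o(1)}$ times on a single root-to-leaf path — only controls repetitions \emph{within one} expansion, not across the up to $|U|-1$ green MST edges, so ``$n^{o(1)}$ per output edge'' accounting fails. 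This is exactly why the proposition supplies the first/last-$k$ prefix queries: the paper's proof builds the Steiner subgraph $S'$ incrementally, scans each green edge's walk starting from the not-yet-connected terminal $t_y$, checks each encountered edge for membership in $S'$, expands hopset edges by repeated doubling of $k$, and \emph{stops the step at the first edge already in $S'$} (which already connects $t_y$ to the current $S'$); a charging argument using the $n^{o(1)}$ multiplicity bound then shows the reported-but-not-added edges are at most $n^{o(1)}$ times the newly added ones, yielding the $m'\cdot n^{o(1)}\cdot\log^{O(1)}L$ bound. This early-stopping/deduplication procedure and its accounting are missing from your proof and cannot be deferred to the SSSP implementation sections — they are part of proving this proposition.
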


The main difference between the requirements of Proposition~\ref{prop:simpler} and the features of the existing $(1+\eps)$-SSSP algorithm is simply that the data-structure explicitly maintains $O(\log{n})$ trees instead of one, and the $i^{th}$ tree is only guaranteed to approximate to within $(1+\eps)$ the distances $d(s,v)$ when they are in the range $[\gamma^{i-1},\gamma^i]$ (for some $\gamma\geq 2$).
Another minor gap is that the hopsets use auxiliary vertices, in addition to the shortcut edges.
These gaps will be handled in the next subsection.
For now, let us see how to obtain Proposition~\ref{prop:simpler}.

\begin{proof}[Proof of Proposition~\ref{prop:simpler}]
We start off with a trick similar to the way Mehlhorn's algorithm reduces the problem of computing the closest terminal to each vertex (i.e. the Voronoi cell of each terminal) to an SSSP problem.
Let $G'$ be the graph obtained from $G=(V,E_G,\ell_G:E_G \to [L] )$ by adding a super-source $s^*$ that is connected with edges of length $0$ to all terminals $U \subseteq V$ (see Figure~\ref{Figs:Tree_GR}).
\begin{figure}[ht]
  \begin{center}
    \ifarxiv
    \includegraphics[width=4.8in]{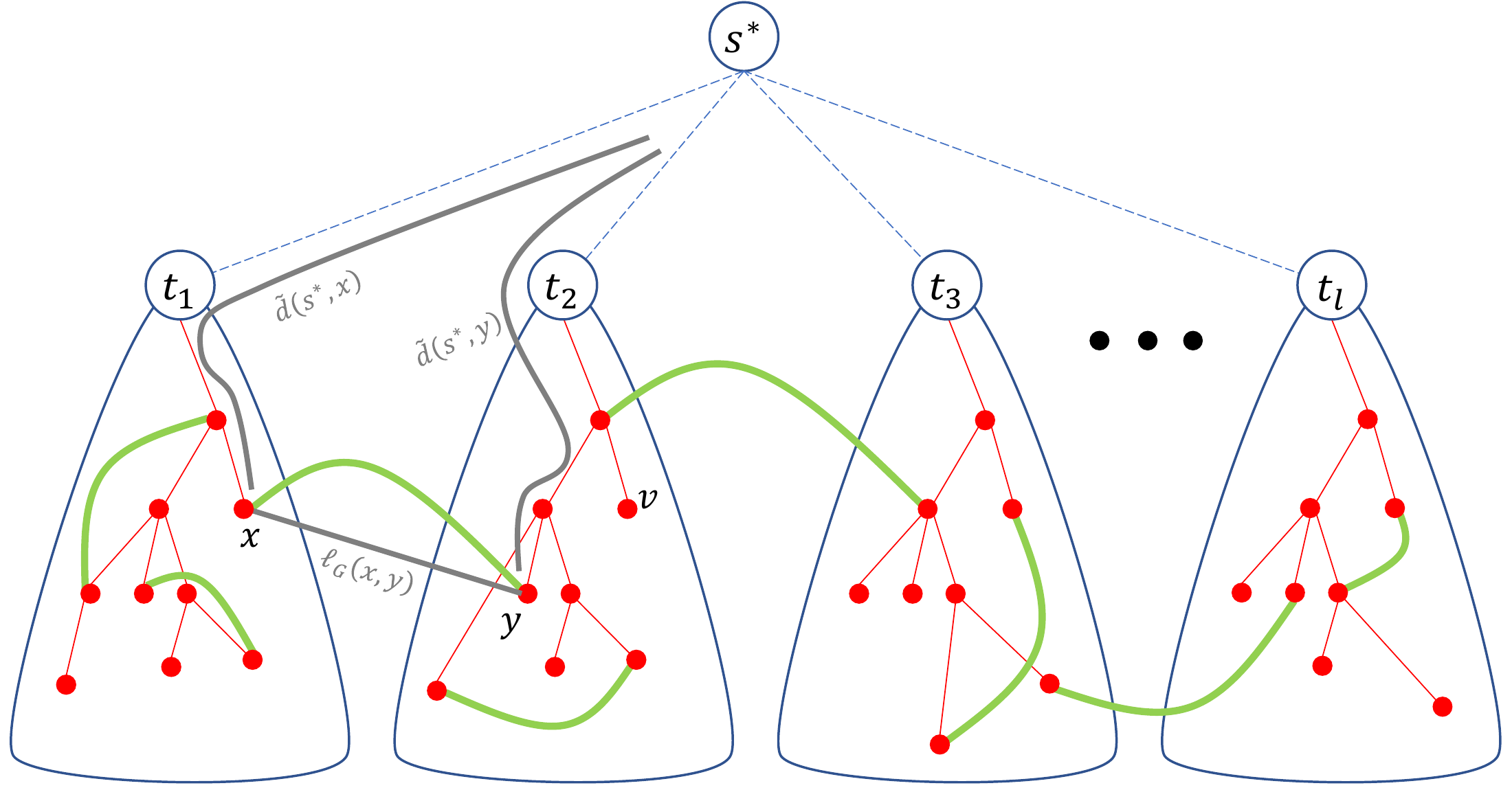}
    \else
    \includegraphics[width=4.8in]{Tree_GR}
    \fi
    \end{center}
    \caption{An illustration representing $H$ in the basic construction. Here, $d_{G'}(s^*,v)=\min_{t\in U}d_{G}(t,v)=d_G(t_2,v)$. Furthermore, the green edge $(x,y)\in H$ has length that is the sum of the gray curves.}
    \label{Figs:Tree_GR}
\end{figure}
Consequently, for any vertex $v \in V$, $d_{G'}(s^*,v) = \min_{t \in U} d_G(t,v)$, and moreover $v$ belongs to the subtree rooted at $\arg\min_{t \in U} d_G(t,v)$ in the SSSP-tree rooted at $s^*$.
Note that this can be achieved without really using edges of length $0$ by simply contracting all terminals into one vertex and calling it $s^*$.\footnote{The description below assumes we do add the zero-length edges. In the full construction in the next subsection we do in fact need to use the contractions, and so the arguments differ in minor ways.}
We will run the $(1+\eps)$-SSSP data-structure in the statement on $G'$ with $s^*$ as the source. Note that it is trivial to transform the length-increase sequence of updates to $G$ into a sequence of length-increase updates on $G'$.
Let $T_{hopset}$ and $\tilde{d}(s^*,v)$ be the tree and estimates maintained by the data-structure.

\paragraph{Maintaining a helper graph}
The helper graph $H=(V,E_H,\ell_H:E_H \to [2nL]\cup \{0\})$ is defined based on $G$, $T_{hopset}$ and $\tilde{d}(s^*,v)$ as follows.
Note that the vertex set of $H$ is the same as $G$'s; in our full construction (in the next subsection) there will be additional vertices.
There are two kinds of edges in $H$:

\begin{itemize}
    \item \emph{Green edges}: for each edge $(x,y) \in E_G$ we add an edge $(x,y)$ to $E_H$ with length 
    $$\ell_H(x,y) := \tilde{d}(s^*,x) + \ell_G(x,y) + \tilde{d}(s^*,y).$$ 
    A green edge gets updated whenever: (1) the length $\ell_G(x,y)$ gets updated and is more than $(1+\eps)$ times the value used in the current $\ell_H(x,y)$, or (2) the data-structure reports that $\tilde{d}(s^*,x)$ or $\tilde{d}(s^*,y)$ is increased by a $\geq (1+\eps)$ factor.
    \item \emph{Red edges}: for each edge $(x,y) \in E(T_{hopset})$ we add an edge $(x,y)$ of length $0$ to $E_H$.\footnote{The zero-length edges between $s^*$ and each terminal in $T$ are exempted, since $s^*$ does not exist in $H$.}
    A red edge gets updated whenever the tree $T_{hopset}$ changes. 
    If an edge is removed from the tree, the red edge is removed (i.e. the length is increased from $0$ to infinity),\footnote{Notably, these are the only incremental updates that we do.} and if a new edge is added to the tree we add an edge of length zero to $H$.
\end{itemize}

\paragraph{The recourse of the helper graph}
The data-structure will not only maintain the helper graph $H$ described above, but it will also feed it into a dynamic MST algorithm and explicitly maintain an MST of $H$.
For this to be efficient, we must argue that the total number of updates we make to $H$ is small.
%Before proving that the helper graph $H$ satisfies the properties in Lemma~\ref{lem:helperproperties}, let us upper bound the running time and recourse.
The running time of the data-structure is upper bounded by $m \cdot n^{o(1)}\cdot \log^{O(1)}{L}$, and since it stores the tree $T_{hopset}$ explicitly, this is also an upper bound on the total number of changes to $T_{hopset}$.
Whenever a red edge is updated it is because the corresponding edge was updated in $T_{hopset}$; this can happen $m \cdot n^{o(1)}\cdot \log^{O(1)}{L}$ times.
A green edge $(x,y)$ is updated either when the corresponding edge in $G$ has its length increase by at least a $(1+\eps)$ factor, or when $\tilde{d}(s^*,x)$ or $\tilde{d}(s^*,y)$ increase by $\geq (1+\eps)$.
Since the lengths and estimates are upper bounded by $nL$, they can only increase by a $(1+\eps)$ factor $O(\log{nL})$ times. Each time an estimate  $\tilde{d}(s^*,x)$  increases we would update $\deg_G(x)$ edges in $H$.
Thus, the total number of updates to the green edges is $O(m \log{n}) +\sum_{x \in V} \deg_G(x) \cdot O(\log {nL}) = O(m \log {nL})$.
It follows that the running time and total number of updates to $H$ is $m\cdot n^{o(1)}\cdot \log^{O(1)}{L}$.

\paragraph{The properties of the helper graph}
Next, let us prove that $H$ satisfies the two properties in Lemma~\ref{lem:helperproperties}.
For that, it is helpful to establish the following claims.

\begin{claim}
In the tree $T_{hopset}$ the root $s^*$ has all of the terminals in $U$ as children.
\end{claim}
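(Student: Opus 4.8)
The plan is to lean entirely on the defining property of $s^*$: in $G'$ it is joined to every terminal by an edge of length $0$. First I would record that this forces $d_{G'}(s^*,t)=0$ for all $t\in U$, and then apply the approximation guarantee of the data structure, $d_{G'}(s^*,v)\le \tilde d(s^*,v)\le (1+\eps)\,d_{G'}(s^*,v)$, to get $\tilde d(s^*,t)=0$ for every terminal $t$. Since $T_{hopset}$ is required to realize the maintained estimates exactly, i.e. $d_{T_{hopset}}(s^*,v)=\tilde d(s^*,v)$, the unique $s^*$-to-$t$ path inside $T_{hopset}$ has total length $0$, so every edge on it has length $0$.

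It then remains to upgrade ``the $s^*$-to-$t$ path has length $0$'' to ``$t$ is a direct child of $s^*$''. The structural route is to note that, apart from the length-$0$ edges $(s^*,t)$ with $t\in U$, every edge of $G'$ has length at least $1$ (and, since the hopset never introduces a shortcut shorter than a single edge, the same is true of the hopset edges). Hence the only zero-length edges of the graph on which $T_{hopset}$ lives are incident to $s^*$, so a simple path of total length $0$ starting at $s^*$ can consist of a single such edge only; thus the parent of $t$ is $s^*$. A fallback that avoids assuming anything about the hopset is to bake ``the parent edge of each $t\in U$ is $(s^*,t)$'' into the maintenance of $T_{hopset}$ as an invariant: swapping the direct length-$0$ edge in as $t$'s parent leaves all tree distances unchanged and keeps $T_{hopset}$ a valid approximate shortest-path tree, so it costs nothing and is compatible with every other requirement in Proposition~\ref{prop:simpler}.

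The step I expect to be the main obstacle is precisely this last upgrade, because a priori a zero-length path leaving $s^*$ in $T_{hopset}$ could have more than one hop: any two terminals are at distance $0$ from each other in $G'$ (through $s^*$), so a zero-length hopset shortcut between two terminals could in principle appear as a tree edge sitting below the edge $s^*$-to-$t$. I would therefore not try to derive the claim from generic properties of approximate SSSP trees, but rather either invoke the (entirely standard) fact that hopsets add no sub-unit-length edges, or simply maintain the ``parent $=s^*$'' invariant deterministically.
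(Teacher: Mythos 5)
Your argument is essentially the paper's: the guarantee $\tilde d(s^*,t)=d_{T_{hopset}}(s^*,t)\le(1+\eps)\cdot d_{G'}(s^*,t)=0$ forces a zero-length tree path from $s^*$ to $t$, and since every edge not incident to $s^*$ has positive length, that path must be the single edge $(s^*,t)$ --- the paper phrases this as a contradiction (a terminal lying in another terminal's subtree would make the estimate strictly positive), but it is the same reasoning. The subtlety you flag about a possible zero-length hopset edge between two terminals is real and is exactly what the paper's one-line justification (``all edges that are not adjacent to $s^*$ in $G'$ have non-zero length'') quietly assumes away, so your explicit fixes are, if anything, more careful than the paper's treatment.
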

\begin{proof}
Suppose for contradiction that $t_1 \in U$ is in the subtree of terminal $t_2 \in U$.
Then it must follow that $\tilde{d}(s^*,t_1)=d_{T_{hopset}}(s^*,t_1) = 0+ d_{T_{hopset}}(t_2,t_1)>0$ because all edges that are not adjacent to $s^*$ in $G'$ have non-zero length. On the other hand, $d_{G'}(s^*,t_1)=0$, contradicting the fact that $\tilde{d}(s^*,t_1) \leq (1+\eps)\cdot d_{G'}(s^*,t_1)$.
\end{proof}

The first of the two properties in Lemma~\ref{lem:helperproperties} now follows.
In the tree $T_{hopset}$, each vertex $x \in V \setminus U$ appears in the subtree of exactly one terminal $t_x \in U$ and, due to the red edges, has distance $d_H(x,t_x)=0$. For all other terminals $t' \neq t_x \in U$ the path from $t'$ to $x$ must use a green edge of length $>0$.

\begin{claim}
\label{cl:closest_terminal}
For all $x \in V$, $\tilde{d}(s^*,x) \leq (1+\eps)\cdot \min_{t \in U} d_G(t,x)$.
Moreover, if $x$ is in the subtree of terminal $t_x \in U$ then $t_x$ is a $(1+\eps)$-closest terminal to $x$.
\end{claim}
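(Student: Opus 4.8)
The plan is to read both assertions off the $(1+\eps)$-SSSP data-structure that runs on $G'$, and push the information back to $G$. For the first assertion, I would just note that in $G'$ the super-source $s^*$ is joined to every terminal by a zero-length edge, so $d_{G'}(s^*,x)=\min_{t\in U} d_G(t,x)$ for every $x\in V$; combining this with the approximation guarantee $\tilde d(s^*,x)\le (1+\eps)\, d_{G'}(s^*,x)$ provided by the data-structure immediately yields $\tilde d(s^*,x)\le (1+\eps)\min_{t\in U} d_G(t,x)$.

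For the second assertion, suppose $x$ lies in the subtree of $t_x$ in $T_{hopset}$. The unique $s^*$-to-$x$ path in $T_{hopset}$ then passes through $t_x$, so $\tilde d(s^*,x)=d_{T_{hopset}}(s^*,x)=d_{T_{hopset}}(s^*,t_x)+d_{T_{hopset}}(t_x,x)$. The prefix has length $d_{T_{hopset}}(s^*,t_x)=\tilde d(s^*,t_x)\le (1+\eps)\,d_{G'}(s^*,t_x)=0$ (consistent with the preceding claim that every terminal is a child of $s^*$), hence $\tilde d(s^*,x)=d_{T_{hopset}}(t_x,x)$. I would then lift the tree path from $t_x$ to $x$ to an honest walk in $G$: each edge on this path is either an edge of $E$ or a hopset-edge $e'$, and in the latter case we substitute its associated $G$-path $\pi_{e'}$ of the same length. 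Crucially this tree path stays inside the subtree of $t_x$, which excludes the root $s^*$ and the artificial zero-length terminal edges, and each $\pi_{e'}$ lives entirely in $G$; so the substitution produces a $t_x$-to-$x$ walk in $G$ of length $d_{T_{hopset}}(t_x,x)$. Therefore $d_G(t_x,x)\le d_{T_{hopset}}(t_x,x)=\tilde d(s^*,x)\le (1+\eps)\min_{t\in U} d_G(t,x)\le (1+\eps)\,d_G(t,x)$ for every $t\in U$, i.e.\ $t_x$ is a $(1+\eps)$-closest terminal to $x$.

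The only delicate point — and the one I expect to be the main obstacle to state cleanly — is the inequality $d_G(t_x,x)\le d_{T_{hopset}}(t_x,x)$: one must verify that the $t_x$--$x$ tree path cannot cheat by routing through $s^*$ or through the zero-length terminal edges of $G'$, which is exactly where we invoke both that $x$ sits in the subtree of $t_x$ (so $s^*$ is not on the path) and the promised property that every hopset-edge expands into a genuine $G$-path of equal length (so the expansion lives in $G$). Everything else is bookkeeping with the SSSP approximation factor.
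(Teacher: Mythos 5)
Your proof is correct and takes essentially the same route as the paper: the first part is the identical observation $d_{G'}(s^*,x)=\min_{t\in U}d_G(t,x)$ combined with the estimate guarantee, and the second part is exactly the paper's chain $d_G(t_x,x)\le d_{T_{hopset}}(t_x,x)=\tilde{d}(s^*,x)\le(1+\eps)\cdot\min_{t\in U}d_G(t,x)$. The only difference is that you spell out the justification of the middle inequality $d_G(t_x,x)\le d_{T_{hopset}}(t_x,x)$ (the tree path stays inside the subtree of $t_x$, so it avoids $s^*$, and hopset edges expand to paths of equal length), a step the paper asserts without further comment.
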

\begin{proof}
The first part follows from the fact that $d_{G'}(s^*,x)=\min_{t \in U} d_G(t,x)$, and the guarantee on the estimates.
For the second part note that $d_G(t_x,x) \leq d_{T_{hopset}}(t_x,x) = \tilde{d}(s^*,x) \leq (1+\eps)\cdot \min_{t \in U} d_G(t,x)$; thus, $t_x$ is a $(1+\eps)$-closest terminal to $x$.
\end{proof}

To prove the second property, let $(x,y) \in E$ be an edge in $G$; we will show that the green edge $(x,y) \in E_H$ satisfies the requirement.\footnote{That is, we prove the property with $x=x'$ and $y=y'$. This will not be the case in the full construction in the next subsection.}
Let $t_x$ and $t_y$ be such that $d_H(x,t_x)=0$ and $d_H(y,t_y)=0$.
By the above claim, we know that $t_x$ and $t_y$ are $(1+\eps)$-closest terminals to $x$ and $y$ (respectively) in $G$.
Finally, we can upper bound the length of the green edge $(x,y)$ in $H$ by 
$$\ell_H(x,y) = \tilde{d}(s^*,x) + \ell_G(x,y) + \tilde{d}(s^*,y) = \tilde{d}(t_x,x) + \ell_G(x,y) + \tilde{d}(t_y,y)$$
$$ \leq (1+\eps)\cdot(d_G(t_x,x)+\ell_G(x,y)+d_G(y,t_y)).$$

\paragraph{Expanding an MST of $H$ into a $U$-Steiner subgraph of $G$}
By the above we know that the length of an MST of $H$ is a $(2+\eps)$-approximation to the minimum length of a Steiner subgraph of $G$.
So if our goal was only to maintain an approximation to the \emph{length} of a Steiner-Tree, we would just use the length of a given MST of $H$ and be done.
Alas, for our applications the data-structure must return the subgraph itself, which is more complicated. The process of (efficiently) expanding an MST into a Steiner subgraph is as follows.

%Finally, we explain the process of (efficiently) expanding an MST of $H$ into a Steiner subgraph of $G$ of at most the same cost.
%
For a terminal $t \in U$ we define $t$'s component (or subtree or Voronoi cell) to be the set of vertices $\{x \in V: d_H(t,x)=0\}$ that are connected by red edges to $t$.
Pick an arbitrary terminal $t_1 \in U$ as a starting point.
We will construct a Steiner Subgraph $S'$ by starting from $S'=\{t_1\}$ and iteratively adding paths (in $G$) to it.
Let $U' \subseteq U$ be the set of terminals that are spanned by $S'$; initially $U'=\{t_1\}$.
In each step, we pick an arbitrary \emph{green} edge of the MST that has an endpoint in one of the components of the terminals in $U'$. 
Let $e=(x,y)$ be such an edge, and let $t_x,t_y$ be the two terminals such that $x,y$ are in their components (respectively). 
First, note that such an edge must exist as long as $U' \neq U$ (or else the MST does not span the entire graph).
Second, note that $t_x \in U'$ but $t_y \notin U'$ (otherwise there would be a cycle in the MST).
We would like to add to $S'$: (1) a path from $t_x$ to $x$, (2) the edge $(x,y)$, and (3) a path from $y$ to $t_y$.
This would connect $t_y$ to $t_x$ and therefore add $t_y$ to $U'$.
And we would like to have that the total length of (1) + (2) + (3) is at most $\ell_{H}(x,y)$.
Ignoring certain subtle details, this can be done by observing that $\ell_{H}(x,y) = d_{T_{hopset}}(t_x,x) + \ell_G(x,y) + d_{T_{hopset}}(t_y,y)$ and the distances in $T_{hopset}$ come from paths in $G$.
Namely, we can scan the path from $t_x$ to $x$ in $T_{hopset}$ and, whenever we encounter an edge from $E$ we simply add it to $S'$, while if we encounter an edge $e \in E_{hopset}$ we ask the data-structure to expand it into the path $\pi_e$ and add $\pi_e$ to $S'$.
If we do this, the total length of the final Steiner Subgraph $S'$ will be exactly $mst(H)$.

The subtle issue with the above is that the total running time for outputting a Steiner Subgraph with $m'$ edges may not be $m' \cdot n^{o(1)}\cdot \log^{O(1)}{L}$ even though every time the data-structure reports an edge it only costs $n^{o(1)}\cdot \log^{O(1)}{L}$ time.
Indeed, if $m''$ edges are reported throughout the above process, we only spend $m'' \cdot n^{o(1)}\cdot \log^{O(1)}{L}$ time, but due to edges being repeated many times, the number of edges that we finally output $m'$ could be much smaller than $m''$.\footnote{While the total length from the repetitions is still within our budget of $mst(H)$, the MWU framework does not allow us to spend more than $n^{o(1)}\cdot \log^{O(1)}{L}$ time on an edge we output, even if it is being repeated many times in the output.}
To resolve this issue we use a more careful process that uses the ability of the data-structure 
to list only the first or last $k$ edges on a path $\pi_e$ at cost only $k \cdot n^{o(1)}\cdot \log^{O(1)}{L}$; it turns out that if we see an edge that we have already seen, we can stop expanding $\pi_e$.

In more details, the step for expanding a green edge $(x,y)$ is as follows. We start from $t_y$ (rather than $t_x$) and scan the tree down to $y$. When we see an edge $e \in E$ we check if it is already in $S'$: 
\begin{itemize}
\item If it is not in $S'$ we add it to a set $S_{(x,y)}$ of edges that will be added to $S'$ at the end of the step. 
\item If $e$ happens to already be in $S'$, then we stop, add all edges in $S_{(x,y)}$ to $S'$, and finish the step of edge $(x,y)$.
This makes sense because the edges in $S_{(x,y)}$ form a path from $t_y$ to $e\in S'$ and all of $S'$ is connected to $t_x \in U'$, so we have accomplished the goal of connecting $t_y$ to $t_x$ (at a potentially cheaper length than $\ell_H(x,y)$).
\end{itemize}
And when we see an edge $e \in E_{hopset}$ we do the following for each $k=2^i, i=1,\ldots,\log{|\pi_e|}$: ask the data-structure to report the first (closest to $t_y$) $k$ edges of $\pi_{e}$, call them $e_1,\ldots,e_k \in E$, and check if they are already in $S'$:
\begin{itemize}
\item If none of them are in $S'$ we add them to the set $S_{(x,y)}$ of edges that will be added to $S'$ at the end of the step. 
\item Otherwise, let $e_i$ be the first (closest to $t_y$) reported edge that is already in $S'$. We stop the step, add all edges in $S_{(x,y)} \cup \{e_1,\ldots,e_{i-1} \}$ to $S'$, and finish the step of edge $(x,y)$.
The edges in $S_{(x,y)} \cup \{e_1,\ldots,e_{i-1} \}$ form a path from $t_y$ to $e_i\in S'$ and therefore to $t_x$, so we are done.
\end{itemize}
If we reach $y$ without stopping, we continue on to the edge $(x,y)$. Again, we check if it is already in $S'$; if not we add it to $S_{(x,y)}$, if it is, we stop and add all of $S_{(x,y)}$ into $S'$.
Similarly, after reaching $x$ we continue by scanning the tree up to $t_x$. When encountering an edge from $E$ or $E_{hopset}$ we do the exact steps above to decide if we should add it to $S_{(x,y)}$ or stop.
If we reach the $t_x$ without stopping, we add all of $S_{(x,y)}$ into $S'$ and finish the step.

We claim that the total time for the above process is $m' \cdot n^{o(1)}\cdot \log^{O(1)}{L}$ where $m'$ is the number of distinct edges that end up in $S'$. 
Consider the step corresponding to an edge $(x,y)$, and let $e^*$ be the first edge on the corresponding path from $t_y$ to $t_x$ (the one closest to $t_y$) that is already in $S'$ before the step. 
Let $e_1,\ldots,e_{i}$ be the $i$ edges that come before $e^*$ on the path, and let $m^*\leq i$ be the number of distinct edges among them; note that the step ends up adding $m^*$ new edges to $S'$ and our goal is to bound its running time by $m^* \cdot n^{o(1)}\cdot \log^{O(1)}{L}$.
By the assumption on the data-structure, each edge can appear only $n^{o(1)}$ times on a root-to-leaf path in $T_{hopset}$, and therefore $ m^* \geq i /n^{o(1)}$.
Due to the repeated doubling of $k$ in the above process, each edge might get reported an additional $O(\log{n})$ times, and more importantly, the number of edges $e'$ that come after $e^*$ on the path and that the data-structure reports (but do not end up in $S'$) is $\leq 2i$.
Therefore, the total time can be upper bounded by $O(i \log n) \cdot n^{o(1)}\cdot \log^{O(1)}{L}= m^* \cdot n^{o(1)}\cdot \log^{O(1)}{L}$.

\end{proof}

\subsubsection{The Full Construction}

We are now ready to present our actual construction. The proof is heavily based on the proof of the simpler construction above.
\begin{figure}[htbp]
  \begin{center}
    \ifarxiv
    \includegraphics[width=6.0in]{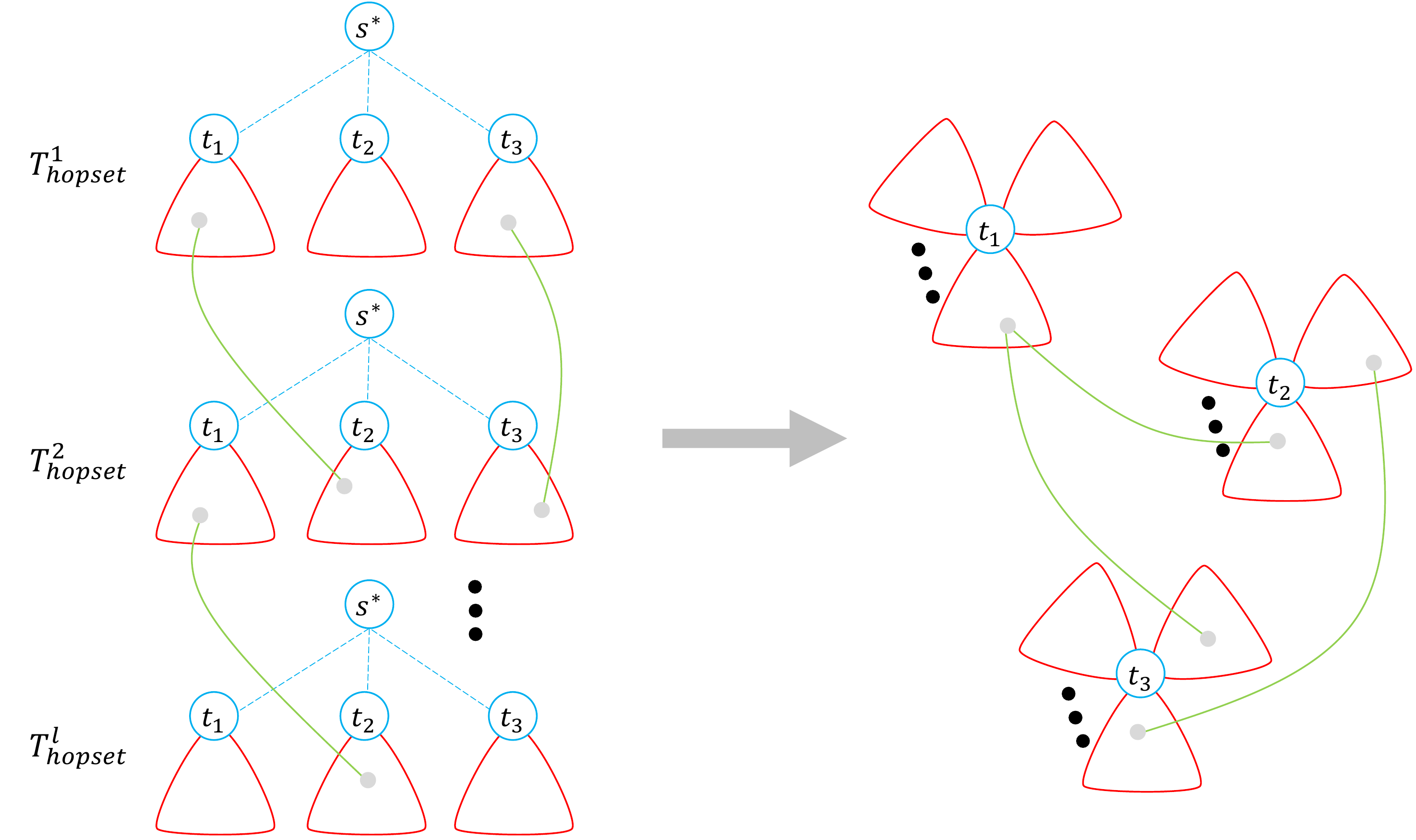}
    \else
    \includegraphics[width=6.0in]{Flowers}
    \fi
    \end{center}
    \caption{An illustration representing how the $O(\log{n})$ approximate SSSP trees maintained by the dynamic algorithm are combined in our helper graph.}
    \label{Figs:Flowers}
\end{figure}

\begin{lemma}
\label{lem:DStoHelper}
Suppose that for any constant $\eps>0$, there is a deterministic data-structure that given a graph $G=(V,E,\ell:E\to[L])$
and a source $s\in V$, that undergoes a sequence of edge-length-increase
updates of length $U$, and supports the following operations in $m\cdot n^{o(1)}\cdot\log^{O(1)}L+O(U)$
total time. There is a parameter $2\le\gamma\le n^{o(1)}$ such that,
for each $i\in[\log_{\gamma}nL]$:
\begin{itemize}
\item Explicitly maintains an estimate $\tilde{d}^{i}(s,v)$ for each $v\in V$,
such that $d(s,v)\leq\tilde{d}^{i}(s,v)$ and if $d(s,v)\in[\gamma^{i-1},\gamma^{i}]$
then $\tilde{d}^{i}(s,v)\leq(1+\eps)\cdot d(s,v)$. In particular,
can report after each update, all vertices $v\in V$ such that $\tilde{d}^{i}(s,v)$
increased. 
\item Explicitly maintains a graph $\Htil^{i}$ where $\Htil^{i}$ is not
a subgraph of $G$ and $V(\Htil^{i})$ contains all vertices of $G$ but
also a set $C^{i}$ of auxiliary vertices, which forms an independent
set in $\Htil^{i}$. 
For any pair of regular vertices $u,v\in V$, if $(u,v)\in E(\Htil^{i})$ has length $w$
or there is an auxiliary vertex $c\in C^{i}$ such that $(u,c),(c,v)\in E(\Htil^{i})$ is a $2$-path of length $w$, then there
is an associated $u,v$-path $\pi_{u,v}$ in $G$ of length between $w$ and $(1+\eps)\cdot w$.
Although $\pi_{u,v}$ is not a simple path, each
edge $e$ in $G$ may appear at most $\beta=n^{o(1)}$ times in $\pi_{u,v}$.
\item Explicitly maintains a tree $T_{hopset}^{i}$ rooted at $s$ in $\Htil^{i}$
of depth $n^{o(1)}$, such that $\tilde{d}^{i}(s,v)=d_{T_{hopset}^{i}}(s,v)$
for all $v\in V$. In particular, can report after each update, all
edges that are added or removed from $T_{hopset}$. 
\end{itemize}
Additionally, the data-structure supports the following query: 
\begin{itemize}
\item For each path $\pi_{u,v}$ in $\Htil^{i}$, given an integer $k\geq1$,
can return the first (or last) $k$ edges of $\pi_{u,v}$ in time $k\cdot n^{o(1)}\cdot\log^{O(1)}L$.
\end{itemize}

Then Lemma~\ref{lem:helpermaintenance} holds.
\end{lemma}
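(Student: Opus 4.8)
}
The plan is to repeat the construction in the proof of \Cref{prop:simpler}, closing the two gaps between \Cref{prop:simpler} and the data structure that is actually available: the $O(\log_\gamma nL)$ distance-scale trees $T_{hopset}^1,\dots$ in place of a single tree, and the auxiliary vertices $C^i$ used by the hopsets. As before, build $G'$ from $G$ by contracting all of $U$ into a super-source $s^*$ (equivalently, attach $s^*$ to every terminal with a length-$0$ edge, recording for each such edge which terminal it came from), run the data structure on $G'$ with source $s^*$, and translate each length-increase update of $G$ to one of $G'$. For a vertex $x\in V$ write $i^*(x)$ for the scale with $d_{G'}(s^*,x)\in[\gamma^{i^*-1},\gamma^{i^*}]$, so that $\tilde d^{\,i^*(x)}(s^*,x)\le(1+\eps)\min_{t\in U}d_G(t,x)$; since $d_{G'}(s^*,x)$ only increases, $i^*(x)$ takes $O(\log_\gamma nL)$ distinct values over the whole update sequence.

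The helper graph $H$ will be a \emph{flower}: for every scale $i$ it contains a private copy $x^{(i)}$ of each non-terminal vertex $x\in V\setminus U$ together with the auxiliary vertices of $C^i$ currently lying on $T_{hopset}^i$, while the terminals $U$ are shared across all scales. Its edges are (a) \emph{red} edges of length $0$, namely the edges of each $T_{hopset}^i$ restricted to the scale-$i$ copy and with the edges incident to $s^*$ deleted; and (b) \emph{green} edges: for every $(x,y)\in E_G$ and every pair of scales $(i,j)$, an edge between $x^{(i)}$ and $y^{(j)}$ of length $\tilde d^i(s^*,x)+\ell_G(x,y)+\tilde d^j(s^*,y)$, refreshed only when one of these three quantities has grown by a factor $1+\eps$ since its last refresh (with the obvious modification when $x$ or $y$ is itself a terminal). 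The analogue of the Claim that the terminals are exactly the children of $s^*$ in $T_{hopset}^i$ holds (this is where the contraction is used, to avoid a spurious length-$0$ hopset edge between two terminals), so every copy $x^{(i)}$ lies in the subtree of a unique terminal $t_x^{(i)}$, to which it is joined by red edges at distance $0$, while every other terminal is reachable from $x^{(i)}$ only through a positive-length green edge; this gives the first property of \Cref{lem:helperproperties} and shows that $t_x^{(i^*(x))}$ is a $(1+\eps)$-closest terminal to $x$. For the second property, for $(x,y)\in E_G$ use the green edge joining $x^{(i^*(x))}$ and $y^{(i^*(y))}$: its length is $\tilde d^{i^*(x)}(s^*,x)+\ell_G(x,y)+\tilde d^{i^*(y)}(s^*,y)\le(1+\eps)\bigl(d_G(t_x,x)+\ell_G(x,y)+d_G(y,t_y)\bigr)$, exactly as in the simpler proof. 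Hence \Cref{lem:helperproperties} applies and $mst(H)\le 2(1+\eps)^2\cdot\opt$, where $\opt$ is the minimum $U$-Steiner-tree length.

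For the time and recourse bounds: the data structure runs in $m\,n^{o(1)}\log^{O(1)}L+O(U)$ total time and maintains the trees $T_{hopset}^i$ explicitly, so the total number of changes to all these trees — hence to all red edges — is $m\,n^{o(1)}\log^{O(1)}L$; each $\tilde d^i(s^*,x)$ is monotone and bounded by $nL$, so it crosses a $(1+\eps)$-threshold $O(\log nL)$ times, and since each crossing refreshes at most $\deg_G(x)\cdot O(\log n)$ green edges (one per incident $G$-edge per scale of the other endpoint), the total number of green-edge updates is $O\bigl(m\log^2 n\log nL\bigr)=m\,n^{o(1)}\log^{O(1)}L$; with $n^{o(1)}$ bookkeeping per report, both the total maintenance time and the recourse of $H$ are $m\,n^{o(1)}\log^{O(1)}L$, as \Cref{lem:helpermaintenance} requires. (Alternatively one may keep a single green edge per $G$-edge, routed through the scales $i^*(x),i^*(y)$, and reroute it whenever $i^*$ changes; since $i^*$ changes $O(\log nL)$ times per vertex, this saves a $\log n$ factor.)

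Finally, the expansion of an MST of $H$ into a $U$-Steiner subgraph of $G$ proceeds exactly as in \Cref{prop:simpler}: grow $S'$ flower by flower, each step choosing a green MST edge from an already-reached flower $t_x$ to a new flower $t_y$ with endpoints $x^{(i)},y^{(j)}$, and adding to $S'$ the $G$-translation of the tree path $t_y\rightsquigarrow y$ in $T_{hopset}^j$, the edge $(x,y)$, and the tree path $x\rightsquigarrow t_x$ in $T_{hopset}^i$. Each segment of such a tree path between two consecutive regular vertices is either a single edge of $\Htil^i$ or, since $C^i$ is an independent set, a $2$-path through one auxiliary vertex, and in both cases the data structure supplies an associated $G$-path $\pi$ of length at most $1+\eps$ times that of the segment; summing over the $|U|-1$ green MST edges gives a Steiner subgraph of length $\le(1+\eps)\,mst(H)$. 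To keep this output-sensitive we reuse the doubling trick of \Cref{prop:simpler} — repeatedly ask for the first (or last) $2^k$ edges of each $\pi$ and stop as soon as an edge already in $S'$ is seen — which works because $T_{hopset}^i$ has depth $n^{o(1)}$ and each $\pi$ repeats any $G$-edge at most $\beta=n^{o(1)}$ times, so a $G$-edge occurs $n^{o(1)}$ times along the full expansion of a root-to-leaf path; this yields running time $m'\,n^{o(1)}\log^{O(1)}L$ for an $m'$-edge output, establishing the third bullet of \Cref{lem:helpermaintenance}. I expect the main obstacle to be the flower construction itself: arranging that every vertex of $H$ has a unique distance-$0$ terminal even though the $O(\log n)$ trees are each reliable only on their own distance scale, while keeping the recourse near-linear despite a vertex's ``correct'' scale — and hence its approximate Voronoi cell — drifting as lengths increase; the output-sensitive expansion is essentially inherited from \Cref{prop:simpler}, up to the minor extra care for auxiliary vertices.
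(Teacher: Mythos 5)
Your proposal is correct and takes essentially the same route as the paper's proof: per-scale copies of the non-terminal vertices with the terminals shared across scales, green edges of length $\tilde d^i(s^*,x)+\ell_G(x,y)+\tilde d^j(s^*,y)$ for every pair of scales, zero-length red edges coming from the trees $T_{hopset}^i$, the same recourse accounting, and the same output-sensitive MST expansion that handles the auxiliary vertices via $2$-paths and loses one extra $(1+\eps)$ factor. The one step to make explicit is that, because the terminals are contracted into $s^*$ and therefore are not vertices of $G'$ (so they cannot literally be children of $s^*$ in $T_{hopset}^i$), each tree edge $(s^*,y)$ should not simply be deleted but replaced by a zero-length red edge from $y$'s scale-$i$ copy to the terminal you recorded on that contracted edge; this trace-back attachment is exactly what gives every copy its unique distance-zero terminal, i.e., the first property of Lemma~\ref{lem:helperproperties}.
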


A data-structure with the above requirements indeed exists by the recent work of Bernstein~\emph{et al.} \cite{BGS21_arxiv}.
Since this exact formulation was not explicitly stated in \cite{BGS21_arxiv}, let us point out how to obtain it. 
In Section III.5 of \cite{BGS21_arxiv}, there are $\log_{\gamma}nL$ \emph{distance scales} of the data structures where $\gamma \in (2, n^{o(1)})$. 
For each scale $i$, the objects in \Cref{lem:helpermaintenance}, such as the estimates $\tilde{d}^i(v)$ for all $v$, are maintained by the $\textsc{ApxBall}^\pi$ data structure described in Sections II.5 and III.3 of \cite{BGS21_arxiv}.  $\Htil^i$ is called an \emph{emulator} in Definition II.5.2 in \cite{BGS21_arxiv}. A vertex $c\in C^{i}$ corresponds to a \emph{core} (see Definition II.2.8 of \cite{BGS21_arxiv}). The tree $T^i_{hopset}$ is the \emph{Monotone Even-Shiloach} tree maintained by $\textsc{ApxBall}^\pi$ as described in Section II.5.2.

The only guarantee which was implicit in \cite{BGS21_arxiv} is the last bullet point about returning the first/last $k$ edges in the path $\pi_{u,v}$ in $k \cdot n^{o(1)}\cdot\log^{O(1)}L$ time. This guarantee follows by inspecting Algorithm 7 of \cite{BGS21_arxiv}. When the algorithm needs to report the whole path $\pi_{u,v}$, it does so by recursively querying the data structure starting from the largest distance scale to the smallest distance scale. More precisely, the path $\pi_{u,v}$ is represented by a hierarchical tree of depth $\log_{\gamma}nL$ whose leaves correspond to edges of $\pi_{u,v}$ and internal vertices correspond to subpaths of $\pi_{u,v}$. Each internal vertex has $n^{o(1)}$ children ordered according to the ordering of the edges in the path. 
Now, when we query only for the first $k$ edges in the path, the algorithm would resemble the situation when we have a balanced binary search tree and we want to list the first $k$ leaves of this tree. For our setting, the tree has $\log_{\gamma}nL$ depth and each vertex has $n^{o(1)}$ children. 
By recursively querying the multi-level data structures in this way, we will obtain the first $k$ edges of the path in $k\cdot n^{o(1)}\cdot\log^{O(1)}L$ time.

\begin{proof}
The proof follows along the lines of the proof of Proposition~\ref{prop:simpler} but with a few modifications.
As before, we apply the data-structure on the graph $G'$ that is obtained from $G$ by adding a super-source $s^*$ that represents all terminals in $T$. Rather than adding zero-length edges between each $t\in U$ and $s^*$ we contract all terminals into a single vertex and call it $s^*$.
Note that each edge $(s^*,x)$ in $G'$ can be traced back to an edge $(t,x)$ for some $t \in U$.
Let $T^1_{hopset},\ldots,T^{a}_{hopset}$ and $\tilde{d}^1(s^*,v),\ldots,\tilde{d}^{a}(s^*,v)$ be the trees and estimates maintained by the data-structure, where $a=\log_{\gamma}nL=n^{o(1)}\log^{O(1)}L$.

\paragraph{Maintaining a helper graph}
The helper graph $H=(V_H,E_H,w_H:E_H \to [2nL]\cup \{0\})$ is defined based on $G$, the trees $T^i_{hopset}$ and estimates $\tilde{d}^i(s^*,v)$ as follows.
For each vertex $x \in V \setminus U$ in $G$ we create $a$ copies $ \{ x^1,\ldots,x^{a}\}$ in $V_H$.
A terminal vertex $t \in U \subseteq V$ however only has one copy $t \in V_H$.
Moreover, we add all auxiliary vertices in $C^i$ for all $i$ to $V_H$.
There are two kinds of edges in $H$:

\begin{itemize}
    \item \emph{Green edges}: for each edge $(x,y) \in E_G$ we add $a^2$ edges $(x^i,y^j)$ to $E_H$, for all $i,j \in [a]$, and set their length to
    $$\ell_H(x^i,y^j) := \tilde{d}^i(s^*,x) + \ell_G(x,y) + \tilde{d}^j(s^*,y).$$ 
           A green edge gets updated whenever: (1) the length $\ell_G(x,y)$ gets updated and is more than $(1+\eps)$ times the value used in the current $\ell_H(x^i,y^j)$, or (2) the data-structure reports that $\tilde{d}^i(s^*,x)$ or $\tilde{d}^j(s^*,y)$ increased by a $\geq (1+\eps)$ factor.

    \item \emph{Red edges}: for each edge $(x,y) \in E(T^i_{hopset})$ for some $i \in [a]$ we add an edge $(x^i,y^i)$ of length $0$ to $E_H$. If one of $x$ or $y$ is $s^*$, suppose \emph{w.l.o.g.} that it is $x$, then we trace the edge $(s^*,y)$ back to an edge $(t,x) \in E$ and we add an edge $(t,y^i)$ of length $0$ to $E_H$.
    A red edge gets updated whenever the tree $T^i_{hopset}$ changes. 
    If an edge is removed from the tree, the red edge is removed (i.e. the length is increased from $0$ to infinity),\footnote{As before, these are the only incremental update that we do.} and if a new edge is added to the tree we add an edge of length zero to $H$.
\end{itemize}

(See Figure~\ref{Figs:Flowers} for an illustration.)

\paragraph{The recourse of the helper graph}
The $m\cdot n^{o(1)}\cdot \log^{O(1)}{L}$ upper bound on the total running time and recourse of maintaining $H$ follows exactly as in the simpler construction above, with the exception that there is an additional $a^2=n^{o(1)}\log^{O(1)}L$ factor due to the number of times an edge is copied.

\paragraph{The properties of the helper graph}
Next, let us prove that $H$ satisfies the two properties in Lemma~\ref{lem:helperproperties}.

For the first property, consider any vertex $x^i \in V_H$, for some $x \in V \setminus U$, and our goal is to prove that there is exactly one terminal $t_x \in U$ such that $d_H(x^i,t_x)=0$.
First, observe that the only red edges that are adjacent to a vertex $x^i$ come from edges in the tree $T^i_{hopset}$.
Let $e=(s^*,u)$ be the first edge on the path from $s^*$ to $x^i$ in $T^i_{hopset}$. This edge can be traced back to an edge $(t_x,u) \in E$ where $t_x \in U$. 
Due to the red edges, we will have $d_H(x^i,t_x)=0$.
On the other hand, since $T^i_{hopset}$ is a tree, there cannot be any other path from $x^i$ to $s^*$ and consequently there will not be any other terminal $t' \in U$ with a red path to $x^i$ (via edges from $T^i_{hopset})$.
The only other way that $t'$ could potentially reach $x^i$ with a red path is if $t'$ could reach $t_x$ via edges from some other $T^j_{hopset}$ and then use the red path from $t_x$ to $x^i$. 
However, by construction, any red path between two terminals implies a loop in one of the trees $T^j_{hopset}$ -- a contradiction.

To prove the second property, let $(x,y) \in E$ be an edge in $G$, and let $i,j \in [a]$ be such that $d_{G'}(s^*,x) \in [\gamma^{i-1},\gamma^i]$ and $d_{G'}(s^*,y) \in [\gamma^{j-1},\gamma^j]$.
Consequently, $\tilde{d}^i(s^*,x) \leq (1+\eps)\cdot d_{G'}(s^*,x)$ and $\tilde{d}^j(s^*,y) \leq (1+\eps)\cdot d_{G'}(s^*,y)$.
We will show that the green edge $(x^i,y^j) \in E_H$ satisfies the requirement.
Let $t_x$ and $t_y$ be such that $d_H(x^i,t_x)=0$ and $d_H(y^j,t_y)=0$.
By an argument similar to Claim~\ref{cl:closest_terminal} from before, we know that $t_x$ and $t_y$ are $(1+\eps)$-closest terminals to $x$ and $y$ (respectively) in $G$.
This is because $d_G(t_x,x) \leq d_{T^i_{hopset}}(t_x,x) = \tilde{d}^i(s^*,x) \leq (1+\eps)\cdot d_{G'}(s^*,x) = (1+\eps)\cdot \min_{t \in U} d_G(t,x)$, and similarly $d_G(t_y,y) \leq \tilde{d}^j(s^*,y) \leq (1+\eps)\cdot \min_{t \in U} d_G(t,y)$.
Finally, we can upper bound the length of the green edge $(x^i,y^j)$ in $H$ by 
$$\ell_H(x^i,y^j) = \tilde{d}^i(s^*,x) + \ell_G(x,y) + \tilde{d}^j(s^*,y) \leq (1+\eps)\cdot (d_{G'}(s^*,x) + \ell_G(x,y) + d_{G'}(s^*,y))$$ 
$$\leq (1+\eps)\cdot(d_G(t_x,x)+\ell_G(x,y)+d_G(y,t_y)).$$

\paragraph{Expanding an MST of $H$ into a $U$-Steiner subgraph of $G$}
To conclude the proof we point out that the process of expanding an MST of $H$ into a Steiner Subgraph of $G$ can be carried out in exactly the same way as in the previous subsection, with the minor detail that we have to get rid of the auxiliary vertices $C^i$ if they are in the MST.
By assumption, there are no edges between them, so we can simply make the expansion on two-paths in the MST (that skip over auxiliary vertices) rather than edges.
This expansion of an edge or two-path of length $w$ into a path $\pi_{u,x}$ in $G$ of length up to $(1+\eps)\cdot w$ introduces a $(1+\eps)$ factor to the total length of the Steiner subgraph (that we did not have in the simple construction).
\end{proof}

\section{Conclusion}\label{sec:conclusion}

In this paper, we broke the longstanding cubic barrier for the \ghtree and \apmf problems in general, weighted graphs. All previous improvements since 1961, were either corollaries of speed-ups in (single-pair) max-flow algorithms, or were limited to special graph classes.
Assuming the \apsp Conjecture, a cornerstone of fine-grained complexity, our result disproves the belief that computing max-flows is at least as hard as computing shortest-paths.

Our algorithm has a running time of $\tO(n^2)$ which is nearly-optimal for \apmf if all ${n \choose 2}$ max-flow values must be returned. 
For \ghtree in unweighted graphs, our techniques yield an improved running time of $m^{1+o(1)}$. In fact, a succinct representation of all-pairs max-flows can be produced in the time of $\tilde{O}(1)$ calls to a (single-pair) max-flow algorithm.

It is an interesting open question as to whether our techniques can be extended to obtain an $m^{1+o(1)}$-time \ghtree algorithm for weighted graphs as well. In particular, the most significant challenge is to design a new dynamic Steiner tree subroutine (analogous to the one in \Cref{sec:unweighted}) that can be used for packing Steiner trees in weighted graphs in almost-linear time. %but there are significant technical challenges that must be overcome (see Section~\ref{sec:overview}). 
%\amir{added this sentence.}\debmalya{Amir, I prefer not adding this sentence. I can see this sentence sinking the paper, particularly for a reviewer who is peripherally interested in the topic. I've had papers rejected in the past when I tried to claim that the techniques are general enough that stronger results might be possible, with the argument that they don't want to see the same techniques in two papers and will wait for the stronger result.}

\appendix

\bibliographystyle{alphaurlinit}
\bibliography{reference}

\section{Reduction to Single-Source Terminal Mincuts Given a Promise}\label{appendix:reduction}

In this section, we present the reduction of \Cref{lem:reduction2}, restated below.
\Reduction*

Before giving the formal reduction, let us give a high-level description.
The original Gomory-Hu algorithm is a reduction to $n-1$ max-flow calls.
At a high-level, it starts with all vertices in one super-vertex $V'=V$ 
and recursively bi-partitions $V'\subseteq V$, until $V'$ contains only one vertex.
In any recursive step, the input to the algorithm is the set $V'\subseteq V$ (called terminals)
and a graph $G'$ that is formed from contracting sets of vertices in the input graph 
$G$ in previous recursive steps. In the current step, the algorithm
picks an arbitrary pair of vertices $s,t \in V'$ and computes an $(s,t)$-mincut in $G'$.
Then, it creates two recursive subproblems, where in each subproblem,
one side of the $(s,t)$-mincut cut is retained uncontracted and the other side is contracted 
into a single vertex. The new $V'$ in a subproblem comprises the vertices in 
$V'$ on the uncontracted side.

The contractions serve to enforce consistency between the cuts, so that we end up with 
a tree structure of cuts.
But, recent work has shown that these contractions also help with efficiency. 
%For instance, can we argue that the total size of all contracted graphs is $\tilde{O}(m)$?
%This bound need not hold when we use arbitrary pairs $s,t$
%(because the $(s,t)$-mincut might be unbalanced), 
%but recent results show that it can be achieved 
Suppose that instead of an arbitrary $s,t$ pair (which can lead to unbalanced cuts
and large recursive depth), we split using \emph{single-source mincuts} 
from a \emph{random pivot} $s\in V'$ to all other terminals $V'\setminus\{s\}$.
It turns out that with sufficiently high probability, the cuts from the random pivot 
to many other terminals only contain 
at most (say) $0.9$-fraction of the vertices in $V'$, 
and splitting according to all these cuts (instead of into two) 
leads to recursion whose depth is bounded by $O(\log n)$.

Effectively, this gives a reduction from \ghtree to the single-source mincuts problem, 
with only a multiplicative $\polylog(n)$ overhead in the running time.
%Our main contribution in this paper is a faster algorithm for the latter problem.
The random-pivot idea was first used by Bhalgat {\em et al.}~\cite{BHKP07} but the 
first general reduction to single-source mincuts is by Abboud {\em et al.} \cite{AKT20_b}. 
We use a refined reduction that has two additional features:
(1) the algorithm (for the single-source problem) only needs to return the values of the 
cuts rather than the cuts themselves (and in fact, it is given estimates and only needs 
to decide if they are correct), 
and (2) the mincut values between any pair of terminals in the graph are guaranteed 
to be within a (say) $1.1$-factor from each other (the so called ``promise''). 
The first restriction only serves to simplify our new algorithm, while the second 
is more important for our new ideas to work. %\footnote{At a high-level, it is enforced by having $O(\log (nW))$ stages, where stage $i$  computes a subtree corresponding to mincuts in the range $[(1.1)^{i-1},(1.1)^{i}]$.}
This is encapsulated in \Cref{lem:reduction2} which we prove below.

As discussed in the overview, we first remove the ``promise'' condition from the 
single source terminal mincuts problem (\Cref{problem:ssmc}). To be precise, we first formally 
define the single source terminal mincuts problem without the promise.

\begin{problem}[Single-Source Terminal Mincuts (without Promise)]\label{problem:ssmc-no-promise}
The input is a graph $G=(V,E,w)$, a terminal set $U \subseteq V$ and a source terminal $s\in U$. The goal is to determine the value of $\lambda(s,t)$ for each terminal $t\in U\setminus\{s\}$.
\end{problem}

We also require the approximate single-source mincuts algorithm of~\cite{LP21}, stated below, which we use as a black box.
\begin{theorem}[Theorem~1.7 of~\cite{LP21}]\label{lem:approx-ssmc}
Let $G=(V,E,w)$ be a graph, let $s\in V$, and let $\epsilon>0$ be a constant. There is an algorithm that outputs, for each vertex $v\in V\setminus\{s\}$, a $(1+\epsilon)$-approximation of $\lambda(s,v)$, and runs in $\tilde{O}(m\log\Delta)$ time plus $\textup{polylog}(n)\cdot\log\Delta$ calls to max-flow on $O(n)$-vertex, $O(m)$-edge graphs, where $\Delta$ is the ratio of maximum to minimum edge weights.
\end{theorem}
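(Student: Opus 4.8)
The plan is to obtain \Cref{lem:approx-ssmc} as a short corollary of the Cut‑Threshold procedure (\Cref{lem:ct}), by sweeping a geometric sequence of thresholds. Recall that \Cref{lem:ct}, given a source $s$ and a value $\overline\lambda$, reports the \emph{exact} set $\{v:\lambda(s,v)\le\overline\lambda\}$ using $\polylog(n)$ max‑flow calls on $O(n)$‑vertex, $O(m)$‑edge graphs plus $\tO(m)$ additional time. If we run it on $O(\log_{1+\epsilon}(n\Delta))$ thresholds spaced by a factor $1+\epsilon$, then for every vertex $v$ we pin down the value $\lambda(s,v)$ to within a $(1+\epsilon)$ factor, which is exactly what the theorem asks for.

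Concretely, first compute the connected components of $G$ in $O(m+n)$ time and output $\tilde\lambda(v)=0$ for every $v$ in a different component from $s$ (this also covers the degenerate case $\lambda(s,v)=0$). For the remaining vertices we have $w_{\min}\le\lambda(s,v)\le \sum_{e\ni s}w(e)=:D_s$, where $w_{\min}$ is the minimum edge weight (the lower bound holds because any $(s,v)$‑cut contains at least one edge, and the upper bound is witnessed by the cut $(\{s\},V\setminus\{s\})$). Put $L=\lceil\log_{1+\epsilon}(D_s/w_{\min})\rceil=O(\epsilon^{-1}(\log n+\log\Delta))$ and, for $i=0,1,\dots,L$, set $\overline\lambda_i=w_{\min}(1+\epsilon)^i$. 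For each $i$ invoke \Cref{lem:ct} with source $s$ and threshold $\overline\lambda_i$, obtaining $A_i=\{v:\lambda(s,v)\le\overline\lambda_i\}$; note $A_0\subseteq A_1\subseteq\cdots\subseteq A_L$ and that $A_L$ contains every vertex in the component of $s$ (since $\overline\lambda_L\ge D_s$). For each such $v$ let $i^*(v)=\min\{i:v\in A_i\}$ and output $\tilde\lambda(v)=\overline\lambda_{i^*(v)}$.

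For correctness: $v\in A_{i^*(v)}$ gives $\tilde\lambda(v)=\overline\lambda_{i^*(v)}\ge\lambda(s,v)$. Conversely, if $i^*(v)\ge 1$ then $v\notin A_{i^*(v)-1}$, so $\lambda(s,v)>\overline\lambda_{i^*(v)-1}=\overline\lambda_{i^*(v)}/(1+\epsilon)$, hence $\tilde\lambda(v)<(1+\epsilon)\lambda(s,v)$; and if $i^*(v)=0$ then $\tilde\lambda(v)=w_{\min}\le\lambda(s,v)$. Thus $\lambda(s,v)\le\tilde\lambda(v)\le(1+\epsilon)\lambda(s,v)$ for every $v$ in the component of $s$. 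For the running time, treating $\epsilon$ as a constant there are $L+1=O(\log(n\Delta))$ calls to \Cref{lem:ct}, each contributing $\polylog(n)$ max‑flow calls on $O(n)$‑vertex, $O(m)$‑edge graphs and $\tO(m)$ time outside them; summing yields $\polylog(n)\cdot\log\Delta$ max‑flow calls and $\tO(m\log\Delta)$ time outside them (the stray $\log n$ in $\log(n\Delta)$ is absorbed into the $\polylog$/$\tO$ notation, under the usual convention that $\Delta\ge 2$). Since \Cref{lem:ct} is Monte‑Carlo, so is the final algorithm; a union bound over the $O(\log(n\Delta))$ invocations keeps the overall failure probability $n^{-\Omega(1)}$.

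There is no real conceptual obstacle here—all the substance is inside \Cref{lem:ct}—so the only thing to be careful about is bookkeeping: handling the edge cases ($v$ disconnected from $s$, and $i^*(v)=0$) so that the multiplicative guarantee is not vacuous for tiny cuts, and verifying that the a priori count $O(\epsilon^{-1}(\log n+\log\Delta))\cdot\polylog(n)$ of max‑flow calls collapses to the claimed $\polylog(n)\cdot\log\Delta$. (One could instead prove \Cref{lem:approx-ssmc} directly from the Isolating‑Cuts Lemma (\Cref{lem:iso cut}) together with random vertex sampling and a Gomory–Hu‑style recursion, but routing through the Cut‑Threshold procedure is by far the shortest path given the tools already stated.)
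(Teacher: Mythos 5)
Your derivation is correct, but note that the paper does not actually prove this statement: it is imported verbatim as Theorem~1.7 of~\cite{LP21} and used strictly as a black box (the text introducing it says as much), so there is no internal proof to compare against. What you have done is reconstruct it from the \emph{other} black box the paper imports from the same source, the Cut-Threshold Lemma (\Cref{lem:ct}), via a geometric sweep of $O(\epsilon^{-1}\log(n\Delta))$ thresholds; this is a legitimate, self-contained reduction (no circularity, since \Cref{lem:ct} is an independent primitive), and it is essentially the standard way the single-source approximation is obtained from threshold queries in~\cite{LP21} itself. The correctness bookkeeping is right: $w_{\min}\le\lambda(s,v)\le D_s$ for $v$ in the component of $s$ pins the range to $O(\log_{1+\epsilon}(n\Delta))$ levels, the minimal index $i^*(v)$ gives a one-sided $(1+\epsilon)$ overestimate, and the $i^*(v)=0$ and disconnected cases are handled. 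Two cosmetic points worth being aware of. First, \Cref{lem:ct} guarantees max-flow calls on graphs that \emph{cumulatively} have $\tO(m)$ edges, which is not literally ``$\polylog(n)$ calls each on $O(m)$-edge graphs''; the two accountings are interchangeable for total running time under the usual superadditivity convention, but you are silently converting one into the other. Second, the collapse of $O(\log(n\Delta))\cdot\polylog(n)$ to $\polylog(n)\cdot\log\Delta$ requires the convention $\log\Delta\ge 1$, which you correctly flag; in this paper's setting ($w:E\to\{1,\dots,W\}$, $\tO$ hiding $\polylog(nW)$) this is harmless. Neither point is a gap.
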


We now present the reduction that removes the promise.
\begin{lemma}
There is a randomized algorithm for the unconditional Single-Source Terminal Mincuts problem (\Cref{problem:ssmc-no-promise}) that makes calls to the promise version of the Single-Source Terminal Mincuts problem (\Cref{problem:ssmc}) on graphs with a total of $O(n\log(nW))$ vertices and $O(m\log(nW))$ edges. Outside of these calls, the algorithm takes $\tilde{O}(m\log W)$ time plus $\textup{polylog}(n)\cdot\log W$ calls to max-flow on $O(n)$-vertex, $O(m)$-edge graphs.
\end{lemma}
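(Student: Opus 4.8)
The plan is a bucketing reduction: first use the approximate single-source algorithm of \Cref{lem:approx-ssmc} to learn the scale of every value $\lambda(s,t)$, then partition the terminals into $O(\log(nW))$ geometric buckets, and finally solve each bucket with a single call to the promise version (\Cref{problem:ssmc}) on the original graph $G$.

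Concretely, I would first invoke \Cref{lem:approx-ssmc} with source $s$ and a small constant accuracy $\eps_0$ (fixed below), and after a trivial rescaling obtain for every $t\in U\setminus\{s\}$ an estimate $\tilde\lambda(s,t)$ with $\lambda(s,t)\le\tilde\lambda(s,t)\le(1+2\eps_0)\lambda(s,t)$; terminals $t$ lying in a different connected component from $s$ (for which $\lambda(s,t)=0$) are detected at the start and reported directly. For the remaining terminals $1\le\lambda(s,t)\le mW$, since edge weights are positive integers. Fix $\rho>1$ with $\rho(1+2\eps_0)\le 1.1$ (for instance $\rho=1.05$, $\eps_0=0.02$) and define, for each integer $i$ with $0\le i\le\lceil\log_\rho(2mW)\rceil$, the bucket $B_i=\{\,t\in U\setminus\{s\}:\tilde\lambda(s,t)\in[\rho^i,\rho^{i+1})\,\}$; these buckets partition the non-isolated terminals, and there are $O(\log_\rho(mW))=O(\log(nW))$ of them since $m\le n^2$.

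The only point needing a short argument is that each nonempty bucket yields a valid promise instance. Set $U_i=B_i\cup\{s\}$ and $L_i=\min_{t\in B_i}\lambda(s,t)$. I claim $\lambda(U_i)=L_i$: the inequality $\lambda(U_i)\le L_i$ is immediate because $s$ and the minimizing terminal both lie in $U_i$, and for the reverse take any $a,b\in U_i$ and any $(a,b)$-mincut $(X,\overline X)$; the source $s$ lies on one of its two sides, so $(X,\overline X)$ also separates $s$ from one of $a,b\in B_i$ and hence has value at least $L_i$ (this uses only $\lambda(s,t)\ge L_i$ for all $t\in B_i$). Consequently $\lambda(U_i)=L_i\le\lambda(s,t)$ for every $t\in B_i$, while $\lambda(s,t)\le\tilde\lambda(s,t)<\rho^{i+1}$ and $L_i\ge\rho^i/(1+2\eps_0)$ give $\lambda(s,t)\le\rho(1+2\eps_0)L_i\le 1.1\,\lambda(U_i)$, so $(G,U_i,s)$ satisfies the promise of \Cref{problem:ssmc}. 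I would therefore call \Cref{problem:ssmc} on $(G,U_i,s)$ once for each nonempty bucket, read off $\lambda(s,t)$ for all $t\in B_i$, and output the union of these answers together with the $0$-values from the isolated terminals.

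For the accounting: each of the $O(\log(nW))$ calls is on $G$ itself ($n$ vertices, $m$ edges), so the promise-version calls are made on graphs with a total of $O(n\log(nW))$ vertices and $O(m\log(nW))$ edges, as claimed. Outside these calls the only work is the single invocation of \Cref{lem:approx-ssmc}, costing $\tilde O(m\log\Delta)=\tilde O(m\log W)$ time plus $\polylog(n)\cdot\log\Delta=\polylog(n)\cdot\log W$ max-flow calls on $O(n)$-vertex, $O(m)$-edge graphs (since $\Delta\le W$), together with $O(|U|)$ time for bucketing. I do not expect a real obstacle beyond the promise check above; the algorithm is randomized only because \Cref{lem:approx-ssmc} and the single-source subroutine are Monte Carlo, and a union bound over the $O(\log(nW))$ buckets keeps the overall failure probability polynomially small.
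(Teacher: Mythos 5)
Your proposal is correct and follows essentially the same route as the paper: invoke the approximate single-source mincut algorithm of \Cref{lem:approx-ssmc}, partition the terminals into $O(\log(nW))$ geometric buckets by their estimates, verify that each bucket together with $s$ satisfies the $1.1$-factor promise, and solve each bucket by one call to \Cref{problem:ssmc} on $G$. The only differences are cosmetic (one-sided vs.\ two-sided estimates, explicit handling of $\lambda(s,t)=0$, and a slightly more detailed verification that $\lambda(U_i)$ equals the minimum $\lambda(s,t)$ over the bucket), so there is nothing further to add.
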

\begin{proof}
Call \Cref{lem:approx-ssmc} with $\epsilon=0.01$; note that $\Delta\le W$ since the input graph is integer-weighted with maximum weight $W$. Let $\tilde\lambda(s,v)$ be the computed approximations, which satisfy $\frac1{1+\epsilon}\lambda(s,v)\le\tilde\lambda(s,v)\le(1+\epsilon)\lambda(s,v)$. For each integer $i$, let $V_i\subseteq V\setminus\{s\}$ be all vertices $v\in V\setminus\{s\}$ with $(1+\epsilon)^i\le\tilde\lambda(s,v)<(1+\epsilon)^{i+1}$. Note that $V_i$ is nonempty for only $O(\log(nW))$ many integers $i$, since $\tilde\lambda(s,v)$ must be in the range $[\frac1{1+\epsilon},(1+\epsilon)mW]$.

We first claim that for each $i$, the terminal set $U_i=\{s\}\cup V_i$ satisfies the promise of \Cref{problem:ssmc}, namely that for all $t\in V_i$, we have $\lambda(U_i)\le\lambda(s,t)\le1.1\lambda(U_i)$. To prove this claim, note that for each $v\in V_i$, 
\[ (1+\epsilon)^{i-1} = \frac1{1+\epsilon}(1+\epsilon)^i \le \frac1{1+\epsilon}\tilde\lambda(s,v) \le \lambda(s,v) \le (1+\epsilon)\tilde\lambda(s,v) < (1+\epsilon)(1+\epsilon)^{i+1} = (1+\epsilon)^{i+2} ,\]
so all $\lambda(s,v)$ values are in the range $[(1+\epsilon)^{i-1},(1+\epsilon)^{i+3}]$. Moreover, we must have $\lambda(U_i)\ge(1+\epsilon)^{i-1}$ since the Steiner mincut of $U_i$ is an $(s,v)$-mincut for some $v\in U_i$. It follows that $\lambda(s,v)\le(1+\epsilon)^3\lambda(U_i)\le1.1\lambda(U_i)$ by the choice $\epsilon=0.01$, concluding the claim.

Since each terminal set $U_i$ satisfies the promise condition of \Cref{problem:ssmc}, we can run an algorithm that solves \Cref{problem:ssmc} to determine the values $\lambda(s,v)$ for all $v\in V_i$. Since the sets $V_i$ partition $V\setminus\{s\}$, we have correctly computed all values of $\lambda(s,v)$.
\end{proof}

For the remainder of this section, we reduce Gomory-Hu tree to the unconditional Single-Source Terminal Mincuts (\Cref{problem:ssmc-no-promise}). The reduction is virtually identical to Section~4.5 of~\cite{Li21thesis}, and we include it for the sake of completeness.

\begin{theorem}\label{thm:ghtree:reduction}
There is a randomized algorithm that outputs a Gomory-Hu tree of a weighted, undirected graph w.h.p. It makes calls to the unconditional Single-Source Terminal Mincuts problem (\Cref{problem:ssmc-no-promise}) on graphs with a total of $\tilde{O}(n)$ vertices and $\tilde{O}(m)$ edges. It also calls max-flow on graphs with a total of $\tilde{O}(n)$ vertices and $\tilde{O}(m)$ edges, and runs for $\tilde{O}(m)$ time outside of these calls. %\debmalya{A single max-flow time or polylog max-flows?}
\end{theorem}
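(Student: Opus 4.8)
The plan is to follow the reduction of Li~\cite{Li21thesis} (Section~4.5), a randomized, contraction-based variant of the original Gomory--Hu recursion in which every single-pair mincut is replaced by one single-source computation from a random pivot. A recursive call is specified by a graph $G'$ that is a contraction of $G$ together with a terminal set $U'\subseteq V(G')$ consisting of images of a subset of the original vertices; the top call is $(G,V)$, and a call with $|U'|\le 1$ returns a trivial tree. In a call $(G',U')$ we sample a pivot $p\in U'$ uniformly at random and invoke the oracle for \Cref{problem:ssmc-no-promise} on $(G',p)$ to obtain every value $\lambda_{G'}(p,t)$, $t\in U'\setminus\{p\}$. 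We then recover, for each such $t$, the unique minimal side $S_t$ (with $t\in S_t$) of a $(p,t)$-mincut in $G'$. Since single-source mincuts can be uncrossed, the family $\{S_t\}_t$ is laminar, and one can extract it with only $\tO(1)$ max-flow calls per recursive call: bucket the terminals by $\lfloor\log\lambda_{G'}(p,t)\rfloor$ and, within each bucket, apply the Isolating-Cuts Lemma (\Cref{lem:iso cut}) to produce the minimal isolating cuts, which (after intersecting down the laminar family) yield the $S_t$. Let $S_1,\dots,S_k$ be the maximal members of $\{S_t\}$; they are disjoint and cover $U'\setminus\{p\}$. We recurse on each ``inside'' instance $\bigl(G'/(V(G')\setminus S_i),\, U'\cap S_i\bigr)$ and on the ``core'' instance $\bigl(G'/S_1/\cdots/S_k,\, \{p\}\cup\{\text{images of }S_1,\dots,S_k\}\bigr)$, and finally glue the returned subtrees along the contracted super-nodes, giving the edge incident to the image of $S_i$ the weight $\lambda_{G'}(p,t_i)$ for any representative $t_i\in S_i$.

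Correctness of a single level --- that gluing cut-equivalent (Steiner) trees of the inside and core instances produces a cut-equivalent tree for $(G',U')$, hence inductively a Gomory--Hu tree of $G$ --- is exactly the inductive step of the Gomory--Hu theorem adapted to Steiner terminals and a batched source; it rests on the uncrossing property (\Cref{lem:uncross}) together with the nestedness of the $S_i$, and I would import it verbatim from \cite{Li21thesis,AKT20_b}.

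The main work is bounding the recursion, and this is where randomness enters. The key point is a \emph{balance lemma}: for a uniformly random pivot $p\in U'$, with at least a constant probability every resulting subinstance (each inside instance and the core) has at most $(1-\Omega(1))|U'|$ terminals. Repeating the pivot $O(\log n)$ times inside each call and keeping the most balanced outcome then makes the recursion depth $O(\log n)$ with high probability, via a union bound over the (at most $n$) terminals and the recursion levels. Given depth $O(\log n)$, the cumulative-size bound is the contraction argument of \cite{AKT20_b}: at any fixed level the edge sets of the inside instances are essentially a partition of $E(G')$ (each edge of $G'$ goes to at most two inside instances or to the core, up to $O(1)$ extra edges per contracted super-node), so all graphs at one level together have $O(n)$ vertices and $O(m)$ edges; multiplying by the $O(\log n)$ levels gives $\tO(n)$ vertices and $\tO(m)$ edges in total, both for the single-source oracle calls (one per recursive call) and for the max-flow calls ($\tO(1)$ per recursive call), while gluing, bucketing, laminar-family extraction, and bookkeeping cost $\tO(m)$ outside these calls.

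I expect the balance lemma to be the crux. Making it precise requires rooting the \emph{unknown} Gomory--Hu Steiner tree $T$ of $(G',U')$ at the random pivot $p$, identifying each $S_t$ with the subtree of $T$ hanging below the minimum-weight edge on the $p$--$t$ path, and arguing that a random root makes it unlikely either that a single subtree retains almost all terminals or that there are too many maximal subtrees (so the core stays large). A secondary subtlety is confirming that the minimal laminar family of single-source mincuts can genuinely be recovered with $\tO(1)$ --- rather than $|U'|$ --- max-flow calls; this is exactly the regime the Isolating-Cuts Lemma was designed for, combined with the logarithmic bucketing by mincut value. Since both ingredients are established in \cite{Li21thesis,AKT20_b}, the proof reduces to instantiating their reduction with the value-returning oracle of \Cref{problem:ssmc-no-promise} and tallying the resource usage.
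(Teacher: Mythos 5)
There is a genuine gap, and it sits exactly at the two places you flag as the ``crux'' and the ``secondary subtlety.'' First, the step ``recover the minimal $(p,t)$-mincut side $S_t$ for \emph{every} terminal $t$ using $\tilde O(1)$ max-flow calls, by bucketing terminals by $\lfloor\log\lambda(p,t)\rfloor$ and running the Isolating-Cuts Lemma inside each bucket'' does not work. Within a bucket, the isolating cut for $t$ must separate $t$ from \emph{all other} bucket terminals, and those terminals may legitimately lie inside the true minimal $(p,t)$-side; in that case the isolating cut is a completely different (and more expensive) cut, and no amount of intersecting recovers $S_t$. A concrete obstruction: let the graph be a star with center $c$ and leaves of distinct edge weights, and let the pivot $p$ be a leaf that is not the maximum-weight leaf. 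For every leaf $t$ of larger weight, the \emph{unique} $(p,t)$-mincut is the singleton cut $\{p\}$, so $S_t=V\setminus\{p\}$; all such $t$ share one value-bucket, their isolating cuts are singletons of value far above $\lambda(p,t)$, and the family $\{S_t\}$ is simply not recoverable this way. This is precisely the difficulty the paper's \ref{ghtreestep} is built around: it never attempts to recover all $S_t$. Instead it runs the Isolating-Cuts Lemma at $O(\log|U|)$ geometric subsampling rates $R^i$, uses the oracle of \Cref{problem:ssmc-no-promise} only to \emph{verify} which isolating cuts happen to equal genuine $(s,v)$-mincuts with small terminal side, and splits off only those; the hitting argument of \Cref{lem:ghtree:step} (via the rooted minimal Gomory--Hu Steiner tree of \Cref{thm:ghtree:rooted}) shows this captures an expected $\Omega(1/\log^2|U|)$ fraction of the splittable terminals per round. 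That subsampling-and-verify mechanism is the missing idea in your proposal.

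Second, your balance lemma is false as stated, so even granting the recovery step the recursion would not have depth $O(\log n)$. In the same star example, for every pivot other than the center and the maximum-weight leaf (i.e., with probability $1-O(1/n)$), the single maximal set of the laminar family is $V\setminus\{p\}$, so one ``inside'' instance retains $|U'|-1$ terminals; taking the best of $O(\log n)$ random pivots does not help, and the depth becomes $\Theta(n)$, destroying the $\tilde O(n)/\tilde O(m)$ cumulative bounds. The statement that is actually true and used in the paper is much weaker: for a random pivot, $\Omega(|U|)$ terminals $v$ admit \emph{some} $(s,v)$-mincut whose $v$-side has at most $|U|/2$ terminals (\Cref{lem:ghtree:random-s}, imported from~\cite{AKT20_b}), and combined with the $\Omega(1/\log^2|U|)$ capture rate of \Cref{cor:step} this only yields an expected $(1-1/\log^2|U|)$ shrinkage of the large instance, hence recursion depth $O(\log^3 n)$ w.h.p.\ rather than $O(\log n)$. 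Your correctness/gluing step and the per-level accounting of vertices and edges are fine and match the paper (\Cref{lem:ghtree:correctness}, \Cref{lem:run}), but without the subsample-and-verify step and with the incorrect balance claim, the proposed reduction does not establish the theorem.
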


The precise Gomory-Hu tree algorithm that proves \Cref{thm:ghtree:reduction} is described in Algorithm~\ref{ghtree} a few pages down.

\subsection{Additional Preliminaries}

For this section, we need to formally define a generalization of Gomory-Hu trees called Gomory-Hu \emph{Steiner} trees that are more amenable to our recursive structure.
\begin{definition}[Gomory-Hu Steiner tree]
Given a graph $G=(V,E,w)$ and a set of terminals $U\subseteq V$, the Gomory-Hu Steiner tree is a weighted tree $T$ on the vertices $U$, together with a function $f:V\to U$, such that
 \begin{itemize}
 \item For all $s,t\in U$, consider the minimum-weight edge $(u,v)$ on the unique $s$--$t$ path in $T$. Let $U'$ be the vertices of the connected component of $T-(u,v)$ containing $s$.
Then, the set $f^{-1}(U')\subseteq V$ is an $(s,t)$-mincut, and its value is $w_T(u,v)$.
 \end{itemize}
\end{definition}

In our analysis, we use the notion of a {\em minimal} mincut and a {\em rooted minimal} Gomory-Hu tree. We define these next.

\begin{definition}[Minimal $(s,t)$-mincut]
A \emph{minimal} $(s,t)$-mincut is an $(s,t)$-mincut whose side $S\subseteq V$ containing $s$ is vertex-minimal.
\end{definition}

\begin{definition}[Rooted minimal Gomory-Hu Steiner tree]
Given a graph $G=(V,E,w)$ and a set of terminals $U\subseteq V$, a rooted minimal Gomory-Hu Steiner tree is a Gomory-Hu Steiner tree on $U$, rooted at some vertex $r\in U$, with the following additional property:
 \begin{itemize}
% MORE GENERAL BUT HARDER TO PROVE, SO WE SKIPPED -> \im[$(*)$] For all $s,t\in U$ where $s$ is a descendant of $t$ in the rooted tree $T$, consider the minimum-weight edge $(u,v)$ on the unique $s$--$t$ path in $T$; if there are multiple minimum weight edges, let $(u, v)$ denote the one that is {\em closest to $s$}. Let $U'$ be the vertices of the connected component of $T-(u,v)$ containing $s$.
 \item[$(*)$] For all $t\in U\setminus\{r\}$, consider the minimum-weight edge $(u,v)$ on the unique $r-t$ path in $T$; if there are multiple minimum weight edges, let $(u, v)$ denote the one that is {\em closest to $t$}. Let $U'$ be the vertices of the connected component of $T-(u,v)$ containing $r$.
Then, $f^{-1}(U')\subseteq V$ is a \emph{minimal} $(r,t)$-mincut, and its value is $w_T(u,v)$.
 \end{itemize}
\end{definition}

The following theorem
establishes the existence of a rooted minimal Gomory-Hu Steiner tree rooted at any given vertex.

\begin{theorem}\label{thm:ghtree:rooted}
For any graph $G=(V,E,w)$, terminals $U\subseteq V$, and root $r\in U$, there exists a rooted minimal Gomory-Hu Steiner tree rooted at $r$.
\end{theorem}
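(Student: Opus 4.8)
The plan is to prove \Cref{thm:ghtree:rooted} by induction on $|U|$, following the classical Gomory--Hu recursion but carrying \emph{minimal} mincuts through it. The only external tools are the submodularity of the cut function $\delta$ (already exploited in \Cref{lem:uncross}), its consequence posimodularity, and the standard locality fact that if $(X,V\setminus X)$ is an $(a,b)$-mincut and two vertices $s,t$ both lie in $X$, then $G$ has an $(s,t)$-mincut one of whose sides is contained in $X$. A preliminary observation, again from submodularity, is that for a fixed ordered pair $(r,t)$ the $r$-sides of $(r,t)$-mincuts are closed under intersection; hence there is a \emph{unique} vertex-minimal one, so ``the'' minimal $(r,t)$-mincut is well defined.

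For the base case $|U|=1$, output the single-vertex tree on $\{r\}$ with $f\equiv r$; property~$(*)$ is vacuous. For the inductive step, pick any $t^{*}\in U\setminus\{r\}$, let $R\subseteq V$ be the $r$-side of the minimal $(r,t^{*})$-mincut (so $r\in R$, $t^{*}\notin R$), and set $\lambda^{*}:=\delta(R)=\lambda(r,t^{*})$. Create two contracted recursive instances in the usual way: $G_{1}:=G/(V\setminus R)$ with terminal set $U\cap R$ and root $r$, and $G_{2}:=G/R$ with terminal set $U\setminus R$ (keeping $t^{*}$) and an appropriately chosen root; write $v_{\overline R}$ and $v_{R}$ for the two contracted vertices. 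Each instance is a contraction of $G$, hence has at most $m$ edges, and each has strictly fewer terminals than $U$ (since $t^{*}$ is lost in $G_{1}$ and $r$ is lost in $G_{2}$), so by the induction hypothesis we obtain rooted minimal Gomory--Hu Steiner trees $(T_{1},f_{1})$ and $(T_{2},f_{2})$ for the two instances. Let $b:=f_{1}(v_{\overline R})$ and $a:=f_{2}(v_{R})$, form $T$ from $T_{1}\cup T_{2}$ by adding the edge $\{a,b\}$ of weight $\lambda^{*}$, root $T$ at $r$, and define $f\colon V\to U$ by $f|_{R}:=f_{1}|_{R}$ and $f|_{V\setminus R}:=f_{2}|_{V\setminus R}$.

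It then remains to verify that $(T,f)$ is a Gomory--Hu Steiner tree and satisfies property~$(*)$, by a case analysis on the location, along the $r$--$t$ path in $T$, of the edge selected by the rule of property~$(*)$, for each $t\in U\setminus\{r\}$. The easy cases are $t=t^{*}$ (the selected cut is exactly $R$, an $(r,t^{*})$-mincut that is minimal by construction) and $t\in U\cap R\setminus\{r\}$ (the $r$--$t$ path stays inside $T_{1}$; the statement for $(T_{1},f_{1})$ transfers to $G$ because a cut of $G_{1}$ is a cut of $G$ with $V\setminus R$ on the non-$r$ side, and the locality fact ensures a minimal $G_{1}$-cut lifts to a minimal $G$-cut for the same pair). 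The remaining case $t\in U\setminus R$ is the heart of the matter: the $r$--$t$ path has the form $r\leadsto b\to a\leadsto t$, and one argues according to whether the selected edge lies on the $T_{1}$-part, is the glue edge $\{a,b\}$, or lies on the $T_{2}$-part, transferring the inductive statement in each subcase. The main obstacle is the glue-edge subcase, where the candidate $(r,t)$-cut is $R$ itself and one must show that $R$ is the \emph{minimal} $(r,t)$-mincut even though it was chosen minimal only for $(r,t^{*})$: this is precisely where uncrossing enters, since if $R'\subsetneq R$ were the $r$-side of an $(r,t)$-mincut then, because $t^{*}\in V\setminus R\subseteq V\setminus R'$, the set $R'$ would also be the $r$-side of an $(r,t^{*})$-cut of value $\lambda^{*}$, contradicting the minimality of $R$ for $(r,t^{*})$. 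Symmetric uncrossing arguments---intersecting and unioning cut sides via submodularity and using the tie-breaking rule that takes the minimum-weight edge closest to the root---handle the $T_{1}$-part and $T_{2}$-part subcases and show that minimality is inherited correctly from the two recursive trees. Assembling all cases completes the induction and proves \Cref{thm:ghtree:rooted}.
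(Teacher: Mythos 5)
Your route is genuinely different from the paper's: the paper proves \Cref{thm:ghtree:rooted} by a short perturbation argument (add an $\epsilon$-weight star at $r$, take an ordinary Gomory--Hu Steiner tree of the perturbed graph, and re-weight its edges by cut values in $G$), whereas you re-run the Gomory--Hu recursion and try to carry minimality through the contractions. Unfortunately, the step you dismiss as easy is exactly where the argument fails: minimality does \emph{not} transfer through contraction. The ``locality fact'' you invoke only says that \emph{some} $(r,t)$-mincut survives the contraction of $V\setminus R$, not the vertex-minimal one, and the minimal one can cross the contracted set. Concretely, take $V=\{r,t,t^*,x\}$, $U=\{r,t,t^*\}$, $w(r,t)=w(r,t^*)=10$, $w(t,x)=w(t^*,x)=1$. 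All pairwise mincut values are $11$; the $r$-sides of $(r,t^*)$-mincuts are $\{r,t\}$ and $\{r,t,x\}$, so your $R=\{r,t\}$. Contracting $V\setminus R=\{t^*,x\}$ to $v_{\overline R}$ gives $G_1$ with $w(r,t)=10$, $w(r,v_{\overline R})=10$, $w(t,v_{\overline R})=1$, whose unique $(r,t)$-mincut has $r$-side $\{r,v_{\overline R}\}$, un-contracting to $\{r,t^*,x\}$ --- but the vertex-minimal $r$-side $(r,t)$-mincut in $G$ is $\{r,t^*\}$, which splits the contracted set. So whichever $t^*$ you split on, the glued tree assigns a non-minimal cut to the other terminal. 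In fact, in this graph no tree and map $f$ can realize ``the side containing $r$ is vertex-minimal'' for both $t$ and $t^*$ simultaneously: one would need $f(t^*)$ in the first root component but not the second, $f(t)$ in the second but not the first, $f(r)$ in both, and $f(x)$ in neither, i.e.\ four nonempty regions inside a three-element terminal set. Hence the statement in the form you are proving (minimal root side, with your tie-break toward the root) is not merely unproved by your sketch; it is false, so the hand-waved ``symmetric uncrossing arguments'' for the $T_1$- and $T_2$-subcases cannot be filled in.

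The underlying issue is which side the theorem is really about. What the paper's perturbation construction actually delivers (the $\epsilon$-star at $r$ penalizes vertices on the side \emph{away} from $r$), and what the later reduction uses (in \Cref{lem:ghtree:step} the subtree side $f^{-1}(U_{r(v)})$, i.e.\ the side \emph{not} containing the root, must coincide with the minimal isolating cut $S^i_v$), is the cut whose side not containing the root is vertex-minimal --- equivalently the root side is maximal --- with ties on the tree path broken toward $t$, as in the paper's definition, not toward the root as you stipulate. For that version, your glue-edge uncrossing points in the wrong direction (you exclude mincuts with strictly \emph{smaller} $r$-side, whereas one must exclude strictly \emph{larger} ones), and $R$ should be the maximal-$r$-side (minimal-$t^*$-side) $(r,t^*)$-mincut; one then needs a genuine transfer lemma (via submodularity, every $(r,t)$-mincut with minimal $t$-side has $r$-side containing $R$ when $t\notin R$), plus a fix for the second recursive call, where ``an appropriately chosen root'' hides the real difficulty: the induction hypothesis gives minimality relative to a terminal of $U\setminus R$, while what is needed is minimality relative to the contracted vertex $v_R$, which is not a terminal. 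An inductive proof along your lines can probably be repaired with these changes, but as written the key steps fail, whereas the paper's perturbation argument sidesteps all of them by reducing directly to the existence of an ordinary Gomory--Hu Steiner tree.
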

\begin{proof}
Let $\epsilon>0$ be a small enough weight, and let $G'$ be the graph $G$ with an additional edge $(r,v)$ of weight $\epsilon$ added for each $v\in V\setminus\{r\}$. (If the edge $(r,v)$ already exists in $G$, then increase its weight by $\epsilon$ instead.) If $\epsilon>0$ is small enough, then for all $t\in V\setminus\{r\}$ and $S\subseteq V$, if $S$ is an $(r,t)$-mincut in $G'$, then $S$ is an $(r,t)$-mincut in $G$.

Let $(T',f)$ be a Gomory-Hu Steiner tree for $G'$. We claim that it is essentially a minimal Gomory-Hu Steiner tree for $G$, except that its edge weights need to be recomputed as mincuts in $G$ and not $G'$. More formally, let $T$ be the tree $T'$ with the following edge re-weighting: for each edge $(u,v)$ in $T$, take a connected component $U'$ of $T-(u,v)$ and reset the edge weight of $(u,v)$ to be $\delta_G(f^{-1}(U'))$ and not $\delta_{G'}(f^{-1}(U'))$. We now claim that $(T,f)$ is a minimal Steiner Gomory-Hu tree for $G$.

We first show that $(T,f)$ is a Gomory-Hu Steiner tree for $G$. Fix $s,t\in U$, let $(u,v)$ be the minimum-weight edge on the $s$--$t$ path in $T'$, and let $U'$ be the vertices of the connected component of $T'-(u,v)$ containing $s$. Since $(T',f)$ is a Gomory-Hu Steiner tree for $G'$, we have that $f^{-1}(U')$ is an $(s,t)$-mincut in $G'$. If $\epsilon>0$ is small enough, then by our argument from before, $f^{-1}(U')$ is also an $(s,t)$-mincut in $G$. By our edge re-weighting of $T$, the edge $(u,v)$ has the correct weight. Moreover, $(u,v)$ is the minimum-weight edge on the $s$--$t$ path in $T$, since a smaller weight edge would contradict the fact that $f^{-1}(U')$ is an $(s,t)$-mincut.

We now show the additional property $(*)$ that makes $(T,f)$ a minimal Gomory-Hu Steiner tree. Fix $t\in U\setminus\{r\}$, and let $(u,v)$ and $U'$ be defined as in $(*)$, i.e., $(u,v)$ is the minimum-weight edge $(u,v)$ on the $r-t$ path that is closest to $t$, and $U'$ is the vertices of the connected component of $T-(u,v)$ containing $r$. Since $(T,f)$ is a Gomory-Hu Steiner tree for $G$, we have that $f^{-1}(U')$ is an $(r,t)$-mincut of value $w_T(u,v)$. Suppose for contradiction that $f^{-1}(U')$ is not a \emph{minimal} $(r,t)$-mincut. Then, there exists $S\subsetneq f^{-1}(U')$ such that $S$ is also an $(r,t)$-mincut. By construction of $G'$, $\delta_{G'}S=\delta_GS+|S|\epsilon$ and $\delta_{G'}(f^{-1}(U')) = \delta_G(f^{-1}(U'))+|f^{-1}(U')|\epsilon$. We have $\delta_GS=\delta_G(f^{-1}(U'))$ and $|S|<|f^{-1}(U')|$, so $\delta_{G'}S<\delta_{G'}(f^{-1}(U'))$. In other words, $f^{-1}(U')$ is not an $(r,t)$-mincut in $G'$, contradicting the fact that $(T',f)$ is a Gomory-Hu Steiner tree for $G'$. Therefore, property $(*)$ is satisfied, concluding the proof.
\end{proof}

\subsection{A Single Recursive Step}\label{sec:ghtree:step}
Before we present Algorithm~\ref{ghtree}, we first consider the subprocedure \ref{ghtreestep} that it uses on each recursive step.%, which mirrors the procedure \textsc{CutThresholdStep} from \cite{LiP21approximate}.

\begin{algorithm}
\mylabel{ghtreestep}{\textsc{GHTreeStep}}\caption{\ref{ghtreestep}$(G=(V,E,w),s,U)$} 
\begin{enumerate}
\item Initialize $R^0\gets U$ and $D\gets\emptyset$
\item For all $i$ from $0$ to $\lfloor\lg|U|\rfloor$ do:
 \begin{enumerate}
 \item Call the Isolating Cuts Lemma (\Cref{lem:iso cut}) on the singleton sets $\{v\}$ for $v\in R^i$, obtaining disjoint sets $S^i_v$ (the minimal $(v,R^i\setminus v)$-mincut) for each $v\in R^i$. \label{line:Sv}
 %Compute minimal minimum isolating cuts $\{S^i_v:v\in R^i\}$ on inputs $G$ and $R^i$ \alert{TODO: define minimal minimum isolating cuts} 
 \item Call \Cref{problem:ssmc-no-promise} on graph $G$ and source $s$.
 \item Let $D^i\subseteq R^i$ be the union of $S^i_v\cap U$ over all $v\in R^i\setminus\{s\}$ satisfying $\delta S^i_v=\lambda(s,v)$ and $|S^i_v\cap U|\le|U|/2$\label{line:D}
 \item $R^{i+1}\gets$ subsample of $R^i$ where each vertex in $R^i\setminus \{s\}$ is sampled independently with probability $1/2$, and $s$ is sampled with probability $1$
 \end{enumerate}
%\State $D\gets D^0\cup D^1\cup \cds\cup D^{\lf\lg|U|\rf}$
\item Return the largest set $D^i$ and the corresponding sets $S^i_v$ over all $v\in R^i\setminus\{s\}$ satisfying the conditions in line~\ref{line:D} %$D^0\cup D^1\cup \cds\cup D^{\lf\lg|U|\rf}$
\end{enumerate}
\end{algorithm}

Let $D=D^0\cup D^1\cup \cdots\cup D^{\lfloor\lg|U|\rfloor}$ be the union of the sets $D^i$ as defined in Algorithm~\ref{ghtreestep}. Let $D^*$ be all vertices $v\in U\setminus \{s\}$ for which there exists an $(s,v)$-mincut whose $v$ side has at most $|U|/2$ vertices in $U$. We now claim that $D$ covers a large fraction of vertices in $D^*$ in expectation.% The lemma below is almost identical to Lemma~2.5 in \cite{LiP21approximate}; the only difference is that \textsc{CutThresholdStep} in \cite{LiP21approximate} focuses on solving what they call the Cut Threshold problem, whereas we tackle the Gomory-Hu tree problem directly.

\begin{lemma}\label{lem:ghtree:step}
 $\mathbb E[|D\cap D^*|] = \Omega(|D^*|/\log|U|)$. 
\end{lemma}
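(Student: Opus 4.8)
The plan is to show that already a \emph{single} well‑chosen iteration $i$ of \ref{ghtreestep} satisfies $\mathbb E\bigl[|D^i\cap D^*|\bigr]=\Omega(|D^*|/\log|U|)$; since $D\supseteq D^i$, this gives the lemma. First I would fix notation: for $v\in U\setminus\{s\}$ let $B_v\subseteq V$ be the $v$‑side of the \emph{minimal} $(s,v)$‑mincut (vertex‑minimal $v$‑side) and $k_v=|B_v\cap U|$. Since the minimal mincut is contained in every $(s,v)$‑mincut (submodularity, cf.\ \Cref{lem:uncross}), a vertex $v$ belongs to $D^*$ exactly when $k_v\le|U|/2$. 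I would then record two standard uncrossing facts about these minimal cuts: (i) if $u\in B_v\cap U$ then $B_u\subseteq B_v$ — hence $k_u\le k_v$, and when $k_v\le|U|/2$ this yields $B_v\cap U\subseteq D^*$; and (ii) the family $\{B_w\}_{w\in U\setminus\{s\}}$ is laminar, so for a fixed $v$ the cuts $\{B_u:u\in U,\ v\in B_u\}$ (equivalently, those $B_u\supseteq B_v$) form a chain under inclusion.

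The combinatorial core will be a covering mechanism for line~\ref{line:D}. I would prove: if at iteration $i$ a vertex $u\neq s$ with $k_u\le|U|/2$ has $R^i\cap(B_u\cap U)=\{u\}$, then $u$ qualifies in line~\ref{line:D} and moreover the whole set $B_u\cap U$ enters $D^i$. The argument: $B_u$ is a $(u,R^i\setminus u)$‑cut of value $\lambda(s,u)$, and since $s\in R^i\setminus u$ we get $\lambda(u,R^i\setminus u)=\lambda(s,u)$ and $B_u$ is a minimum such cut; any strictly smaller minimum $(u,R^i\setminus u)$‑cut would be a strictly smaller $(s,u)$‑mincut, contradicting minimality of $B_u$. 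Hence the minimal isolating cut computed in line~\ref{line:Sv} of \ref{ghtreestep} is exactly $S^i_u=B_u$, so $\delta S^i_u=\lambda(s,u)$, $|S^i_u\cap U|=k_u\le|U|/2$, and $S^i_u\cap U=B_u\cap U\subseteq D^i$.

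Next I would turn this into a probability estimate. Fix $v\in D^*$ and iteration $i$, list the distinct cuts among $\{B_u:u\in U,\ v\in B_u,\ k_u\le|U|/2\}$ as a chain $B^{(1)}\subsetneq\dots\subsetneq B^{(q)}$ (its least element is $B_v$), and put $k^{(r)}=|B^{(r)}\cap U|$ and $W_r=\{u\in U:B_u=B^{(r)}\}$. Let $\mathcal E_r$ be the event ``$R^i\cap(B^{(r)}\cap U)$ is a single vertex lying in $W_r$''; by the mechanism above, $\mathcal E_r$ implies $v\in D^i$. I would observe that the $\mathcal E_r$ are pairwise disjoint: for $r'>r$, fact~(i) forces every $u\in W_{r'}$ to lie outside $B^{(r)}$, so $\mathcal E_{r'}$ entails $R^i\cap(B^{(r)}\cap U)=\emptyset$. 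Since each non‑$s$ vertex of $U$ is in $R^i$ independently with probability $2^{-i}$ and $s\notin B^{(r)}$, this gives $\Pr[v\in D^i]\ge\sum_r|W_r|\,2^{-i}(1-2^{-i})^{k^{(r)}-1}$. Summing over $v\in D^*$ and exchanging the order of summation — a relevant cut $B$ (i.e.\ $B=B_u$ with $k_B:=|B\cap U|\le|U|/2$) appears in the chain of exactly the $k_B$ vertices of $B\cap U$, all in $D^*$ by (i), and the sets $W_B$ partition $D^*$ — I would obtain
\[
  \mathbb E\bigl[|D^i\cap D^*|\bigr]\ \ge\ 2^{-i}\sum_{B}k_B\,|W_B|\,(1-2^{-i})^{k_B-1},\qquad\text{with}\quad\textstyle\sum_B|W_B|=|D^*|.
\]

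Finally, the choice of $i$ is a simple averaging over scales. Bucket the relevant cuts by $j$ with $k_B\in(2^{j-1},2^j]$ for $j=0,\dots,\lceil\lg(|U|/2)\rceil$; with $N_j$ the total $|W_B|$ in bucket $j$ we have $\sum_jN_j=|D^*|$, so some scale $j^*$ has $N_{j^*}=\Omega(|D^*|/\log|U|)$. Taking $i=j^*$ (a legal iteration, since $j^*\le\lfloor\lg|U|\rfloor$) and keeping only the cuts of bucket $j^*$, for which $2^{-j^*}k_B>\tfrac12$ and $(1-2^{-j^*})^{k_B-1}=\Omega(1)$, the displayed bound gives $\mathbb E\bigl[|D^{j^*}\cap D^*|\bigr]=\Omega(N_{j^*})=\Omega(|D^*|/\log|U|)$, and hence $\mathbb E\bigl[|D\cap D^*|\bigr]=\Omega(|D^*|/\log|U|)$. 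The step I expect to be the real obstacle is the covering mechanism of the second paragraph: one has to commit, for each $v$, to precisely the right witness vertices $u$ so that the \emph{minimal} isolating cut $S^i_u$ coincides with $B_u$ — which is what simultaneously forces $S^i_u$ to meet the value/size tests of line~\ref{line:D} and to contain $v$ — and this is exactly where the minimality of the cuts from \Cref{lem:iso cut}, the uncrossing inequality $B_u\subseteq B_v$, and laminarity are all indispensable; everything after that is bookkeeping and the averaging above.
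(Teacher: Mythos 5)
Your proof is correct, but it takes a genuinely different route from the paper's. The paper first establishes the existence of a \emph{rooted minimal} Gomory-Hu Steiner tree rooted at $s$ (\Cref{thm:ghtree:rooted}, via a perturbation argument), uses it to define, for each $v\in D^*$, the subtree $U_{r(v)}$ below the lowest-weight edge on the $v$--$s$ path, and then runs a hit-counting argument: (a) if $v$ is the unique sampled terminal of $U_{r(v)}$ at scale $i=\lfloor\lg n_{r(v)}\rfloor$ then $S^i_v=f^{-1}(U_{r(v)})$ and all of $U_{r(v)}$ enters $D$; (b) each $v\in D^*$ hits $\Omega(1)$ vertices in expectation; (c) each vertex is hit at most once per iteration, hence $O(\log|U|)$ times — and divides. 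You avoid the Gomory-Hu Steiner tree entirely: you work directly with the laminar family of minimal $(s,u)$-mincut sides $B_u$, prove the same core mechanism (unique sampled terminal inside $B_u$ forces $S^i_u=B_u$, which then passes both tests of line~\ref{line:D} and dumps $B_u\cap U$ into $D^i$), lower-bound $\Pr[v\in D^i]$ by summing disjoint events along the chain of cuts containing $v$, and finish with a bucketing/averaging argument that identifies a single good scale $j^*$ with $\mathbb E[|D^{j^*}\cap D^*|]=\Omega(|D^*|/\log|U|)$. Your uncrossing facts ($u\in B_v\Rightarrow B_u\subseteq B_v$, laminarity, hence the chain structure) are exactly the submodularity/posimodularity arguments you sketch, and they go through; the choice of $j^*$ is deterministic (it depends only on the cut structure, not the sampling), so restricting to that one iteration is legitimate, and $j^*\le\lceil\lg(|U|/2)\rceil\le\lfloor\lg|U|\rfloor$ is indeed a legal index. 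What each approach buys: your version handles the ``$\cap\,D^*$'' part more cleanly (by your fact~(i), every covered terminal is itself in $D^*$, a point the paper's proof leaves implicit) and bounds a single $D^i$ rather than the union, which is closer to what \Cref{cor:step} actually needs; the paper's version leans on the rooted minimal Gomory-Hu Steiner tree machinery it has already built (and reuses in the correctness proof of \ref{ghtree}), and its per-vertex scale $i=\lfloor\lg n_{r(v)}\rfloor$ argument avoids the final exchange-of-summation/bucketing step at the cost of the extra hit-multiplicity factor. Both proofs rely on the isolating cuts of line~\ref{line:Sv} being the \emph{minimal} ones, as the algorithm specifies.
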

\begin{proof}
%We first prove that $D\s D^*$. Each vertex $u\in D$ belongs to some $S^i_v$ satisfying $w(\pt S^i_v)\le W$. %and $S^i_v\cap U=\{v\}$ for some $v\in U\sm\{s\}$. 
%In particular, $\pt S^i_v$ is an $(s,u)$-cut with weight at most $ W$. It follows that the $(s,u)$-mincut also has weight at most $ W$, and therefore, $u\in D$.

%It remains to prove that $\mathbb E[|D|]\ge\Omega(|D^*|/\log|U|)$. 
Consider a rooted minimal Steiner Gomory-Hu tree $T$ of $G$ on terminals $U$ rooted at $s$, which exists by \Cref{thm:ghtree:rooted}. For each vertex $v\in U\setminus \{s\}$, let $r(v)$ be defined as the child vertex of the lowest weight edge on the path from $v$ to $s$ in $T$. If there are multiple lowest weight edges, choose the one with the maximum depth. %\alert{The lowest weight edge is unique because there is only one $(v,s)$-mincut.}

%\alert{DP: 1. Need to define rooted minimal Steiner Gomory-Hu tree which is used in the para above. 3. Should we talk about breaking ties among multiple lowest weight edges or add a perturbation at the outset and claim that mincuts are unique? This last fact needs a proof since there could be exponential number of $(s,t)$-mincuts, so just adding an inverse polynomial perturabtion does not immediately imply uniqueness. But, perhaps use the (classic) isolation lemma?} \textcolor{blue}{I had assumed all $(s,t)$-mincuts unique in a previous write-up. I do think the new version is better since we also obtain better bounds for unweighted graphs which we can't use isolation lemma for. But maybe a note to the reader that assuming all min-cuts are unique could help with a first reading}

For each vertex $v\in D^*$, consider the subtree rooted at $v$, define $U_v$ to be the vertices in the subtree, and define $n_v$ as the number of vertices in the subtree. We say that a vertex $v\in D^*$ is \emph{active} if $v\in R^i$ for $i=\lfloor\lg n_{r(v)}\rfloor$. In addition, if $U_{r(v)}\cap R^i=\{v\}$, then we say that $v$ \emph{hits} all of the vertices in $U_{r(v)}$ (including itself); see Figure~\ref{fig:hits}. In particular, in order for $v$ to hit any other vertex, it must be active. For completeness, we say that any vertex in $U\setminus D^*$ is not active and does not hit any vertex.

\begin{figure}\centering
\includegraphics[scale=1]{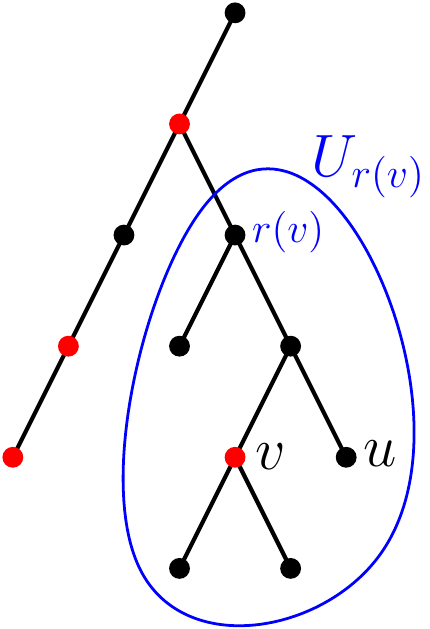}
\caption{Let $i=\lfloor\lg n_{r(v)}\rfloor=\lfloor\lg 7\rfloor=2$, and let the red vertices be those sampled in $R^2$. Vertex $v$ is active and hits $u$ because $v$ is the only vertex in $U_{r(v)}$ that is red.}\label{fig:hits}
\end{figure}

To prove that $\mathbb E[|D|] \ge \Omega(|D^*|/\log|U|)$, we will show that
 \begin{enumerate}
 \item[(a)] each vertex $u$ that is hit is in $D$, 
 \item[(b)] the total number of pairs $(u,v)$ for which $v\in D^*$ hits $u$ is at least $c |D^*|$ in expectation for some small enough constant $c>0$, and
 \item[(c)] each vertex $u$ is hit by at most $\lfloor\lg|U|\rfloor+1$ vertices. %with probability at least $1-\frac c{2|U|^2}$ (for the constant $c>0$ in~(b)), each vertex $u$ is hit by at most $O(\log|U|)$ vertices $v\in D^*$. %\alert{DP: There's some play on constants here between the constant hidden in $O(.)$ and the constant in the denominator? Should we make this more explicit?}
 \end{enumerate}

For (a), consider the vertex $v$ that hits $u$. By definition, for $i=\lfloor\lg n_{r(v)}\rfloor$, we have $U_{r(v)}\cap R^i=\{v\}$, so $f^{-1}(U_{r(v)})$ is a $(v,R^i\setminus \{v\})$-cut.  By the definition of $r(v)$, we have that $f^{-1}(U_{r(v)})$ is a $(v,s)$-mincut. On the other hand, we have that $S^i_v$ is a $(v,R^i\setminus\{v\})$-mincut, so in particular, it is a $(v,s)$-cut. It follows that $f^{-1}(U_{r(v)})$ and $S^i_v$ are both $(v,s)$-mincuts and $(v,R^i\setminus v)$-mincuts, and $\delta S^i_v=\lambda(s,v)\le W$. Since $T$ is a minimal Gomory-Hu Steiner tree, we must have $f^{-1}(U_{r(v)}) \subseteq S^i_v$. Since $S^i_v$ is the minimal $(v,R^i\setminus\{v\})$-mincut, it is also the minimal $(v,s)$-mincut, so $S^i_v\subseteq f^{-1}(U_{r(v)}) $. It follows that $f^{-1}(U_{r(v)})=S^i_v$. Since $f^{-1}(U_{r(v)})$ is the minimal $(v,s)$-mincut and $v\in D^*$, we must have $|f^{-1}(U_{r(v)})\cap U|\le z$, so in particular, $|S^i_v\cap U|=|f^{-1}(U_{r(v)})\cap U|\le z$. Therefore, the vertex $v$ satisfies all the conditions of line~\ref{line:D}. Moreover, since $u\in U_{r(v)}\subseteq f^{-1}(U_{r(v)})= S^i_v$, vertex $u$ is added to $D$ in the set $S^i_v\cap U$. 

For (b), for $i=\lfloor\lg n_{r(v)}\rfloor$, we have $v\in R^i$ with probability exactly $1/2^i = \Theta(1/n_{r(v)})$, and with probability $\Omega(1)$, no other vertex in $U_{r(v)}$ joins $R^i$. Therefore, $v$ is active with probability $\Omega(1/n_{r(v)})$. Conditioned on $v$ being active, it hits exactly $n_{r(v)}$ many vertices. It follows that $v$ hits $\Omega(1)$ vertices in expectation.

For (c), since the isolating cuts $S^i_v$ over $v\in R^i$ are disjoint for each $i$, each vertex is hit at most once on each iteration $i$. Since there are $\lfloor\lg|U|\rfloor+1$ many iterations, the property follows. %the number of vertices $v$ that hit vertex $u$ is at most the number of active vertices $v$ for which $r(v)$ is on the path from $u$ to $s$ in $T$. Label these vertices $u=v_1,v_2,\ldots,v_\ell=s$, ordered by increasing distance from $u$ to $r(v_i)$ in $T$. Each vertex $v_j\in D^*$ is active with probability $\Theta(1/n_{r(v_j)})$, which is at most $\Theta(1/j)$ since $v_1,\ldots,v_j \in U_{r(v_j)}$. Each vertex $v_j\notin D^*$ is never active. Therefore, the expected number of active vertices on the path from $u$ to $s$ is at most $\sum_{j=1}^\ell\Theta(1/j)=\Theta(\ln\ell)\le \Theta(\ln|U|)$. A standard Chernoff bound shows that with probability at least $1-\frac c{2|U|^3}$ for any constant $c>0$, the number of active vertices on the path is indeed $O(\ln|U|)$, where the $O(\cdot)$ hides the dependency on $c$. Taking a union bound over all $u\in U$, the probability that this is true for all vertices is at least $1-\frac c{2|U|^2}$.

Finally, we show why properties (a) to (c) imply $\mathbb E[|D\cap D^*|] \ge \Omega(|D^*|/\log|U|)$. By property~(b), the number of times some vertex hits another vertex is $\Omega(|D^*|)$ in expectation. Since each vertex is hit at most $O(\log|U|)$ times by property~(c), there are at least $\Omega(|D^*|/\log|U|)$ vertices hit in expectation, all of which are included in $D$ by property~(a).
\end{proof}

The corollary below immediately follows. Note that the sets $S^i_v$ output by the algorithm are disjoint, which we require in the recursive GH tree algorithm. 
\begin{corollary}\label{cor:step}
The largest set $D^i$ returned by \ref{ghtreestep} satisfies $\mathbb E[|D^i\cap D^*|] = \Omega(|D^*|/\log^2|U|)$.
\end{corollary}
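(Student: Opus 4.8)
The plan is to first establish the structural fact that $D^i\subseteq D^*$ for every iteration $i$, and then deduce the corollary from \Cref{lem:ghtree:step} by a one-line averaging argument. Recall that $D=D^0\cup\cdots\cup D^{\lfloor\lg|U|\rfloor}$ and that \ref{ghtreestep} returns the set $D^{i^\star}$ of largest cardinality among the $D^i$ (together with its isolating cuts). Once we know $D^i\subseteq D^*$ for all $i$, we have $D^{i^\star}\cap D^*=D^{i^\star}$, hence $|D^{i^\star}\cap D^*|=\max_i|D^i|=\max_i|D^i\cap D^*|$; and since $D\cap D^*=\bigcup_i(D^i\cap D^*)$ is a union of at most $\lfloor\lg|U|\rfloor+1$ sets, we get $|D^{i^\star}\cap D^*|\ge |D\cap D^*|/(\lfloor\lg|U|\rfloor+1)$ pointwise. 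Taking expectations and plugging in \Cref{lem:ghtree:step} yields $\EX[|D^{i^\star}\cap D^*|]=\Omega(|D^*|/\log^2|U|)$, which is exactly the claim. (The two logarithmic factors arise naturally: one is already lost inside \Cref{lem:ghtree:step}, and the second comes from replacing the union of the $\lfloor\lg|U|\rfloor+1$ sets $D^i$ by the single largest one.)

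The only real content is the claim $D^i\subseteq D^*$, and I expect this to be the main obstacle. Fix $i$ and a vertex $u\in D^i$; by definition there is some $v\in R^i\setminus\{v\}$, more precisely $v\in R^i\setminus\{s\}$, meeting the two conditions on line~\ref{line:D}, namely $\delta(S^i_v)=\lambda(s,v)$ and $|S^i_v\cap U|\le |U|/2$, with $u\in S^i_v\cap U$. Since $s\in R^i$ (the source is sampled into every $R^i$) and $s\neq v$, the set $S^i_v$, which is a $(v,R^i\setminus\{v\})$-mincut by \Cref{lem:iso cut}, has $v$ on one side and $s$ on the other; combined with $\delta(S^i_v)=\lambda(s,v)$ this says precisely that $S^i_v$ is an $(s,v)$-mincut with $u$ on the $v$-side. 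Now take any $(s,u)$-mincut $M$ with $u\in M$ and $s\notin M$, and uncross $M$ with $S^i_v$: the set $M\cup S^i_v$ separates $s$ from $v$, so $\delta(M\cup S^i_v)\ge\lambda(s,v)=\delta(S^i_v)$, and submodularity of the cut function gives $\delta(M\cap S^i_v)\le\delta(M)+\delta(S^i_v)-\delta(M\cup S^i_v)\le\delta(M)=\lambda(s,u)$. Since $M\cap S^i_v$ still separates $s$ from $u$, it is in fact an $(s,u)$-mincut, and it is contained in $S^i_v$, so its $u$-side has at most $|S^i_v\cap U|\le|U|/2$ vertices of $U$. Hence $u\in D^*$, proving $D^i\subseteq D^*$.

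Finally I would assemble the pieces: $D=\bigcup_i D^i\subseteq D^*$, so $|D|\le\sum_i|D^i|\le(\lfloor\lg|U|\rfloor+1)\,|D^{i^\star}|$, and since $D^{i^\star}\subseteq D^*$ we get $|D^{i^\star}\cap D^*|=|D^{i^\star}|\ge|D|/(\lfloor\lg|U|\rfloor+1)=|D\cap D^*|/(\lfloor\lg|U|\rfloor+1)$; taking expectations and invoking \Cref{lem:ghtree:step} completes the proof. The single point that needs care is verifying that the line~\ref{line:D} conditions force $S^i_v$ to be a genuine $(s,v)$-mincut with $u$ on the $v$-side — this is where the guarantee of the Isolating Cuts Lemma that $S^i_v$ is an honest $(v,R^i\setminus\{v\})$-mincut, together with $s\in R^i\setminus\{v\}$, is used — after which the uncrossing step is routine.
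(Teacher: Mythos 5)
Your proposal is correct and takes essentially the route the paper intends: the paper gives no argument beyond saying the corollary ``immediately follows'' from \Cref{lem:ghtree:step}, and the content is exactly your averaging over the $\lfloor\lg|U|\rfloor+1$ iterations of \ref{ghtreestep}. Your extra observation that $D^i\subseteq D^*$ (via uncrossing an $(s,u)$-mincut with the isolating cut $S^i_v$, using $\delta S^i_v=\lambda(s,v)$ and $|S^i_v\cap U|\le|U|/2$) is a correct and worthwhile detail that the paper leaves implicit, since the algorithm returns the $D^i$ of largest \emph{cardinality} rather than the one maximizing $|D^i\cap D^*|$, and the plain averaging bound alone would only control the latter.
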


\subsection{The Gomory-Hu Tree Algorithm}
The Gomory-Hu tree algorithm is presented in  \ref{ghtree}, which uses \ref{ghtreestep} as a subprocedure on each recursive step.

\paragraph{Correctness.}
Algorithm~\ref{ghtree} has the same recursive structure as Gomory and Hu's original algorithm, except that it computes multiple mincuts on each step. Therefore, correctness of the algorithm follows similarly to their analysis. For completeness, we include it below.

\begin{algorithm}
\mylabel{ghtree}{\textsc{GHTree}}\caption{\ref{ghtree}$(G=(V,E,w),U)$} 
\begin{enumerate}
%\State $\la\gets $ global Steiner mincut of $G$ with terminals $U$ %\Comment{Sparsify if necessary}
\item $s\gets$ uniformly random vertex in $U$
\item Call $\ref{ghtreestep}(G,s,U)$ to obtain $D^i$ and the sets $S^i_v$ (so that $D^i=\bigcup S^i_v\cap U$)%, and let $R^j$ and $S^j_v:v\in R^j$ ($0\le j\le\lg|U|$) be the intermediate variables in the algorithm\linel{thr}
%\State For each $j\in\{0,1,\lds,\lf\lg|U|\rf\}$, let $R^j\gets \{ v\in R^j : |S^j_v\cap U|\le|U|/2 \}$  
%\State Let $i\in\{0,1,\lds,\lf\lg|U|\rf\}$ be the iteration maximizing $\big|\bigcup_{v\in R^i} (S^i_v\cap U)\big|$ \linel{i}

\ \label{line:max}
\item For each set $S^i_v$ do: \Comment{Construct recursive graphs and apply recursion}
 \begin{enumerate}
 \item Let $G_v$ be the graph $G$ with vertices $V\setminus S^i_v$ contracted to a single vertex $x_v$ \Comment{$S^i_v$ are disjoint}
 \item Let $U_v\gets S^i_v\cap U$
 \item If $|U_v|>1$, then recursively set $(T_v,f_v)\gets\ref{ghtree}(G_v,U_v)$
 \end{enumerate}
\item Let $G_\lar$ be the graph $G$ with (disjoint) vertex sets $S^i_v$ contracted to single vertices $y_v$ for all $v\in D^i$
\item Let $U_\lar\gets U\setminus D^i$
\item If $|U_v|>1$, then recursively set $(T_\lar,f_\lar)\gets\ref{ghtree}(G_\lar,U_\lar)$

\item Combine $(T_\lar,f_\lar)$ and $\{(T_v,f_v):v\in D^i\}$ into $(T,f)$ according to \ref{combine}%$((T_\lar,f_\lar),\{(T_v,f_v):v\in R^i)\}$
%\State Construct $T$ by starting with the disjoint union $T_\lar\cup\bigcup_{v\in R^i}T_v$ and, for each $v\in R^i$, adding an edge between $f_v(x_v)\in U_v$ and $f_\lar(y_v)\in U_\lar$ of weight $w(\pt_GS^i_v)$ 

%\Comment{Combine Steiner Gomory-Hu trees together} \linel{combine-T}
%\State Construct $f:V\to U$ by $f(v')=f_\lar(v')$ if $v'\in U_\lar$ and $f(v')=f_v(v')$ if $v'\in U_v$ for some $v\in R^i$\linel{combine-f}
\item Return $(T,f)$

\end{enumerate}
\end{algorithm}

\begin{algorithm}
\mylabel{combine}{\textsc{Combine}}\caption{\ref{combine}$((T_\lar,f_\lar),\{(T_v,f_v): v\in R^i\} )$} 
\begin{algorithmic}[1]
\State Construct $T$ by starting with the disjoint union $T_\lar\cup\bigcup_{v\in R^i}T_v$ and, for each $v\in R^i$, adding an edge between $f_v(x_v)\in U_v$ and $f_\lar(y_v)\in U_\lar$ of weight $\delta_GS^i_v$\label{line:combine-T}
\State Construct $f:V\to U$ by $f(v')=f_\lar(v')$ if $v'\in U_\lar$ and $f(v')=f_v(v')$ if $v'\in U_v$ for some $v\in R^i$\label{line:combine-f}
\State\Return $(T,f)$
\end{algorithmic}
\end{algorithm}

\begin{lemma}\label{lem:ghtree:correctness}
Algorithm~\ref{ghtree}$(G=(V,E,w),U)$ outputs a Gomory-Hu Steiner tree.
\end{lemma}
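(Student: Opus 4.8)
The plan is to prove the statement by induction on $|U|$, adapting Gomory and Hu's classical correctness argument to the fact that \ref{ghtreestep} splits the terminal set into many pieces $\{U_v\}_v\cup\{U_\lar\}$ at once rather than into just two. The base case $|U|\le 1$ is immediate: the returned tree is a single vertex, and for $|U|=1$ there are no pairs to verify; the tests ``if $|U_v|>1$ then recurse'' likewise dispose of singleton recursive instances trivially. For the inductive step I would first record the structural facts about the output $\{S^i_v\}_v$ of \ref{ghtreestep} that the rest of the proof relies on: the sets $S^i_v$ are pairwise disjoint (the guarantee of \Cref{lem:iso cut}); each $S^i_v$ is the \emph{minimal} $(v,R^i\setminus v)$-mincut, which together with the selection rule $\delta S^i_v=\lambda(s,v)$ in line~\ref{line:D} makes it the \emph{minimal} $(s,v)$-mincut (a direct uncrossing argument: any strictly smaller $(s,v)$-mincut would still be disjoint from $R^i\setminus v$, contradicting minimality), with $v\in S^i_v$ and $s\notin S^i_v$; and $|U_v|=|S^i_v\cap U|\le|U|/2<|U|$, while $s\in U_\lar$.

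The engine of the induction is the standard contraction fact: if $(P,\bar P)$ is an $(x,y)$-mincut of $G$ with $y\in P$, then contracting $\bar P$ to a single vertex preserves $\lambda(a,b)$ for every pair $a,b\in P$, and any $(a,b)$-mincut of $G/\bar P$ re-expands to an $(a,b)$-mincut of $G$ of the same value. This is a direct consequence of \Cref{lem:uncross} (push the side of the $(a,b)$-mincut not containing the ``wrong'' endpoint inside $P$). Applying it with $(P,\bar P)=(S^i_v,V\setminus S^i_v)$ and $(x,y)=(s,v)$ shows that the recursive instance $(G_v,U_v)$ represents $G$ faithfully for all pairs inside $U_v$. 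For the ``large'' instance $(G_\lar,U_\lar)$ I would apply the fact once per set $S^i_v$: each $S^i_v$ is a mincut whose side \emph{not} containing $U_\lar$ is $S^i_v$ itself, and because the $S^i_v$ are disjoint, contracting one of them leaves every other $S^i_{v'}$ still a mincut for its pair and leaves all pairwise mincut values among $U_\lar$ unchanged; hence after contracting all of them $(G_\lar,U_\lar)$ represents $G$ faithfully for pairs inside $U_\lar$. By the induction hypothesis, $(T_v,f_v)$ and $(T_\lar,f_\lar)$ are then Gomory--Hu Steiner trees of the respective instances. (If $D^i=\emptyset$ so that $|U_\lar|=|U|$, the step makes no progress; one reruns \ref{ghtreestep}, which by \Cref{cor:step} succeeds after $O(1)$ attempts w.h.p.\ and does not affect correctness.)

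It remains to check that \ref{combine} produces a Gomory--Hu Steiner tree $(T,f)$ of $(G,U)$. I would use the standard reduction: it suffices to show that for \emph{every} edge $e=(p,q)$ of $T$, writing $(P,Q)$ for the components of $T-e$ with $p\in P$, the set $f^{-1}(P)$ is a $(p,q)$-mincut of $G$ with $\delta_G f^{-1}(P)=w_T(e)$; the requirement for an arbitrary pair $a,b$ then follows by applying the ultrametric inequality $\lambda_G(a,b)\ge\min(\lambda_G(a,c),\lambda_G(c,b))$ (immediate from submodularity) along the $a$--$b$ path in $T$, together with the fact that $f^{-1}$ of the $a$-side of a minimum-weight path edge is an $(a,b)$-cut of exactly that weight. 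For an edge $e$ lying inside some $T_v$, or inside $T_\lar$, the per-edge property is inherited from the inductive hypothesis: by construction of $f$ one has $f^{-1}(U_v)=S^i_v$, and $f^{-1}(P)$ is precisely the re-expansion of the corresponding $f_v^{-1}$-side (resp.\ $f_\lar^{-1}$-side), which the contraction fact certifies is a $(p,q)$-mincut of $G$ of the same value. For a connecting edge $e_v=(f_v(x_v),f_\lar(y_v))$ of weight $\delta_G S^i_v=\lambda_G(s,v)$, again $f^{-1}(U_v)=S^i_v$, and since $f_v(x_v)\in S^i_v$ while $f_\lar(y_v)\notin S^i_v$, the set $S^i_v$ separates the two endpoints, giving $\lambda_G(f_v(x_v),f_\lar(y_v))\le\delta_G S^i_v$; the matching lower bound is exactly where Gomory--Hu's combine argument is used — contraction identifies $\lambda_G(u,v)$ with $\lambda_{G_\lar}(u,y_v)$ for $u\in U_\lar$ and $\lambda_G(u,s)$ with $\lambda_{G_v}(u,x_v)$ for $u\in U_v$, and with these identities no $(f_v(x_v),f_\lar(y_v))$-cut can beat the $(s,v)$-cut $S^i_v$.

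I expect the real obstacle to be this last point: verifying, through the several simultaneous contractions that build $G_\lar$, that the weight attached to each connecting edge is genuinely the min-cut value of the two specific terminals it joins, and that the path-min characterization survives the gluing. In the two-way split of the original Gomory--Hu algorithm this bookkeeping is routine; here one has to check that it is precisely the disjointness of the $S^i_v$'s (\Cref{lem:iso cut}) that keeps all the mincut identities consistent when many pieces are contracted into $G_\lar$ at once. This is essentially the argument of Section~4.5 of~\cite{Li21thesis}, which we follow.
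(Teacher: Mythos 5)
Your proposal is correct, but it reorganizes the argument rather than reproducing the paper's. The induction and the faithfulness of the recursive instances are essentially shared: the paper proves faithfulness for terminal pairs in \Cref{lem:ghtree:exact} by uncrossing all the disjoint sets $S^i_v$ simultaneously against one $(p,q)$-mincut, using their minimality, while you contract the sets one at a time using only that each $S^i_v$ is one side of an $(s,v)$-mincut together with disjointness; both work. The genuine divergence is the combine step. You reduce to the per-edge statement that every tree edge's induced cut is a mincut of its two endpoints of the stated weight, and recover arbitrary pairs from the ultrametric inequality along tree paths; this buys modularity (in particular the $x\in U_v$, $y\in U_{v'}$ case needs no separate treatment). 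The paper instead never proves the per-edge claim for the connecting edges: it fixes an arbitrary cross pair $x\in U_v$, $y\in U_\lar$ (or $U_{v'}$) and runs a case analysis on $\min\{\lambda(x,v),\lambda(v,s),\lambda(s,y)\}$ to locate the minimum-weight edge on the $x$--$y$ path directly, thereby avoiding any discussion of mincut values involving the contracted vertices $x_v,y_v$. The price of your route is concentrated in the step you flag: your two identities are true but need an uncrossing in which the separated vertices lie on \emph{opposite} sides of $S^i_v$ (so not literally \Cref{lem:uncross}, though the same submodularity computation), and by themselves they do not yield the lower bound $\lambda_G(f_v(x_v),f_\lar(y_v))\ge\delta_G S^i_v$; you additionally need that $x_v$ (resp.\ $y_v$) lies on the side of $f_v(x_v)$ (resp.\ $f_\lar(y_v)$) in every cut read off $T_v$ (resp.\ $T_\lar$), giving $\lambda_G(f_v(x_v),v)=\lambda_{G_v}(f_v(x_v),v)\ge\lambda_{G_v}(v,x_v)=\lambda_G(v,s)$ and symmetrically $\lambda_G(s,f_\lar(y_v))\ge\lambda_{G_\lar}(s,y_v)=\lambda_G(s,v)$, after which the ultrametric inequality routed through $v$ and then $s$ finishes. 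These are precisely the computations the paper's Cases 1 and 2 perform in situ, so the two proofs do the same work in a different order; writing out that chain (and the opposite-side uncrossing) would make your version complete.
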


To prove \Cref{lem:ghtree:correctness}, we first introduce a helper lemma.

\begin{lemma}\label{lem:ghtree:exact}
For any distinct vertices $p,q\in U_\lar$, we have $\lambda_{G_\lar}(p,q) = \lambda_G(p,q)$. The same holds with $U_\lar$ and $G_\lar$ replaced by $U_v$ and $G_v$ for any $v\in D^i$.
\end{lemma}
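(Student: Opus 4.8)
The inequality $\lambda_{G_\lar}(p,q)\ge\lambda_G(p,q)$ (and likewise for $G_v,U_v$) is immediate: uncontracting any $(p,q)$-cut of $G_\lar$ by placing each contracted set $S^i_v$ on the side of its image $y_v$ yields a $(p,q)$-cut of $G$ of exactly the same value, so contraction cannot decrease the minimum $(p,q)$-cut value. Hence it suffices, in each of the two cases, to exhibit one $(p,q)$-mincut of $G$ that \emph{survives} the relevant contraction, i.e.\ that keeps every contracted set $S^i_v$ entirely on one shore (for $G_\lar$), respectively keeps $V\setminus S^i_v$ entirely on one shore (for $G_v$).

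The structural fact that drives everything is that each set $S^i_v$ passed to \ref{combine} is a minimum $(v,s)$-cut of $G$. Indeed, the Isolating Cuts Lemma returns $S^i_v$ as the minimal $(v,R^i\setminus\{v\})$-mincut, hence in particular a $(v,R^i\setminus\{v\})$-cut; since $s$ is sampled into $R^i$ for every $i$ and $v\neq s$, this is also a $(v,s)$-cut, and line~\ref{line:D} admits $v$ into $D^i$ only when $\delta_G(S^i_v)=\lambda_G(s,v)$, so $S^i_v$ attains the $(v,s)$-mincut value. Note also $v\in S^i_v$ and $s\notin S^i_v$; moreover every $p,q\in U_\lar=U\setminus D^i$ lies outside every such $S^i_v$, whereas $p,q\in U_v=S^i_v\cap U$ lies inside $S^i_v$.

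I then run the classical Gomory--Hu uncrossing (in the spirit of \Cref{lem:uncross}). Start from any $(p,q)$-mincut $C$ of $G$ with $p\in C$ (hence $q\notin C$). For $G_\lar$, process the sets $S^i_v$, $v\in D^i$, one at a time: whenever the current mincut $C$ crosses $S^i_v$, replace $C$ by $C\cup S^i_v$ if $v\in C$ and by $C\setminus S^i_v$ if $v\notin C$. Using $v\in S^i_v$, $s\notin S^i_v$ and $p,q\notin S^i_v$, submodularity of $\delta_G$ shows $C\cap S^i_v$ is a $(v,s)$-cut in the first case and posimodularity shows $S^i_v\setminus C$ is a $(v,s)$-cut in the second; since such a cut has value $\ge\delta_G(S^i_v)$, the replacement's value is $\le\delta_G(C)=\lambda_G(p,q)$, so it is again a $(p,q)$-mincut, and $p$ stays on the $C$-side. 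Because the sets $S^i_v$ are pairwise disjoint, each step ($C\mapsto C\cup S^i_v$ or $C\mapsto C\setminus S^i_v$) keeps all previously handled $S^i_{v'}$ on a single shore, so after all steps we obtain a $(p,q)$-mincut of $G$ that survives the contraction defining $G_\lar$. For $G_v$ this is a single such uncrossing against $S^i_v$ (now with $p,q$ inside $S^i_v$ and $V\setminus S^i_v$ the contracted part), after first orienting $C$ so that $s$ lies outside the shore containing $p$, which forces $V\setminus S^i_v$ onto one shore.

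The routine-but-careful part is the bookkeeping in the $G_\lar$ iteration — tracking on which shore each contracted set lands and verifying that the $\cup/\setminus$ operations commute thanks to disjointness — together with the small case split (on the location of $v$, resp.\ $s$, relative to $C$) choosing the replacement, and the (standard) submodularity/posimodularity checks that no replacement increases the cut value. Combining the surviving mincut with the easy direction yields $\lambda_{G_\lar}(p,q)=\lambda_G(p,q)$, and identically $\lambda_{G_v}(p,q)=\lambda_G(p,q)$ for each $v\in D^i$.
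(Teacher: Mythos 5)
Your plan is correct, but it takes a genuinely different route from the paper's proof. The paper does not repair the mincut at all: it fixes an arbitrary $(p,q)$-mincut $S$ of $G$ and shows, with a single submodularity inequality, that $S^i_v\cap S$ would already be a minimum $(v,R^i\setminus v)$-cut, whence the \emph{vertex-minimality} of $S^i_v$ (as returned by the Isolating Cuts Lemma, \Cref{lem:iso cut}) forces $S^i_v\subseteq S$ or $S^i_v\subseteq V\setminus S$; thus every $(p,q)$-mincut survives the contraction as is, and the $G_v$ case is handled by the identical argument. You instead run the classical Gomory--Hu uncrossing: you iteratively replace the mincut by $C\cup S^i_v$ or $C\setminus S^i_v$, justified by submodularity/posimodularity together with the fact (certified in line~\ref{line:D}) that $\delta_G(S^i_v)=\lambda_G(s,v)$, with disjointness of the $S^i_v$ handling the bookkeeping. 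What each buys: your argument never uses minimality of $S^i_v$, only that it attains the $(v,s)$-mincut value, which is exactly the condition checked for the contracted sets; the paper's argument is shorter and proves the stronger structural statement that \emph{no} $(p,q)$-mincut crosses any $S^i_v$. One small repair on your side: in the $G_v$ case the literal rule ($C\mapsto C\cup S^i_v$ if $v\in C$, else $C\mapsto C\setminus S^i_v$) does not force $V\setminus S^i_v$ onto one shore, and $C\setminus S^i_v$ could even delete $p$; the right single uncrossing is to take the shore $C$ of the $(p,q)$-mincut that avoids $s$ (relabeling $p\leftrightarrow q$ if necessary, which is legitimate since the claim is symmetric) and replace it by $C\cap S^i_v$, which the same submodularity step justifies because $C\cup S^i_v$ is then a $(v,s)$-cut of value at least $\delta_G(S^i_v)$, and $C\cap S^i_v\subseteq S^i_v$ indeed leaves $V\setminus S^i_v$ intact.
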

\begin{proof}
Since $G_\lar$ is a contraction of $G$, we have $\lambda_{G_\lar}(p,q) \ge \lambda_G(p,q)$. To show the reverse inequality, fix any $(p,q)$-mincut in $G$, and let $S$ be one side of the mincut. We show that for each $v\in  R^i$, either $S^i_v\subseteq S$ or $S^i_v\subseteq V\setminus S$. Assuming this, the cut $E(S,V\setminus S)$ stays intact when the sets $S^i_v$ are contracted to form $G_\lar$, so $\lambda_{G_\lar}(p,q) \le \delta S = \lambda_G(p,q)$.

Consider any $v\in R^i$, and suppose first that $v\in S$. Then, $S^i_v\cap S$ is still a $(v,R^i\setminus v)$-cut, and $S^i_v\cup S$ is still a $(p,q)$-cut. By the submodularity of cuts,
\[ \delta_GS^i_v + \delta_GS \ge \delta_G(S^i_v\cup S) + \delta_G(S^i_v\cap S). \]
In particular, $S^i_v\cap S$ must be a minimum $(v,R^i\setminus v)$-cut. Since $S^i_v$ is the minimal $(v,R^i\setminus v)$-mincut, it follows that $S^i_v\cap S = S^i_v$, or equivalently, $S^i_v\subseteq S$.

Suppose now that $v\notin S$. In this case, we can swap $p$ and $q$, and swap $S$ and $V\setminus S$, and repeat the above argument to get $S^i_v\subseteq V\setminus S$.

The argument for $U_v$ and $G_v$ is identical, and we skip the details.
\end{proof}

%We now prove \lem{ghtree:correctness} below.
\begin{proof}[Proof (\Cref{lem:ghtree:correctness}).]
We apply induction on $|U|$. 
By induction, the recursive outputs $(T_\lar,f_\lar)$ and $(T_v,f_v)$ are Gomory-Hu Steiner trees. By definition, this means that for all $x,y\in U_\lar$ and the minimum-weight edge $(u,u')$ on the $x$--$y$ path in $T_\lar$, letting $U'_\lar\subseteq U_\lar$ be the vertices of the connected component of $T_\lar-(u,u')$ containing $x$, we have that $f^{-1}_\lar(U'_\lar)$ is an $(s,t)$-mincut in $G_\lar$ with value is $w_T(u,u')$. Define $U'\subseteq U$ as the vertices of the connected component of $T-(u,u')$ containing $x$. By construction of $(T,f)$ (lines~\ref{line:combine-T}~and~\ref{line:combine-f}), the set $f^{-1}(U')$ is simply $f^{-1}_\lar(U'_\lar)$ with the vertex $x_\lar$ replaced by $V\setminus S^i_\lar$ in the case that $x_\lar\in f^{-1}(U')$. Since $G_\lar$ is simply $G$ with all vertices $V\setminus S^i_\lar$ contracted to $x_\lar$, we conclude that $\delta_{G_\lar}(f^{-1}_\lar(U'_\lar)) = \delta_G( f^{-1}(U'))$. By \Cref{lem:ghtree:exact}, we have $\lambda_G(x,y)=\lambda_{G_\lar}(x,y)$ are equal, so $\delta_G(f^{-1}(U'))$ is an $(x,y)$-mincut in $G$. In other words, the Gomory-Hu Steiner tree condition for $(T,f)$ is satisfied for all $x,y\in U_\lar$. A similar argument handles the case $x,y\in U_v$ for some $v\in R^i$.

There are two remaining cases: $x\in U_v$ and $y\in U_\lar$, and $x\in U_v$ and $y\in U_{v'}$ for distinct $v,v'\in R^i$. Suppose first that $x\in U_v$ and $y\in U_\lar$. By considering which sides $v$ and $s$ lie on the $(x,y)$-mincut, we have
\[ \delta_GS=\lambda(x,y)\ge\min\{\lambda(x,v),\lambda(v,s),\lambda(s,y)\} .\]
We now case on which of the three mincut values $\lambda(x,y)$ is greater than or equal to. 

\begin{enumerate}
\item If $\lambda(x,y)\ge\lambda(v,s)$, then since $S^i_v$ is a $(v,s)$-mincut that is also an $(x,y)$-cut, we have $\lambda(x,y)=\lambda(v,s)$. By construction, the edge $(f_v(x_v),f_\lar(y_v))$ of weight $\delta_GS^i_v=\delta_GS$ is on the $x-y$ path in $T$. There cannot be edges on the $x-t$ path in $T$ of smaller weight, since each edge corresponds to a $(s,t)$-cut in $G$ of the same weight. Therefore, $(f_v(x_v),f_\lar(y_v))$ is the minimum-weight edge on the $s$--$t$ path in $T$.\label{ghtree:case1}
\item Suppose now that $\lambda(x,v)\le \lambda(x,y)<\lambda(v,s)$. The minimum-weight edge $e$ on the $x-v$ path in $T_v$ has weight $\lambda(x,v)$. This edge $e$ cannot be on the $v-f_v(x_v)$ path in $T_v$, since otherwise, we would obtain a $(v,x_v)$-cut of value $\lambda(x,v)$ in $G_v$, which becomes a $(v,s)$-cut in $G$ after expanding the contracted vertex $x_v$; this contradicts our assumption that $\lambda(x,v)<\lambda(v,s)$. It follows that $e$ is on the $x-f_v(x_v)$ path in $T_v$ which, by construction, is also on the $x-y$ path in $T$. Once again, the $x-y$ path cannot contain an edge of smaller weight. \label{ghtree:case2}
\item The final case $\lambda(s,y)\le\lambda(x,y)<\lambda(v,s)$ is symmetric to case~\ref{ghtree:case2}, except we argue on $T_\lar$ and $G_\lar$ instead of $T_v$ and $G_v$.
\end{enumerate}

Suppose now that $x\in U_v$ and $y\in U_{v'}$ for distinct $v,v'\in R^i$. By considering which sides $v,v',s$ lie on the $(x,y)$-mincut, we have
\[ \delta_GS=\lambda(x,y)\ge\min\{\lambda(x,v),\lambda(v,s),\lambda(s,v'),\lambda(v',y)\} .\]
We now case on which of the four mincut values $\lambda(x,y)$ is greater than or equal to. 
\begin{enumerate}
\item If $\lambda(x,y)\ge\lambda(v,s)$ or $\lambda(x,y)\ge\lambda(s,v')$, then the argument is the same as case~\ref{ghtree:case1} above.
\item If $\lambda(x,v)\le \lambda(x,y)<\lambda(v,s)$ or $\lambda(y,v')\le\lambda(x,y)<\lambda(v',s)$, then the argument is the same as case~\ref{ghtree:case2} above.
\end{enumerate}
This concludes all cases, and hence the proof.
\end{proof}

\paragraph{Running time.} We now bound the running time of \ref{ghtree}.

\begin{lemma}\label{lem:ghtree:depth}
W.h.p., the algorithm \ref{ghtree} has maximum recursion depth $O(\log^3n)$.
\end{lemma}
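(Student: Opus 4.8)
The plan is to track, along any root-to-leaf path in the recursion tree of \ref{ghtree}, how the terminal set $U$ shrinks, and to show it shrinks by a constant factor every $O(\log^2 n)$ levels with high probability. Fix a call $\ref{ghtree}(G,U)$ with $|U| = z$, and let $D^* \subseteq U\setminus\{s\}$ be the set of vertices $v$ for which some $(s,v)$-mincut has at most $z/2$ terminals on the $v$-side, exactly as defined before \Cref{lem:ghtree:step}. The first observation is a dichotomy: every vertex $v \in U\setminus\{s\}$ either lies in $D^*$, or \emph{every} $(s,v)$-mincut has \emph{more} than $z/2$ terminals on the $v$-side. In the recursive structure, a terminal $v\in D^i$ (with its side $S^i_v$) goes into the ``small'' child $G_v$ with $|U_v| = |S^i_v\cap U| \le z/2$, while all remaining terminals $U\setminus D^i$ go into the ``large'' child $G_{\lar}$. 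So the small children already have terminal set of size $\le z/2$; the only way recursion depth can blow up is via the large child. Hence it suffices to bound how many consecutive large-child steps can occur before $|U|$ halves.

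The key quantitative input is \Cref{cor:step}: in a single call, the returned set $D^i$ satisfies $\mathbb{E}[|D^i\cap D^*|] = \Omega(|D^*|/\log^2 z)$. Since $D^i$ is removed from the large child (i.e.\ $U_{\lar} = U\setminus D^i$), in expectation a $\Omega(1/\log^2 z)$-fraction of $D^*$ is eliminated per step. I would now argue as follows. Consider a maximal chain of large-child recursive calls $U = U_0 \supseteq U_1 \supseteq \cdots$ all of size $> z/2$ (so in particular all these $U_j$ have $|U_j| \in (z/2, z]$, and the definition of $D^*$ relative to $|U_j|/2 \ge z/4$ stays within a constant factor). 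As long as $|U_j| > z/2$, the set of ``undeleted'' terminals that still have a balanced-enough mincut is large: more precisely, at any such step at least $|U_j| - z/2 \ge z/2 \cdot$(something) of the original terminals are candidates, and one shows $|D^*_j| = \Omega(|U_j|)$ because — by the dichotomy above — a terminal not in $D^*_j$ has all its mincuts putting $>|U_j|/2$ terminals on its own side, and a simple counting/uncrossing argument (at most one vertex can have $>1/2$ of the terminals strictly on its far side in a laminar family of such mincuts, analogous to the standard Gomory–Hu balanced-cut argument) bounds the number of such ``heavy'' vertices by $O(1)$, or more carefully $o(|U_j|)$. So $\mathbb{E}[|D^i_j\cap D^*_j|] = \Omega(|U_j|/\log^2 z) = \Omega(z/\log^2 z)$ per step. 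Therefore after $O(\log^2 z)$ large-child steps, in expectation $\Omega(z)$ terminals have been removed, forcing $|U|$ below $z/2$; a Chernoff/Markov argument over $O(\log^2 z \cdot \log z)$ steps boosts this to high probability, and a union bound over the $\le n$ nodes on the path (indeed over all $\mathrm{poly}(n)$ recursion-tree nodes) makes it hold everywhere.

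Putting it together: every $O(\log^3 n)$ levels, $|U|$ halves w.h.p.; since $|U| \le n$ initially, the recursion bottoms out (at $|U| = 1$) within $O(\log n)$ such halvings, for total depth $O(\log^3 n \cdot \log n) = O(\log^4 n)$ — and a tighter bookkeeping (charging the $\log z$ halvings against the $O(\log^2 z)$-step budget more carefully, noting $z$ only decreases) recovers the stated $O(\log^3 n)$. I expect the \textbf{main obstacle} to be the bound $|D^*_j| = \Omega(|U_j|)$ while $|U_j| > z/2$: one must rule out the pathological case where almost all surviving terminals are ``heavy'' (every mincut to $s$ keeps $>|U_j|/2$ of them). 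This is where the laminarity of minimal $(s,v)$-mincuts and an uncrossing argument (in the spirit of \Cref{lem:uncross} and \Cref{thm:ghtree:rooted}) is needed: in the rooted minimal Gomory--Hu Steiner tree, ``heavy'' vertices correspond to subtrees containing more than half the terminals, and these form a single chain, so there are at most $O(\log z)$ of them along any path — a negligible fraction — leaving $|D^*_j| \ge |U_j| - 1 - O(\log z) = \Omega(|U_j|)$. Once that is in hand, the rest is the expectation-then-concentration-then-union-bound routine sketched above.
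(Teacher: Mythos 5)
There is a genuine gap, and it sits exactly where you predicted the ``main obstacle'' would be. Your plan replaces the paper's probabilistic ingredient with a deterministic structural claim: that for a \emph{fixed} source $s$ the heavy terminals (those $v$ for which every $(s,v)$-mincut puts more than half the terminals on $v$'s side) number only $O(\log z)$ because they ``form a single chain'' in the rooted minimal Gomory--Hu Steiner tree. That claim is false. What forms a chain (or at least a small laminar family) is the set of heavy \emph{subtrees}, not the set of heavy \emph{vertices}: every terminal lying inside a heavy subtree is itself heavy, since its minimal-terminal-side $(s,v)$-mincut is that same subtree cut. Concretely, take $G$ to be a path $v_1 - v_2 - \cdots - v_n$ with edge weights $1,2,\dots,n-1$ and $s=v_1$: for every $v_i$ with $i\ge 2$ the unique $(s,v_i)$-mincut is $(\{v_1\},\{v_2,\dots,v_n\})$, so all $n-1$ terminals are heavy and $D^*=\emptyset$. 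Hence your bound $|D^*_j|\ge |U_j|-1-O(\log z)=\Omega(|U_j|)$ cannot be proved for a fixed $s$, and with $D^*$ possibly empty the large child $U_{\lar}=U\setminus D^i$ need not shrink at all, so the chain-of-large-children argument collapses.

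The paper closes exactly this hole by using the randomness of the pivot: line~1 of \ref{ghtree} picks $s$ uniformly at random from $U$, and \Cref{lem:ghtree:random-s} (imported from~\cite{AKT20_b}) states that over this random choice $\mathbb{E}[|D^*|]=\Omega(|U|-1)$. Combining this with \Cref{cor:step} gives $\mathbb{E}[|D^i|]=\Omega((|U|-1)/\log^2|U|)$, so each recursive branch either halves $|U|$ (the small children $G_v$, where $|U_v|\le |U|/2$ by construction) or shrinks it by a $(1-\Omega(1/\log^2|U|))$ factor in expectation (the large child), which yields the $O(\log^3 n)$ depth w.h.p.\ directly, without your detour through maximal chains of large-child calls or the extra $\log$ factor you then try to argue away. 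Your outer skeleton (small children halve; bound consecutive large-child steps; expectation--concentration--union bound) is compatible with the paper's proof, but the load-bearing step must be the random-pivot lemma, not a deterministic count of heavy vertices.
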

\begin{proof}
By construction, each recursive instance $(G_v,U_v)$ has $|U_v|\le|U|/2$.
We use the following lemma from~\cite{AKT20_b}.

\begin{lemma}\label{lem:ghtree:random-s}
Suppose the source vertex $s\in U$ is chosen uniformly at random. Then, $\mathbb E[|D^*|]=\Omega(|U|-1)$.
\end{lemma}

\noindent By \Cref{cor:step} and \Cref{lem:ghtree:random-s}, over the randomness of $s$ and \ref{ghtreestep}, we have
\[ \mathbb E[D^i]\ge \Omega(\mathbb E[|D^*|]/\log^2|U|) \ge \Omega((|U|-1)/\log^2|U|) ,\]
so the recursive instance $(G_\lar,U_\lar)$ satisfies $\mathbb E[|U_\lar|]\le(1-1/\log^2|U|)\cdot(|U|-1)$. Therefore, each recursive branch either has at most half the vertices in $U$, or has at most a $(1-1/\log^2|U|)$ fraction in expectation. It follows that w.h.p., all branches terminate by $O(\log^3n)$ recursive calls.
\end{proof}

\begin{lemma}\label{lem:run}
For a weighted graph $G=(V,E,w)$ and terminals $U\subseteq V$, $\ref{ghtree}(G,V)$ takes time $\tilde{O}(m)$ plus calls to max-flow on instances with a total of $\tilde{O}(n)$ vertices and $\tilde{O}(m)$ edges.% many calls to max-flow on $O(n)$-vertex, $O(m)$-edge graphs.
\end{lemma}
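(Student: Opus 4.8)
The plan is a level-by-level accounting over the recursion tree of \ref{ghtree}. By \Cref{lem:ghtree:depth}, w.h.p.\ every branch terminates within $d_{\max}=O(\log^3 n)$ recursive calls, so it suffices to show two things: (i) the subproblems at any fixed recursion depth $d$ have total graph size $\tO(m)$, and (ii) a single subproblem with graph of size $(n',m')$ spends $\tO(n'+m')$ time outside of max-flow calls of cumulative size $\tO(n'+m')$ and $O(\log n)$ calls to single-source terminal mincuts (\Cref{problem:ssmc-no-promise}) on an $(n',m')$-sized graph. Multiplying (i) by $d_{\max}=\tO(1)$ and combining with (ii) then yields the claimed bound. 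Since every recursive graph is a contraction of $G$, it also has at most $n$ vertices and $m$ edges, so each oracle call is on an instance no larger than the input.

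\textbf{Per-level size.} Fix a subproblem $(G_P,U_P)$ with $n'=|V(G_P)|$ and $m'=|E(G_P)|$, and let its children be the graphs $G_{P,v}$ (one per isolating-cut set $S^i_v$ returned by \ref{ghtreestep}, obtained by contracting $V(G_P)\setminus S^i_v$ to a single vertex) together with $G_{P,\lar}$ (obtained by contracting all the sets $S^i_v$). By \Cref{lem:iso cut} the sets $S^i_v$ are pairwise disjoint, so together with the leftover vertices they partition $V(G_P)$, and the terminal sets $U_v=S^i_v\cap U_P$ together with $U_\lar=U_P\setminus\bigcup_v(S^i_v\cap U_P)$ partition $U_P$. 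Hence the terminal sets at any depth partition $U$, and in particular the number of depth-$d$ subproblems is at most $|U|\le n$. Partition $E(G_P)$ into $E_{\mathrm{int}}$, the edges with both endpoints inside a single $S^i_v$, and $E_{\mathrm{rest}}$, the remaining edges. Merging parallel edges incident to each contracted vertex, $G_{P,v}$ contains only the edges of $E_{\mathrm{int}}$ internal to $S^i_v$ plus at most $|S^i_v|$ edges to its contracted vertex, while $G_{P,\lar}$ contains no edge internal to any $S^i_v$, so (writing $|E|$ and $|V|$ for the edge and vertex counts of a child)
\[
 \sum_{\text{children of }P}|E|\ \le\ |E_{\mathrm{int}}|+\sum_v|S^i_v|+|E_{\mathrm{rest}}|\ \le\ m'+n' ,
 \qquad
 \sum_{\text{children of }P}|V|\ \le\ n'+2|U_P| .
\]
Summing over all depth-$d$ subproblems and using $\sum_{\text{depth }d}|U_P|\le|U|\le n$, the total vertex count at depth $d+1$ exceeds that at depth $d$ by at most $2n$, and the total edge count by at most the total vertex count at depth $d$. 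Since there are only $d_{\max}=O(\log^3 n)$ levels, each level has $\tO(n)$ vertices and $m+\tO(n)=\tO(m)$ edges in total.

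\textbf{Per-subproblem work.} On a subproblem with graph $G'$ ($n'$ vertices, $m'$ edges) and terminals $U'$, \ref{ghtreestep} runs $\lfloor\lg|U'|\rfloor+1=O(\log n)$ iterations. Each iteration calls the Isolating Cuts Lemma (\Cref{lem:iso cut}) on the singletons $\{v\}_{v\in R^i}$, which invokes max-flow on graphs with cumulatively $O(m'\log|R^i|)$ edges and $O(n'\log|R^i|)$ vertices and uses $\tO(m')$ time outside; it then makes one call to \Cref{problem:ssmc-no-promise} on $(G',s)$; and it spends $\tO(n'+m')$ time forming $D^i$ and subsampling $R^{i+1}$. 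The combine step \ref{combine} is linear. Thus a single subproblem contributes $\tO(n'+m')$ time outside the oracle calls, max-flow instances of cumulative size $\tO(n'+m')$, and $O(\log n)$ single-source terminal mincut instances of size $(n',m')$.

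\textbf{Summation.} Adding the per-subproblem costs and grouping by recursion depth, the per-level size bound gives $\tO(m)$ per level and hence $\tO(m)$ over all $d_{\max}=\tO(1)$ levels: the running time outside the oracle calls is $\tO(m)$, the max-flow calls have cumulative size $\tO(n)$ vertices and $\tO(m)$ edges, and the single-source terminal mincut calls have cumulative size $\tO(n)$ vertices and $\tO(m)$ edges. Together with the correctness guarantee of \Cref{lem:ghtree:correctness} and a union bound over the failure events (including the depth bound of \Cref{lem:ghtree:depth}), this completes the proof. The delicate point is the per-level size bound — specifically the cancellation that makes $\sum_{\text{children}}|E|\le m'+n'$ rather than a constant-factor blowup per level; it relies both on merging parallel edges at each contracted vertex (so the edges crossing an isolating cut are absorbed into the linear term $\sum_v|S^i_v|$ instead of being re-counted in several children) and on the disjointness of the isolating-cut sets guaranteed by \Cref{lem:iso cut}.
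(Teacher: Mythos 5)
Your proposal is correct, and its overall structure matches the paper's: bound the total size of all instances at a fixed recursion level, invoke the $O(\log^3 n)$ depth bound of \Cref{lem:ghtree:depth}, and note that within one subproblem the work outside oracle calls is near-linear while the isolating-cut max-flows and the $O(\log n)$ calls to \Cref{problem:ssmc-no-promise} have cumulative size $\tilde O(n'+m')$. The one place where you genuinely diverge is the per-level edge bound. The paper defines \emph{parent} vertices (vertices arising from contractions in ancestor calls), argues each instance has only $O(\log n)$ of them because every $G_v$-branch halves the terminal set, bounds edges incident to parent vertices by $O(\log n)$ per vertex, and shows that edges \emph{not} incident to parent vertices never increase in total across a split — giving $m+O(n\log n)$ edges per level independently of the recursion depth. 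You instead prove the additive increment $\sum_{\text{children}}|E|\le m'+n'$ (using disjointness of the $S^i_v$ from \Cref{lem:iso cut} and merging of parallel edges at contracted vertices) and absorb the accumulation using the polylog depth, yielding $m+\tilde O(n)$ edges per level. Your route is simpler and avoids the parent-vertex bookkeeping, at the cost of a few extra logarithmic factors and of relying on the depth bound for the edge count (the paper's per-level bound is depth-independent); both are more than sufficient for the stated $\tilde O(\cdot)$ bounds. The vertex count you obtain per level ($n+O(n)\cdot\text{depth}$) is likewise slightly weaker than the paper's $n+2(|U|-1)$ but harmless. No gaps.
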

\begin{proof}
For a given recursion level, consider the instances $\{ (G_i,U_i,W_i)\}$ across that level. By construction, the terminals $U_i$ partition $U$. Moreover, the total number of vertices over all $G_i$ is at most $n+2(|U|-1)=O(n)$ since each branch creates $2$ new vertices and there are at most $|U|-1$ branches. 

To bound the total number of edges, define a \emph{parent} vertex in an instance as a vertex resulting from either (1)~contracting $V\setminus S^i_v$ in some previous recursive $G_v$ call, or (2)~contracting a component containing a parent vertex in some previous recursive call. There are at most $O(\log n)$ parent vertices: at most $O(\log n)$ can be created by~(1) since each $G_v$ call decreases $|U|$ by a constant factor, and (2)~cannot increase the number of parent vertices. Therefore, the total number of edges adjacent to parent vertices is at most $O(\log n)$ times the number of vertices. Since there are $O(n)$ vertices in a given recursion level, the total number of edges adjacent to parent vertices is $O(n\log n)$ in this level. Next, we bound the number of edges not adjacent to a parent vertex by $m$. To do so, we first show that on each instance, the total number of these edges over all recursive calls produced by this instance is at most the total number of such edges in this instance. Let $P\subseteq V$ be the parent vertices; then, each $G_v$ call has exactly $|E(G[S^i_v\setminus P])|$ edges not adjacent to parent vertices (in the recursive instance), and the $G_\lar$ call has at most $|E(G[V\setminus P]) \setminus \bigcup_vE(G[S^i_v\setminus P])|$, and these sum to $|E(G[V\setminus P])|$, as promised. This implies that the total number of edges not adjacent to a parent vertex at the next level is at most the total number at the previous level. Since the total number at the first level is $m$, the bound follows.

Therefore, there are $O(n)$ vertices and $\tilde{O}(m)$ edges in each recursion level. By \Cref{lem:ghtree:depth}, there are $O(\epsilon^{-1}\log^4n)$ levels, for a total of $\tilde{O}(n\epsilon^{-1})$ vertices and $\tilde{O}(m\epsilon^{-1})$ edges. In particular, the instances to the max-flow calls have $\tilde{O}(n\epsilon^{-1})$ vertices and $\tilde{O}(m\epsilon^{-1})$ edges in total.
\end{proof}

Together, \Cref{lem:ghtree:correctness} (correctness) and \Cref{lem:run} (running time) prove \Cref{thm:ghtree:reduction}.

\end{document}